\documentclass[11pt,letterpaper]{article}
\usepackage[margin=1in]{geometry}

\newlength{\abstractwidth}
\usepackage{graphicx}
\usepackage{amsmath, nccmath}
\usepackage{amsfonts,amsthm, amssymb,amsbsy}
\usepackage{latexsym}
\usepackage{color}
\usepackage{graphicx}
\usepackage{dsfont}
\usepackage{mathbbol}
\usepackage{authblk}
\usepackage{setspace}
\usepackage{subcaption}
\usepackage{simpler-wick}
\usepackage[titletoc,page]{appendix}
\usepackage{booktabs}
\usepackage{algorithmic}
\usepackage[ruled,vlined]{algorithm2e}
\usepackage{multirow,tabularx}
\graphicspath{{figures/}}

\DeclareSymbolFontAlphabet{\amsmathbb}{AMSb}%
\usepackage{braket}
\usepackage{mathtools}
\usepackage{tensor}
\usepackage{mathrsfs}
\usepackage{comment}
\usepackage{cite}
\newcommand{\expval}[1]{\braket{#1}}

\usepackage{hyperref}
\definecolor{darkred}{rgb}{0.8,0.1,0.1}
\hypersetup{colorlinks=true, linkcolor=darkred, citecolor=blue, linktoc=page}
\usepackage[capitalize]{cleveref}

\numberwithin{equation}{section}

\newtheorem{thm}{Theorem}[section]
\newtheorem{assumption}[thm]{Assumption}
\newtheorem{lemma}[thm]{Lemma}

\newtheorem{rem}[thm]{Remark}
\newtheorem{prop}[thm]{Proposition}
\newtheorem{cor}[thm]{Corollary}

\newtheorem{definition}[thm]{Definition}

\theoremstyle{definition}
\newtheorem{problem}{Problem}

\newcommand{\EE}{ {\amsmathbb E}}
\newcommand{\PP}{ {\amsmathbb P}}
\newcommand{\E}{ {\amsmathbb E}}
\newcommand{\Sph}{ {\amsmathbb S}}

\newcommand{\complex}{ {\amsmathbb C} }

\AtBeginDocument{}

\def\cA{{\cal A}}

\def\cV{{\cal V}}

\def\Tr{{\rm Tr}}
\def\det{{\rm det \,}}
\def\half{{1\over 2}}

\def\p{\partial}

\newcommand{\Id}{\amsmathbb{I}}

\def\a{\alpha}
\def\b{\beta}
\def\g{\gamma}
\def\d{\delta}
\def\s{\sigma}

\def\eps{\epsilon}
\def\ep{\varepsilon}

\def\no{\nonumber}

\DeclareMathOperator{\poly}{poly}
\DeclareMathOperator{\diag}{diag}
\newcommand{\vect}[1]{\boldsymbol{#1}}

\newcommand{\bv}{{\vect{b}}}

\newcommand{\pv}{{\vect{p}}}

\newcommand{\sv}{{\vect{s}}}
\newcommand{\xv}{{\vect{x}}}

\newcommand{\zv}{{\vect{z}}}
\newcommand{\param}{\beta,\gamma,\sigma}
\newcommand{\paramv}{\betav,\gammav}
\newcommand{\gammav}{{\vect{\gamma}}}
\newcommand{\etav}{{\vect{\eta}}}
\newcommand{\betav}{{\vect{\beta}}}
\renewcommand{\pmb}{\vect}

\newcommand{\RR}{\amsmathbb{R}}

\newcommand{\abs}[1]{\left|#1\right|}

\begin{document}

\title{CV-QAOA: Efficient Low-Depth Quantum Optimization of Continuous Variables}

\author[1]{Sriram Bharadwaj}
\author[2,3,4]{Di Luo}
\author[2]{Leo Zhou}

\affil[1]{\small \it Mani L. Bhaumik Institute for Theoretical Physics, Department of Physics and Astronomy,
\authorcr \small \it  University of California, Los Angeles, CA 90095, USA} 
\affil[2]{\small \it Electrical and Computer Engineering Department,
University of California, Los Angeles, CA 90095, USA}
\affil[3]{\small \it Department of Physics, Tsinghua University, Beijing 100084, China}
\affil[4]{\small \it Institute of Advanced Study, Tsinghua University, Beijing 100084, China}

\date{\vspace{-40pt}}
\maketitle

\begin{center}
{\tt \small sbharadwaj@physics.ucla.edu, diluo@tsinghua.edu.cn, leoxzhou@ucla.edu}

\bigskip
\bigskip
\date{\large October 5, 2026}
\end{center}
\bigskip

\AtBeginEnvironment{abstract}{\onehalfspacing}

\begin{abstract}
{\noindent 
We study a Continuous-Variable Quantum Approximate Optimization Algorithm (CV-QAOA) for high-dimensional continuous optimization. Our formulation extends an earlier CV-QAOA proposal~\cite{verdon2019qaoa} with a variationally optimized initial state and recovers the convergence guarantees of Quantum Hamiltonian Descent (QHD) in the high-depth limit. 
We prove rigorous performance guarantees of CV-QAOA on several families of cost functions. First, we show $d$-step CV-QAOA minimizes any $d$-dimensional strictly convex quadratic function with $2d$ quantum queries. We then analyze a family of nonconvex ``Rotated Double Well'' (RDW) functions with $2^d$ local minima introduced by Leng \emph{et al.}~\cite{leng2023nonconvex}. While prior work showed QHD reaches its global minimum with $\tilde{O}(d^3)$ queries, we prove that 1-step CV-QAOA solves RDW with just two quantum queries. 
Although general-purpose classical solvers need superpolynomial time for RDW~\cite{leng2023nonconvex} and structure-awareness reduces the cost to polynomial time~\cite{herman2025mechanisms}, we show that the 1-step CV-QAOA protocol can be efficiently dequantized, and that a gradient-aligned line search succeeds with $O(d)$ queries, nearly matching the information-theoretic $\Omega(d/\log d)$ query lower bound. 
To move beyond the dequantizable regime, we introduce a ``Rotated Square Well'' (RSW) problem, whose globally flat landscape suppresses useful local gradient information. 
For this family, we show that an adiabatic evolution simulated by CV-QAOA can reach the global minimum using $d^{o(1)}$ queries. On the other hand, any classical algorithm that learn the hidden rotation in RSW provably requires $\Omega(d^2/\log d)$ queries, a bound we nearly match with an explicit $\Theta(d^2\log d)$-query classical algorithm.
Numerical simulations on deflected corrugated spring and Easom functions illustrate the promising performance of CV-QAOA on more general problems.
}
\end{abstract}

\newpage

\tableofcontents

\clearpage

\section{Introduction}

Continuous optimization problems, arising naturally from mathematical modeling of real-world systems, are ubiquitous across physics, applied mathematics, operations research, computer science, and engineering. Such problems are concerned with minimizing a cost function $C(\xv)$ over $d$ real-valued variables $\xv\in\RR^d$. While many convex problems admit efficient classical methods, high-dimensional nonconvex landscapes remain difficult because local information may be misleading: gradient-based and second-order methods can converge to suboptimal critical points, while other methods like simulated annealing and basin hopping become
costly when the number of local minima grows superpolynomially with $d$. Quantum algorithms offer a different approach by evolving wavefunctions over the search space rather than following a single classical trajectory. Most of the existing quantum methods for continuous optimization often quantize classical procedures or accelerate classical subroutines~\cite{brandao2017sdp,van2017sdp,chakrabarti2020convex,zhang2021saddle}, and therefore do not necessarily change the underlying optimization dynamics. Quantum Hamiltonian Descent (QHD) or Real-space Adiabatic Algorithm (RsAA)~\cite{leng2023QHD, herman2025mechanisms} provides a more intrinsically quantum alternative by quantizing gradient-flow dynamics, but its finite-time behavior is difficult to analyze rigorously beyond special cases.

In this paper, we develop and analyze a Continuous-Variable Quantum Approximate Optimization Algorithm (CV-QAOA) that may be applied to any continuous optimization problem. While our CV-QAOA can be understood as a slightly generalized variant of a previous proposal~\cite{verdon2019qaoa}, here we provide a framework to rigorously prove performance guarantees for optimizing various high-dimensional functions. 
For example, we prove that $d$-step CV-QAOA can find the minimum of any $d$-dimensional convex quadratic function.
Similar to the discrete setting where QAOA is shown to have advantages over the quantum adiabatic algorithm in performance and analyzability~\cite{montanaro2024quantum}, we expect CV-QAOA to be a more powerful and versatile framework than QHD/RsAA.

Indeed, we show that CV-QAOA has an advantage over QHD on certain nonconvex optimization problems.
Specifically, we show that 1-step CV-QAOA can minimize a family of nonconvex $d$-dimensional quartic functions that we call the ``Rotated Square Well'' (RDW) problem, previously studied in Ref.~\cite{leng2023nonconvex}. This implies CV-QAOA can find the global minimum with $O(1)$ quantum queries to the cost function, 
improving over the known performance of the QHD~\cite{leng2023nonconvex} that solves the same family with $\tilde{O}(d^3)$ quantum queries.
As a complementary result, we provide an information-theoretic lower bound of $\Omega(d/\log d)$ queries are required by any classical algorithms with $O(\log d)$-bit precision black-box access to the cost function.
Although previous empirical studies~\cite{leng2023nonconvex} have suggested that state-of-the-art classical solvers require super-polynomial run time (in $d$), we show that various classical algorithm, including one based on dequantizing the 1-step CV-QAOA, can solve the RDW problem in $O(\poly(d))$ time.
In fact, we exhibit a line-search algorithm that achieves this with $O(d)$ queries, due to the local gradient at the origin revealing substantial information about global minimum. These classical results complement a recent work \cite{herman2025mechanisms} that independently uncovered efficient classical algorithms for the RDW problem.

Towards a stronger quantum advantage for CV-QAOA beyond the dequantizable regime, we consider another family of optimization problems that we call the ``Rotated Square Well'' (RSW).
The landscape of the RSW family is globally flat almost everywhere, so local gradients provide little information.
This RSW family not only evades the classical dequantization method that solves the RDW problem but also yields a stronger query separation over the classical strategies analyzed here. 
Specifically, we show that RsAA/QHD solves this problem in runtime $T=O\bigl(\log^{8} d\bigr)$, and that CV-QAOA can simulate its dynamics with $d^{o(1)}$ phase queries to the cost function. 
On the other hand, the best classical algorithm we found is based on a random-ray tomography that uses $O(d^2\log d)$ queries, which nearly match the classical information-theoretic lower bound of ${\Omega}(d^2/\log d)$ queries for any classical algorithm based on reconstructing the hidden rotated structure.

\subsection{A Continuous-Variable QAOA}

We now describe the CV-QAOA, a general-purpose quantum continuous-optimization algorithm inspired by both the QAOA~\cite{farhi2014quantum}, and Quantum Hamiltonian Descent (QHD) \cite{leng2023QHD, chakrabarti2025speedups}. We remark that our theoretical analysis of the QHD algorithm is often performed in the adiabatic regime, in which case it is also referred to as the Real-space Adiabatic Algorithms (RsAA)\cite{herman2025mechanisms}. 

Our task is to optimize a continuous-variable cost function $C: \RR^d\to\RR$. Using canonical quantization, we can promote the cost function to an operator $\hat{C}=C(\hat\xv)$ acting on $d$ bosonic modes, where $\hat{\xv} = (\hat{x}_1,\ldots, \hat{x}_d)$ and $\hat{x}_i = (\hat{a}_i^\dag + \hat{a}_i)/\sqrt{2}$ can be written in terms of the bosonic creation and annihilation operators $\hat{a}_i$ and $\hat{a}_i^\dag$. 
We also define the driver Hamiltonian
\begin{equation}
    \hat{B} = \frac{1}{2}\hat{\pmb{p}}^2 = \frac12 \sum_{j=1}^d \hat{p}_j^2, \qquad  \text{where} \qquad \hat{p}_j=i(\hat{a}_j^\dag-\hat{a}_j)/\sqrt{2}.
\end{equation}
Let $\gammav=(\gamma_1,\ldots,\gamma_p)$ and $\betav=(\beta_1,\ldots,\beta_p)\in \RR^p$ be variational parameters. 
The $p$-step CV-QAOA starts with an initial Gaussian state of the $d$ bosons parameterized by $\sigma$,
\begin{equation}\label{initalSt}
    \ket{\sigma}, \quad \text{where} \quad \braket{\xv|\sigma} = \frac{1}{(2\pi\sigma^2)^{d/4}} \exp\left(-\frac{1}{4\sigma^2} \xv^\top\xv\right),
\end{equation}
and then applies $p$ layers of alternative unitary operations generated by the driver Hamiltonians $\hat{B}$ and $\hat{C}$ to prepare and optimize:
\begin{align}\label{QAOAansatz}
    \ket{\paramv,\sigma} &= e^{-i\beta_p \hat{B}}e^{-i\gamma_p \hat{C}} \cdots
    e^{-i\beta_2 \hat{B}}e^{-i\gamma_2 \hat{C}}
    e^{-i\beta_1 \hat{B}}e^{-i\gamma_1 \hat{C}}
    \ket{\sigma},&\braket{\hat C}_{\betav, \gammav,\s} &= \bra{\paramv,\sigma}\hat C \ket{\paramv,\sigma}.
\end{align}
We remark that the above ansatz for CV-QAOA is similar to a variant of the QAOA previously proposed for continuous problems, where the width $\sigma$ of the initial Gaussian state is not optimized~\cite{verdon2019qaoa}. 
Here, our proposal is motivated by the fact that as the number of layers $p$ increases, CV-QAOA can approximate QHD to arbitrary precision, with an overhead that scales polynomially with the precision parameters and the Fock space truncation limits.
As a result, we expect CV-QAOA can retain or even surpass the quantum speedups observed in QHD,
while also being potentially easier to implement and more amenable to rigorous theoretical analysis. We now introduce the query model that will be used to characterize classical and quantum complexity.
\begin{definition}[Query model]\label{def:query}
Consider any cost function $C:\RR^d\to\RR$. For any positive integer $b=O(\log d)$, let $C^b(\xv)$ be a $b$-bit representation of $C(\xv)$.
Then a \emph{classical query} upon input $\xv$ returns $C^b(\xv)$ as a bit string.
A \emph{quantum query} is a single application of the reversible oracle
\begin{equation}
    A^b_C \ket{\xv}\ket{z} = \ket{\xv}\ket{z + C^b(\xv)}.
\end{equation}
A \emph{quantum phase query} to the cost function with parameter $\gamma$ is unitary $U_C^b(\gamma)\ket{\xv} = e^{-i\gamma C^b(\xv)}\ket{\xv}$.
\end{definition}

In this model, $p$-step CV-QAOA makes exactly $p$ quantum phase queries.
At $b$-bit precision, the $p$-step QAOA implemented with this query model has error at most $2^{-b} \sum_{j=1}^p |\gamma_i|$ in operator norm compared to using an idealized infinitely-precise $C(\xv)$. The restriction of $b=O(\log d)$ here is partly motivated  
so that we only consider "reasonable" algorithm parameters $\sum_i |\gamma_i| \le \poly(d)$; nevertheless, in principle $b=O(\poly(d))$ is possible in a digital implementation.
Note that a quantum phase query $U_C(\gamma)$ can be implemented by two standard quantum queries: apply $A_C^b$ with an zeroed value register, rotate the value register by the known phase $e^{-i\gamma C(\xv)}$, and then apply $(A_C^b)^{-1}$ to erase the value bits.

\begin{figure}[h!]
    \centering
    \includegraphics[width=\linewidth]{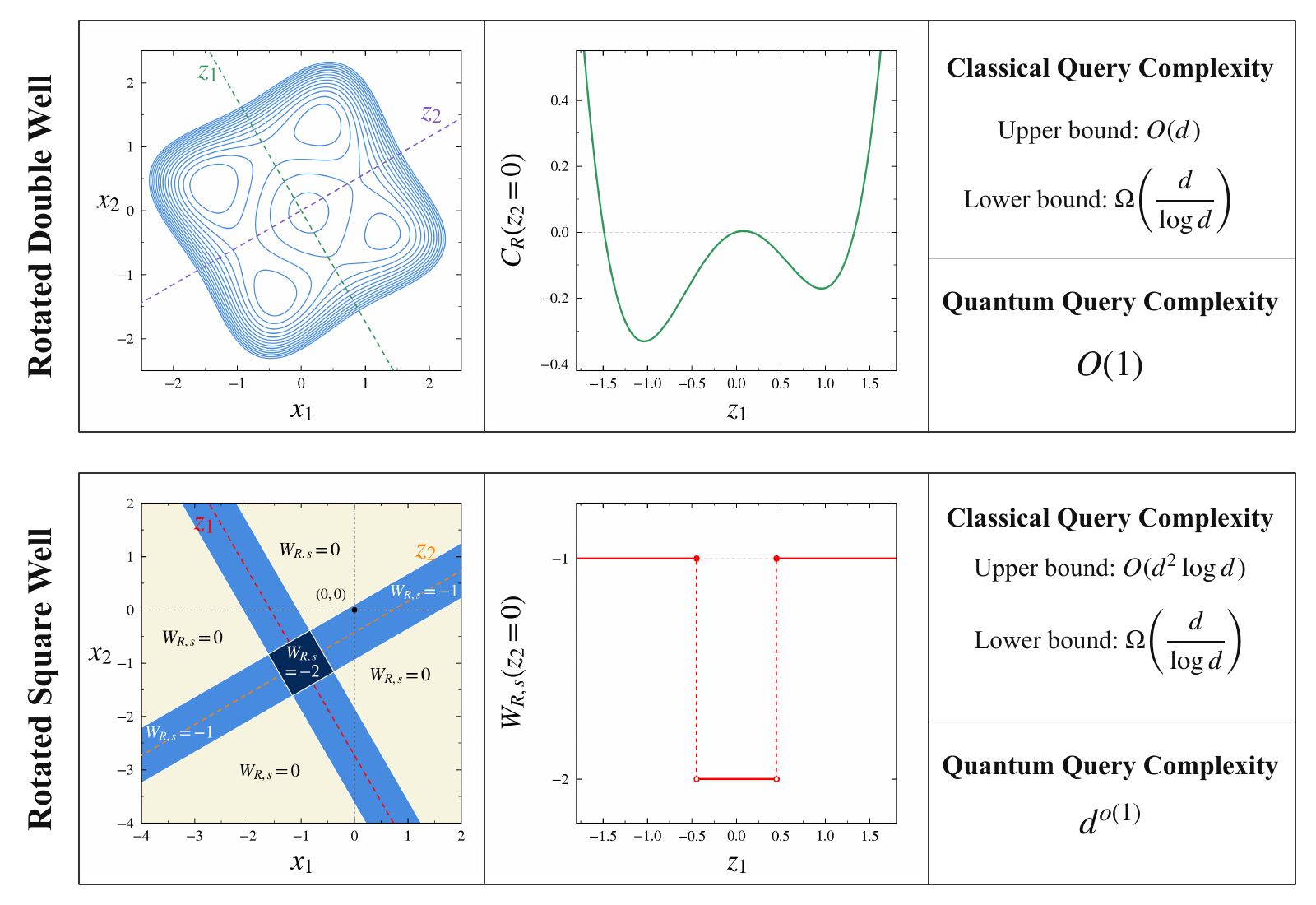}
    \caption{
Summary of the CV-QAOA framework and the main quantum advantages established in this work.
The figure compares two complementary optimization regimes controlled by the initial width parameter $\sigma$.
The top panels show the rotated double well family: a two-dimensional slice of the rotated high-dimensional landscape and a hidden-coordinate slice revealing the underlying one-dimensional double well structure.
The bottom panels show the rotated square well family: a two-dimensional view and a hidden-coordinate slice, illustrating a flat staircase-like landscape whose global optimum lies in the overlap of the hidden wells.
The displayed classical query complexities are upper bounds: $O(d)$ for the rotated double well problem
(Proposition~\ref{prop:RDW-O(d)-bound}) and $O(d^2\log d)$ for the rotated square well problem (Proposition~\ref{prop:rsw-random-ray-query-complexity}).
The corresponding information-theoretic lower bounds are $\Omega(d/\log d)$ for every classical algorithm, and $\Omega(d^2/\log d)$ for those algorithms that reconstruct the hidden rotation in either family.
These examples show that CV-QAOA can exploit both localized $(\sigma\to0)$ and highly delocalized $(\sigma\to\infty)$ quantum states to obtain provable query complexity separations.
}
    \label{fig:summary}
\end{figure}

\subsection{Summary of results}
In this subsection, we give an overview of our main results via three (informal) theorems, which we will unpack in the rest of this paper. Figure~\ref{fig:summary} offers a pictorial summary of our key results on non-convex optimization problems.

For our first theorem, we consider the analytically simple case of $d$-dimensional strictly-convex quadratic functions.
The result highlights the two qualitatively different regimes in which CV-QAOA is found to have good performance, depending on whether the initial state $\ket{\sigma}$ is \textit{delocalized} (i.e. $\s\rightarrow\infty$) or \textit{localized} ($\s\rightarrow 0$).

\begin{thm}[Informal: depth $d$ suffices for strictly convex quadratics]\label{thm:inf-quad}
For any $d$-dimensional strictly convex quadratic with Hessian $Q\succ0$,
$d$-step CV-QAOA can achieve an arbitrarily small optimality gap
from a Gaussian initial state of any width $\sigma>0$.
Specifically, consider any choice of parameters $\paramv$ satisfying $\beta_k\gamma_k=1/q_k$, where $\{q_k\}_{k=1}^d$ are the
eigenvalues of $Q$, then for sufficiently small $P(\paramv)=\sum_{i<j}|\beta_i\gamma_j|$,
\[
\abs{\braket{\hat C}_{\paramv,\sigma} - \min_{\xv}C(\xv)}
\le K_Q\left[
(\sigma^2+\|\xv_0\|^2)P^2
+d^2\left(\frac{\hat\beta}{\sigma}\right)^2
\right],
\]
where $\xv_0$ is the minimizer, $\hat\beta=\max_k|\beta_k|$, and
$K_Q$ depends only on $Q$.
Separating the scales of the layer parameters makes $P$ arbitrarily
small, while scaling all $\beta_k$ down at fixed $\beta_k\gamma_k$
reduces $\hat\beta/\sigma$ without changing $P$.
Thus, with the first term controlled, both delocalized initial
states ($\sigma\to\infty$ at fixed $\hat\beta$) and localized initial
states ($\sigma\to0$ with $\hat\beta/\sigma\to0$) allow the gap to vanish.
\end{thm}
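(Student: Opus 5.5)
Since $\hat C$ is quadratic and $\hat B$ is quadratic, every layer $e^{-i\beta\hat B}e^{-i\gamma\hat C}$ is a Gaussian (metaplectic) unitary, so I would work entirely in the Heisenberg picture with phase-space matrices. Write $C(\xv)=\tfrac12(\xv-\xv_0)^\top Q(\xv-\xv_0)+C_{\min}$ and diagonalize $Q=\sum_k q_k \vv_k\vv_k^\top$. Conjugation by $e^{i\gamma\hat C}$ sends $(\hat\xv,\hat\pv)\mapsto(\hat\xv,\hat\pv-\gamma Q(\hat\xv-\xv_0))$, and conjugation by $e^{i\beta\hat B}$ sends $(\hat\xv,\hat\pv)\mapsto(\hat\xv+\beta\hat\pv,\hat\pv)$. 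Composing the $d$ layers gives an affine symplectic map $\hat\xv_{\rm out}-\xv_0=M(\hat\xv-\xv_0)+N\hat\pv$ with $M=\prod_{k}(I-\beta_k\gamma_k Q)$ (ordered), up to cross terms. Then
\[
\braket{\hat C}-C_{\min}=\tfrac12\big\langle (M(\hat\xv-\xv_0)+N\hat\pv)^\top Q(M(\hat\xv-\xv_0)+N\hat\pv)\big\rangle_\sigma,
\]
and on $\ket\sigma$ we have $\langle\hat\xv\rangle=0$, $\langle\hat x_i\hat x_j\rangle=\sigma^2\delta_{ij}$, $\langle\hat p_i\hat p_j\rangle=\delta_{ij}/(4\sigma^2)$, and the symmetric $x$--$p$ correlations vanish. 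So the gap equals $\tfrac12[\sigma^2\Tr(M^\top QM)+\xv_0^\top M^\top QM\xv_0+\tfrac1{4\sigma^2}\Tr(N^\top QN)]$ plus the constant from symmetrization. That constant cancels against the zero-point terms once the Gaussian moments are used directly, so I would be careful to keep exact Wigner-function moments.

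The key step is to expand the exact product. Each layer acts on $(\xv,\pv)$ by the matrix $L_k=\begin{pmatrix} I-\beta_k\gamma_kQ & \beta_k I\\ -\gamma_k Q & I\end{pmatrix}$. I would expand $\prod_k L_k$ by choosing in each factor either the "diagonal" piece or an off-diagonal piece. The $\xv\to\xv$ block is $\prod_k(I-Q/q_k)$, which annihilates every $\vv_k$, so it is exactly zero since the factors commute. The remaining terms in that block are products containing some $\beta_i$ from an earlier-applied layer paired with some $\gamma_j$ from a later one. Each such term carries a factor $\beta_i\gamma_j$ with $i<j$ times bounded matrices in $Q$ and the products $\beta_k\gamma_k=1/q_k$. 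This gives $\|M\|\le K P$ for $P$ small, where a geometric-series argument bounds higher-order products by $P$ times $e^{O(P)}$.

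For the $\pv$ block $N$, every term contains at least one bare $\beta_k$, and the other factors are $O(1)$ in $Q$-dependent constants. So $\|N\|\le K d\hat\beta$, the factor $d$ coming from the $d$ possible positions of the $\beta$. Plugging in, $\sigma^2\Tr(M^\top QM)+\xv_0^\top M^\top QM\xv_0\le K_Q(\sigma^2+\|\xv_0\|^2)P^2$ (absorbing $d$ into $K_Q$), and $\Tr(N^\top QN)/\sigma^2\le K_Q d^2\hat\beta^2/\sigma^2$. This yields the stated bound. The concluding remarks follow by rescaling: taking $\beta_k\to\lambda\beta_k$ and $\gamma_k\to\gamma_k/\lambda$ keeps $P$ and $\beta_k\gamma_k$ fixed.

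The main obstacle is the combinatorial control of $M$. I must show rigorously that all terms surviving beyond the exactly vanishing ordered product $\prod(I-Q/q_k)$ are bounded by $P$, uniformly in the parameter magnitudes, even though individual $\gamma_j$ may be huge. The approach is to reorganize $\prod L_k$ as $\prod_k(I+E_k)$ with nilpotent-like pieces, and to track that a large $\gamma_j$ only appears multiplied by an earlier $\beta_i$, which is exactly what the definition of $P$ captures. I expect this to need a careful induction on the number of layers, with a hypothesis bounding both blocks of the partial products.
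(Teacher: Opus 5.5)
Your route is the same as the paper's: linear Heisenberg maps, exact vanishing of $\prod_k(\Id-Q/q_k)$, a $\sum_{m\ge1}P^m$ pair-counting bound on the rest of the position block, and a $d\hat\beta$ bound on the momentum block. However, the pairing claim on which everything rests has the time ordering reversed. With layer $1$ acting first, as in \eqref{QAOAansatz}, conjugating layer by layer for $d=2$ gives
\[
U^\dagger\zv\,U=\bigl(\cA_2\cA_1-\beta_2\gamma_1Q\bigr)\zv+\bigl(\beta_1\cA_2+\beta_2\Id\bigr)\pv .
\]
So in the position block of $L_2L_1$, the off-diagonal piece $\beta_2\Id$ of the later factor multiplies the piece $-\gamma_1Q$ of the earlier one. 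Physically, a kick $\gamma_j$ displaces the position only through drifts applied after it. In general the surviving monomials are $\beta_{j_1}\gamma_{j_2}\beta_{j_3}\gamma_{j_4}\cdots$ with $j_1>j_2>j_3>\cdots$. A large $\gamma_j$ is therefore multiplied by a later $\beta$, and what your expansion controls is $\sum_{i<j}|\beta_j\gamma_i|$, not $P$.

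This is not cosmetic. At $\lambda_k=1/q_k$ the position coefficient is exactly $-\beta_2\gamma_1Q$, with $|\beta_2\gamma_1|=1/(q_1q_2P)$ where $P=|\beta_1\gamma_2|$. Hence the gap is at least $\sigma^2\Tr(Q^3)/(2q_1^2q_2^2P^2)$ and blows up as $P\to0$. Under the ansatz ordering, your step ``a large $\gamma_j$ only appears multiplied by an earlier $\beta_i$'' is false, and the inequality with this $P$ cannot be proved.

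The paper's proof gets the right pairing because its recursion \eqref{recursionRS} substitutes the single-layer conjugation into $R^{(n-1)}\zv+S^{(n-1)}\pv$. That amounts to $U_n=U_{n-1}e^{-i\beta_nB}e^{-i\gamma_nC}$, i.e.\ layer $n$ acts first on $\ket\sigma$. It yields $R^{(2)}=\cA_1\cA_2-\gamma_2\beta_1Q$, and then $P=\sum_{i<j}|\beta_i\gamma_j|$ correctly pairs each $\gamma$ with the $\beta$ of a later-applied layer.

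You should therefore either adopt that reversed indexing, or keep the ansatz order and replace $P$ by $\sum_{i<j}|\beta_j\gamma_i|$, with hierarchy $|\beta_1|\gg\cdots\gg|\beta_d|$. Your rescaling remark is unaffected either way. The same ordering matters in your momentum block. Terms such as $\beta_3\gamma_2\beta_1$ are not ``one $\beta$ times $O(1)$ factors'', since they contain a possibly huge $\gamma$. They are small only because each $\gamma$ is paired with a correctly ordered $\beta$, leaving one unpaired $|\beta|\le\hat\beta$ times $m$ controlled pairs. With these corrections, your planned two-block induction is precisely the paper's argument.
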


We remark that the optimal initial width $\sigma$ is highly problem dependent. If we were to follow the intuition guided by the adiabatic theorem (as in RsAA), one might expect that a delocalized initial state with a large spread, which minimizes $\braket{\hat B}$, is the favorable option to converge to the unknown global minimum. 
For strictly convex quadratics, however, Theorem~\ref{thm:inf-quad} shows that CV-QAOA does not single out either regime: both a \emph{delocalized} initial state ($\sigma\to\infty$ at fixed $\hat\beta$, matching the adiabatic intuition) and a \emph{localized} one ($\sigma\to0$ with $\hat\beta\ll\sigma$) can yield an optimality gap arbitrarily small.
Indeed, we find that using a localized initial state allows one to minimize a large class of non-convex real-analytic functions while a delocalized initial state fails to minimize for any choice of CV-QAOA parameters, as we observe explicitly for the Styblinski-Tang function in Section~\ref{sec:1DST}.
Importantly, using a localized initial state works well in certain high-dimensional non-convex problems, and Theorem~\ref{infthm2} below gives a sufficient criterion under which this is effective.

\begin{thm}[Informal: CV-QAOA for $d$-dimensional real-analytic functions]\label{infthm2}
Consider any $d$-dimensional real analytic function $C:\RR^d\to \RR$.
Suppose its gradient satisfy $ \nabla{C}(\vect{0})=\lambda \xv_*$ for any $|\lambda|> 1/\poly(d)$ and any global minimizer $\xv_*=\arg\min_\xv C(\xv)$. Then the one-step CV-QAOA can minimize $C(\xv)$ with a localized initial state $\ket{\sigma}$ centered at the origin.
\end{thm}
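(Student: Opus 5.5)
The plan is to compute the one-step CV-QAOA state $e^{-i\beta\hat B}e^{-i\gamma\hat C}\ket{\sigma}$ in the localized regime $\sigma\to0$ and show that it is a wavepacket whose center is transported to $\xv_*$. The picture is an impulse followed by free flight. The phase query $e^{-i\gamma \hat C}$ acts on a state concentrated near the origin. Expanding $C$ by real analyticity, $C(\xv)=C(\vect 0)+\nabla C(\vect 0)^\top\xv+R(\xv)$ with $|R(\xv)|\le K\|\xv\|^2$ on a neighborhood. The linear term is a momentum kick: $e^{-i\gamma\nabla C(\vect0)\cdot\hat\xv}$ translates momentum by $-\gamma\nabla C(\vect0)=-\gamma\lambda\xv_*$. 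The free evolution $e^{-i\beta\hat B}$ with $\hat B=\hat{\pv}^2/2$ then moves the packet center to $\langle\hat\xv\rangle=\beta\langle\hat\pv\rangle=-\beta\gamma\lambda\xv_*$. So I would choose $\beta\gamma=-1/\lambda$, in the same spirit as the condition $\beta_k\gamma_k=1/q_k$ in Theorem~\ref{thm:inf-quad}. The hypothesis $|\lambda|>1/\poly(d)$ keeps these parameters polynomially bounded, which is compatible with the $b=O(\log d)$ precision model of Definition~\ref{def:query}.

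The key steps, in order, are as follows.
\begin{enumerate}
\item Write the exact Gaussian evolution for the linear part. The state $e^{-i\beta\hat B}e^{-i\gamma(\nabla C(\vect0)\cdot\hat\xv)}\ket{\sigma}$ is a freely spreading Gaussian. It has center $\xv_*$ and position variance $\sigma^2+\beta^2/(4\sigma^2)$ per coordinate.
\item Bound the error from the remainder $R$. I would compare $e^{-i\gamma \hat C}\ket\sigma$ with $e^{-i\gamma(C(\vect0)+\nabla C(\vect 0)\cdot\hat\xv)}\ket\sigma$ in norm. The norm difference is at most $|\gamma|\,\|R(\hat\xv)\ket\sigma\|$, which is $O(|\gamma|K d\sigma^2)$ from the Gaussian moments, plus a tail term where $\|\xv\|$ leaves the analyticity domain. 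The tail term is exponentially small in $1/\sigma^2$.
\item Convert state closeness into a bound on the expectation of $\hat C$. Near $\xv_*$, since $\nabla C(\xv_*)=0$, we have $C(\xv)-C(\xv_*)\le K'\|\xv-\xv_*\|^2$. This gives $\braket{\hat C}-\min C\lesssim K' d(\sigma^2+\beta^2/\sigma^2)$ plus the cross term from step 2. Controlling the cross term requires a growth condition on $C$ so that $\hat C$ has finite moments against the states. Rather than bounding $\braket{\hat C}$ directly, I would state success as sampling $\xv$ within distance $\epsilon$ of $\xv_*$ with high probability.
\item Choose the scales. Take $\gamma=-1/(\lambda\beta)$ and send $\beta\to0$ and $\sigma\to0$ together with $\beta/\sigma\to0$.
\end{enumerate}

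The main obstacle is the tension in step 4. The remainder error is $|\gamma|d\sigma^2\sim d\sigma^2/(|\lambda|\beta)$, which needs $\sigma^2\ll\beta$. Free-flight spreading requires $\beta\ll\sigma$. Both conditions hold for, e.g., $\beta=\sigma^{3/2}$ as $\sigma\to0$. However, the constants $K$ and $K'$ and the analyticity radius enter polynomially, so "minimize" will mean an error $\epsilon$ with parameters polynomial in $d$, $1/\epsilon$ and these constants. A second subtlety is that the kick must not push the packet beyond where the local quadratic bound at $\xv_*$ holds. This is automatic, because the center lands exactly at $\xv_*$ and the width vanishes. Finally, the phase $e^{-i\gamma C(\vect0)}$ is global and can be dropped.
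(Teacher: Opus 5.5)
Your argument is sound, but it takes a genuinely different route from the paper, and it establishes a somewhat different form of ``minimize''.

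\textbf{The paper's route.} The formal version (Theorem~\ref{thm:GradPointingToMin}) never compares states. Using Lemma~\ref{lem:px}, it writes the cost expectation as $\langle\phi, C(\xv_*+Q)\phi\rangle$ in the unit-width Gaussian $\phi$, where $Q_j=\sigma Y_j-\lambda_*\bigl(\partial_jC(\sigma Y)-\partial_jC(\vect 0)\bigr)+(\beta/\sigma)P_j$ and $\lambda_*=\beta\gamma$ (your $-1/\lambda$). Polynomial growth of the derivatives gives moment bounds $\|Q_{j_1}\cdots Q_{j_n}\phi\|\le K_n r^n$ with $r=\sigma+|\beta|/\sigma$. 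The proof then closes with the Taylor bound $0\le C(\xv_*+q)-C(\xv_*)\le K(|q|^2+|q|^{2m})$ and the joint spectral theorem.

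\textbf{Why your step-4 tension is an artifact.} In the Heisenberg picture, $\gamma$ enters the evolved position operator only through $\beta\gamma$. The quadratic and higher parts of the phase, which your norm comparison must discard at cost $|\gamma|\,d\sigma^2$, merely shift the packet by $\lambda_*\bigl(\nabla C(\sigma Y)-\nabla C(\vect 0)\bigr)=O(\sigma)$; the Hessian acts as a lens, not as noise. Consequently the paper gets $\langle\hat C\rangle-\min C=O(\sigma^2+\beta^2/\sigma^2)$ along every limit with $\sigma\to0$ and $\beta/\sigma\to0$. This includes $\beta\ll\sigma^2$, where (if $\nabla^2C(\vect 0)\neq0$) the true state is far in norm from your linearized Gaussian. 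You, by contrast, need $\sigma^2\ll\beta\ll\sigma$ and obtain a slower rate.

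\textbf{What your route buys and costs.}
\begin{itemize}
\item It controls the whole output wavefunction, so it yields a total-variation (sampling) guarantee directly. For that guarantee no growth hypothesis is needed, once you bound $|e^{-i\gamma R}-1|\le\min\{|\gamma R|,2\}$ on the tail.
\item It does not by itself give the expectation-value statement. The cross term $\|\psi-\tilde\psi\|\,(\|D\psi\|+\|D\tilde\psi\|)$, with $D=\hat C-\min C$, needs moment bounds on the exact output state. Supplying those is essentially the paper's Heisenberg computation.
\end{itemize}

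\textbf{A cheap improvement.} Your linearization is a cruder version of the quadratic truncation in Proposition~\ref{prop:dequant}. Keeping the Hessian in the reference Gaussian cuts the discarded remainder to $O(|\gamma|(d\sigma)^3)$ and relaxes your constraint to $\beta\gg\sigma^3$.

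\textbf{A small imprecision.} The hypothesis $|\lambda|>1/\poly(d)$ bounds only the product $\beta\gamma$. The parameter $\gamma=-1/(\lambda\beta)$ itself diverges as $\beta\to0$, so it is polynomially bounded only after $\sigma$ and $\beta$ are fixed polynomially in the target accuracy.
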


In particular, for $d=1$, the gradient condition is essentially $|C'(0)|\neq 0$, which is easily satisfied for many real-analytic function $C$; this is investigated in detail in Theorem~\ref{thm:1DrealAnalytic}. The formal version of the above applied to the general case can be found in Theorem \ref{thm:GradPointingToMin}.

The above theorem has an important consequence for the so-called ``Rotated Square Well" (RDW) problem analyzed in \cite{leng2023nonconvex}, 
which is a family of $d$-dimensional quartic functions constructed from summing independent 1D quartic double wells and performing a hidden rotation to mask the separability. 
Here, we consider a generalized family of such problems that are defined below.

\begin{problem}[Optimization of a rotated real-analytic function]
\label{prob:rotated-analytic}
Let \(w:\RR\to\RR\) be a real-analytic function, and define the separable \(d\)-dimensional cost with \(R\in \mathbf{O}(d)\) being an orthogonal matrix:
\begin{align}
\label{eq:C-R-general}
    C_{R}(x_1,\dots,x_d)=\sum_{k=1}^d w\left(\sum_{\ell=1}^dR_{k,\ell}x_\ell\right).
\end{align}
The \emph{rotated real-analytic optimization problem} is to minimize \(C_R(\xv)\) over \(\xv\in\RR^d\), given oracle access to \(C_R\), without knowledge of the hidden rotation \(R\).
In particular, when $w(z)$ is a quartic double well function (e.g., $w(z)=z^4-16z^2 + 5z$), we call this problem the \emph{Rotated Double Well (RDW)}.
\end{problem}

The RDW problem was initially proposed by \cite{leng2023nonconvex} to be classically hard due to the presence of $2^d$ local minima, slowing down general-purpose solvers. Ref.\cite{leng2023nonconvex} also argued for hardness on the basis that learning the rotation from the quartic polynomial coefficients is a special case of an NP-hard problem~\cite{hillar2013tensor}. 
On the other hand, Theorem~\ref{infthm2} implies that CV-QAOA can solve the rotated family in~\eqref{eq:C-R-general} including the RDW, provided $w$ has a global minimizer
$x_0$ and $w'(0)\ne0$. Indeed,
$\xv_*=x_0R^\top\vect{1}
=(x_0/w'(0))\nabla C_R(\vect{0})$,
which verifies the theorem's gradient-alignment hypothesis.
The formal result is presented in Corollary~\ref{thm:doublewell}.
Beyond solving rotated real-analytic problems, the above theorem also explains precisely when 1-step CV-QAOA is able to find optimal parameters that drives the optimality gap to zero. 
Since this theorem uses a point-like initial state, $\sigma\to0$, we show in Proposition~\ref{prop:dequant} that the resulting 1-step CV-QAOA state is well-approximated by a Gaussian state and is therefore classically simulable. Thus, this regime is effectively dequantized and should be viewed as a quantum-inspired classical algorithm.

While the above-mentioned dequantization mechanism works for $1$-step CV-QAOA with a point-like initial state, it is important to emphasize that higher-step CV-QAOA is not necessarily dequantizable. Indeed, we consider such concrete instances of quantum advantage with $2$-step CV-QAOA in Section~\ref{sec:beyondClassSim}.

The query complexity of CV-QAOA, QHD, and the classical algorithms
considered in~\cite{herman2025mechanisms,leng2023nonconvex} is
summarized in Table~\ref{tab:query_complexity}.
We also mention two classical approaches arising from our analysis.
For RDW with $w'(0)\ne0$, the gradient at the origin is proportional to the global minimizer, yielding an $O(d)$-query algorithm by line search (Proposition~\ref{prop:RDW-O(d)-bound}).
The second approach, dequantized $1$-step CV-QAOA, requires
estimates of the gradient and Hessian at the origin.
For RDW, the Hessian is known: $\nabla^2C_R(\vect{0}) = w'(0)\Id$, so $O(d)$ value queries suffice.
For more general real-analytic costs, $O(d^2)$ value
queries suffice to estimate both derivatives, with oracle
precision chosen for the desired simulation accuracy
(Remark~\ref{rem:dequant-query-cost}).
By comparison, $1$-step CV-QAOA with a point-like initial state
finds the global minimum of RDW with $O(1)$ quantum queries,
improving on the previous $\tilde{O}(d^3)$ query upper bound
for QHD~\cite{leng2023nonconvex}.
\begin{table}[h]
    \centering
    \begin{tabular}{l | l | c}
        \toprule
        \textbf{Type} & \textbf{Algorithm} & \textbf{Query Complexity} \\
        \midrule
        \multirow{4}{*}{Classical}
        & General-purpose solvers (e.g., Gurobi, SGD) \cite{leng2023nonconvex}
        & $\mathrm{superpoly}(d)$ \\
        & Hessian-learning \cite{herman2025mechanisms}
        & $\tilde{\Omega}(d^a)$, $a\ge2$ \\
        & Convexity-honing \cite{herman2025mechanisms}
        & $\tilde{\Omega}(d^b)$, $b\ge1$ \\
        & Line search along $\nabla C_R(\vect{0})$
          (Proposition~\ref{prop:RDW-O(d)-bound})
        & $\Theta(d)$ \\
        \midrule
        \multirow{2}{*}{Quantum}
        & QHD \cite{leng2023nonconvex}
        & $\tilde{O}(d^3)$ \\
        & CV-QAOA
        & $O(1)$ \\
        \bottomrule
    \end{tabular}
    \caption{Query complexity comparison of classical and quantum
    algorithms for the rotated double well (RDW) problem.
    The constants $a$ and $b$ depend on the choice of the
    one-dimensional function in the cited analyses \cite{herman2025mechanisms}; the lower bounds $a\ge 2$ and $b\ge1$ are based on information-theoretic counting arguments.
    }
    \label{tab:query_complexity}
\end{table}

At this stage, we ask whether there exist problems where CV-QAOA  has a larger quantum advantage relative to classical approaches and improve upon the factor-of-$d$ quantum advantage observed on the RDW problem.
A natural class of problems that can evade the classical attack offered by \cref{infthm2} (and other classical algorithms in Table~\ref{tab:query_complexity}) has a globally flat landscape with an exponentially small gradient that does not carry any information about the global minimum.
The simplest such instance, which we analyze in depth, is the so-called rotated square well problem:
\begin{problem}[Rotated Square Well]
Let $R\in \textbf{O}(d)$ be an $d$-dimensional rotation and $\vect s\in\RR^d$. The \emph{Rotated Square Well (RSW)} problem is to find a global minimum of the following cost function
\begin{align}\label{eq:rot-sq-well}
    W_{R,\sv}(\xv) = -\sum_{j=1}^d\mathbf{1}\{|(R\xv+\sv)_j|<\d\},
\end{align}
with only oracle access to $W_{R,\sv}$ and without knowledge of $R, \sv$.
\end{problem}

On this problem, we prove that there exists a black-box query complexity separation between classical and quantum optimization algorithms, such as CV-QAOA. The key technical result needed to achieve the separation is summarized in the following informal theorem (formalized in Theorem~\ref{thm:qaa_log_runtime} and Corollary~\ref{cor:queryComplexity}). This allows us to achieve a nearly superpolynomial-in-dimension separation between quantum algorithms and any classical rotation-learning algorithms in the black-box setting, with the quantum side requiring only $d^{o(1)}$ queries.

\begin{thm}[Informal: runtime and query complexity of CV-QAOA on rotated square-well]\label{thm:inf-thm-3}
For the Rotated Square Well problem specified by a unknown rotation $R\in \mathbf{O}(d)$ and bounded shift $\|\vect s\|_\infty = O(1)$, a RsAA/QHD algorithm with runtime $T$ or a CV-QAOA with query complexity $Q$ can find the global minimum of $W_{R,\sv}$ with constant success probability, where
\begin{align}
    T &= O\left(\log^{8} d\right), & Q&=d^{o(1)}.
\end{align}
In particular, the runtime is polylogarithmic and the query complexity is subpolynomial in $d$. 
\end{thm}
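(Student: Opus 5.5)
The plan is to first exploit the rotational invariance of the driver: since $\hat B=\tfrac12\hat{\pv}^2$ and the initial Gaussian $\ket{\sigma}$ are invariant under orthogonal changes of variables $\xv\mapsto R\xv+\sv$ (up to a translation, which we absorb by centering the delocalized state, $\sigma\to\infty$), the RsAA Hamiltonian $H(t)=(1-t/T)\hat B+(t/T)\hat W_{R,\sv}$ is unitarily equivalent to a sum of $d$ \emph{independent} one-dimensional Hamiltonians $h(t)=(1-t/T)\tfrac12\hat p^2-(t/T)\mathbf 1\{|x+s_j|<\delta\}$, one per hidden coordinate. The full adiabatic evolution therefore factorizes into a tensor product of $d$ identical-type 1D evolutions. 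Success, meaning that every hidden coordinate lands in its well so $W_{R,\sv}=-d$, occurs with probability $\prod_j p_j(T)$, where $p_j(T)$ is the 1D probability of ending in the well. Hence it suffices to show $1-p_j(T)\le c/d$ for each $j$, uniformly over $|s_j|=O(\|\sv\|_\infty)$.

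For the 1D analysis, I will regularize the problem on a box of size $L=\mathrm{poly}\log d$ or use a large-but-finite initial width. The point is that the finite square well always has a bound state, and for large $t/T$ the kinetic term is weak, so the ground state localizes in the well. I will then track the spectral gap of $h(t)$ along the path. At early times the gap is set by the box, roughly $1/L^2$. Later it is set by the well depth times $t/T$ against the kinetic prefactor. I will aim for $g_{\min}\gtrsim 1/\mathrm{polylog}(d)$. I then invoke a standard adiabatic theorem with error of order $\|\dot h\|^2/(T g_{\min}^3)$, or the boundary-cancellation version with error $O(1/(Tg^2))$ suitably repeated, to get $1-p_j\le \mathrm{polylog}(d)/T$. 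To reach $1/d$ one needs either $T\sim d$, or an exponentially-accurate adiabatic theorem using a smooth schedule (Gevrey/Nenciu-type bounds give error $\exp(-c\,T g^2/\|\dot h\|)$). I will use the latter, so that $T=\mathrm{polylog}(d)$ suffices; the various logs (box size $\log^2 d$, gap $\log^{-2}$, derivative norms) should combine to $T=O(\log^8 d)$. The ground state at $t=T$ also has weight $1-e^{-\Omega(\cdot)}$ inside the well, since the kinetic coefficient vanishes there.

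For the query complexity I will Trotterize. Each 1D factor uses the same time slices, so a Trotter step of the product is just $e^{-i\beta\hat B}e^{-i\gamma\hat W}$, one phase query. A $k$-th order Suzuki formula for $T$ with error $\epsilon$ needs $5^{k}\,(T\Lambda)^{1+1/k}\epsilon^{-1/k}$ steps, where $\Lambda$ bounds the norms. Here $\hat W$ is bounded by $d$, but commutator norms must be handled on the relevant low-energy subspace: truncate the momentum (Fock cutoff) at $\mathrm{polylog}(d)$ per mode, and use the fact that per-mode errors add, giving $d\cdot$ per-mode error. Choosing $k=\sqrt{\log d}$ then gives $Q=d^{o(1)}$, since $\epsilon^{-1/k}=d^{O(1/\sqrt{\log d})}$ and $5^k=d^{o(1)}$.

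The hardest step is the commutator/Trotter control. The indicator potential is discontinuous, so nested commutators with $\hat p^2$ are unbounded. I will first try smoothing the well edges at scale $1/\mathrm{polylog}(d)$, checking that this barely moves the bound state, and restricting to an energy-truncated subspace where nested commutators of order $k$ grow like $(\mathrm{polylog}\,d)^{k}$. The remaining question is whether this growth still yields $d^{o(1)}$ with $k\sim\sqrt{\log d}$. A secondary risk is a rigorous uniform gap lower bound near the point where the well first captures the ground state.
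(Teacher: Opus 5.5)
Your overall architecture matches the paper's: factorize in hidden coordinates, use an exponentially accurate adiabatic theorem with an endpoint-flat smooth schedule, smooth the well, and apply higher-order Suzuki formulas. Two steps, however, fail as written.

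\textbf{No gap at the start of the path.} The reduction to independent one-dimensional evolutions needs a mixer that is rotation invariant and separable in every orthonormal frame. Neither of your regularizations keeps this while also supplying a gap.
\begin{itemize}
\item A box must be imposed in the known coordinates $\xv$. In the hidden coordinates $\zv=R\xv+\sv$ it is a rotated box, so the Hamiltonian no longer splits into one-dimensional factors. A box in $\zv$ would require knowing $R$.
\item With the pure kinetic driver $\tfrac12\hat{\pv}^2$ on $\RR^d$ and a finite-width Gaussian, there is nothing to follow adiabatically. $H(0)$ has purely continuous spectrum and $\ket{\sigma}$ is not an eigenstate of it. The bound state of $(1-s)\tfrac12\hat p^2-s\mathbf 1\{|z|<\delta\}$ emerges from the continuum at $s=0^+$ with binding energy only $\Theta(s^2)$.
\end{itemize}
So the gap near the capture point is not a secondary risk. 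No adiabatic theorem applies, and nothing forces per-coordinate capture probability $1-O(1/d)$.

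The missing ingredient is the harmonic mixer $A=-\tfrac12\Delta+|\xv|^2/(8\sigma^4)$.
\begin{itemize}
\item Its nondegenerate ground state is exactly $\ket{\sigma}$.
\item In hidden coordinates it becomes $\sum_j A_{s_j}$ with $A_b=-\tfrac12\partial_z^2+(z-b)^2/(8\sigma^4)$. The shift only moves the centers; nothing needs to be absorbed.
\item The rescaled operators $A_b+q\,\mathbf 1\{|z|\ge\delta\}$ have a gap $\gamma_*>0$ uniformly in $q\ge0$ and $|b|\le B$ (Lemma~\ref{lem:gap-all-s}). This gives $\operatorname{gap}h_b(s)\ge(1-s)\gamma_*/2$ along the whole path.
\end{itemize}

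\textbf{The endpoint and success criterion fail.} At $t=T$ the Hamiltonian is $W$, whose ground space is infinitely degenerate, and the gap closes like $\Theta(1-s)$. So you must stop at some $s_f=1-\nu^2$. But then the final ground state does \emph{not} have weight $1-e^{-\Omega(\cdot)}$ inside the well. At the edge it has amplitude $\Theta(\nu)$ and decays on a length $\Theta(\nu)$, so a $\Theta(\nu^3)$ fraction of its mass lies outside $[-\delta,\delta]$. Demanding $1-p_j\le c/d$ for the exact well forces $\nu\lesssim d^{-1/3}$, hence a polynomially small gap and polynomial runtime. With $\nu=1/\mathrm{polylog}(d)$, a single sample lands in all $d$ wells only with probability $e^{-\Theta(d\nu^3)}$.

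The paper resolves this in two steps.
\begin{itemize}
\item The adiabatic stage targets only the enlarged box $V_a=\{\xv:\|R\xv+\sv\|_\infty\le a\}$ with fixed $a>\delta$. There the tail is $O(\nu e^{-\sqrt2(a-\delta)/\nu})$ (Lemma~\ref{lem:deep-well}), so $\nu=\Theta(1/\log(d/\varepsilon))$ suffices, and the Hagedorn--Joye constants give $T=O(\log^8(d/\varepsilon))$.
\item It then averages $m=O(\log(d/\varepsilon))$ independent measured points. Each hidden coordinate of the endpoint ground state has mean in $[-\delta+\kappa_*,\delta-\kappa_*]$ and exponential tails. Bernstein's inequality plus a union bound then put $R\overline{\xv}+\sv$ inside all $d$ wells, without knowing $R$ or $\sv$ (Proposition~\ref{prop:rsw-averaging}).
\end{itemize}

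\textbf{Open point in the simulation step.} Your explicit choice $k=\sqrt{\log d}$ needs quantitative $k$-dependence of the Suzuki error constants, which you leave open. The paper sidesteps this by taking $k(d)$ to be the largest even $k$ with $2^{k^2}A_k^{2k}\le d$, so only finiteness of each $A_k$ is used. It also replaces momentum truncation with weighted norms in $\eta(z-b)$ and $-i\eta\partial_z$, which grow only polynomially in $T$ (Lemma~\ref{lem:pf-state-control}).
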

To prove the above theorem, we utilize the fact that CV-QAOA can be used simulate the RsAA/QHD.
We start by analyzing RsAA/QHD with an optimized schedule and solve the rotated square well problem exactly to obtain the scaling of the adiabatic runtime $T$ (see Theorem~\ref{thm:qaa_log_runtime}). Following that, we use higher-order product formulas to obtain the query complexity (see Corollary~\ref{cor:queryComplexity}) from the runtime $T$.

On the classical side, we give an information-theoretic lower bound (see Proposition~\ref{prop:rsw-info-lower-bound}) on classical algorithms that try to learn the rotation $R$, which forces a $\Omega(d^2/
\log d)$ lower bound, essentially because of degree-of-freedom counting. Then we explicitly construct a classical $\Theta(d^2\log d)$-query rotation-learning algorithm that nearly matches the lower bound. 
Beyond classical algorithms that learn the rotation $R$, we also tested various general-purpose classical optimizers, including path integral Monte Carlo, which is a standard classical method to simulate RsAA.
All of the test algorithms, including PIMC, exhibit superpolynomial scaling in our simulation up $d=60$ (see Section~\ref{sec:numerics-RSW}).
We also analyze a structure-aware adaptive random walk algorithm, and find that $\Theta(d^3/\log d)$  queries is needed to solve the RSW problem (see Section~\ref{subsubsec:do-not-learn-rotation}).
Hence, this suggests that learning the rotation may indeed be the optimal strategy for this problem, which we leave open as a conjecture.

Finally, we numerically explore low-depth CV-QAOA with a uniform initial state on the two-dimensional Easom and deformed deflected corrugated spring functions (Section~\ref{sec:beyondClassSim}). On the Easom function,
whose gradient is exponentially small away from the minimum,
CV-QAOA achieves consistently higher success probability than RsAA/QHD at matched integrated runtime. On the deformed deflected corrugated spring, where the gradient at the origin is not aligned with the global minimizer, $2$-step CV-QAOA achieves a $39.93\%$ probability of returning a point within unit distance of the minimum, compared with $6.99\%$ for the linear-ramp adiabatic algorithm and $3.14\%$ for QHD. These numerical results illustrate the potential of low-depth CV-QAOA on more general nonconvex landscapes and motivate extending our theoretical analysis beyond the families considered here.

\subsection{Related work}
\label{sec:related}

The CV-QAOA we propose in this paper can be understood as an algorithmic approach inspired by both the Quantum Hamiltonian Descent (QHD)~\cite{leng2023QHD} and the Quantum Approximate Optimization Algorithm (QAOA)~\cite{farhi2014quantum}. In this section, we explain what is novel about our approach and compare them to previous results.

We first describe QHD in more detail, which is proposed~\cite{leng2023QHD} as a truly quantum continuous optimization algorithm formed by quantizing the classical gradient-flow dynamics through a path integral formalism. 
The QHD algorithm involves evolving a system of $d$ bosons with the following time-dependent Hamiltonian
\[
\hat{H}(t) = e^{\varphi(t)} \hat{B} + e^{\chi(t)} \hat{C},
\]
Here $e^{\varphi(t)}$ and $e^{\chi(t)}$ are time-dependent damping parameters controlling the energy flow in the system. A key strategy of QHD is to choose $\varphi(t)$ and $\chi(t)$ to satisfy $e^{\varphi(t)-\chi(t)}\to 0$ as $t\to\infty$, so that kinetic energy is gradually drained out of the system.

Notably, QHD has been proven to achieve (sub)exponential quantum speedups over certain black-box optimization problem~\cite{leng2025subexponential}.
While asymptotic convergence to the global minimum have been derived in the infinite time limit~\cite{leng2023QHD}, there is no general analysis of the performance for nonconvex functions in the polynomial time regime.
Hence, the performance of QHD on more general objective functions in finite time is often heuristic, guided by numerical simulations~\cite{leng2023QHD}.

Manifestly, QHD is similar to a quantum adiabatic algorithm that interpolates between the kinetic Hamiltonian $\hat{B}$ and the potential Hamiltonian $\hat{C}$, although key differences remain~\cite{leng2023QHD}.
In the discrete-optimization setting,  the QAOA was proposed with a key feature that it is at least as powerful as the adiabatic algorithm due to its capability to simulate the latter~\cite{farhi2014quantum}.
Thus, the similarity between QHD and the adiabatic algorithm inspires the proposal of a discrete-time version of QHD. This is the CV-QAOA described in equation~\eqref{QAOAansatz}, a variational quantum algorithm given by a two-driver ansatz utilizing the two Hamiltonians $\hat{B}$ and $\hat{C}$.
As the number of layers $p$ increases, the CV-QAOA can approximate QHD to arbitrary precision via standard arguments with Trotter decomposition.

We remark that a previous work~\cite{verdon2019qaoa} has introduced a continuous-variable variant of the QAOA that is essentially identical to equation~\eqref{QAOAansatz}. However, they did not optimize the choice of initial state $\ket{\sigma}$, nor did they provide any rigorous performance guarantee. As we find in our study of the quartic Styblinski-Tang cost function (see Section~\ref{sec:1DST}) that was also studied in Ref.~\cite{verdon2019qaoa}, the optimization of the initial state is crucial to the success of 1-step CV-QAOA to achieve an arbitrarily-well approximation of the global minimum. In contrast, Ref.~\cite{verdon2019qaoa} did not achieve the same level of performance despite optimizing their ansatz up to $p=3$ steps. 

Quantum gradient methods for convex optimization achieve
polylogarithmic dimension dependence using approximate
function-value oracles, including a $\tilde O(\kappa)$ query
bound for smooth strongly convex objectives with condition
number $\kappa$~\cite[Theorems~4.2 and~4.3]{augustino2025fastconvex}.
Complementary lower bounds show that constant-accuracy
optimization requires
$\tilde{\Omega}(\min\{\sqrt{\kappa},d\})$ quantum gradient
queries in the worst case, even for
quadratics~\cite[Theorem~6.1]{augustino2026nearoptimal}.
Our work complements these results with an analytic $d$-step
CV-QAOA construction for strictly convex quadratics, achieving
arbitrarily small optimality gap with depth and phase-query
bounds independent of $\kappa$
(Section~\ref{sec:ClassicalComplexity}).

Furthermore, we note that PIMC has been widely used to classically simulate quantum annealing, including for continuous real-space optimization problems such as double-well landscapes~\cite{stella2006monte,santoro2006optimization}.
More recently, \cite{herman2025mechanisms} suggests that PIMC may can suffer from sampling and mixing bottlenecks even when the corresponding adiabatic quantum evolution (RsAA) remains efficient.
Our results show that CV-QAOA outperforms PIMC in the rotated square well problem, which advances the state of the art and shows that substantially more effective optimization dynamics can be realized. Beyond achieving higher-quality solutions, it introduces a distinct optimization paradigm that opens new possibilities for solving challenging high-dimensional continuous optimization problems.

\subsection{Outline of the paper}

We begin in Section~\ref{sec:Quadratics1D} by analyzing the performance of CV-QAOA on quadratic functions in arbitrary dimension. In Section~\ref{sec:RealAnalyticD}, we extend this analysis to real-analytic functions in arbitrary dimension, including rotated separable problems such as the RDW, more general real-analytic objective functions, and the dequantization of one-step CV-QAOA with a point-like initial state. Then we turn to the rotated square well problem, which is designed to evade this dequantization mechanism and for which we prove polylogarithmic quantum runtime and subpolynomial query complexity. In Section~\ref{sec:ClassicalComplexity}, we compare these quantum results with corresponding classical complexity bounds, including information-theoretic query lower bounds for rotated real-analytic problems and $\Theta(d^3/\log d)$ hitting-time bounds for a natural family of random-walk algorithms on the rotated square well. Finally, Section~\ref{sec:Numerics} presents numerical simulations validating the analytical results and exploring the performance of CV-QAOA beyond the regimes covered by our rigorous theorems. The appendices contain the detailed proofs and supplementary technical estimates.

\section{Quadratic functions in arbitrary dimension}\label{sec:Quadratics1D}
In this section, we analyze the performance of CV-QAOA on $d$-dimensional convex quadratic functions as a warm-up for the non-convex problems studied in this work. We formalize Theorem~\ref{thm:inf-quad}, which shows that $d$-step CV-QAOA can find the global minimum of any $d$-dimensional strictly convex quadratic function to arbitrary accuracy from a Gaussian initial state of any width $\sigma>0$, with appropriately chosen circuit parameters. We first state the precise result in Theorem~\ref{thm:quad}, analyze the CV-QAOA transformation of position and momentum operators in Lemma~\ref{lem:px}, and then sketch its proof. We defer the derivation of error bounds to Appendix~\ref{app:Quadratics}.
\begin{thm}[Depth \(d\) suffices for strictly convex quadratics]
\label{thm:quad}
Given any $d$-dimensional strictly convex quadratic, i.e. $C(\xv) =\frac12
\xv^\top Q \xv + \bv^\top \xv$ where $Q\succ 0$ has eigenvalues
$q_1,\dots,q_d$ and minimizer $\xv_0=-Q^{-1}\bv$, and an initial
$d$-dimensional Gaussian state $\ket{\sigma}$ as in \eqref{initalSt},
consider $d$ steps of CV-QAOA as in \eqref{QAOAansatz} with parameters
$(\betav,\gammav,\sigma)$. Whenever the pairwise products of the parameters
are sufficiently small, in the sense that $\hat q\, P \le \tfrac12$ below,
the optimality gap satisfies
\begin{align}
\left|\braket{\hat C}_{\betav, \gammav,\s}
-\min_{\xv} C(\xv)\right|
\le
4\Tr \,Q\left[\left({\sigma^2}+\|\xv_0\|^2\right)
\left[
\sqrt{\zeta}\max_{1\le k\le d}\abs{1-\lambda_k q_k}
+
\hat q P
\right]^2
+
\frac{d^2\a^2\hat\beta^2}{\sigma^2}\right],
\end{align}
where $\lambda_k=\beta_k\gamma_k$ and $\cA_k=\Id-\lambda_k Q$,
{$\sqrt{\zeta}=\sum_{k=1}^d\prod_{j\neq k}\abs{1-\lambda_j q_k}$},
$\hat q=\max\{1,(\Tr Q)^d\}$,
$P=\sum_{1\le i<j\le d}|\beta_i|\,|\gamma_j|$,
$\hat\beta=\max_{1\le j\le d}|\beta_j|$,
$\a = \max\{\|\prod_{k=1}^{j-1}\cA_k \|_1\;|\; j=1,\dots,d\}$ with the empty
product read as $\Id$, and one may take $K=4\Tr(Q)$.
In particular, the explicit sequential choice $\lambda_k=1/q_k$, with
$\beta_1\to0,\dots,\beta_{d-1}\to0$ taken in this order at fixed $\lambda_k$
and $\hat\beta/\sigma\to0$ thereafter, drives the optimality gap to zero.
\end{thm}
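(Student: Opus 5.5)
The plan is to work in the Heisenberg picture. Because $\hat C$ is quadratic and $\hat B=\tfrac12\hat{\pv}^2$, every layer acts linearly (affinely) on the vector of canonical operators $(\hat\xv,\hat\pv)$. Concretely, $e^{i\gamma\hat C}\hat\pv e^{-i\gamma\hat C}=\hat\pv-\gamma(Q\hat\xv+\bv)$ while $\hat\xv$ is unchanged, and $e^{i\beta\hat B}\hat\xv e^{-i\beta\hat B}=\hat\xv+\beta\hat\pv$ while $\hat\pv$ is unchanged.

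Composing the $d$ layers, as in Lemma~\ref{lem:px}, gives final Heisenberg operators of the form
\[
\hat\xv_d = M\hat\xv + N\hat\pv + \vect{c}.
\]
I will expand $M$ in powers of the products $\beta_i\gamma_j$. The terms with only diagonal products $\lambda_k=\beta_k\gamma_k$ assemble into $\prod_k(\Id-\lambda_k Q)=\prod_k\cA_k$. Every remaining term contains at least one cross product $\beta_i\gamma_j$ with $i<j$; summing over words, these terms are bounded by $\hat q P$ under $\hat qP\le\tfrac12$, by a geometric-series estimate of the type $\sum_{m\ge1}(\hat qP)^m\le 2\hat q P$. The momentum coefficient is $N=\sum_j \beta_j(\prod_{k>j}\cA_k)+(\text{cross terms})$, so $\|N\|\lesssim d\,\alpha\,\hat\beta$. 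The shift $\vect c$ obeys the same algebra with $\hat\xv$ replaced by the constant $\xv_0$: writing $\hat\xv-\xv_0$ for the centered position, $\hat\xv_d-\xv_0=M(\hat\xv-\xv_0)+N\hat\pv$ up to the same error structure.

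Next I compute the gap. For a quadratic,
\[
\braket{\hat C}-\min C=\tfrac12\braket{(\hat\xv_d-\xv_0)^\top Q(\hat\xv_d-\xv_0)}\le\tfrac12\Tr Q\cdot\braket{\|\hat\xv_d-\xv_0\|^2}_{\rm op}.
\]
The initial Gaussian has $\braket{\hat\xv}=\braket{\hat\pv}=0$, $\braket{\hat x_i\hat x_j}=\sigma^2\delta_{ij}$, and $\braket{\hat p_i\hat p_j}=\delta_{ij}/(4\sigma^2)$. Using $(a+b)^2\le2a^2+2b^2$, the position part contributes $\|M\|^2(\sigma^2+\|\xv_0\|^2)$ and the momentum part contributes $\|N\|^2/\sigma^2$. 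This reproduces the stated bound with constant $4\Tr Q$.

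The remaining task is to bound $\|\prod_k\cA_k\|$ by $\sqrt\zeta\max_k|1-\lambda_kq_k|$. Diagonalizing $Q=\sum_k q_k v_kv_k^\top$, the product acts on $v_k$ by $\prod_j(1-\lambda_jq_k)=(1-\lambda_kq_k)\prod_{j\ne k}(1-\lambda_jq_k)$. Summing over $k$, a triangle-inequality or Frobenius bound gives exactly the $\sqrt\zeta$ factor.

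For the "in particular" statement, choosing $\lambda_k=1/q_k$ kills the product term. Sending $\beta_1,\dots,\beta_{d-1}\to0$ in order while keeping $\lambda_k$ fixed makes every $\beta_i\gamma_j$ with $i<j$ vanish, since $\gamma_j=\lambda_j/\beta_j$ stays bounded relative to $\beta_i$ as the earlier $\beta$'s shrink faster. Hence $P\to0$, and finally taking $\hat\beta/\sigma\to0$ kills the last term.

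I expect the main obstacle to be the combinatorial control of the cross terms in $M$ and $N$. The expansion is a sum over interleaved words in $\beta$'s, $\gamma$'s and $Q$. I will need to show that each non-diagonal word carries at least one ordered product $\beta_i\gamma_j$ with $i<j$, and that the number of words of a given order, together with the powers of $Q$, is absorbed into $\hat q=\max\{1,(\Tr Q)^d\}$. I would first try bounding $\|Q\|^m\le(\Tr Q)^m$ and summing the expansion as a perturbation of the product $\prod\cA_k$ in the small parameter $P$.
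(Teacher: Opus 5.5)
Your route is the paper's. You use Heisenberg-picture linear maps, split $M=\prod_k\cA_k+(\text{cross terms})$, control $N$ through $\alpha$ and $\hat\beta$, take Gaussian second moments with $\Sigma=\sigma^2\Id+\xv_0\xv_0^\top$, apply a trace-norm H\"older step, and get the $\sqrt\zeta$ bound by diagonalizing $Q$. In that step $\|M\|,\|N\|$ must be trace or Frobenius norms; operator norms bring in $\Tr\Sigma$ and an extra factor $d$. The trouble is the step you defer, which is the whole content of the proof.

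First, your two halves use opposite layer orderings. With \eqref{QAOAansatz} as written (layer $1$ acts first), two layers give
\[
U^\dagger\zv\,U=(\cA_2\cA_1-\beta_2\gamma_1Q)\,\zv+(\beta_1\cA_2+\beta_2\Id)\,\pv .
\]
So your formula for $N$ is right. But then the cross terms are $\beta_j\gamma_i$ with $i<j$, which $P=\sum_{i<j}|\beta_i||\gamma_j|$ does not control: at fixed $\lambda$'s, $\beta_2\gamma_1=\lambda_1\beta_2/\beta_1$ blows up as $P\to0$. Likewise $\prod_{k>j}\cA_k$ is not bounded by the statement's $\alpha$. The statement's $P$, $\alpha$ and order of limits correspond to the recursion \eqref{recursionRS}, i.e.\ to the reversed labeling, and you have to adopt that labeling throughout.

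Second, and more seriously, the geometric-series absorption into $\hat q$ fails. In that convention each term of $M-\prod_k\cA_k$ is, up to sign, a product of ordered pairs $\beta_{n_1}\gamma_{n_2}Q,\ \beta_{n_3}\gamma_{n_4}Q,\dots$ with $n_1<n_2<\cdots$; these pairs do sum to at most $P^m$. Each term, however, is \emph{also} multiplied by factors $\cA_n$ from the intervening layers. These carry diagonal products $\lambda_n$ that are $O(1)$, not small; at $\lambda_n=1/q_n$, $|\lambda_n|\,\|Q\|$ reaches the condition number. They cannot be absorbed into $\hat q=\max\{1,(\Tr Q)^d\}$.

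Already at $d=3$, in the notation of \eqref{recursionRS},
\[
R^{(3)}-\cA_1\cA_2\cA_3=-\beta_1\gamma_2\,Q\cA_3-\beta_2\gamma_3\,\cA_1Q-\beta_1\gamma_3\,Q .
\]
Take $Q=\mathrm{diag}(1,\tfrac12,\epsilon)$, $\lambda_k=1/q_k$, $\beta_1/\beta_2$ small (so $\hat qP\le\tfrac12$) and $\beta_3$ large (so $\gamma_3$ is small). Then $P\approx|\beta_1\gamma_2|$, but $(R^{(3)})_{11}\approx\beta_1\gamma_2(1/\epsilon-1)$. For small $\epsilon$, letting $\sigma\to\infty$, the position part of the gap exceeds the stated right-hand side.

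So no bookkeeping can deliver the inequality with this $\hat q$. The paper's \eqref{deltaBound2}--\eqref{deltaBound} has the same omission, since it lists only powers of $Q$ as coefficients, so you cannot lean on it.

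What your expansion does give, mode by mode, is
\[
\|M-\textstyle\prod_k\cA_k\|_1\le 2P\sum_k q_k\prod_n\max\{1,|1-\lambda_nq_k|\}
\qquad\text{whenever }\|Q\|P\le\tfrac12,
\]
and an analogous bound for $N$. Since the $\lambda_k$ are held fixed while $P\to0$, this still proves the ``in particular'' convergence claim, but with a $\lambda$-dependent constant in place of $\hat q$.
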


Before we sketch the proof of Theorem~\ref{thm:quad}, we will need the following lemma which allows us to compute expectation values of polynomials of the operators $(\hat{x},\hat{p})$. This was given in Ref.~\cite{verdon2019qaoa} but we provide a full proof here.
\begin{lemma}[One-layer evolution]\label{lem:px}
Let $B=\hat p^2/2$, $C=h(\hat x)$ for a real-valued real-analytic
function $h$, and $U=e^{-i\beta B}e^{-i\gamma C}$. Then
\begin{align}
U^\dagger\hat p\,U
&=\hat p-\gamma h'(\hat x),&U^\dagger\hat x\,U
&=\hat x+\beta\hat p-\beta\gamma h'(\hat x).
\end{align}
\end{lemma}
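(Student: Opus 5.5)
We conjugate by the two factors of $U$ one at a time. Since $U=e^{-i\beta B}e^{-i\gamma C}$, we have $U^\dagger = e^{i\gamma C}e^{i\beta B}$. Hence for any operator $O$,
\[
U^\dagger O\,U = e^{i\gamma C}\bigl(e^{i\beta B} O\, e^{-i\beta B}\bigr)e^{-i\gamma C}.
\]
The inner conjugation is by the free evolution, and the outer one is by the potential phase.

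\textbf{Inner step (free evolution).} We compute $e^{i\beta B}\hat x e^{-i\beta B}$ and $e^{i\beta B}\hat p e^{-i\beta B}$. Using $[\hat x,\hat p]=i$ we get $[B,\hat p]=0$ and $[B,\hat x]=\tfrac12[\hat p^2,\hat x]=-i\hat p$. By the Hadamard (Baker--Campbell--Hausdorff) expansion $e^{A}Oe^{-A}=\sum_n \frac{1}{n!}\mathrm{ad}_A^n(O)$ with $A=i\beta B$, we find $\mathrm{ad}_A(\hat x)=i\beta(-i\hat p)=\beta\hat p$. The series then terminates, because $\mathrm{ad}_A(\hat p)=0$. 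This gives
\[
e^{i\beta B}\hat x e^{-i\beta B}=\hat x+\beta\hat p, \qquad e^{i\beta B}\hat p e^{-i\beta B}=\hat p.
\]
As an alternative that avoids any domain issues, one can check these identities by differentiating in $\beta$: both sides satisfy the same linear ODE with the same initial value.

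\textbf{Outer step (potential phase).} Conjugation by $e^{i\gamma C}$, where $C=h(\hat x)$, fixes $\hat x$ because $C$ is a function of $\hat x$. For $\hat p$, we work in the position representation, where $\hat p=-i\,d/dx$ and $e^{-i\gamma h(\hat x)}$ acts by multiplication. By the product rule,
\[
e^{i\gamma h}\bigl(-i\partial_x\bigr)\bigl(e^{-i\gamma h}\psi\bigr)=-i\partial_x\psi-\gamma h'(x)\psi,
\]
so $e^{i\gamma C}\hat p e^{-i\gamma C}=\hat p-\gamma h'(\hat x)$. Equivalently, one can use $[h(\hat x),\hat p]=i h'(\hat x)$, which holds term by term on the power series of the real-analytic function $h$; the Hadamard series again terminates after one term, since $h'(\hat x)$ commutes with $C$. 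I prefer the wavefunction computation because it only needs differentiability of $h$ and avoids convergence questions about the power series.

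\textbf{Combining.} We obtain
\[
U^\dagger\hat p\,U=e^{i\gamma C}\hat p\, e^{-i\gamma C}=\hat p-\gamma h'(\hat x),
\]
and, using linearity of conjugation together with the two previous steps,
\[
U^\dagger\hat x\,U=e^{i\gamma C}(\hat x+\beta\hat p)e^{-i\gamma C}=\hat x+\beta\bigl(\hat p-\gamma h'(\hat x)\bigr).
\]
The only delicate point is rigor for unbounded operators. The identities should be interpreted on a common dense invariant domain, such as Schwartz functions, or on smooth compactly supported functions when $h'$ grows quickly. The free evolution preserves the Schwartz class, and the multiplication phase preserves smooth functions. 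I do not expect any substantive obstacle beyond this domain bookkeeping.
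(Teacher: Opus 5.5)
Your proposal is correct and follows essentially the same route as the paper. Like the paper, you split $U^\dagger O\,U$ into the free-evolution and potential conjugations, use $[B,\hat x]=-i\hat p$ and $[B,\hat p]=0$ so that the series terminates, and combine the two steps by linearity. The one small difference is in the potential step: the paper computes $[C,\hat p]=ih'(\hat x)$ on smooth compactly supported functions and then truncates the BCH series, whereas you conjugate $-i\partial_x$ directly by the multiplication phase, which is equivalent and slightly cleaner.
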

\begin{proof}
We compute the mixer and cost conjugations separately.
The canonical relation $[\hat x,\hat p]=i$ gives
\[
[B,\hat x]
=\frac12\bigl(\hat p[\hat p,\hat x]+[\hat p,\hat x]\hat p\bigr)
=-i\hat p,
\qquad
[B,\hat p]=0.
\]
For the cost, $[C,\hat x]=0$ because $C=h(\hat x)$.
To compute $[C,\hat p]$, use $\hat p=-i\partial_x$: acting on a
smooth function $f$ of compact support,
\[
[C,\hat p]f
=-ihf'+i(hf)'
=ih'f.
\]
Thus $[C,\hat p]=ih'(\hat x)$. Since $h'(\hat x)$ commutes with
$C$, the relevant double commutators vanish:
\[
[B,[B,\hat x]]=0,
\qquad
[C,[C,\hat p]]=0.
\]
The Baker-Campbell-Hausdorff expansion therefore terminates after
the first commutator, giving
\begin{align*}
e^{i\beta B}\hat x e^{-i\beta B}
&=\hat x+i\beta[B,\hat x]
=\hat x+\beta\hat p,&
e^{i\gamma C}\hat p e^{-i\gamma C}
&=\hat p+i\gamma[C,\hat p]
=\hat p-\gamma h'(\hat x).
\end{align*}
Meanwhile, mixer conjugation leaves $\hat p$ unchanged and cost
conjugation leaves $\hat x$ unchanged. For
$U=e^{-i\beta B}e^{-i\gamma C}$, we consequently obtain
\begin{align*}
U^\dagger\hat p\,U
&=e^{i\gamma C}\hat p e^{-i\gamma C}
=\hat p-\gamma h'(\hat x),&
U^\dagger\hat x\,U
&=e^{i\gamma C}(\hat x+\beta\hat p)e^{-i\gamma C}=\hat x+\beta\bigl(\hat p-\gamma h'(\hat x)\bigr).
\end{align*}
These are the claimed identities; see also
Ref.~\cite{verdon2019qaoa}.
\end{proof}

The proof generalizes directly to higher dimensions. In what follows, we will drop the hats on the operators $(\hat{x}, \hat{p})$. Now we sketch the proof of Theorem~\ref{thm:quad}.
\paragraph{Proof sketch of Theorem~\ref{thm:quad}.}
CV-QAOA is analytically tractable because both the cost and mixer Hamiltonians generate linear transformations of position and momentum operators due to Lemma~\ref{lem:px}. On quadratics, the CV-QAOA state therefore remains Gaussian throughout the circuit, and we can follow its center and spread explicitly. We wish to find circuit parameters such that an initially broad wavepacket becomes sharply concentrated around the minimizer. In coordinates centered at the minimizer and aligned with the eigenvectors of $Q\succ0$, write
\begin{align}
     C-C_{\min}=\frac12\zv^\top D\zv,
    \qquad D=\mathrm{diag}(q_1,\dots,q_d),
\end{align}
and denote the conjugate momentum by $\pv_z$. After $n$ layers, $U_n=e^{-i\beta_n B}e^{-i\gamma_n C}\cdots e^{-i\beta_1 B}e^{-i\gamma_1 C}$, the position operators take the form $U_n^\dagger\zv U_n=R^{(n)}\zv+S^{(n)}\pv_z$, where $R^{(n)}$ and $S^{(n)}$  are determined recursively -- see Eq.~\eqref{recursionRS}. Thus, the final position depends on both the initial position and the initial momentum. Evaluating the cost in the initial Gaussian state gives
\begin{align}\label{eqExpMain}
    \braket{\hat C}_{\betav,\gammav,\sigma}-C_{\min}
    &=
    \frac12\bra{\sigma}
    \zv^\top R^{(d)\top}D R^{(d)}\zv
    \ket{\sigma}
    +
    \frac{1}{8\sigma^2}
    \Tr\bigl(S^{(d)\top}D S^{(d)}\bigr).
\end{align}
To bound these two contributions, set $\lambda_k=\beta_k\gamma_k=1/q_k$.
The leading part of $R^{(d)}$ is $\prod_{k=1}^d(\Id-\lambda_kD)$,
which vanishes because each eigendirection is annihilated by one factor.
The recurrence relations bound the remaining terms and the momentum
coefficient by
\[
\|R^{(d)}\|\le c_R\hat q P,
\qquad
\|S^{(d)}\|\le c_S d\a\hat\beta,
\qquad
P=\sum_{i<j}|\beta_i\gamma_j|,
\]
for absolute constants $c_R,c_S>0$, provided $\hat qP\le1/2$, with $\hat q,\a,\hat\beta$ as defined in
Theorem~\ref{thm:quad}. Since the initial position has covariance
$\sigma^2\Id$ and displacement of norm $\|\xv_0\|$ from the minimizer,
the position contribution is bounded by a constant times
$\Tr(D)(\sigma^2+\|\xv_0\|^2)(\hat qP)^2$. The momentum contribution
is bounded by a constant times
$\Tr(D)d^2\a^2\hat\beta^2/\sigma^2$. Together, these give
\[
\braket{\hat C}-C_{\min}
\le
\Tr(D)\left[
\frac{c_R^2}{2}(\sigma^2+\|\xv_0\|^2)(\hat qP)^2
+\frac{c_S^2}{8}\frac{d^2\a^2\hat\beta^2}{\sigma^2}
\right].
\]
At fixed $\lambda_j=1/q_j$, we have
$|\beta_i\gamma_j|=|\beta_i/\beta_j|/q_j$, so a hierarchy
$|\beta_1|\ll\cdots\ll|\beta_d|$ makes $P$ arbitrarily small.
Scaling all $\beta_j$ down together, while keeping the $\lambda_j$
fixed, then reduces $\hat\beta/\sigma$ without changing $P$.
Thus both contributions can be made arbitrarily small for any
initial width $\sigma>0$.

A complete proof with the explicit error bounds is given in Appendix~\ref{app:Quadratics}.
\section{Real-analytic functions in arbitrary dimension}\label{sec:RealAnalyticD}

In this section, we analyze the performance of $1$-step CV-QAOA with a point-like initial state on  $d$-dimensional real-analytic functions. First, building on a detailed study of the one-dimensional case, we identify a broad class of high-dimensional nonconvex problems for which the algorithm can achieve an arbitrarily small optimality gap. This class includes the rotated separable real-analytic functions introduced in Problem~\ref{prob:rotated-analytic}, where an unknown rotation obscures the separable structure of the cost. It also includes more general real-analytic functions whose global minimum is aligned with the gradient at the origin, formalizing Theorem~\ref{infthm2}. Second, we show that $1$-step CV-QAOA with a point-like initial state can be efficiently dequantized, which leads to a search for quantum advantages beyond this regime (see Sec.~\ref{sec:RSW}).

\subsection{One-dimensional real-analytic functions}\label{sec:1DST}

We begin with arbitrary one-dimensional real-analytic functions. This case is rich enough to exhibit the basic mechanism behind \(1\)-step CV-QAOA in the point-like regime $\s\rightarrow0$, and serves as a building block for the higher-dimensional nonconvex problems studied in this section. 
We will assume the function growth at most polynomial at large $|x|$ for technical reasons.
When the initial Gaussian is sharply localized at the origin, we show that the \(1\)-step CV-QAOA expectation value is governed by an effective displacement determined by the gradient \(C'(0)\). In this sense, the algorithm behaves like a quantum line search along the distinguished direction selected by gradient of the cost at the center of the initial Gaussian state. We show the following theorem.
\begin{thm}[1-step CV-QAOA for 1D real-analytic functions]\label{thm:1DrealAnalytic}
Let $C:\RR\to\RR$ be real analytic with $C'(0)\neq 0$. 
Assume that, for every integer $k\ge0$, there are constants $K_k,m_k<\infty$ such that $|C^{(k)}(x)|\le K_k(1+|x|)^{m_k}$ for all $x\in\RR$. 
Then, for every fixed $\lambda\in\RR$,
\begin{equation}
\label{eq:1d-main-bound}
\bra{\beta,\lambda/\beta,\sigma}\hat C\ket{\beta,\lambda/\beta,\sigma}
=
C\bigl(-\lambda C'(0)\bigr)
+
O\left(\frac{\beta^2}{\sigma^2}+\sigma^2\right).
\end{equation}
In particular, if $x_* \in \arg\min_x C(x)$ and $\lambda=-{x_*}/{C'(0)}$, then $C\bigl(-\lambda C'(0)\bigr) = \min_x C(x)$.
\end{thm}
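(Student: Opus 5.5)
The plan is to write the state as a wavefunction and read off its position distribution. With $p=1$, $\gamma=\lambda/\beta$, the state is $e^{-i\beta B}e^{-i\gamma C}\ket{\sigma}$. The quantity we need is $\bra{\sigma}U^\dagger C(\hat x)U\ket{\sigma}=\bra{\sigma}C(U^\dagger\hat xU)\ket{\sigma}$. By Lemma~\ref{lem:px},
\[
U^\dagger\hat xU=\hat x+\beta\hat p-\lambda C'(\hat x)=:X .
\]
The summands of $X$ do not all commute ($\hat x$ and $C'(\hat x)$ do not commute with $\hat p$), so it is cleaner to work with the position distribution of the final state directly.

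\textbf{Step 1 (final position distribution).} Before the mixer, the state is $\psi(x)=\braket{x|\sigma}e^{-i\gamma C(x)}$. Its Wigner function is approximately
\[
\rho_\sigma(x)\,\delta\bigl(p+\gamma C'(x)\bigr)
\]
smeared in $p$ by a width $1/(2\sigma)$, up to corrections from higher derivatives of $C$ over a window of size $\sigma$. The free evolution $e^{-i\beta B}$ acts exactly as the shear $x\mapsto x+\beta p$ on the Wigner function. Rather than the Wigner picture, I would use a generating function: compute $\bra{\sigma}e^{ikX}\ket{\sigma}$, or equivalently expand $X=\hat x+\beta\hat p-\lambda C'(\hat x)$ and note that in $\ket\sigma$ we have $\hat x=O(\sigma)$ and $\beta\hat p=O(\beta/\sigma)$ in every moment. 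The dominant term of $X$ is then the c-number $x_0:=-\lambda C'(0)$, and the remainder is
\[
Y=\hat x(1-\lambda C''(0))+\beta\hat p-\lambda\bigl(C'(\hat x)-C'(0)-C''(0)\hat x\bigr).
\]

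\textbf{Step 2 (Taylor expansion of $C$ about $x_0$).} Write
\[
C(X)=C(x_0)+C'(x_0)Y+\tfrac12 C''(x_0)Y^2+R_3 .
\]
Here $C(X)$ is defined since $X$ is self-adjoint, being unitarily conjugate to $\hat x$. The goal is to show $\langle Y\rangle=O(\beta^2/\sigma^2+\sigma^2)$ and $\langle Y^2\rangle=O(\beta^2/\sigma^2+\sigma^2)$.

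For $\langle Y\rangle$: the Gaussian is even and centered, so $\langle\hat x\rangle=\langle\hat p\rangle=0$. The term $\langle C'(\hat x)-C'(0)-C''(0)\hat x\rangle$ is $O(\sigma^2)$ by the Taylor bound with polynomial growth of $C'''$ against Gaussian tails.

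For $\langle Y^2\rangle$: each piece contributes $O(\sigma^2)$ or $O(\beta^2/\sigma^2)$, and the cross terms are controlled by Cauchy--Schwarz. Since the mean is already $O(\sigma^2+\beta^2/\sigma^2)$, the linear term $C'(x_0)\langle Y\rangle$ is acceptable even though $C'(x_0)\neq0$ in general.

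\textbf{Step 3 (remainder, main obstacle).} The remainder $R_3$ is operator-valued and $Y$ is unbounded. I would bound it via the final-state position density: $\langle C(X)\rangle=\int C(y)\,|\phi(y)|^2\,dy$, where $\phi=e^{-i\beta B}\psi$. Then apply scalar Taylor's theorem, $|C(y)-\text{quadratic}|\le K(1+|y|+|x_0|)^{m}|y-x_0|^3$, and use Hölder to reduce to bounding the moments $\langle |X-x_0|^{6}\rangle$ and $\langle(1+|X|)^{2m}\rangle$.

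These moments are expectations of polynomials in $X$, which expand into noncommuting words in $\hat x$, $\hat p$ and $C'(\hat x)$. In $\ket\sigma$, any word with $a$ factors of $\hat x$-type (counting $C'(\hat x)-C'(0)$, which carries at least one factor of $\hat x$) and $b$ factors of $\beta\hat p$ is $O(\sigma^a(\beta/\sigma)^b)$. Here derivatives arising from commuting $\hat p$ past functions of $\hat x$ produce factors of $\beta$ without $1/\sigma$, which only helps. This gives $\langle|Y|^{6}\rangle^{1/2}=O((\sigma+\beta/\sigma)^3)$. The cubic remainder is therefore $o(\beta^2/\sigma^2+\sigma^2)$ in the regime where both are small. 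The polynomial growth hypothesis on $C^{(k)}$ is used exactly here, to make these moments finite, which Gaussian decay guarantees.

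\textbf{Conclusion.} Combining the steps gives \eqref{eq:1d-main-bound}. The final claim is immediate: choosing $\lambda=-x_*/C'(0)$, which requires $C'(0)\neq0$, makes $x_0=x_*$.
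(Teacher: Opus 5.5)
Your proposal is correct and follows essentially the same route as the paper. Both conjugate $\hat x$ via Lemma~\ref{lem:px}, center at $x_0=-\lambda C'(0)$, and Taylor expand $C$ there through the spectral theorem for the self-adjoint $X$ (equivalently, the final position density). Both then control the remainder with the moment bounds $\|Y^n\ket{\sigma}\|=O\bigl((\sigma+|\beta|/\sigma)^n\bigr)$, which follow from the polynomial growth of all derivatives of $C$, and use $\langle Y\rangle=O(\sigma^2)$ to absorb the linear term.

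One imprecision is harmless. A commutator such as $[\beta\hat p,C'(\hat x)]=-i\beta C''(\hat x)$ is of the same order $\sigma\cdot(\beta/\sigma)$ as the word it comes from, not smaller as your ``only helps'' suggests. The bound is unaffected.
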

The content of Theorem~\ref{thm:1DrealAnalytic} is as follows. In the regime \(\sigma\to 0\), the initial state becomes sharply localized near the origin. To leading order in the CV-QAOA parameters, the state after one CV-QAOA layer behaves as though the initial wavepacket had been displaced from \(x=0\) to the effective point
\[
x_{\mathrm{eff}}=-\lambda C'(0).
\]
Thus the parameter \(\lambda=\beta\gamma\) selects a point along the line determined by the local gradient, and the optimization problem reduces to choosing \(\lambda\) so that this effective displacement lands near the global minimizer. This is the basic mechanism that will reappear later in higher-dimensional settings.

Note that the theorem requires \(C'(0)\neq 0\). If \(C'(0)=0\), then the leading-order displacement vanishes, and \(1\)-step CV-QAOA with a point-like initial state can no longer move effectively away from the origin at finite \(\lambda\). Nonetheless, provided a neighborhood of the origin contains a point where the derivative is not exponentially small, one can displace the initial state to such a point and still utilize $1$-step CV-QAOA. However, if the global cost landscape has exponentially vanishing gradients everywhere, then this strategy fails. This obstruction allows us to construct classically hard problems for which $1$-step CV-QAOA  with a point-like initial state is not sufficient (see Sec.~\ref{sec:RSW}).

\paragraph{Example: 1D Styblinski-Tang.} To concretely illustrate how the mechanism behind Theorem \ref{thm:1DrealAnalytic}, we now analyze the one-dimensional Styblinski-Tang function. This is a useful test case because it is nonconvex, has multiple local minima, and yet is sufficiently explicit that the \(1\)-step CV-QAOA expectation value can be computed analytically.
\begin{align}
    w(x)=\frac12\left(x^4-16x^2+5x\right).
    \label{ST-1D}
\end{align}
This function has a global minimum at \(x\approx -2.90353\), with value \(w(x)\approx -39.1662\), and a secondary local minimum at \(x\approx 2.7468\), with value \(w(x)\approx -25.0294\). It therefore provides a simple setting in which one can ask whether a shallow quantum circuit can distinguish the true minimum from a nearby false one. Using Lemma~\ref{lem:px}, the expectation value of the cost operator \(\hat w=w(\hat x)\) in the \(1\)-step CV-QAOA state can be evaluated exactly:
\begin{equation}
\begin{aligned}
\label{wOptST}
\braket{\hat{w}}_{\param} 
&= \braket{\sigma| e^{i\gamma C} e^{i\beta B}  w(\hat{x}) e^{-i\beta B} e^{-i\gamma C}|\sigma}
 = \braket{\sigma| w(\hat{x} - \beta\gamma w'(\hat{x}) + \beta \hat{p})|\sigma} \\
&= \frac{3\beta^4}{32\,\sigma^4}
+ \frac{\tfrac{75}{16}\,\beta^4\gamma^2 \;-\; 2\beta^2}{\sigma^2}
\nonumber\\
&\quad+ \frac{\beta}{32}\Bigl[
    \beta^3\gamma^2\,(625\gamma^2+6144)
  + 768\,\beta^2\gamma
  - 8\,\beta\,(200\gamma^2-3)
  - 200\,\gamma
\Bigr]\\
&\quad+ \sigma^2\Bigl[
    48\,\beta^4\gamma^2\,(100\gamma^2-1)
  + \beta^3\,(600\gamma^3-3\gamma)
  - \tfrac{8117}{4}\,\beta^2\gamma^2
  - 256\,\beta\gamma
  - 8
\Bigr]\\
&\quad+ \sigma^4\Bigl[
    \beta^4\,(94704\gamma^4+9\gamma^2)
  + 24351\,\beta^3\gamma^3
  + 3840\,\beta^2\gamma^2
  + 192\,\beta\gamma
  + \tfrac32
\Bigr]\\
&\quad- 15\,\beta\gamma\,\sigma^6\Bigl[
    16309\,\beta^3\gamma^3
  + 3072\,\beta^2\gamma^2
  + 224\,\beta\gamma
  + 4
\Bigr]\\
&\quad+ 1260\,\beta^2\gamma^2\,\sigma^8\,(16\beta\gamma+1)^2
- 15120\,\beta^3\gamma^3\,\sigma^{10}\,(16\beta\gamma+1)
+ 83160\,\beta^4\gamma^4\,\sigma^{12}.
\end{aligned}
\end{equation}
The expression is cumbersome, but its detailed form is less important than the lesson it contains. For each fixed \(\sigma\), we may optimize over \((\beta,\gamma)\), and the result is shown in Fig.~\ref{fig:AnalyticalSTplot}. The striking feature is that the delocalized regime \(\sigma\gg 1\) is highly suboptimal, while the true minimum is approached in the opposite limit \(\sigma\to 0\). Thus, unlike the quadratic case discussed earlier, nonconvex one-dimensional analytic functions are naturally optimized by a sharply localized initial state.

\begin{figure}[h]
    \centering
    \includegraphics[width=0.75\linewidth]{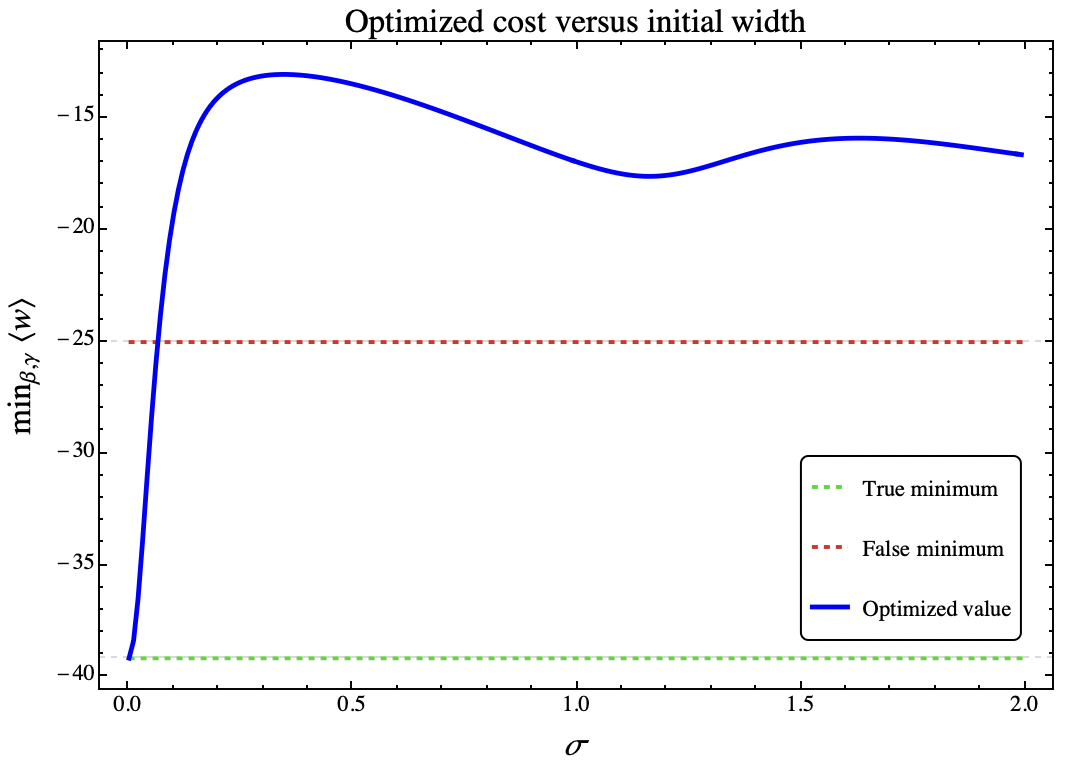}
    \caption{The optimized analytical expectation value \(\min_{\beta,\gamma}\expval{w}\) as a function of the initial width \(\sigma\) for the one-dimensional Styblinski-Tang function. The red dashed line marks the false minimum, while the green dashed line marks the true minimum.}
    \label{fig:AnalyticalSTplot}
\end{figure}
This behavior is exactly what Theorem~\ref{thm:1DrealAnalytic} predicts. In the point-like regime, one should keep the product \(\beta\gamma\) finite while sending \(\sigma\to 0\), so that the circuit induces a finite effective displacement of the localized wavepacket. Guided by this, we set
\begin{align}
    \beta=\alpha\sigma
\end{align}
and optimize in the small-\(\sigma\) regime. One finds
\begin{align}
    \gamma^*(\alpha,\sigma)\label{gOpt}
    &= \arg \min_\gamma \braket{\hat{w}}_{\gamma,\alpha\sigma,\sigma}
    = \frac{1}{\sigma}\Bigl[\frac{1.161}{\alpha} + \mathcal{O}(\alpha^0)\Bigr] + \mathcal{O}(\sigma^0), \\
    \braket{\hat{w}}_{\gamma^*,\alpha\sigma,\sigma}
    &\approx -39.1662 + 8.646\,\alpha^2 + 3315.47\,\sigma^2 + \cdots.
    \label{fOpt}
\end{align}
Hence the optimized expectation value approaches the true global minimum \(-39.1662\) as \(\sigma\to 0\) and \(\alpha\to 0\). In particular, \(1\)-step CV-QAOA does not merely settle at the nearby false minimum; it reaches the correct one.

The Styblinski-Tang example therefore makes the theorem intuitive. In the point-like regime, the QAOA layer effectively translates the initial wavepacket to a location determined by the local slope at the origin, and by tuning \(\lambda=\beta\gamma\) one can steer this effective displacement toward the global minimum. For Styblinski-Tang, this simple mechanism is already enough to overcome the presence of a false minimum. The same idea will underlie our treatment of more general quartic and higher-dimensional rotated problems. A complete proof of Theorem~\ref{thm:1DrealAnalytic} is deferred to Appendix~\ref{app:Quadratics}.

\subsection{Which $d$-dimensional problems can 1-step CV-QAOA solve?}

We now turn to real-analytic optimization problems in higher dimensions, ask the question: when can \(1\)-step CV-QAOA still succeed for a general high-dimensional real-analytic objective? The theorem below identifies a concrete sufficient condition. Namely, \(1\)-step CV-QAOA works whenever the global minimizer lies on the line determined by the gradient at the origin. In this regime, the point-like initial state again causes the circuit to behave effectively as a line search along a distinguished direction selected by local gradient information. This main result is stated in the theorem below, and a detailed proof is in Appendix~\ref{app:PfGradMin}.

\begin{thm}[$1$-step CV-QAOA for $d$-dimensional real-analytic problems with $\nabla C(\vect{0})\propto \arg\min C$]
\label{thm:GradPointingToMin}
Let \(C:\RR^d\to\RR\) be real analytic and assume that every partial derivative of $C$ has at most polynomial growth on $\RR^d$. Suppose a global minimizer \(\xv_*\in\arg\min_{\xv} C(\xv)\) and a finite \(\lambda_*\in\RR\) satisfy
\begin{align}
    \xv_*=-\lambda_* \nabla C(\vect{0}).
\end{align}
Then the \(1\)-step CV-QAOA with \(\gamma=\lambda_*/\beta\) satisfies
\begin{align}
    \bra{\beta,\lambda_*/\beta,\sigma}\hat C
    \ket{\beta,\lambda_*/\beta,\sigma}
    =
    \min_{\xv} C(\xv)
    +
    O_{C,d,\lambda_*}\left(
        \sigma^2+\frac{\beta^2}{\sigma^2}
    \right).
\end{align}
Here the limit is $\sigma\to0$ and $\beta/\sigma\to0$,
with $C$, $d$, and $\lambda_*$ fixed.
Hence \(1\)-step CV-QAOA minimizes \(C\) up to arbitrarily small error.
\end{thm}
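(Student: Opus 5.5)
The plan is to use the $d$-dimensional version of Lemma~\ref{lem:px} to write the one-layer Heisenberg evolution of the position operators in closed form. Then I will evaluate the resulting expectation value in the localized Gaussian by a Taylor expansion around the classical point $\xv_*$. With $U=e^{-i\beta\hat B}e^{-i\gamma\hat C}$, the lemma applied componentwise gives
\[
U^\dagger \hat\xv\, U=\hat\xv+\beta\hat\pv-\lambda_*\nabla C(\hat\xv),\qquad \gamma=\lambda_*/\beta .
\]
Since $U^\dagger C(\hat\xv)U=C(U^\dagger\hat\xv U)$, we get
\[
\bra{\beta,\lambda_*/\beta,\sigma}\hat C\ket{\beta,\lambda_*/\beta,\sigma}=\bra{\sigma}C\bigl(\Phi(\hat\xv)+\beta\hat\pv\bigr)\ket{\sigma},
\]
where $\Phi(\xv)=\xv-\lambda_*\nabla C(\xv)$. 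Note that the components of $\Phi(\hat\xv)$ and $\beta\hat\pv$ do not commute. The argument should therefore be organized in a way that avoids operator ordering issues.

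The cleanest route is to work in the momentum-shifted picture. Because $\beta\hat\pv$ generates translations, one can rewrite the operator as
\[
e^{i\gamma\hat C}\bigl(e^{i\beta\hat B}C(\hat\xv)e^{-i\beta\hat B}\bigr)e^{-i\gamma\hat C}.
\]
Equivalently, the state $e^{-i\gamma \hat C}\ket{\sigma}$ has wavefunction $\psi(\xv)=\phi_\sigma(\xv)e^{-i\gamma C(\xv)}$. Then $e^{-i\beta\hat B}$ is free evolution for time $\beta$, and the expectation of $C(\hat\xv)$ after free evolution equals the classical average of $C(\xv+\beta\pv)$ under the Wigner function of $\psi$, up to corrections. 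Those corrections involve third and higher derivatives of $C$ times powers of $\beta$; this holds because free evolution acts on the Wigner function by the exact classical shear $\xv\mapsto\xv+\beta\pv$.

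So I will use the Wigner function and reduce to
\[
\int W_\psi(\xv,\pv)\,C(\xv+\beta\pv)\,d\xv\, d\pv .
\]
The Wigner function of $\phi_\sigma e^{-i\gamma C}$ has momentum distribution centered at $-\gamma\nabla C(\xv)$. Its width is of order $1/\sigma$, plus Moyal corrections from the third and higher derivatives of $\gamma C$, which are of order $\gamma\sigma^2$. After the shear, the center is $\xv-\lambda_*\nabla C(\xv)=\Phi(\xv)$. With $|\xv|\sim\sigma$, Taylor expansion gives
\[
\Phi(\xv)=-\lambda_*\nabla C(\vect 0)+O(\sigma)=\xv_*+O(\sigma).
\]

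I will then Taylor expand $C$ about $\xv_*$ to second order. Two facts do the work. First, $\nabla C(\xv_*)=0$, because $\xv_*$ is a global minimizer of a smooth function. Second, $\hat\xv$ and $\hat\pv$ have zero mean in $\ket{\sigma}$, and the ordering ambiguities only produce terms like $[\hat x_j,\hat p_k]=i\delta_{jk}$ multiplied by $\beta$. The zeroth-order term gives $C(\xv_*)=\min C$. The first-order term has coefficient $\nabla C(\xv_*)=0$, so it vanishes exactly; this is the key cancellation. The remaining terms are quadratic in the displacement $\Delta=\Phi(\hat\xv)-\xv_*+\beta\hat\pv$. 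Since $\Phi(\hat\xv)-\xv_*=(\Id-\lambda_*\nabla^2C(\vect 0))\hat\xv+O(\hat\xv^2)$, we get $\langle\Delta^\top H\Delta\rangle=O(\sigma^2+\beta^2/\sigma^2)$. The cross terms are $O(\beta)\cdot\langle\{\hat x,\hat p\}\rangle$, and this vanishes for the real Gaussian, up to $O(\beta\sigma)$, which is dominated by $\sigma^2+\beta^2/\sigma^2$.

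The main obstacle is making the Taylor remainder rigorous as an operator statement. $\Phi(\hat\xv)$ is a nonlinear unbounded operator, and the third-order remainder mixes noncommuting $\hat\xv$ and $\hat\pv$. I would handle this by working directly with the explicit wavefunction $e^{-i\beta\hat B}\bigl(\phi_\sigma e^{-i\gamma C}\bigr)$ expressed as a Gaussian-kernel integral. Equivalently, I would expand the Weyl symbol of $C(\Phi(\hat\xv)+\beta\hat\pv)$ and bound the remainder by moments $\langle|\hat\xv|^k\rangle=O(\sigma^k)$ and $\langle|\beta\hat\pv|^k\rangle=O((\beta/\sigma)^k)$. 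The polynomial-growth hypothesis on all derivatives makes these remainders integrable against the Gaussian tails, with constants depending on $C$, $d$ and $\lambda_*$, exactly mirroring the one-dimensional proof of Theorem~\ref{thm:1DrealAnalytic}, which I would cite for the template of the remainder estimates.
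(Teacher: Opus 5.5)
Your Heisenberg reduction via Lemma~\ref{lem:px} to $\bra{\sigma}C(\Phi(\hat\xv)+\beta\hat\pv)\ket{\sigma}$ is right and matches the paper, and so is your bookkeeping around $\xv_*$: the zeroth order gives $\min C$, the linear term vanishes since $\nabla C(\xv_*)=0$, and the quadratic term is $O(\sigma^2+\beta^2/\sigma^2)$. But the real content of the theorem is the remainder estimate, and there your proposal has a genuine gap. You leave it open, and the tools you name would not close it as described.

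\textbf{The missing observation.} Set $Q:=\Phi(\hat\xv)-\xv_*+\beta\hat\pv$. Then $\xv_*+Q=U^\dagger\hat\xv\,U$ is a unitary conjugate of the commuting position operators, so the components $Q_j$ \emph{commute pairwise}. Your worry that $\Phi(\hat\xv)$ and $\beta\hat\pv$ do not commute is beside the point. Only commutativity among the $Q_j$ matters, and it is exactly what gives $C(U^\dagger\hat\xv U)$ a meaning in the first place.

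\textbf{How the paper closes the estimate.} By the joint spectral theorem, any scalar inequality in $\vect q$ lifts to an operator inequality in $Q$. The paper combines $\nabla C(\xv_*)=0$, Taylor's theorem near $\xv_*$, and polynomial growth far away to get
\[
0\le C(\xv_*+\vect q)-C(\xv_*)\le K\bigl(|\vect q|^2+|\vect q|^{2m}\bigr)\quad\text{for all } \vect q\in\RR^d ,
\]
hence $0\le F-\min C\le K\langle |Q|^2+|Q|^{2m}\rangle_\sigma$. What remains is the moment bound $\|Q_{j_1}\cdots Q_{j_n}\ket{\sigma}\|\le K_n r^n$ with $r=\sigma+|\beta|/\sigma$. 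It holds because, in the rescaled variables $\hat x_j=\sigma Y_j$, $\hat p_j=P_j/\sigma$, each $Q_j$ maps a unit Gaussian times a polynomially bounded smooth function to another such function at the cost of a factor $r$. This step uses $\partial_jC(\xv)-\partial_jC(\vect 0)=\sum_k x_k\int_0^1\partial_{jk}C(t\xv)\,dt$. No ordering corrections or cross-term cancellations are needed, since $r^2\le 2(\sigma^2+\beta^2/\sigma^2)$ already absorbs the $O(|\beta|)$ cross terms.

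\textbf{Why the Wigner route does not substitute for this.}
\begin{itemize}
\item The shear step is exact, not ``up to corrections.'' So it yields nothing beyond $\langle\psi|C(\hat\xv+\beta\hat\pv)|\psi\rangle$, which is again a function of commuting operators and needs the same functional calculus.
\item The Wigner function of $\psi=\phi_\sigma e^{-i\gamma C}$ is signed. A pointwise bound on the Taylor remainder therefore cannot be integrated against it.
\item The Moyal expansion in the derivatives of $\gamma C$ is governed by the cubic phase $\gamma\,\partial^3C\,|\vect y|^3\sim\lambda_*\sigma^3/\beta$ with $|\vect y|\sim\sigma$. This is not small in the stated regime; e.g.\ $\beta=\sigma^4$ has $\beta/\sigma\to0$ but $\lambda_*\sigma^3/\beta\to\infty$.
\item Using the proof of Theorem~\ref{thm:1DrealAnalytic} as a template is circular: in the paper that theorem is obtained from this very argument with $d=1$.
\end{itemize}
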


We now specialize this general perspective to a concrete family that will be important throughout the rest of the paper and that directly connects our analysis to earlier work. Start from a separable real-analytic function, for which the preferred coordinates are known, and then hide that structure by composing with an unknown orthogonal transformation. From the point of view of the ambient coordinates, the resulting objective is no longer separable and may appear substantially more difficult to optimize. 

A particularly important special case of Problem~\ref{prob:rotated-analytic} is the Rotated Double Well (RDW) problem, obtained by choosing
\begin{align}
    w(x)=x^4+ax^3+bx^2+cx
\end{align}
to be a quartic function with two non-degenerate local minima. This problem was previously studied in Ref.~\cite{leng2023nonconvex}. The point of the example is that the hidden rotation destroys the manifest product structure of the objective, even though the underlying one-dimensional geometry is unchanged.

For this rotated family, the hypothesis of
Theorem~\ref{thm:GradPointingToMin} is satisfied whenever
$w'(0)\ne0$ and $w$ obeys its growth condition. Indeed, if $x_0\in\arg\min_x w(x)$, then
\begin{align}
\label{eq:RDW-global-min}
    \xv_* = x_0 R^\top \vect{1},
    \qquad
    \nabla C_R(\vect{0}) = w'(0)R^\top \vect{1},
\end{align}
so the global minimizer and the gradient at the origin are parallel.
This same alignment underlies the $O(d)$-query classical algorithm
for RDW established in Proposition~\ref{prop:RDW-O(d)-bound}
in Section~\ref{sec:classical-RDW}. Moreover, in this rotated setting one can say more than the general theorem alone: because both the mixer Hamiltonian and the initial Gaussian state are rotationally invariant, the variational objective is in fact independent of the hidden rotation \(R\). This yields the following sharpened corollary.

\begin{cor}[$1$-step CV-QAOA for rotated real-analytic functions]
\label{thm:doublewell}
Consider Problem~\ref{prob:rotated-analytic}, and assume that \(w:\RR\to\RR\) satisfies the growth condition of Theorem~\ref{thm:1DrealAnalytic} and that \(w'(0)\neq 0\). Let $\lambda = -x_*/w'(0)$. Then the \(1\)-step CV-QAOA objective is independent of the hidden rotation \(R\), and setting $\gamma = \lambda/\beta$ we have
\begin{align}
    \braket{\param|\hat{C}_R|\param}
    =
    \min_{\xv} C_R(\xv)
    +
    O\left(d\sigma^2+d(\beta/\sigma)^2\right).
\end{align}
In particular, \(1\)-step CV-QAOA minimizes the rotated real-analytic problem up to arbitrarily small optimality gap, with a choice of parameters that is independent of \(R\).
\end{cor}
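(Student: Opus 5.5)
The plan is to reduce the corollary to a product of $d$ independent one-dimensional problems and then invoke Theorem~\ref{thm:1DrealAnalytic} coordinate by coordinate. First I would show that the objective does not depend on $R$. Let $\hat U_R$ be the metaplectic (passive linear-optics) unitary implementing $\hat U_R^\dagger \hat{\xv}\hat U_R = R\hat{\xv}$ and $\hat U_R^\dagger \hat{\pv}\hat U_R = R\hat{\pv}$. Concretely, $(\hat U_R\psi)(\xv)=\psi(R\xv)$, up to the sign convention for $R$ versus $R^\top$. Then:
\begin{itemize}
\item $\hat B=\tfrac12\hat{\pv}^\top\hat{\pv}$ is invariant under $\hat U_R$ because $R^\top R=\Id$;
\item the Gaussian $\braket{\xv|\sigma}$ depends only on $\xv^\top\xv$, so $\hat U_R\ket{\sigma}=\ket{\sigma}$;
\item $\hat C_R = \hat U_R^\dagger \hat C_{\Id}\hat U_R$, where $C_{\Id}(\xv)=\sum_k w(x_k)$.
\end{itemize}
Inserting these into $\bra{\sigma}e^{i\gamma\hat C_R}e^{i\beta\hat B}\hat C_R e^{-i\beta\hat B}e^{-i\gamma\hat C_R}\ket{\sigma}$ and conjugating through gives exactly the $R=\Id$ expression. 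The same argument applies to $\min_\xv C_R=\min_\xv C_{\Id}=d\,w(x_*)$, since $R$ is a bijection of $\RR^d$. So it suffices to treat $R=\Id$.

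For $R=\Id$, everything factorizes across modes:
\begin{itemize}
\item $\ket{\sigma}$ is a product of $d$ one-dimensional Gaussians of width $\sigma$;
\item $\hat B=\sum_j\tfrac12\hat p_j^2$ and $\hat C_{\Id}=\sum_j w(\hat x_j)$ are sums of commuting single-mode terms, so each exponential factorizes into a tensor product of single-mode unitaries $e^{-i\beta \hat p_j^2/2}e^{-i\gamma w(\hat x_j)}$.
\end{itemize}
The expectation of $\sum_k w(\hat x_k)$ in the resulting product state is therefore $d$ times the one-dimensional quantity $\bra{\beta,\lambda/\beta,\sigma}w(\hat x)\ket{\beta,\lambda/\beta,\sigma}$, since the other factors contribute $\braket{\psi|\psi}=1$. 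Theorem~\ref{thm:1DrealAnalytic} with $C=w$ and $\lambda=-x_*/w'(0)$ gives $w(-\lambda w'(0))+O(\beta^2/\sigma^2+\sigma^2)=w(x_*)+O(\beta^2/\sigma^2+\sigma^2)$. The hypothesis $w'(0)\neq0$ and the growth condition are exactly those of that theorem. Multiplying by $d$ yields $\min C_R + O(d\sigma^2+d(\beta/\sigma)^2)$, with the parameter choice manifestly independent of $R$.

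The step needing the most care is the first. I must check that the metaplectic representation of $R\in\mathbf{O}(d)$, including reflections with $\det R=-1$, is a genuine unitary on $L^2(\RR^d)$ satisfying the claimed conjugation rules. This is immediate from the explicit action $\psi\mapsto\psi(R\,\cdot)$ together with $\hat{\pv}=-i\nabla$. I must also confirm that conjugation passes through functions of operators, $\hat U^\dagger f(\hat A)\hat U=f(\hat U^\dagger\hat A\hat U)$, which holds for self-adjoint $\hat A$ and bounded unitary exponentials. The remaining domain issues are that $\hat C_R$ is unbounded and the expectation must be finite. These are inherited from the one-dimensional theorem: the polynomial growth of $w$ and its derivatives ensures finite moments for each factor, so the expectation of the sum is finite.
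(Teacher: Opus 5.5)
Your proposal is correct and follows the paper's proof essentially step for step. The paper likewise proves $F_R=F_{\Id}$ using a unitary representation $\cV(R)$ of $\mathbf{O}(d)$ that leaves $B$ and $\ket{\sigma}$ invariant and conjugates $C_{\Id}$ into $C_R$, then factorizes $F_{\Id}=d\,\langle w\rangle_{\beta,\gamma,\sigma}$ and applies Theorem~\ref{thm:1DrealAnalytic}. Your direct evaluation at $\gamma=\lambda/\beta$, instead of the paper's $\min_{\param}$ phrasing, matches the statement slightly more precisely.
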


\begin{proof}
To prove Corollary~\ref{thm:doublewell}, we proceed in two steps. First, we show that the CV-QAOA objective function \(F_R(\param):=\braket{C_R}_{\param}\) is independent of the choice of \(R\in\mathbf{O}(d)\), i.e. \(F_R=F_{\amsmathbb{I}}\). Second, we use the one-dimensional result of Theorem~\ref{thm:1DrealAnalytic}.

We start by establishing that \(F_R(\param)=F_{\Id}(\param)\) for every \(R\in\mathbf{O}(d)\). The mixer Hamiltonian \(B=\pmb{p}^\top\pmb{p}/2\) enjoys manifest \(\mathbf{O}(d)\) rotational symmetry, i.e. for any \(R\in\mathbf{O}(d)\), the transformation \(\pmb{p}\mapsto R\pmb{p}\) leaves \(B\) invariant. Moreover, this transformation also leaves the QAOA state invariant, since the argument of the Gaussian \(\xv^\top\xv\) is manifestly \(\mathbf{O}(d)\)-symmetric. These statements can be rephrased in terms of a unitary representation \(\cV:\mathbf{O}(d)\rightarrow \mathrm{Op}(\mathcal{H}^d)\), where \(\mathrm{Op}(\mathcal{H}^d)\) is the space of linear operators acting on the Hilbert space of \(d\) bosons, with the convention that \(\cV(R)^\dagger\,\xv\,\cV(R)=R\xv\). Then for every \(R\in\mathbf{O}(d)\),
\begin{align}
    \cV(R)^\dag ~C_{\Id}(\xv)~\cV(R)= C_R(\xv),
    \qquad
    \cV(R) ~B~\cV(R)^\dag =B,
    \qquad
    \cV(R)\ket{\sigma} = \ket{\sigma}.
\end{align}
Putting these observations together, we find
\begin{align}
F_R(\param) 
    &= \bra{\sigma}e^{i\gamma C_R(\xv)}e^{i\beta B}C_R(\xv) e^{-i\beta B} e^{-i\gamma C_R(\xv)}\ket{\sigma}\nonumber\\
    &= \bra{\sigma} \cV(R)^\dag e^{i\gamma C_\Id(\xv)} \cV(R)e^{i\beta B} \cV(R)^\dag C_\Id(\xv) \cV(R) e^{-i\beta B} \cV(R)^\dag e^{-i\gamma C_\Id(\xv)} \cV(R)\ket{\sigma}\nonumber\\
    &=\bra{\sigma}e^{i\gamma C_\Id(\xv)}e^{i\beta B}C_\Id(\xv) e^{-i\beta B} e^{-i\gamma C_\Id(\xv)}\ket{\sigma}
    = F_\Id(\param).
\end{align}
This proves the claim.

As a result, to compute the CV-QAOA cost expectation \(F_R\) for any \(R\in\mathbf{O}(d)\), it is sufficient to work with \(R=\amsmathbb{I}\). But \(F_{\Id}=\braket{C_{\Id}}\) is simply a sum of \(d\) independent copies of the one-dimensional expectation value \(\expval{w}_{\beta,\gamma,\sigma}\), so by Theorem~\ref{thm:1DrealAnalytic},
\begin{equation}\label{rotInv}
    F_R(\param)=F_\Id(\param)=d\,\expval{w}_{\beta,\gamma,\sigma}.
\end{equation}
Hence,
\begin{align}
    \min_{\param}F_R(\param)
    &= d\,\min_{\param}\expval{w}_{\beta,\gamma,\sigma}
     = d\left[\min_x w(x)+O\left(\sigma^2+(\beta/\sigma)^2\right)\right]\no\\
    &= \min_{\xv} C_R(\xv)+O\left(d\sigma^2+d(\beta/\sigma)^2\right).
\end{align}
This completes the proof.
\end{proof}

\subsection{Dequantization of 1-step CV-QAOA with point-like initial state}
\label{sec:dequantization}

Despite its improved performance over QHD for problems such as the RDW, \(1\)-step CV-QAOA with a point-like initial state (\(\sigma\ll 1\)) can in fact be dequantized by an efficient classical simulation. As a consequence, the same mechanism that allows \(1\)-step CV-QAOA to solve the RDW also yields an \(O(\poly(d))\)-time classical algorithm for that problem. 

The dequantization is based on a quadratic truncation of the cost unitary in the CV-QAOA ansatz. Let
\begin{align}
    \tilde{C}(\xv)&=\sum_{\ell_1+ \dots+\ell_d\le 2}c(\ell_1, \dots, \ell_d)x_1^{\ell_1}\dots x_d^{\ell_d}
    \qquad \text{and} \qquad
    &\ket{\b,\g,\s\rangle\!}&=e^{-i\b B}e^{-i\g \tilde{C}}\ket{\s},
\end{align}
where the double-ket indicates evolution under the truncated cost operator \(\tilde{C}\) rather than the full cost \(C\).
Since both \(\tilde{C}\) and \(B\) are quadratic Hamiltonians, the output state \(\ket{\b,\g,\s\rangle\!}\) is Gaussian and therefore efficiently classically simulable. In the point-like regime, the full CV-QAOA state and its cost expectation can both be approximated to arbitrarily small error by this quadratic truncation. More precisely, we have the following result; see Appendix~\ref{app:PfPropDequant} for the proof.

\begin{prop}[Dequantizing 1-step CV-QAOA with point-like initial state]\label{prop:dequant}
For any parameters $(\beta,\gamma,\sigma)$ in the point like regime of $\sigma \to 0$ and $\beta/\sigma\to 0$, we have
\begin{align*}
    \big\|\ket{\param} - |\param\rangle\!\rangle\big\| &\le  O\left(|\g| (d\s)^3\right),\\
    \big|\braket{\param|C|\param}-\langle\!\langle\param|C|\param\rangle\!\rangle\big| &\le O\left(d\sigma^2 + d\beta^2/\sigma^2\right).
\end{align*}
\end{prop}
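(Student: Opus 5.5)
The plan is to handle the two bounds separately: first a Duhamel-type comparison of the two cost unitaries on the initial Gaussian, then a Heisenberg-picture computation of both expectation values using Lemma~\ref{lem:px}.

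\emph{State bound.} Because $e^{-i\beta B}$ is unitary and applied last in both states, it drops out of the norm:
\[
\big\|\ket{\param}-|\param\rangle\!\rangle\big\| = \big\|(e^{-i\gamma C}-e^{-i\gamma\tilde C})\ket{\sigma}\big\|.
\]
Both $C(\hat\xv)$ and $\tilde C(\hat\xv)$ are multiplication operators in the position representation, so this equals
\[
\Big(\int |\braket{\xv|\sigma}|^2\,\big|e^{-i\gamma C(\xv)}-e^{-i\gamma \tilde C(\xv)}\big|^2 d\xv\Big)^{1/2}.
\]
Use the pointwise estimate $|e^{ia}-e^{ib}|\le |a-b|$ and Taylor's theorem with remainder. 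The polynomial growth of the third derivatives gives $|C(\xv)-\tilde C(\xv)|\le K\|\xv\|_1^3(1+\|\xv\|)^{m}$. The Gaussian moment $\mathbb{E}\|\xv\|_1^6$ under $|\braket{\xv|\sigma}|^2$ is $O((d\sigma)^6)$, and the factor $(1+\|\xv\|)^m$ contributes only $1+o(1)$ as $\sigma\to0$. Taking the square root gives $O(|\gamma|(d\sigma)^3)$.

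\emph{Expectation bound.} We do not deduce this from the state bound, since $\gamma=\lambda/\beta$ is large and $C$ is unbounded. Instead we compute both expectations directly. By the $d$-dimensional version of Lemma~\ref{lem:px},
\[
\braket{\param|C|\param}=\bra{\sigma}C\big(\xv+\beta\pv-\lambda\nabla C(\xv)\big)\ket{\sigma},
\]
and the truncated state gives the same expression with $\nabla C$ replaced by $\nabla\tilde C(\xv)=\nabla C(\vect 0)+\nabla^2C(\vect 0)\xv$. We then follow the argument of Theorem~\ref{thm:GradPointingToMin}, which we assume already controls the full expression. Expand $C$ around the common effective point $\xv_{\rm eff}=-\lambda\nabla C(\vect 0)$. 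The displacement operators are
\[
\vect\delta=\xv+\beta\pv-\lambda\big(\nabla C(\xv)-\nabla C(\vect 0)\big)
\]
and its truncated analogue $\tilde{\vect\delta}$. Each has vanishing first moment in $\ket{\sigma}$ up to $O(\sigma^2)$ corrections. Its second moments are $O(d\sigma^2+d\beta^2/\sigma^2)$, because $\braket{x_j^2}=\sigma^2$ and $\braket{p_j^2}=1/(4\sigma^2)$, and cross terms are handled with Wick contractions and the ordering of $\xv$ and $\pv$. Hence
\[
\braket{\param|C|\param}=C(\xv_{\rm eff})+\nabla C(\xv_{\rm eff})\cdot\braket{\vect\delta}+O\big(\braket{\vect\delta^2}\big),
\]
and the same holds for the truncated state. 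In both cases the right-hand side is $C(\xv_{\rm eff})+O(d\sigma^2+d\beta^2/\sigma^2)$. Subtracting the two expressions gives the claim.

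\emph{Main obstacle.} The hard step is making the Taylor remainders rigorous for the operator-valued argument $\xv+\beta\pv-\lambda\nabla C(\xv)$, since $\xv$ and $\pv$ do not commute and $\pv$ has variance of order $1/\sigma^2$. The plan here is to move to the Wigner/Weyl picture, or equivalently to write the expectation as an integral against the Fourier transform in $\pv$. In that representation $\beta\pv$ acts as a shift by $\beta\vect k$ with $\vect k$ Gaussian of width $1/\sigma$. The remainders are then ordinary Taylor remainders, bounded using polynomial growth and Gaussian moments; the ordering corrections are of order $\beta^2$ times commutator terms, which are lower order.
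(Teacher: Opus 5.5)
Your proposal is correct. For the expectation bound it follows the paper's skeleton: conjugate via Lemma~\ref{lem:px}, note that both Heisenberg-evolved positions share the effective point $-\lambda\nabla C(\vect 0)$ because $\nabla\tilde C(\vect 0)=\nabla C(\vect 0)$, show each expectation equals $C(\xv_{\rm eff})+O(\sigma^2+\beta^2/\sigma^2)$, and subtract. The differences are in the two technical steps.

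\textbf{State bound.} The paper expands $1-e^{i\gamma(C-\tilde C)}$ in a power series and bounds $\|(C-\tilde C)^n\ket{\sigma}\|$ term by term. Your pointwise estimate $|e^{ia}-e^{ib}|\le|a-b|$ instead reduces everything to a single sixth-order Gaussian moment. This is shorter and more robust. The paper's summation needs a constant uniform in $n$, and that fails already for $C-\tilde C=cx^3$: there $\|(C-\tilde C)^n\ket{\sigma}\|=|c|^n\sigma^{3n}\sqrt{(6n-1)!!}$, which outgrows $n!$.

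\textbf{Expectation bound.} The ``main obstacle'' you identify does not arise in the paper's argument, so the Wigner/Weyl detour is unnecessary. The components of $\vect\delta$ commute pairwise, since $\xv_{\rm eff}+\vect\delta=U^\dagger\hat\xv\,U$. Hence the scalar Taylor inequality $|C(a+q)-C(a)-\nabla C(a)\cdot q|\le K(|q|^2+|q|^{2m})$ lifts directly through the joint spectral theorem. What remains are the moment bounds $\|\delta_{j_1}\cdots\delta_{j_n}\ket{\sigma}\|\le Kr^n$ with $r=\sigma+|\beta|/\sigma$. These follow by applying each $\delta_j$ to a Gaussian times a polynomially bounded function.

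The $|q|^{2m}$ term is forced by the polynomial growth of $C$. Your displayed $O(\langle\vect\delta^2\rangle)$ omits it, though its contribution $O(r^{2m})$ is harmless.

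Your phase-space route can be carried out for polynomial $C$. There the Moyal corrections to the symbol of $C(\hat\xv+\beta\hat\pv-\lambda\nabla C(\hat\xv))$ do carry a factor $\beta^2$. For general analytic $C$, however, it is considerably more work than the commuting-operator argument.
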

\begin{rem}[Query cost of dequantized CV-QAOA]
\label{rem:dequant-query-cost}\normalfont
The quadratic approximation in Proposition~\ref{prop:dequant}
is determined by $\nabla C(0)$ and $\nabla^2 C(0)$, up to an
irrelevant additive constant. These derivatives can be estimated
by central differences using $2d^2+1$ value queries.
Suppose $\{\vect{e}_i\}_{i=1}^d$ denotes the Cartesian basis. For a step size $h>0$, it suffices to query
$0$, $\pm h\vect{e}_i$, and
$h(s\vect{e}_i+t\vect{e}_j)$ for $i<j$ and
$s,t\in\{-1,1\}$. For analytic $C$, choosing the step size and oracle precision according to local derivative bounds attains any prescribed coefficient accuracy. This gives an $O(d^2)$ query upper bound for constructing the quadratic approximation for generic real-anlaytic objectives. In particular, when the objective function is a rotated real-analytic function $C_R$, the hessian is $\nabla^2 C_R (\vect{0}) = w'(0)\Id$. Thus, only the gradient $\nabla C_R(\vect{0})$ needs to be estimated, which brings the query cost down to $O(d)$.
\end{rem}

Our analysis shows that $1$-step CV-QAOA with a point-like initial state can solve a broad class of nonconvex problems, but that this regime admits efficient classical simulation. Its success depends on the alignment of the initial gradient with a global minimizer. Problems that violate this condition therefore require us to go beyond the point-like, one-step setting, either through greater circuit depth or a delocalized initial state. Our dequantization argument does not extend to these regimes, which we investigate through the rotated square well problem in Section~\ref{sec:RSW} and numerical studies in Section~\ref{sec:Numerics}.

\section{The Rotated Square Well problem}
\label{sec:RSW}
In this section, we analyze the $d$-dimensional rotated square well problem, whose landscape is flat almost everywhere, making it a candidate for classical hardness that evades the gradient-based attacks discussed above. We first study an adiabatic evolution with an analytic schedule chosen to suppress the adiabatic error exponentially, allowing us to establish a polylogarithmic runtime $T=O(\log^8 d)$ for fixed target error. We then bound the number of phase queries needed to simulate this evolution using CV-QAOA with a harmonic mixer, obtaining a subpolynomial query complexity of $d^{o(1)}$. Together with the classical lower bound established in Section~\ref{sec:ClassicalComplexity}, this gives a polynomial-versus-subpolynomial query separation between our quantum algorithm and classical algorithms that reconstruct the hidden rotated structure. 

Define
\begin{align}
    w(z)&=-\mathbf 1_{\{|z|<\delta\}},
    &
    W(\xv)\equiv W_{R,\sv}(\xv)&=\sum_{j=1}^d w((R\xv+\sv)_j),
\end{align}
where $\delta>0$ is fixed independently of the dimension and target accuracy, $R\in\mathbf O(d)$ is an unknown rotation, and $\sv$ is the unknown shift introduced in \eqref{eq:rot-sq-well}, with $\|\sv\|_\infty\le B$ for a fixed $B>0$. We bound the probability of reaching $V_a=\{\xv:\|R\xv+\sv\|_\infty\le a\}$, where $a>\delta$ is also fixed independently of the dimension and target accuracy. By sampling sufficiently many points independently from the final-state probability distribution, a concentration argument shows that the mean of the samples lies in the intersection of the $d$ wells with failure probability exponentially small in the number of samples (Proposition~\ref{prop:rsw-averaging}).

We first study the evolution on $\RR^d$ under
\begin{align}
    H(s)&=(1-s)A+sW,
    &
    A&=-\frac12\Delta+\frac{|\xv|^2}{8\sigma^4},
\end{align}
where $\sigma>0$ is the initial Gaussian width, fixed independently of the dimension and target accuracy. We take $0\le s\le s_f=1-\nu^2$, with $0<\nu<1$, to keep the spectral gap from vanishing at the end of the evolution. The initial state is the known Gaussian ground state of $A$,
\begin{equation}
    \braket{\xv|\sigma}
    =\frac{1}{(2\pi\sigma^2)^{d/4}}
    e^{-|\xv|^2/(4\sigma^2)}.
\end{equation}
Under the change of coordinates $\zv=R\xv+\sv$, the Laplacian is unchanged and $|\xv|^2=|\zv-\sv|^2$. The transformed Hamiltonian therefore decomposes exactly into one-dimensional factors
\begin{align}
    h_j(s)&=(1-s)A_j+sw(z),
    &
    A_j&=-\frac12\partial_z^2+\frac{(z-s_j)^2}{8\sigma^4}.
\end{align}
The initial Gaussian becomes the product of the ground states $\phi_j(z)=(2\pi\sigma^2)^{-1/4}e^{-(z-s_j)^2/(4\sigma^2)}$. Although the factors generally have different shifts $s_j$, they follow the same schedule, and their evolutions remain independent. Let $\psi_j(z)$ be the final wavefunction of the $j$th one-dimensional evolution, initialized in $\phi_j$, and let $\PP_d(T)$ denote the probability that the final position lies in $V_a$. The product structure gives
\begin{equation}
    \PP_d(T)
    =\prod_{j=1}^d\int_{-a}^a|\psi_j(z)|^2\,dz.
\end{equation}
It is thus enough to bound the failure probability of each factor uniformly over the allowed shifts $|s_j|\le B$. Smoothing is treated by comparison with this infinite-volume evolution in Appendix~\ref{app:QAA}.

We use the exponentially accurate adiabatic theorem of Hagedorn and Joye~\cite{HagedornJoye2002}, which bounds the adiabatic error schematically as $\varepsilon_{\mathrm{ad}}(T)\le K e^{-\Gamma T}$ for constants $K,\Gamma>0$ that depend on the dimension $d$ and the spectral gap (see Theorem~\ref{thm:HJ-original} for details and Lemma~\ref{lem:gamma-nu-M} for bounds on these constants). The theorem requires an analytic schedule and a nondegenerate ground state separated from the rest of the spectrum by a uniform positive gap.
\begin{assumption}\label{HJ-assumptions}
The operators $\widehat H(u)$, $u\in[0,1]$, are self-adjoint on a common dense domain $D\subset\mathcal H$ and extend to an analytic family of type A on a complex neighborhood of $[0,1]$. For every $u\in[0,1]$, the eigenvalue $E(u)$ is simple, and there is a constant $d_0>0$ such that
\begin{equation}
    \operatorname{dist}\left(
    E(u),\operatorname{spec}(\widehat H(u))\setminus\{E(u)\}
    \right)\ge d_0
    \qquad (0\le u\le1).
\end{equation}
\end{assumption}
The cutoff $s_f=1-\nu^2$ keeps the spectral gap bounded below by
$\gamma_*\nu^2>0$ throughout the evolution, where $\gamma_*>0$ is
independent of the allowed shifts and the dimension
(see Lemma~\ref{lem:gap-all-s}). As explained in \cite{HagedornJoye2002}, the initial state in the superadiabatic approximation is not guaranteed to coincide with the ground state of the initial mixer Hamiltonian. To achieve this agreement and apply the exponentially accurate adiabatic theorem, we choose a schedule whose derivatives vanish at the endpoints (see Lemma~\ref{lem:endpoint-flat-schedule}). Specifically, writing physical time as $t=Tu$, we take
\begin{equation}
    s_M(u)=s_f\,
    \frac{\int_0^u v^M(1-v)^M\,dv}
         {\int_0^1 v^M(1-v)^M\,dv},
    \qquad 0\le u\le1.
\end{equation}
This polynomial is increasing on $[0,1]$, satisfies $s_M(0)=0$ and $s_M(1)=s_f$, and has $s_M^{(m)}(0)=s_M^{(m)}(1)=0$ for $1\le m\le M$. Taking $M\ge N+1$, where $N$ is the truncation order of the superadiabatic approximation, ensures that its initial state is exactly our Gaussian and its final state is proportional to the final ground state. The adiabatic theorem, therefore, bounds the error in preparing the final ground state, up to phase.

\begin{thm}[Rotated square well runtime]
\label{thm:qaa_log_runtime}
For every target failure probability $\varepsilon_{\mathrm{tar}}\in(0,1/3)$, the evolution under $H(s_M(t/T))$, initialized in $\ket{\sigma}$, admits a schedule and runtime independent of $R$ and $\sv$ such that
\begin{equation}
    \PP_d(T)\ge1-\varepsilon_{\mathrm{tar}},
    \qquad
    T=O\left(\log^8\frac{d}{\varepsilon_{\mathrm{tar}}}\right).
\end{equation}
\end{thm}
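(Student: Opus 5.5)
The proof reduces to one dimension using the product structure already derived. Since $\PP_d(T)=\prod_{j=1}^d p_j(T)$ with $p_j(T)=\int_{-a}^a|\psi_j(z)|^2dz$, it suffices to show that every one-dimensional factor with shift $|s_j|\le B$ satisfies $1-p_j(T)\le\varepsilon_{\mathrm{tar}}/d$. Then $\PP_d(T)\ge(1-\varepsilon_{\mathrm{tar}}/d)^d\ge1-\varepsilon_{\mathrm{tar}}$. The schedule $s_M$ and the time $T$ depend only on $\delta,a,\sigma,B,\nu$ and the required per-factor accuracy, never on the particular $s_j$ or $R$, so they are independent of $R$ and $\sv$. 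Each factor's failure is split as
\[
1-p_j(T)\le 2\,\varepsilon_{\mathrm{ad}}(T)+\Bigl(1-\int_{-a}^a|\Phi_j(z)|^2dz\Bigr),
\]
where $\Phi_j$ is the ground state of $h_j(s_f)$.

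\textbf{Static term.} At $s=s_f=1-\nu^2$ the Hamiltonian is $\nu^2A_j+(1-\nu^2)w$. This is a square well of depth about $1$ plus a weak confining harmonic term. The ground state decays like $e^{-\kappa(|z|-\delta)}$ outside the well, with $\kappa\approx\sqrt{2}$ bounded below uniformly for small $\nu$. The mass outside $[-a,a]$ is therefore $O(e^{-c(a-\delta)})$ plus harmonic corrections, but it does not vanish as $d$ grows. So I will use $\nu$ and the averaging step (Proposition~\ref{prop:rsw-averaging}) rather than per-factor exactness. Concretely, I read $V_a$ through a sample mean: the per-factor escape probability only needs to be a small constant, and concentration over independent samples handles the rest. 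If the theorem is meant literally per factor, I will instead choose $\nu$ as a function of $d/\varepsilon_{\mathrm{tar}}$ and take $a$ large, and track how the gap $\gamma_*\nu^2$ then enters $\Gamma$. This is the source of extra logarithmic powers: taking $\nu^2\sim 1/\log(d/\varepsilon_{\mathrm{tar}})$ only costs polylog factors.

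\textbf{Dynamic term.} I apply Hagedorn--Joye (Theorem~\ref{thm:HJ-original}) to $\widehat H(u)=h_j(s_M(u))$. Assumption~\ref{HJ-assumptions} holds for the following reasons.
\begin{itemize}
\item The family is analytic of type A, because $s_M$ is a polynomial and $w$ is bounded, so it is relatively bounded with respect to $A_j$.
\item The gap is at least $\gamma_*\nu^2$ by Lemma~\ref{lem:gap-all-s}.
\item The endpoint flatness from Lemma~\ref{lem:endpoint-flat-schedule} with $M\ge N+1$ makes the superadiabatic states coincide with the instantaneous ground states at $u=0,1$.
\end{itemize}
This gives $\varepsilon_{\mathrm{ad}}(T)\le Ke^{-\Gamma T}$. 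Lemma~\ref{lem:gamma-nu-M} bounds $K$ and $\Gamma$ in terms of $\nu$ and $M$. I expect $\Gamma\gtrsim \nu^{c}/M^{c'}$ and $\log K\lesssim \mathrm{poly}(M)$, since the analyticity strip of $s_M$ shrinks like $1/M$ and its derivative sup grows polynomially in $M$. Setting $Ke^{-\Gamma T}\le\varepsilon_{\mathrm{tar}}/(2d)$ requires $T\ge\Gamma^{-1}\bigl(\log K+\log(2d/\varepsilon_{\mathrm{tar}})\bigr)$.

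\textbf{Parameter balancing (main obstacle).} The optimal truncation order $N$ in Hagedorn--Joye scales like the inverse adiabatic parameter, so $N$ and hence $M$ must grow with $\log(d/\varepsilon_{\mathrm{tar}})$. Both $\Gamma^{-1}$ and $\log K$ then depend polynomially on $L=\log(d/\varepsilon_{\mathrm{tar}})$, and combined with the $\nu$ choice above this gives $T=O(L^8)$. The delicate point is getting explicit, uniform-in-$s_j$ dependence of the Hagedorn--Joye constants on $M$ and the gap. I would first try to read these off Lemma~\ref{lem:gamma-nu-M} directly: pick $M=N+1\asymp L$ and $\nu$ fixed or at most polylog-small, then simply multiply out the exponents to confirm the power $8$.
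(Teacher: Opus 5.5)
The dynamic half of your plan follows the paper: Hagedorn--Joye, the $\gamma_*\nu^2$ gap, endpoint flatness with $M\ge N+1$, and constants from Lemma~\ref{lem:gamma-nu-M}. The static term, however, rests on a mistake that removes the mechanism the theorem depends on. You treat $h_j(s_f)=\nu^2A_j+(1-\nu^2)w$ as a depth-one well plus a weak harmonic term, with decay rate $\kappa\approx\sqrt2$ outside the well. But the Laplacian is part of $A_j$, so the kinetic energy is also suppressed by $\nu^2$. Dividing by $\nu^2$, the endpoint Hamiltonian is, up to an additive constant, $A_j+q(1+w)$ with $q=\nu^{-2}-1$. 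This is a barrier of height $q\sim\nu^{-2}$ against unit kinetic energy. Trial states supported in $(-\delta,\delta)$ bound its ground energy by a constant $\bar E$, uniformly in $q$ and $|s_j|\le B$. So for $|z|\ge\delta$ the eigenvalue equation gives $\varphi''\ge(q-2\bar E)\varphi\ge\varphi/(2\nu^2)$ once $\nu\le\nu_0$. Comparison with a decaying exponential then yields $\varphi(z)\le\varphi(\delta)e^{-(z-\delta)/(\sqrt2\nu)}$ for $z\ge\delta$, and symmetrically on the left. Hence the mass outside $[-a,a]$ is at most $C_{\mathrm{loc}}\nu e^{-\sqrt2(a-\delta)/\nu}$ (Lemma~\ref{lem:deep-well}). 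This tends to zero with $a$ fixed, and choosing $\nu=\Theta\bigl((a-\delta)/\log(d/\varepsilon_{\mathrm{tar}})\bigr)$ makes every factor's leakage at most $\varepsilon_{\mathrm{tar}}/(16d)$.

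None of your fallbacks closes this gap.
\begin{itemize}
\item Proposition~\ref{prop:rsw-averaging} bounds a different event, namely that the mean of several samples minimizes $W$, not the single-shot probability $\PP_d(T)$ in the statement. It is also proved downstream of this theorem.
\item Letting $a$ grow changes the success region, which is fixed independently of $d$.
\item Taking $\nu^2\sim1/\log(d/\varepsilon_{\mathrm{tar}})$ leaves a per-factor leakage of order $e^{-\Theta(\sqrt{\log(d/\varepsilon_{\mathrm{tar}})})}$, far above $\varepsilon_{\mathrm{tar}}/d$.
\end{itemize}
The power $8$ also needs $\nu^{-1}=\Theta(L)$ with $L=\log(d/\varepsilon_{\mathrm{tar}})$. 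Lemma~\ref{lem:gamma-nu-M} gives $\Gamma\ge1/(e\mathcal K_{\nu,M})$ and $C(g)\le\mathcal K_{\nu,M}=C_{\mathrm{ad}}(M+1)\nu^{-6}$. The factor $\nu^{-6}$ comes from $\|S\|\lesssim\nu^{-2}$ and a complex strip of width $\rho\sim\nu^2/\sqrt{M+1}$, so $\log K=O(\log L)$ rather than $\mathrm{poly}(M)$. There is also a circularity between $N=\lfloor gT\rfloor$, $g=g(M)$, and $M\ge N+1$. To break it, first fix $N=\bigl\lceil2\log(16C_{\mathrm{ad}}d/(\varepsilon_{\mathrm{tar}}\nu^6))\bigr\rceil=\Theta(L)$, then set $M=N+1$ and $T=e\mathcal K_{\nu,M}N$. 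This gives $gT=N$ and adiabatic error at most $\mathcal K_{\nu,M}e^{-N}\le\varepsilon_{\mathrm{tar}}/(16d)$. The runtime is then $T=O(L\cdot L\cdot L^6)=O(L^8)$.
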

The success probability is controlled by two errors in each coordinate: leakage of the final ground state outside $[-a,a]$, and the adiabatic error in preparing that state. Together, these give $$\PP_d(T)\ge[1-K_{\mathrm{loc}}\nu e^{-c(a-\delta)/\nu}-K_{\mathrm{ad}}e^{-\Gamma T}]^d\ge1-d[K_{\mathrm{loc}}\nu e^{-c(a-\delta)/\nu}+K_{\mathrm{ad}}e^{-\Gamma T}],$$ where $K_{\mathrm{loc}},c>0$ are fixed constants, while $K_{\mathrm{ad}}$ and $\Gamma$ depend on the cutoff and schedule. It therefore suffices to make each error term $O(\varepsilon_{\mathrm{tar}}/d)$. Choosing $\nu=\Theta((a-\delta)/\log(d/\varepsilon_{\mathrm{tar}}))$ controls the leakage, while the analytic schedule suppresses the adiabatic error sufficiently at runtime $T=O(\log^8(d/\varepsilon_{\mathrm{tar}}))$. The complete parameter choices and bounds are given in Theorem~\ref{thm:rsw-harmonic-runtime}. We now translate the runtime into a query bound using the one-dimensional decomposition and higher-order product formulas. The query analysis is given in Theorem~\ref{thm:rsw-subpolynomial-queries}.
\begin{cor}[Query complexity for simulation via CV-QAOA]
\label{cor:queryComplexity}
\label{cor:rsw-query}
For every fixed $\varepsilon_{\mathrm{tar}}\in(0,1/3)$, a harmonic-mixer product formula initialized in $\ket{\sigma}$ reaches $V_a$ with probability at least $1-\varepsilon_{\mathrm{tar}}$, using
\begin{equation}
    Q(d,\varepsilon_{\mathrm{tar}})=d^{o(1)}
\end{equation}
phase queries to the smoothed potential $W^{\eta}$ in the oracle model of Definition~\ref{def:query}, for a suitable smoothing width and schedule.
\end{cor}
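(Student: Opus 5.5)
We will derive the query bound from the runtime bound of Theorem~\ref{thm:qaa_log_runtime} by simulating the adiabatic evolution with a high-order product formula. In this formula the mixer exponentials $e^{-i\tau(1-s)A}$ are free, and each cost exponential $e^{-i\tau s W^\eta}$ costs one phase query. There are four steps: smooth the potential, factor the problem into one-dimensional pieces, bound the Trotter error per piece, and then optimize the formula order.

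\textbf{Smoothing.} The indicator potential $W$ has unbounded commutators with $\Delta$, so we replace $w$ by $w^\eta=w*\rho_\eta$, with $\rho_\eta$ an analytic (e.g.\ Gaussian) mollifier. We then compare the evolution under $H^\eta(s)=(1-s)A+sW^\eta$ with the original one. For fixed $s$, the resolvent difference is controlled by $\|w-w^\eta\|_{L^2}\lesssim\sqrt\eta$ localized near $|z|=\delta$. Duhamel's formula then gives a per-coordinate state error of $O(T\,\eta^{1/2})$. Choosing $\eta=\mathrm{poly}(\varepsilon_{\mathrm{tar}}/(dT))$ makes the total contribution across the $d$ factors at most $\varepsilon_{\mathrm{tar}}/3$. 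Since $W^\eta$ is still a sum of rotated one-dimensional functions, the same change of variables $\zv=R\xv+\sv$ factorizes the product formula exactly into $d$ independent one-dimensional circuits with identical parameters. The total error is therefore at most $d$ times a one-dimensional error, and the schedule remains independent of $R$ and $\sv$.

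\textbf{Trotter error.} For a one-dimensional factor we use the $2k$-th order Suzuki formula with $r$ time steps over total time $T$. Each step uses $O(5^k)$ exponentials of $w^\eta$. The error bound is $O\bigl((\Lambda T)^{2k+1}/r^{2k}\bigr)$, where $\Lambda$ bounds the nested commutators of $A$ and $w^\eta$ (including time-derivative terms of the schedule) on the relevant states. The main obstacle is that $A$ is unbounded. We plan to handle this with an energy cutoff: the evolved state stays in a spectral window of $A$ of size $E=\mathrm{poly}(\log(d/\varepsilon_{\mathrm{tar}}))$ up to negligible leakage, which we would show using the gap from Lemma~\ref{lem:gap-all-s} and Gaussian tails. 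On that window, order-$(2k+1)$ nested commutators are bounded by $(C_k E/\eta)^{2k+1}$, with derivatives of $w^\eta$ costing $\eta^{-m}$. This gives $\Lambda\le \mathrm{poly}(k,\log d,1/\eta)$.

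\textbf{Optimizing the order.} Requiring a per-coordinate error of $\varepsilon_{\mathrm{tar}}/(3d)$ gives the query count
\begin{equation*}
Q = O\!\left(5^k\,\Lambda T\,(\Lambda T d/\varepsilon_{\mathrm{tar}})^{1/(2k)}\right).
\end{equation*}
With $\eta=\mathrm{poly}(\varepsilon_{\mathrm{tar}}/d)$, the quantity $\Lambda T$ is at most $\mathrm{poly}(d)$, so the last factor is $d^{O(1/k)}$. Choosing $k=\Theta(\sqrt{\log d})$ balances $5^k$ against $d^{O(1/k)}$ and gives $Q=e^{O(\sqrt{\log d})}\cdot\mathrm{poly}(\Lambda T)^{o(1)}$.

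The delicate point is that the $1/\eta$ dependence of $\Lambda$ also appears in the prefactor $\Lambda T$. To keep it subpolynomial, we would either choose $\eta$ only polylogarithmically small, using a sharper smoothing estimate that exploits the exponential localization of the ground state inside the wells so the smoothing error decays faster than $\sqrt\eta$, or absorb it via $k$-dependent bounds. We expect this balancing to be the hardest part of the argument. Combining the three error budgets with Theorem~\ref{thm:qaa_log_runtime} then yields success probability at least $1-\varepsilon_{\mathrm{tar}}$ with $Q=d^{o(1)}$.
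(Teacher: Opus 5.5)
There is a genuine gap. Your skeleton matches the paper's: factorize in $\zv=R\xv+\sv$, use Suzuki formulas whose harmonic-mixer exponentials are free, and let the order grow with $d$. But your smoothing step rules out $d^{o(1)}$, and the issue you defer to the end is the missing idea, not a balancing detail.

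You require the smoothed evolution to track the sharp one. That forces a per-coordinate smoothing error of $O(\varepsilon_{\mathrm{tar}}/d)$, i.e.\ $dT\sqrt\eta\lesssim\varepsilon_{\mathrm{tar}}$, so $1/\eta=\poly(d)$. Commutators of the kinetic term with $w^\eta$ produce $w_\eta'$ and $w_\eta''$, of sizes $\eta^{-1}$ and $\eta^{-2}$, so $\Lambda\gtrsim1/\eta$. In your own count the factor $\Lambda T$ appears to the first power, not inside the $1/(2k)$-th root. Hence $Q\ge\Lambda T=\poly(d)$, and the line $Q=e^{O(\sqrt{\log d})}\poly(\Lambda T)^{o(1)}$ does not follow.

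Neither of your remedies works. The state difference (whether $\psi^\eta(t)-\psi(t)$ or $\varphi^\eta-\varphi^0$) is driven by $(w_\eta-w)\psi$. This is supported at the well boundary, where the state is not exponentially small: it is an $O(1)$ Gaussian early on and only polynomially small in $\nu$ at the end. So any estimate is polynomial in $\eta$, and reaching $\varepsilon_{\mathrm{tar}}/d$ again needs $\eta=d^{-\Omega(1)}$. Nor can any choice of $k$ absorb a prefactor that sits outside the root.

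The paper never compares with the sharp well. The corollary only asks to land in $V_a$ with $a>\delta$ using queries to $W^\eta$, so the adiabatic argument is run directly on $H^\eta(s)$:
\begin{itemize}
\item Lemma~\ref{lem:rsw-smoothing-spectrum} uses $\|w_\eta-w\|_1\le2\eta/\sqrt\pi$ and $\|f\|_\infty^2\le\|f'\|_2$ to show that smoothing moves the two lowest eigenvalues of $A_b+qD_\eta$ by at most $C_0q\eta$. So the gap $\gamma_*\nu^2/2$ survives for $\eta\le c_\eta\nu^2$ (Lemma~\ref{lem:gap-all-s}).
\item The smoothed endpoint ground state still has weight $\lesssim\nu e^{-\sqrt2(a-\delta)/\nu}$ outside $[-a,a]$, because $D_\eta\ge\tfrac12$ off the well (Lemma~\ref{lem:deep-well}).
\item The Hagedorn--Joye constants use only $0\le D_\eta\le1$, not derivatives of $D_\eta$.
\end{itemize}
Thus Theorem~\ref{thm:rsw-harmonic-runtime} holds uniformly in $\eta\le c_\eta\nu^2$. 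One can take $\eta=c_\eta\nu^2=\Theta(\ell^{-2})$ with $\ell=\log(Cd/\varepsilon_{\mathrm{tar}})$, so $\Lambda=1+\eta^{-2}=O(\ell^4)$ is polylogarithmic and the fixed-order query count is $A_kd^{1/k}\ell^{O(1)}$.

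Two smaller points also need repair. First, your energy-window control of the unbounded mixer is only asserted; the gap does not bound leakage to high $A$-energy. The paper instead proves that the weighted norms $\|\psi_b^\eta(t)\|_{n,\eta,b}$, built from ordered products of $X=\eta(z-b)$ and $P=-i\eta\partial_z$, grow like $(1+T)^n$. This yields the local error $C_k\Lambda^{k+1}(1+T)^{2k+2}h^{k+1}$. Second, fixing $k=\Theta(\sqrt{\log d})$ needs explicit bounds on how the $k$-dependent constants grow. The paper avoids this by letting $k(d)$ grow only as fast as $2^{k^2}A_k^{2k}\le d$ allows.
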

\noindent
To be more precise, the $Q = d^{o(1)}$ scaling here means $\lim_{d\to\infty} \log Q / \log d =0$.

\section{Complexity of classical algorithms}\label{sec:ClassicalComplexity}

In this section, we complement the results on CV-QAOA by analyzing the performance of classical algorithms on the corresponding optimization problems.
For convex quadratics, we recall the complexity of standard classical optimization methods and compare them with the gate complexity of CV-QAOA. For rotated and shifted real-analytic problems, we derive an information-theoretic $\Omega(d/\log d)$ lower bound on classical query complexity, clarifying what can and cannot be inferred from oracle access alone. Finally, focusing on the RSW problem, we analyze two natural families of classical algorithms: (1) an algorithm that attempts to learn the hidden rotation to recover the global minimum, with $\Theta(d^2\log d)$ query complexity, and (2) an adaptive random walk algorithm with $\Theta(d^3/\log d)$ hitting time and query complexity.

\subsection{Convex quadratic functions}

We begin with the classical complexity of optimizing convex quadratic functions. Consider a cost function of the form
\begin{equation}
    C(\xv) = \frac12 \xv^\top Q \xv + \bv^\top \xv,
\end{equation}
where \(Q\) is positive definite. Let
\begin{equation}
    \kappa = \|Q\|\,\|Q^{-1}\|
\end{equation}
denote the condition number of the Hessian, and let \(\xv_*=\arg\min_{\xv} C(\xv)\). We ask for the computational cost of producing an approximate minimizer \(\tilde{\xv}\) satisfying
\begin{equation}
    |C(\tilde{\xv})-C(\xv_*)|\le \epsilon.
\end{equation}
For standard classical methods, the required complexity is well known:
\begin{enumerate}
    \item Gradient descent: \(O(\kappa \log (\frac 1\eps))\) iterations,
    corresponding to \(O(\kappa d^2 \log (\frac 1\eps))\) floating-point
    operations (flops)~\cite{nesterov2018lectures,boyd2004convex}.
    \item Conjugate gradient: \(O(\sqrt{\kappa}\log \frac 1\eps))\) iterations,
    corresponding to \(O(\sqrt{\kappa}\, d^2 \log (\frac 1\eps))\)
    flops~\cite{trefethen1997numerical,golub2013matrix}.
    \item Newton's method: \(O(d^3)\) flops~\cite{boyd2004convex,golub2013matrix}.
\end{enumerate}
By comparison, we recall that \(d\)-step CV-QAOA finds the minimum of any \(d\)-dimensional strictly convex quadratic using a total of \(O(d^3)\) gates at optimal parameters, which can be obtained by knowing the eigenvalues of the Hessian. 

\subsection{Rotated and shifted real-analytic functions}
\label{sec:classical-RDW}

To obtain a concrete lower bound for classical algorithms on the rotated real-analytic problem, it is natural to study query complexity. The point of this subsection is to clarify what can be inferred from oracle access alone. We show below that any classical algorithm must make at least \(\Omega(d/\log d)\) queries in order to identify the global minimizer. Thus, even before one addresses runtime, there is already a nontrivial information-theoretic obstruction to solving these problems from black-box access to the cost function. We consider the family of instances
\begin{equation} \label{eq:RSW}
    C_{R, \vect{s}}(\xv) = \sum_{j=1}^d w((R\xv)_j+s_j),
\end{equation}
where \(w=w(x)\) is an arbitrary one-dimensional real-analytic cost function.

A particularly important example is obtained when \(w(x)=x^4+ax^3+bx^2+cx\) is a double-well quartic function. For many choices of \((a,b,c)\), for instance \((0,-16,5)\) as in the Styblinski-Tang function, the function \(w\) has two non-degenerate local minima. The corresponding high-dimensional function \(C_{R,\sv}(\xv)\) then has \(2^d\) local minima, only one of which is globally optimal. This proliferation of suboptimal minima makes the problem difficult for conventional classical methods, as was observed empirically in \cite{leng2023nonconvex}. It was further argued in \cite{leng2023nonconvex} that the difficulty of the problem stems from the hidden rotation \(R\): even when the functional form of \(C_{\Id,0}\) is known, recovering the relevant rotated coordinates is highly nontrivial. Indeed, learning \(R\) from the closed-form expression of \(C_{R,0}\) is equivalent to diagonalizing a fourth-order tensor, which is NP-hard in the worst case~\cite{hillar2013tensor}. The following proposition quantifies this obstruction at the level of query complexity.

\begin{prop}\label{prop:query}
Suppose $w$ is known and has a unique global minimizer $x_0\ne0$.
Fix $0<\epsilon<|x_0|$, and suppose each value query returns
a $b$-bit answer. Any classical algorithm using at most $q$
queries to $C_{R,0}$ and locating its minimizer within $\epsilon$
in each coordinate with constant success probability for every
unknown $R\in O(d)$ must satisfy $q=\Omega(d/b)$.
In particular, $b=O(\log d)$ implies $q=\Omega(d/\log d)$.
\end{prop}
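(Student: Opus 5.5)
The plan is an information-theoretic counting (packing) argument of the standard form: $q$ queries with $b$-bit answers carry at most $qb$ bits, while the set of possible minimizers is large.

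\textbf{Step 1: a large packing of minimizers.} For $R\in\mathbf{O}(d)$ and $\sv=0$, the minimizer of $C_{R,0}$ is $\xv_*(R)=x_0R^\top\vect{1}$. This holds because $w$ has a unique global minimizer $x_0$, so the separable sum is uniquely minimized where every coordinate $(R\xv)_j=x_0$. As $R$ ranges over $\mathbf{O}(d)$, the point $\xv_*(R)$ ranges over the entire sphere of radius $|x_0|\sqrt d$.

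Restrict to a finite subfamily. Use sign vectors $\vect{\epsilon}\in\{\pm1\}^d$ with $\xv_*=x_0\vect{\epsilon}$; each is realized by a Householder reflection or rotation $R_{\vect\epsilon}$ mapping $\vect\epsilon/\sqrt d$ to $\vect 1/\sqrt d$, which is orthogonal. Since $\epsilon<|x_0|$, an $\epsilon$-box (in $\ell_\infty$) around a candidate output can contain $x_0\vect\epsilon$ for at most one $\vect\epsilon$. Two distinct sign vectors differ by $2|x_0|>2\epsilon$ in some coordinate, so no output point is within $\epsilon$ in every coordinate of both. This gives $2^d$ instances whose correct answers are pairwise incompatible.

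\textbf{Step 2: counting against the bits learned.} Fix an algorithm, possibly adaptive and randomized. By Yao's principle, fix its randomness so that it succeeds with probability at least $c$ over the uniform distribution on the $2^d$ instances. A deterministic adaptive algorithm with $q$ queries, each returning $b$ bits, is a decision tree with at most $2^{qb}$ leaves. Its output is a function of the leaf, and each output is correct for at most one instance. Hence it succeeds on at most $2^{qb}$ instances, so $c\,2^d\le 2^{qb}$. This gives $q\ge (d-\log_2(1/c))/b=\Omega(d/b)$, and with $b=O(\log d)$ we get $q=\Omega(d/\log d)$.

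A cleaner alternative for Step 2 is Fano's inequality. With $\vect\epsilon$ uniform, $I(\vect\epsilon;\text{transcript})\le qb$, and the success probability is at most $(qb+1)/d$ up to constants. Either route works.

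\textbf{Main obstacle.} The only delicate point is the formalization: the randomized algorithm with constant success probability ``for every $R$'' must be converted to a deterministic one on the hard distribution, and the transcript must be argued to determine the output. Both are standard. I would also check that the construction needs no hardness beyond the identity $\xv_*=x_0R^\top\vect1$. The lower bound uses only the unknown direction of $R^\top\vect 1$, and the fact that $w$ is known does not help the algorithm.
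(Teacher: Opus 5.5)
Your proposal is correct and follows essentially the same argument as the paper: a $2^d$-element family of instances whose minimizers $x_0\vect{\epsilon}$ are pairwise separated by $2|x_0|>2\epsilon$ in $\ell_\infty$, then a count of at most $2^{bq}$ transcripts per fixing of the randomness, giving $bq\ge d-O(1)$. The differences are cosmetic: the paper realizes these instances with $R=\operatorname{diag}(\vect{\epsilon})$, which is simpler than your Householder construction, and it averages over the randomness directly instead of invoking Yao's principle.
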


\begin{proof}
The algorithm must, in particular, succeed when
$R=\operatorname{diag}(v)$ for any $v\in\{-1,1\}^d$.
The corresponding minimizers $x_0v$ are pairwise separated
by $2|x_0|>2\epsilon$ in the infinity norm, so an accurate
output uniquely identifies $v$. For each fixed choice of
the algorithm's randomness, its $q$ answers admit at most
$2^{bq}$ transcripts, each permitting success on at most
one $v$. Under uniform $v$, its success probability is
therefore at most $2^{bq-d}$. Averaging over the algorithm's
randomness preserves this bound. Writing $p>0$ for the
constant lower bound on its success probability, we obtain
$2^{bq-d}\ge p$, hence $bq\ge d+\log_2p=d-O(1)$.
\end{proof}

\paragraph{An efficient classical algorithm when $\sv=0$.}
For unshifted RDW, the alignment of the gradient at the origin with a global
minimizer yields a linear-query classical algorithm.

\begin{prop}\label{prop:RDW-O(d)-bound}
Suppose the quartic profile $w$ and $x_0\in\arg\min w$ are known, with
$w'(0)\ne0$. Given value queries to $C_{R,0}$ of absolute error at most
$\eta$, a deterministic algorithm makes fewer than $8d$ queries, all in
$[-1,1]^d$, and outputs $\widehat\xv$ satisfying
\[
 \|\widehat\xv-\xv_*\|_2
 \le 3\left|\frac{x_0}{w'(0)}\right|\eta.
\]
Thus, provided $|w'(0)|\ge 3|x_0|\eta/\varepsilon$, the algorithm
achieves accuracy $\varepsilon$ with oracle precision independent
of $d$ for a fixed profile.
\end{prop}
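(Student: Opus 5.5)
The plan is to estimate the gradient $\vect g:=\nabla C_R(\vect 0)=w'(0)R^\top\vect 1$ from value queries, and then output $\widehat\xv=\frac{x_0}{w'(0)}\widehat{\vect g}$. This works because of the alignment \eqref{eq:RDW-global-min}: $\xv_*=x_0R^\top\vect 1=\frac{x_0}{w'(0)}\vect g$. Consequently $\|\widehat\xv-\xv_*\|_2=\left|\frac{x_0}{w'(0)}\right|\,\|\widehat{\vect g}-\vect g\|_2$, and it suffices to obtain $\|\widehat{\vect g}-\vect g\|_2\le 3\eta$ with a bound independent of $d$.

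First, I would use the exact polynomial structure along lines. For a direction $\vect u$ and scalar $t$, write $\vect v=R\vect u$ and $w(z)=z^4+az^3+bz^2+cz+w(0)$. Then
\[
f_{\vect u}(t):=C_{R,0}(t\vect u)=t^4\textstyle\sum_k v_k^4+a t^3\sum_k v_k^3+b t^2\|\vect u\|^2+c\,t\,\vect 1^\top R\vect u+d\,w(0).
\]
The linear coefficient is $c\,\vect 1^\top R\vect u=w'(0)\,(R^\top\vect 1)^\top\vect u=\vect g^\top\vect u$. The odd part $D_{\vect u}(t)=f_{\vect u}(t)-f_{\vect u}(-t)=2(\vect g^\top\vect u)\,t+2A_{\vect u}t^3$ eliminates all even terms, including any unknown constant. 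Using $t\in\{1,\tfrac12\}$, one checks that $8D_{\vect u}(\tfrac12)-D_{\vect u}(1)=6\,\vect g^\top\vect u$ exactly. Hence four queries, at $\pm\vect u$ and $\pm\tfrac12\vect u$, give an estimate $m_{\vect u}$ of $\vect g^\top\vect u$ with error at most $(8\cdot2\eta+2\eta)/6=3\eta$.

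Second, to avoid the $\sqrt d$ loss that coordinate directions $\vect e_i$ would incur, I would use $\pm1$ directions from a Hadamard matrix. Let $n$ be the power of two with $d\le n<2d$. Let $H\in\{\pm1\}^{n\times n}$ be the Sylvester Hadamard matrix, and let $H_d\in\{\pm1\}^{n\times d}$ be its first $d$ columns. Its rows $\vect u_1,\dots,\vect u_n\in\{\pm1\}^d$ lie in $[-1,1]^d$, as do their halves, so every query is admissible. Since the columns of $H$ are orthogonal, $H_d^\top H_d=n\Id_d$. Set $\widehat{\vect g}=\frac1n H_d^\top\vect m$, where $\vect m=(m_{\vect u_r})_{r=1}^n$. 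Then $\widehat{\vect g}-\vect g=\frac1n H_d^\top\vect e$ with $\|\vect e\|_\infty\le3\eta$. Using $\|H_d\|_{\mathrm{op}}=\sqrt n$, this gives $\|\widehat{\vect g}-\vect g\|_2\le\frac1n\sqrt n\cdot\sqrt n\,3\eta=3\eta$. The algorithm is deterministic and makes $4n<8d$ queries. Combining with the first paragraph yields the stated bound, and the final sufficient condition $|w'(0)|\ge3|x_0|\eta/\varepsilon$ follows by rearranging.

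The main obstacle is the dimension-free $\ell_2$ error in the second step. The natural first attempt, central differences along $\vect e_i$, only gives $3\eta\sqrt d$. The fix is to spread each query over all coordinates so that the linear coefficient is large, namely $\|\vect u\|_2=\sqrt d$. The orthogonal (Hadamard) design then averages the per-query errors down by $1/\sqrt n$. The secondary technical point is verifying the exact cancellation of the cubic term, which is what allows the estimate to be exact up to oracle noise rather than carrying any step-size bias.
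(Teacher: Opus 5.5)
Your proposal is correct and matches the paper's proof step for step. Both use the exact odd stencil $\bigl(8[q(\tfrac12)-q(-\tfrac12)]-[q(1)-q(-1)]\bigr)/6$ with per-direction error $3\eta$, applied along the rows of the first $d$ columns of an $N\times N$ Hadamard matrix with $N=2^{\lceil\log_2 d\rceil}<2d$, followed by the inversion $V^\top\vect y/N$, whose $\ell_2$ error is $\|\vect e\|_2/\sqrt N\le 3\eta$. Your monic normalization of $w$ is harmless, since the argument only uses that $C_{R,0}(t\vect u)$ is a polynomial of degree at most four in $t$ with linear coefficient $\vect g^\top\vect u$.
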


\begin{proof}
Set $\vect{g}=\nabla C_{R,0}(0)$, so
$\xv_*=(x_0/w'(0))\vect{g}$.
Let $N=2^{\lceil\log_2 d\rceil}<2d$ and let $V$ consist of the
first $d$ columns of an $N\times N$ matrix with entries $\pm1$
and mutually orthogonal columns. Then $V^\top V=NI_d$.
Each $q_k(t):=C_{R,0}(t\vect{v}_k)$ is quartic, so
\[
 q_k'(0)=\frac{8[q_k(1/2)-q_k(-1/2)]-[q_k(1)-q_k(-1)]}{6}.
\]
Replacing these four values by oracle answers gives an estimate $y_k$ of
$\vect{v}_k^\top\vect{g}$ with error at most $3\eta$.
Hence $\vect{y}=V\vect{g}+\vect{e}$ with
$\|\vect{e}\|_2\le3\eta\sqrt N$, and
\[
 \widehat\xv:=\frac{x_0}{w'(0)}\frac{V^\top\vect{y}}{N}
 \quad\Longrightarrow\quad
 \|\widehat\xv-\xv_*\|_2
 \le\left|\frac{x_0}{w'(0)}\right|
       \frac{\|\vect{e}\|_2}{\sqrt N}
 \le3\left|\frac{x_0}{w'(0)}\right|\eta.
\]
All $4N<8d$ query points $\pm\vect{v}_k/2,\pm\vect{v}_k$ lie in
$[-1,1]^d$.
\end{proof}

When $\sv\ne0$, the gradient
$\nabla C_{R,\sv}(0)=R^\top w'(\sv)$ need not be aligned with a global
minimizer. Nevertheless, the Hessian
\[
 \nabla^2 C_{R,\sv}(\xv)
 =R^\top\diag\bigl(w''(R\xv+\sv)\bigr)R
\]
retains the hidden coordinate structure. Hessian-based recovery of this
structure provides a different classical approach, reducing the problem to
one-dimensional optimization; see~\cite{herman2025mechanisms}.

\subsection{Rotated Square Well}

We now turn to the Rotated Square Well problem, which is as our main candidate for a classically hard instance. Unlike the rotated real-analytic examples discussed above, this landscape does not reveal useful global information through its local gradients. Away from the boundary of the well, the function is locally constant, so gradient-based information is essentially absent. For this reason, the dequantization mechanism available in the point-like 1-step CV-QAOA regime no longer applies, and it is natural to compare the quantum algorithm against more geometric classical search strategies.

Consistent with the ground-state (minimization) formulation used in the quantum analysis of Section~\ref{sec:RSW}, we treat the rotated square well as a minimization problem throughout. Consider the one-dimensional well
\begin{align}
    w(x) = \begin{cases}
	    -1, & \text{if } -\delta < x < \delta,\\
        0, & \text{otherwise}.
    \end{cases}
\end{align}
From this, define the \(d\)-dimensional function \(W_{R,s}:\RR^d\to\RR\) by choosing an arbitrary rotation \(R\in\mathbf{O}(d)\) and shift \(s\in\RR^d\), and setting
\begin{align}
    W_{R,s}(\xv) = \sum_{j=1}^d w\Big((R\xv+\sv)_j\Big).
\end{align}
By construction, \(\min W_{R,s}=-d\) for every instance \((R,s)\). The classical algorithm is assumed to have only black-box access to the landscape: it does not know the hidden rotation \(R\) or shift \(s\), but it does know the basic structure of the problem, namely the value of \(\delta\) and the fact that the landscape is built from nested slab-like regions of constant height. To formalize this, define
\begin{align} \label{eq:slab-def}
    V_k \coloneqq \{x\in\RR^d \mid {W_{R,s}(x)\le -k}\},
\end{align}
as the $k$-th slab region, for $k=0,\ldots, d$. Observe that
\begin{align}
    V_0 \supseteq V_1 \supseteq \cdots \supseteq V_d.
\end{align}
The task is to reach the innermost set \(V_d\), where the minimum value \(-d\) is attained.

In the quantum analysis of Section~\ref{sec:RSW}, success is defined as landing in the slightly larger region \(\{\xv:|a_j\cdot\xv+s_j|\le a\ \forall j\}\) for a fixed \(a\in(\delta,L)\): the final one-dimensional bound state carries a \(\Theta(\nu^3)\) fraction of its mass outside the well (Lemma~\ref{lem:deep-well}), so demanding \(\xv\in V_d\) exactly would reduce the quantum success probability to \(e^{-\Theta(d\nu^3)}=e^{-d/\mathrm{polylog}(d)}\) at the choice \(\nu=\Theta(1/\log d)\) of Theorem~\ref{thm:qaa_log_runtime}. The classical algorithms below are held to the stronger criterion of reaching \(V_d\) itself, which they meet, while the classical lower bound of Proposition~\ref{prop:rsw-info-lower-bound} concerns reconstructing the hidden arrangement --- the strategy employed by both classical algorithms we analyze. Since \(\delta<a\) are fixed constants independent of \(d\), this asymmetry does not affect the polynomial-versus-subpolynomial character of the comparison.

\subsubsection{Algorithms that learn the rotation}
\label{subsubsec:learn-rotation}

A natural classical strategy for the rotated square well problem is to first learn the hidden slab arrangement. This is quite different from gradient-based optimization: away from the slab boundaries the function is locally constant, so local differential information carries essentially no global signal. Nevertheless, the boundaries of the level sets encode the hidden rotation and shift. In this subsection we discuss two complementary facts. First, reconstructing the hidden arrangement from oracle queries has an information-theoretic cost of \(\Omega(d^2/\log d)\) queries --- for the sharp aggregate-count oracle, and equally for finite-precision access to the smoothed well queried by the quantum algorithm of Corollary~\ref{cor:queryComplexity}. Second, there is a concrete random-ray tomography algorithm which succeeds despite the flatness of the landscape, using \(\Theta(d^2\log d)\) oracle queries.

Throughout this subsection, let \(a_1^\top,\ldots,a_d^\top\) denote the rows of \(R\), so that
\[
W_{R,s}(x)
=
{-}\sum_{j=1}^d
\mathbf 1\{|a_j\cdot x+s_j|<\delta\}.
\]
We assume the shifts obey \(\|s\|_\infty\le B\), for some known \(B>0\). The innermost region \(V_d\) is the rotated cube
\[
V_d
=
\bigcap_{j=1}^d
\{x: |a_j\cdot x+s_j|<\delta\}.
\]
Thus learning the boundary hyperplanes of these slabs is sufficient to recover a point in \(V_d\).

\paragraph{Information-theoretic lower bound.}

We first prove a lower bound for algorithms that optimize by reconstructing the hidden slab arrangement to constant accuracy. The row ordering and row signs of \(R\) are not identifiable from the aggregate oracle: signed permutations of the rows of \(R\), with the corresponding signed permutation of \(s\), define the same unordered family of slabs and hence the same function \(W_{R,s}\). Therefore the identifiable rotation data naturally lives in the quotient \(O(d)/G_d\), where \(G_d\) is the signed permutation group.
\begin{prop}[Information-theoretic lower bound for reconstruction]
\label{prop:rsw-info-lower-bound}
Fix a sufficiently small constant accuracy $\varepsilon>0$.
Any randomized adaptive classical algorithm that reconstructs
the hidden slab arrangement of $W_{R,s}$ to accuracy
$\varepsilon$, with constant success probability and $b$-bit
oracle answers, requires
\[
q=\Omega\left(\frac{d^2}{b}\right)
\]
queries. For the aggregate-count oracle, whose responses lie
in $\{0,-1,\ldots,-d\}$, one may take
$b=\lceil\log_2(d+1)\rceil$. For responses in $[-d,0]$
rounded to resolution $\varepsilon_{\mathrm{prec}}$, one may
take $b=O(\log(2+d/\varepsilon_{\mathrm{prec}}))$.
Thus $\varepsilon_{\mathrm{prec}}\ge1/\poly(d)$ yields
$q=\Omega(d^2/\log d)$, including for the smoothed well used
in Corollary~\ref{cor:queryComplexity}.
\end{prop}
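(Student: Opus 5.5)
The plan is a packing/counting argument, the same in spirit as the proof of Proposition~\ref{prop:query}, with the $2^d$ sign patterns replaced by an $\varepsilon$-separated family of slab arrangements of size $2^{\Omega(d^2)}$. Fix the shift $s=0$; a lower bound on this subfamily suffices. The identifiable data is then the unordered family of lines $\{\pm a_j\}$, i.e.\ a point of $O(d)/G_d$. I will equip this quotient with the metric
\[
\mathrm{dist}([R],[R'])=\min_{g\in G_d}\|gR-R'\|_{\mathrm{op}},
\]
and interpret ``reconstructing to accuracy $\varepsilon$'' as outputting a class within distance $\varepsilon$ of the true one. The first step is to exhibit a $3\varepsilon$-separated subset $\mathcal N\subset O(d)/G_d$ with $\log_2|\mathcal N|\ge c\,d^2$ for a constant $c=c(\varepsilon)>0$. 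Once this packing is in hand, an $\varepsilon$-accurate output determines the element of $\mathcal N$ uniquely.

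To build the packing I would use a volume argument. $O(d)$ is a compact manifold of dimension $d(d-1)/2$. Standard covering estimates (e.g.\ Szarek's) give that the operator-norm $\varepsilon$-covering number of $O(d)$ is $(c_1/\varepsilon)^{d(d-1)/2}$, up to constants in the base. A maximal $6\varepsilon$-separated set in $O(d)$ therefore has size at least $(c_2/\varepsilon)^{d(d-1)/2}$. Passing to the quotient, each fibre has exactly $|G_d|=2^d d!$ elements, so the preimage of a $3\varepsilon$-ball in the quotient meets at most $|G_d|$ points of that set. This yields a $3\varepsilon$-separated subset of the quotient of size at least
\[
(c_2/\varepsilon)^{d(d-1)/2}\big/\bigl(2^d d!\bigr)=2^{\Omega(d^2)}
\]
once $\varepsilon$ is a sufficiently small constant. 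Here $\log(2^d d!)=O(d\log d)$ is negligible. I expect this step to be the main obstacle: the covering-number bounds and the treatment of the quotient metric have to be handled carefully, and the statement's ``sufficiently small $\varepsilon$'' is exactly what guarantees $\log(c_2/\varepsilon)>0$.

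The counting step then mirrors Proposition~\ref{prop:query}. Fix the algorithm's random seed. An adaptive algorithm with $q$ queries of $b$-bit answers has at most $2^{bq}$ transcripts, and its output is a function of the transcript. Each output lies within $\varepsilon$ of at most one element of $\mathcal N$. Drawing the instance uniformly from $\mathcal N$, the success probability is therefore at most $2^{bq}/|\mathcal N|$. Averaging over seeds preserves this bound, so a constant success probability $p$ forces
\[
bq\ge \log_2|\mathcal N|+\log_2 p=\Omega(d^2).
\]

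Finally I specialize $b$. The aggregate-count oracle takes $d+1$ values, so $b=\lceil\log_2(d+1)\rceil$. For responses in $[-d,0]$ rounded to resolution $\varepsilon_{\mathrm{prec}}$ there are at most $d/\varepsilon_{\mathrm{prec}}+1$ possible answers, giving $b=O(\log(2+d/\varepsilon_{\mathrm{prec}}))$. For the smoothed well $W^\eta$ of Corollary~\ref{cor:queryComplexity}, the argument never uses the form of the oracle, only the number of distinct answers, so the same bound holds. With $\varepsilon_{\mathrm{prec}}\ge1/\poly(d)$ this gives $b=O(\log d)$ and hence $q=\Omega(d^2/\log d)$.
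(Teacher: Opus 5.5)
Your proposal is correct and follows the paper's proof: fix $s=0$, pass to $O(d)/G_d$, use a volume-packing set of size $(c/\varepsilon)^{d(d-1)/2}/(2^d d!)$, and count the at most $2^{bq}$ transcripts for each fixed random seed. Your explicit quotient metric and fibre-counting construction of the separated set supply details that the paper compresses into ``a standard volume-packing bound.''
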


\begin{proof}
Fix $s=0$. The slab arrangement determines $R$ up to signed
permutations, so the identifiable parameter lies in $O(d)/G_d$,
where $|G_d|=2^d d!$. A standard volume-packing bound gives
a set of pairwise distances greater than $2\varepsilon$
with cardinality
$M\ge(c/\varepsilon)^{d(d-1)/2}/(2^d d!)$,
hence $\log M=\Omega(d^2)$ for sufficiently small fixed
$\varepsilon$. Choose the hidden instance uniformly from this packing.
For each fixed choice of the algorithm's randomness, its
$q$ queries admit at most $2^{bq}$ transcripts, each permitting
an $\varepsilon$-accurate reconstruction of at most one packing
element. Its success probability is therefore at most
$2^{bq}/M$. Averaging over the algorithm's randomness preserves
this bound. Writing $p>0$ for the constant lower bound on its
success probability, we obtain
\[
q\ge\frac{\log_2(pM)}{b}
=\Omega\left(\frac{d^2}{b}\right).
\]
\end{proof}

\paragraph{A concrete algorithm: random-ray tomography.}
The minimum region $V_d$ is bounded by the hyperplanes
$H_{j,\tau}=\{x:a_j\cdot x+s_j=\tau\delta\}$,
$j=1,\ldots,d$, $\tau\in\{+1,-1\}$.
Pairing the two parallel sheets of each slab determines its center
hyperplane $a_j\cdot x+s_j=0$. Solving $Rx+s=0$ then gives
$x_c=-R^\top s\in V_d$. Random-ray tomography identifies these hyperplanes by querying along
affine lines $\ell(t)=r+tv$, with
$v\sim\mathrm{Unif}(S^{d-1})$ and $\|r\|\le\rho$.
The restricted oracle is a step function whose jumps occur at
\[
t_{j,\tau}
=
\frac{\tau\delta-s_j-a_j\cdot r}{a_j\cdot v}.
\]
The numerator is bounded in magnitude by $\delta+B+\rho$, while
$|a_j\cdot v|$ is typically of order $d^{-1/2}$.
We therefore choose the scan radius
$T=C_T(\delta+B+\rho)\sqrt d$.
For sufficiently large $C_T$, each fixed hyperplane is crossed
within $[-T,T]$ with constant probability
(Appendix~\ref{app:random-ray-tomography}).

Algorithm~\ref{alg:random-ray-tomography} first records approximate
boundary points from these scans, then estimates their normals
to reconstruct the slab centers.
The reconstruction samples
$N_{\mathrm{pt}}=\lceil C_{\mathrm{pt}}d\log d\rceil$
recorded points and uses
$N_{\mathrm{dir}}=\lceil C_{\mathrm{dir}}d\rceil$
random probe directions per point, with probe radius $\eta>0$
and fixed constants $C_{\mathrm{pt}},C_{\mathrm{dir}}>0$.

\begin{algorithm}[H]
\caption{Random-ray tomography}
\label{alg:random-ray-tomography}
\begin{algorithmic}[1]
\REQUIRE Oracle $W_{R,s}$; dimension $d$; widths $\delta,B,\rho$;
probe radius $\eta$; budgets $N_{\mathrm{pt}},N_{\mathrm{dir}}$
\ENSURE Estimated center $\widehat x$ and its objective value

\STATE Set $M=\lceil C_Md\rceil$,
$T=C_T(\delta+B+\rho)\sqrt d$,
$\Delta t=c_\Delta/\sqrt d$; initialize $\mathcal P=\varnothing$.

\FOR{$m=1,\ldots,M$}
    \STATE Sample $v_m\sim\mathrm{Unif}(S^{d-1})$ and
    $\|r_m\|\le\rho$; set $\ell_m(t)=r_m+tv_m$.
    \STATE Query $\ell_m$ on the grid
    $t_q=-T+q\Delta t$, $q=0,\ldots,\lceil2T/\Delta t\rceil$.
    \STATE Append to $\mathcal P$ the midpoint of each grid
    interval whose oracle values differ by exactly one.
\ENDFOR

\FOR{$k=1,\ldots,N_{\mathrm{pt}}$}
    \STATE Sample $p_k$ uniformly from $\mathcal P$ and independent
    $z_i\sim\mathrm{Unif}(S^{d-1})$, $i=1,\ldots,N_{\mathrm{dir}}$.
    \STATE Query the symmetric probes and estimate
    \[
    \widehat n_k\propto
    \sum_{i=1}^{N_{\mathrm{dir}}}
    \bigl[W_{R,s}(p_k+\eta z_i)-W_{R,s}(p_k-\eta z_i)\bigr]z_i.
    \]
\ENDFOR

\STATE Cluster  normals up to sign into $d$ slab directions,
pair boundary sheets, and form midpoint hyperplanes.
\STATE Solve the resulting system
$\widehat R\widehat x+\widehat s=0$.
\RETURN $(\widehat x,W_{R,s}(\widehat x))$.
\end{algorithmic}
\end{algorithm}
Two facts make the query count transparent. First, \(M=\Theta(d)\) lines suffice for the scan: each fixed hyperplane is hit by each random line with constant probability, so after \(M=\Theta(d)\) independent lines every one of the \(2d\) hyperplanes carries between \(d\) and \(M\) recorded boundary points with high probability, which balances the sampling probabilities in the reconstruction stage. Second, the reconstruction must estimate a normal from \emph{every} one of the \(2d\) hyperplanes, and each drawn point lands on a given hyperplane with probability \(\Theta(1/d)\); covering all \(2d\) of them is a coupon-collector problem, which forces \(N_{\mathrm{pt}}=\Theta(d\log d)\). Both facts are proved in Appendix~\ref{app:random-ray-tomography}.
\paragraph{\(\Theta(d^2\log d)\) query complexity.}
We now summarize the query count of the random-ray tomography algorithm. The count is fixed by the construction: it is determined by the scan and probe budgets, independent of the geometric postprocessing.
\begin{prop}[Query complexity of random-ray tomography]
\label{prop:rsw-random-ray-query-complexity}
Run random-ray tomography with a line scan of \(M=\lceil C_M d\rceil\) random lines followed by the reconstruction stage, which probes \(N_{\mathrm{pt}}=\lceil C_{\mathrm{pt}}\,d\log d\rceil\) of the recorded boundary points, using \(N_{\mathrm{dir}}=\lceil C_{\mathrm{dir}}\,d\rceil\) probe directions at each. The algorithm makes
\[
Q_{\mathrm{ray}}
=
Q_{\mathrm{scan}}+Q_{\mathrm{rec}}+1
=
\Theta(d^2\log d),
\]
with \(Q_{\mathrm{scan}}=\Theta(d^2)\) grid queries along the \(M\) scanned lines, \(Q_{\mathrm{rec}}=2N_{\mathrm{pt}}N_{\mathrm{dir}}=\Theta(d^2\log d)\) probe queries in the reconstruction stage, and a single verification query. With probability \(1-O(1/d)\) the algorithm returns a point \(\widehat x\) with \(W_{R,s}(\widehat x)=-d\). The \(\log d\) factor is necessary: covering all \(2d\) boundary hyperplanes requires \(N_{\mathrm{pt}}=\Omega(d\log d)\), so \(Q_{\mathrm{ray}}=\Omega(d^2\log d)\).
\end{prop}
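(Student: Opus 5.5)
The plan is to count the queries stage by stage, and then separately argue correctness and the $\Omega(d\log d)$ necessity of the reconstruction budget.

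\emph{Scan stage.} Each of the $M=\lceil C_Md\rceil$ lines is queried on a grid of $\lceil 2T/\Delta t\rceil+1$ points. With $T=C_T(\delta+B+\rho)\sqrt d$ and $\Delta t=c_\Delta/\sqrt d$, this is $\Theta(d)$ points per line, so $Q_{\mathrm{scan}}=\Theta(d)\cdot\Theta(d)=\Theta(d^2)$.

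\emph{Reconstruction stage.} Each of the $N_{\mathrm{pt}}$ sampled points uses $2N_{\mathrm{dir}}$ symmetric probes, giving $Q_{\mathrm{rec}}=2N_{\mathrm{pt}}N_{\mathrm{dir}}=\Theta(d\log d\cdot d)$. Adding one verification query yields $Q_{\mathrm{ray}}=\Theta(d^2\log d)$, since the reconstruction term dominates. This part is purely deterministic bookkeeping.

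\emph{Correctness with probability $1-O(1/d)$.} I would chain four high-probability events and take a union bound.
\begin{enumerate}
\item \emph{Scan coverage.} By the appendix fact that each fixed hyperplane $H_{j,\tau}$ is crossed within $[-T,T]$ with constant probability $p_0$, the number of lines crossing it is $\mathrm{Bin}(M,p_0)$. Chernoff bounds plus a union bound over the $2d$ hyperplanes give at least $d$ crossings each, with failure $e^{-\Omega(d)}$.
\item \emph{Grid resolution.} With $\Delta t=c_\Delta/\sqrt d$, distinct jump times along a random line are separated by more than $\Delta t$ except with small probability. Jumps are spaced on the scale $\delta\sqrt d$ for a single slab and are generic across slabs. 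On the event that no grid interval contains two jumps, each recorded midpoint lies within $\Delta t$ of a unique hyperplane. This is where a choice of $c_\Delta$ and a bound on near-coincident crossings is needed.
\item \emph{Coupon collector.} Since the counts per hyperplane are balanced between $d$ and $M=\Theta(d)$, each sampled point lands on a given hyperplane with probability $\Theta(1/d)$. Drawing $N_{\mathrm{pt}}=C_{\mathrm{pt}}d\log d$ points, a fixed hyperplane is missed with probability $(1-\Theta(1/d))^{N_{\mathrm{pt}}}\le d^{-2}$ for large $C_{\mathrm{pt}}$. A union bound over the $2d$ hyperplanes gives $O(1/d)$.
\item \emph{Normal estimation.} For a point within $\Delta t\ll\eta$ of a single hyperplane with normal $a_j$, the difference $W(p+\eta z)-W(p-\eta z)$ is nonzero essentially when $z\cdot a_j$ has a given sign pattern. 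Its expectation, multiplied by $z$, is therefore proportional to $a_j$, with magnitude of order $d^{-1/2}$ per probe. Vector concentration over $N_{\mathrm{dir}}=\Theta(d)$ probes then gives an angular error below a constant, which suffices to cluster the normals up to sign. This is possible because distinct rows of $R$ are orthogonal, hence at angle $\pi/2$.
\end{enumerate}

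\emph{From clusters to the center.} Within each cluster, the offsets $a_j\cdot p$ split into two groups near $\pm\delta-s_j$. Averaging these recovers $s_j$ to accuracy $o(\delta)$. Solving the linear system then gives $\widehat x$ with $\|\widehat R\widehat x+\widehat s\|_\infty<\delta$, so $W(\widehat x)=-d$. The step I expect to be the main obstacle is this propagation of error. The estimation error in $\widehat R$ gets amplified by $\|x_c\|\le\sqrt d B$, so the normal estimates must be accurate to $o(d^{-1/2})$. A constant-angle clustering is therefore not enough on its own. I would first try refining each normal by averaging the estimates of the $\Theta(\log d)$ points that land in its cluster. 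If that proves insufficient, I would refit each hyperplane by least squares through its recorded boundary points, which have $\Delta t$-accurate positions and number at least $d$ per hyperplane.

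\emph{Lower bound on $N_{\mathrm{pt}}$.} For the necessity of the $\log d$ factor, I would invoke the standard coupon-collector lower tail. With $N_{\mathrm{pt}}\le c\,d\log d$ draws from nearly uniform probabilities over $2d$ classes, the expected number of uncovered hyperplanes is $d^{1-O(c)}\to\infty$, and a second-moment argument shows that some hyperplane is missed with high probability. Its slab center is then unrecoverable, which forces $N_{\mathrm{pt}}=\Omega(d\log d)$. Multiplying by $N_{\mathrm{dir}}=\Theta(d)$ probes per point gives $Q_{\mathrm{ray}}=\Omega(d^2\log d)$.
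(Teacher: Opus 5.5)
Your query count and your probabilistic skeleton match the paper's: Chernoff plus a union bound for scan coverage, coupon collector for $N_{\mathrm{pt}}$, and a first/second-moment argument for the necessity of the $\log d$. The genuine gap is the step you yourself flag.

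From the worst-case bound $|e\cdot x_c|\le\theta\|x_c\|\le\theta B\sqrt d$ you conclude that the normals must be accurate to $o(d^{-1/2})$. Neither of your remedies delivers this.
\begin{itemize}
\item Averaging the $\Theta(\log d)$ estimates in a cluster gives angular error only $\theta_1/\sqrt{\Theta(\log d)}$, which fails your own criterion.
\item The least-squares refit presupposes that the $\Theta(d^2)$ scan points are labeled by hyperplane. The scan provides no such labels. Assigning them with normals of error $\theta\gg 1/d$ is ambiguous, because a typical scan point has $\Theta(\theta d)$ sheets passing within $O(\theta)$ of it. Resolving this would need queries outside the stated budget.
\end{itemize}
As written, correctness is therefore not established.

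The missing idea is that the normal errors are random, not adversarial; this is how the paper closes the step (Remark~\ref{rem:app-normal-accuracy}). The probe directions are uniform on $\Sph^{d-1}$. When only one sheet is within probe range, $W(p+\eta z)-W(p-\eta z)$ depends only on $a_j\cdot z$. Hence each estimate's error $e_j$ is isotropic in $a_j^\perp$, and errors for different sheets are independent given the scan.

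Consider the cross terms $\sum_{k\ne j}(a_j\cdot e_k)\,\widehat s_k$ in $a_j\cdot\widehat x+s_j$. Each has size about $\theta B/\sqrt d$ with an independent sign, so they add up to $O(\theta B)$, not $\theta B\sqrt d$. Taking the maximum over $j$ costs only a factor $\sqrt{2\log d}$. That factor is exactly offset by averaging over the $\Theta(\log d)$ probed points per sheet. The result is the $d$-independent residual $B\sqrt{2\pi/(C_{\mathrm{dir}}C_{\mathrm{pt}})}$, which lies below $\delta$ once $C_{\mathrm{dir}}C_{\mathrm{pt}}\gtrsim 2\pi B^2/\delta^2$. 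So the averaging you tried first already suffices, and no refitting is needed.

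A smaller point concerns your grid-resolution step. The event that no cell on a line contains two jumps has probability tending to zero. There are $\Theta(d)$ crossings in $\Theta(d/c_\Delta)$ cells, giving $\Theta(c_\Delta d)$ expected collisions per line, so this event cannot enter a union bound. Argue per sheet instead, as the paper does. A fixed sheet's crossing shares its cell with another with probability $\epsilon=O(c_\Delta)$, which only lowers the recording probability to $3/16-\epsilon$. Cells whose jump is not exactly one are simply discarded.
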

\begin{proof}
The scan issues \(M=\lceil C_M d\rceil=\Theta(d)\) lines (see Appendix~\ref{app:random-ray-tomography}), each queried at
\[
Q_{\mathrm{line}}
=
\left\lceil \frac{2T}{\Delta t}\right\rceil+1
\]
grid points. With \(T=C_T(\delta+B+\rho)\sqrt d\) and \(\Delta t=c_\Delta/\sqrt d\),
\[
Q_{\mathrm{line}}
=
\left\lceil
\frac{2C_T(\delta+B+\rho)}{c_\Delta}d
\right\rceil+1
=
\Theta(d),
\qquad\text{so}\qquad
Q_{\mathrm{scan}}
=
M Q_{\mathrm{line}}
=
\Theta(d^2).
\]
The reconstruction stage evaluates \(2N_{\mathrm{dir}}\) probe points at each of \(N_{\mathrm{pt}}\) sampled points, so
\[
Q_{\mathrm{rec}}
=
2N_{\mathrm{pt}}N_{\mathrm{dir}}
=
2\lceil C_{\mathrm{pt}}d\log d\rceil\lceil C_{\mathrm{dir}}d\rceil
=
\Theta(d^2\log d).
\]
Adding the single verification query, \(Q_{\mathrm{ray}}=\Theta(d^2)+\Theta(d^2\log d)+1=\Theta(d^2\log d)\). Correctness holds with probability \(1-o(1)\) by the coverage bound of Appendix~\ref{app:random-ray-tomography}, which also shows \(N_{\mathrm{pt}}=\Omega(d\log d)\) is necessary, giving the matching lower bound \(Q_{\mathrm{ray}}=\Omega(d^2\log d)\).
\end{proof}
Proposition~\ref{prop:rsw-random-ray-query-complexity} and Corollary~\ref{cor:queryComplexity} exhibit a superpolynomial separation in query complexity: random-ray tomography requires $\Theta(d^2\log d)$ queries, while CV-QAOA requires only $d^{o(1)}$ queries to the cost function. The quantum algorithm exploits the hidden product structure without reconstructing the rotation and shift.

\subsubsection{Algorithm without learning the rotation: Adaptive random walk}
\label{subsubsec:do-not-learn-rotation}
We next consider a different class of classical algorithms that do not attempt to reconstruct the hidden rotation or slab arrangement, but instead perform random walk trying to improve the objective function. 
Recall \eqref{eq:slab-def} where we defined $ V_k = \{x\in\RR^d \mid {W_{R,s}(x)\le -k}\}$ be the $k$-th slab.
We consider the following algorithm tries to climb the filtration of slabs
\(V_0\supseteq V_1\supseteq \cdots \supseteq V_d\), by moving from level \(k\) to level \(k+1\) with local moves, without explicitly learning the orientation of the slabs.

\paragraph{The algorithm.}
We consider the following monotone adaptive random walk. At time \(t\), let
\[
K_t:=-W_{R,s}(x_t)
\]
be the current level (the number of slabs containing $x_t$). The algorithm samples \(U_{t+1}\sim \mathrm{Unif}(\Sph^{d-1})\), proposes
\[
y_{t+1}=x_t+r_{K_t}U_{t+1},
\]
queries \(W_{R,s}(y_{t+1})\), and accepts the move only if it does not decrease the level (and thus the objective value is monotonically improving):
\[
x_{t+1}
=
\begin{cases}
y_{t+1}, & \text{ if } {W_{R,s}(y_{t+1})\le W_{R,s}(x_t)},\\
x_t, & \text{ if } {W_{R,s}(y_{t+1})> W_{R,s}(x_t)}.
\end{cases}
\]
Thus \(K_{t+1}\ge K_t\). The stopping time is
\[
\tau:=\inf\{t\ge 0:K_t=d\}.
\]
We adopt this monotonic strategy because allowing worse intermediate steps does not obviously improve convergence; a step with $K_{t+1}<K_t$ can land us in a level set whose volume is exponentially larger, i.e. $\mathrm{Vol}(V_{K_{t+1}})/\mathrm{Vol}(V_{K_t}) \approx (L/\d)^{K_{t} - K_{t+1}}$. Hence, we restrict to a family of greedy random walks that only accept steps which do not decrease the level.

The key algorithmic parameter is the step size. We choose it to retain the already-captured slab coordinates with probability bounded below by a constant. The analysis in Appendix~\ref{sec:RandomWalk} uses the \emph{typical-inside model}: at level $k$, the $k$ captured coordinates of $Rx+s$ are independently uniform on $[-\delta,\delta]$ and independent of the proposal direction $U\sim\mathrm{Unif}(\Sph^{d-1})$. Under this assumption, for $k\ge C\sqrt d$ with $C$ sufficiently large, the largest step-size scale compatible with constant retention probability is
\begin{equation}\label{optimal-step}
r_k=\Theta\left(\delta\frac{\sqrt d}{k}\right).
\end{equation}
A suitable choice on this scale therefore also keeps $\PP(x+r_kU\in V_k)$ bounded below by a constant (Proposition~\ref{lem:Vk_accept} and
Corollary~\ref{cor:rk_bottleneck}). The necessity statement concerns retention of all captured slabs; it does not exclude larger steps that preserve the level by exchanging occupied slabs.

\paragraph{$\Theta(d^3/\log d)$ hitting time.}

Even with this optimized step-size schedule the walk is slow, and it is slowest exactly at the end. At level \(k\), the algorithm has already entered \(k\) slabs and must find a way to enter one more. A random direction on \(\Sph^{d-1}\) has only an \(O(d^{-1/2})\)-sized component along any fixed transverse direction, so a \(d\)-dimensional proposal of size \(r_k\) produces an effective one-dimensional displacement of size
\(
r_{\mathrm{eff},k}
\asymp
\frac{r_k}{\sqrt d}
\asymp
\frac{\delta}{k}
\)
in each slab coordinate. Moving from \(V_k\) to \(V_{k+1}\) is therefore a one-dimensional first-passage problem with step size \(r_{\mathrm{eff},k}\), and since one-dimensional hitting times scale like the inverse square of the step size, a passage over a distance \(D\) at the terminal levels costs \(\Theta(D^2d^2/\delta^2)\) queries.

The distance that remains to be covered at those levels is itself large, and it is set by the shape of the search region. Suppose the walk is confined to the box \(Q_L=[-L,L]^d\), whose boundary the algorithm knows. Not knowing \(R\), it cannot confine an individual slab coordinate; moving \(z_j\) alone means translating \(\xv\) along the \(j\)-th row \(a_j\) of \(R\), and the box allows a travel of exactly \(2L/\|a_j\|_\infty\) in that direction. For a generic rotation \(\|a_j\|_\infty\approx\sqrt{2\log d/d}\), so a single slab coordinate has room $\Theta\!\left(L\sqrt{\frac{d}{\log d}}\right)$ to wander in (between the width \(2L\) of the box along a coordinate axis and its diameter \(2L\sqrt d\) along a diagonal). Bringing the last coordinate across a comparable distance with steps of size \(\Theta(\delta/d)\) therefore costs
\begin{equation}\label{eq:rw-query-main}
\EE[\tau]=\Theta\!\left(\frac{L^2}{\delta^2}\cdot\frac{d^3}{\log d}\right),
\end{equation}
and this single level already accounts for a constant fraction of the whole run. 
This derivation is made precise in Appendix~\ref{app:rw-complexity}. 
Numerical simulations up to $d=100$ confirms this scaling, as we shall see shortly in Figure~\ref{fig:geometry-aware} of Section~\ref{sec:numerics-RSW}.

This random-walk analysis is informative because it captures a natural class of classical algorithms that search locally without learning the hidden geometry. However, it is not the strongest classical baseline for the rotated square well. The random-ray tomography algorithm of Section~\ref{subsubsec:learn-rotation} explicitly learns the slab arrangement and improves the classical query complexity for solving the rotated square well family from \(\Theta(d^3/\log d)\) to \(\Theta(d^2\log d)\). Thus, for comparing against the quantum algorithm, random-ray tomography should be viewed as the sharper classical strategy.

\subsection{Benchmarking classical algorithms on the Rotated Square Well}
\label{sec:numerics-RSW}
We benchmark several classical optimization algorithms on the rotated square well problem. The goal is to test whether standard classical algorithms exhibit rapidly growing time-to-solution as the dimension $d$ increases. The landscape is deliberately unfavorable to generic local or gradient-based search: it is flat and piecewise constant over most of the domain, and the global well is hidden by an unknown rotation and shift. Thus the optimizer receives little useful local information until it lands near one of the narrow well regions. Consistent with this geometry, all of the classical baselines we test exhibit behavior consistent with superpolynomial scaling in $d$. These benchmarks are not lower bounds for all possible classical algorithms, but rather evidence that widely used generic classical optimization methods do not efficiently solve this hidden rotated square well problem. 

We optimize
\begin{equation}
    W_{R,\vect{s}}(\vect{x})=-\sum_{j=1}^d
    \mathbf{1}\{|(R\vect{x}+\vect{s})_j|<\delta\},\qquad\vect{x}\in[-L,L]^d.
\end{equation}
Here $R\in O(d)$ and $\vect{s}\in\RR^d$ are hidden from the optimizer, which only accesses $W_{R,\vect{s}}(\vect{x})$ through function evaluations. For each instance, we sample $R$ Haar-randomly from $O(d)$, and sample $\vect s$ uniformly at random. In all experiments, we use $L=1$ and $\delta=0.1$. The global optimum value is $-d$, while the next possible objective value is $-(d-1)$. We therefore declare a run successful if $ W_{R,\vect{s}}(\vect{x}_{\mathrm{out}})\leq-d+\frac{1}{2},$
which is equivalent to $W_{R,\mathbf{s}}(\vect{x}_{\mathrm{out}})=-d$. For an algorithm with tunable hyperparameter $\theta$, empirical success
probability $p(\theta)$, and average single-run effective evaluation count
$N_{\rm eff}(\theta)$, we define the evaluation-count time-to-solution
\begin{align}
    \mathrm{TTS}_{p_0}(\theta)
    =
    \left\lceil
    \frac{\log(1-p_0)}{\log(1-p(\theta))}
    \right\rceil
    N_{\rm eff}(\theta),
    \qquad
    p_0=0.90.
\end{align}
For each dimension $d$, we report the lower envelope
\begin{align}
    \mathrm{TTS}_{0.90}(d)
    =
    \min_{\theta}\mathrm{TTS}_{0.90}(\theta).
\end{align}
As in~\cite{leng2023QHD}, we suppress TTS estimates with fewer than five
observed successes. 

\begin{figure}[h]
    \centering
    \includegraphics[width=0.7\linewidth]{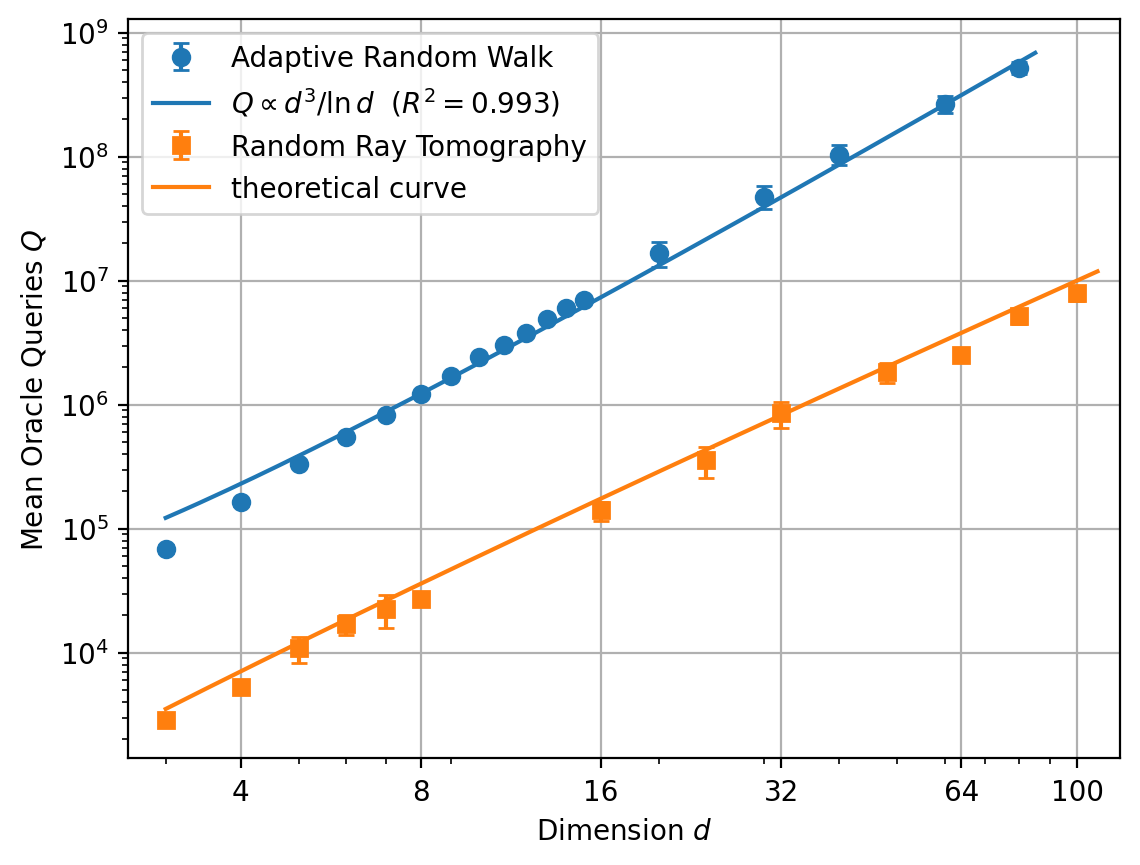}
    \caption{A plot of the mean number of oracle queries $Q$ versus dimension for the two structure-aware classical algorithms (log-log axes). The error bars are $95\%$ confidence intervals ($100$ runs per dimension for the tomography; $50$ runs for the random walk). The random ray tomography (orange) follows the parameter-free theoretical curve $Q \propto d^{2}\ln d$ with leading constant $2C_{\mathrm{pt}}C_{\mathrm{dir}} = 160$ (from $C_{\mathrm{pt}}=10$ points and $C_{\mathrm{dir}}=8$ directions per dimension), whereas the adaptive random walk (blue) follows a steeper $Q \propto d^{3}/\ln d$ ($R^{2}=0.993$, fit for $d\geq 3$).}
    \label{fig:geometry-aware}
\end{figure}

In what follows, we will compare four general-purpose classical algorithms that know nothing about the geometric structure of the problem, and two algorithms that are structure-aware. 
The structure-aware algorithms---the adaptive random walk and the random-ray tomography---have already been analyzed theoretically, and here we simply validated their scaling empirically. In our numerical implementation, the structure-aware algorithms are allowed to run until success is reached. 
The results are plotted in Figure~\ref{fig:geometry-aware}.
We observe that the mean query count of random-ray tomography closely tracks the parameter-free prediction $Q\propto d^2\ln d$, while the adaptive random walk follows the steeper $Q\propto d^3/\ln d$ ($R^2=0.993$). This confirms that explicitly learning the hidden slab geometry realizes the polynomial query complexities established in Section~\ref{subsubsec:learn-rotation}, unlike the generic baselines.

Below, we provide details on the general-purpose non-structure-aware algorithms, of which three are off-the-shelf SciPy optimizers, and one is quantum-inspired. The numerical simulation results are shown in Figure~\ref{fig:classical-tts-rotated-square-well}.

\paragraph{SciPy \texttt{dual\_annealing}.} First, we use this routine as simulated annealing, with $\theta=\texttt{maxiter}$ and
\begin{align}
    \theta\in\{2^6,2^7,\ldots,2^{14}\}.
\end{align}
We run two variants: one with local search disabled, $\texttt{no\_local\_search=True}$, and one with SciPy's default local-search polishing phase enabled, $\texttt{no\_local\_search=False}$. We refer to these as simulated annealing and simulated annealing with local search. 

\paragraph{SciPy \texttt{basinhopping}.} Second, we use SciPy's \texttt{basinhopping} routine, with $\theta=\texttt{niter}$ and
\begin{align}
    \theta\in\{2^3,2^4,\ldots,2^{11}\}.
\end{align}
Trial moves are projected back to $[-L,L]^d$, and the local minimization step is run with box constraints.

\begin{figure}[th]
    \centering
    \includegraphics[width=0.49\linewidth]{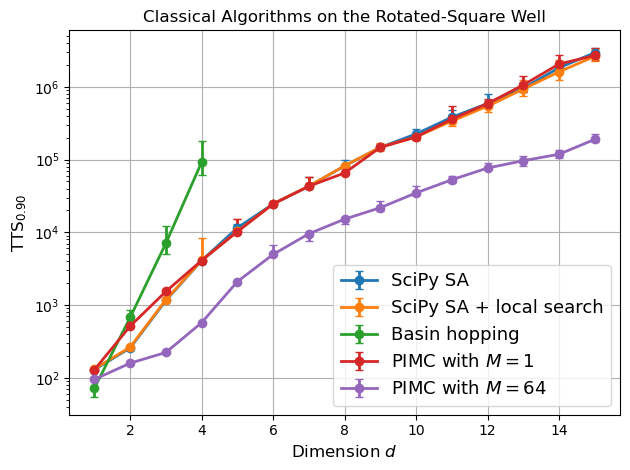}\includegraphics[width=0.5\linewidth]{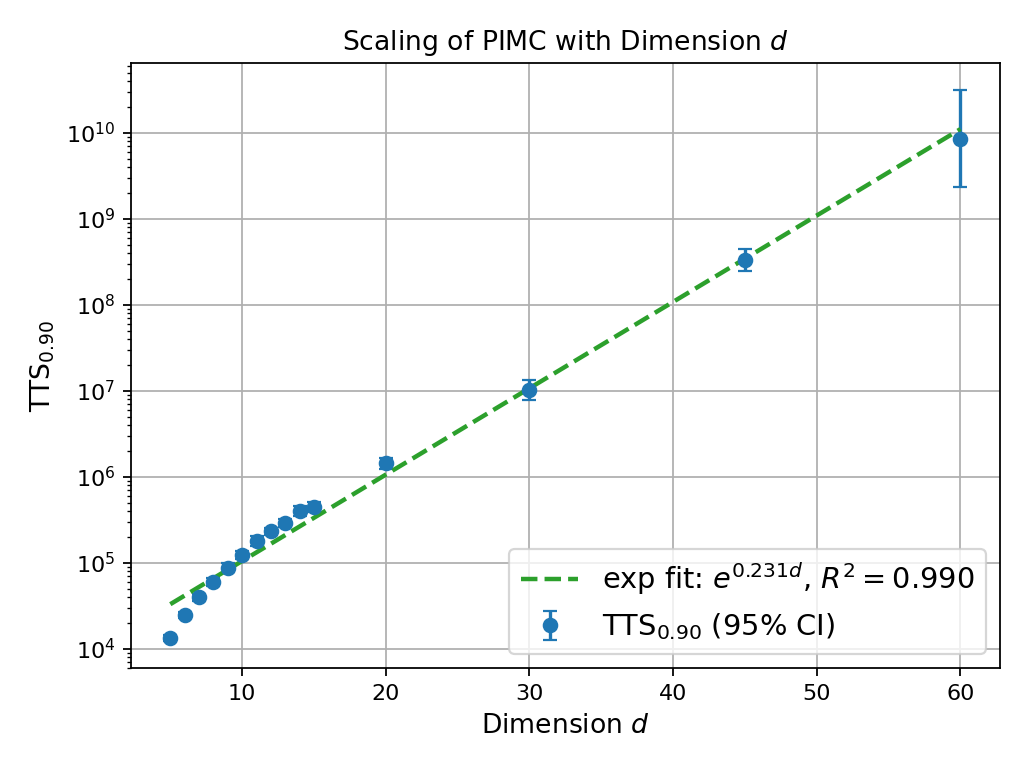}
    \caption{
    Effective-evaluation time-to-solution for classical baselines on the rotated square well problem.
    \textbf{Left:} We plot the lower-envelope $\mathrm{TTS}_{0.90}$ over the algorithmic tuning parameter $\theta$ for each dimension $d$.
    The baselines are simulated annealing, simulated annealing with local search, basin hopping, and path integral Monte Carlo with $M=1$ and $M=64$ beads.
    For the $M=64$ PIMC curve, bead evaluations are counted in the effective batched convention, so that $64$ bead evaluations are charged as one effective evaluation.
    All methods are run on randomly rotated and shifted instances with $L=1$ and $\delta=0.1$.
    Error bars are obtained by propagating a $95\%$ binomial confidence interval for the empirical success probability through the TTS estimate, and points with fewer than five successful runs are omitted.
    The rapid growth of $\mathrm{TTS}_{0.90}$ with $d$ is consistent with superpolynomial scaling for these classical baselines.
    \textbf{Right:} We scale up the best-performing classical baseline, $M=64$ PIMC, to $d=60$, where an exponential fit $\mathrm{TTS}_{0.90}\propto e^{0.231\,d}$ ($R^2=0.990$) confirms that the scaling is indeed superpolynomial. }
    \label{fig:classical-tts-rotated-square-well}
\end{figure}

\paragraph{Path-integral Monte Carlo.} 
Path Integral Monte Carlo (PIMC) has long served as a fundamental benchmark for continuous-variable optimization because of its close connection to quantum fluctuations and ground-state search. 
Here we implement a PIMC-inspired generalized simulated annealing baseline, motivated by the PIMC and simulated quantum annealing methods for continuous double-well and multi-well
landscapes~\cite{stella2006monte,inack2015simulated}. This is a custom
implementation rather than an off-the-shelf SciPy optimizer, but its $M=1$
limit is designed to reproduce the SciPy \texttt{dual\_annealing} update rule
used above, with local search disabled. The state is a ring polymer
\begin{align}
    X
    =
    \left(
        \vect{x}^{(1)},\ldots,\vect{x}^{(M)}
    \right),
    \qquad
    \vect{x}^{(m)}\in[-L,L]^d,
\end{align}
with $M=1$ or $M=64$ beads. For $M=1$, the ring polymer contains a single
walker, there is no internal ring-polymer dynamics, and the energy change is
simply the change in the objective function. Thus the $M=1$ implementation
reduces to a one-walker generalized simulated annealing algorithm with the same
visiting distribution, acceptance rule, temperature schedule, and restart
criterion as SciPy \texttt{dual\_annealing} with local search disabled, following the implementation~\cite{virtanen2020scipy} of generalized simulated
annealing~\cite{xiang2013gensa}.

For $M=64$, we view the beads as a ring-polymer configuration
$X=(\vect{x}^{(1)},\ldots,\vect{x}^{(M)})$ sampled with the annealed effective
action
\begin{align}
    S_s[X]
    =
    \frac{\beta_s}{M}
    \sum_{m=1}^{M}
    W\left(\vect{x}^{(m)}\right)
    +
    \frac{1}{2\Gamma_s^2}
    \sum_{m=1}^{M}
    \left\|
        \vect{x}^{(m+1)}-\vect{x}^{(m)}
    \right\|^2,
    \qquad
    \vect{x}^{(M+1)}\equiv \vect{x}^{(1)} .
\end{align}
Here $\beta_s=1/T_s$, with $T_s$ chosen to be the same temperature as
in the SciPy \texttt{dual\_annealing} schedule used for the $M=1$ update. The
rigid SciPy-like moves shift all beads together, while the internal moves redraw
short bead segments at fixed endpoints; both are accepted using the change in
$S_s[X]$. The internal fluctuation scale is annealed as
\begin{align}
    \Gamma_s
    =
    \Gamma_0
    \left(
        \frac{\Gamma_1}{\Gamma_0}
    \right)^{s/(\theta-1)},
    \qquad
    \Gamma_0=1,
    \qquad
    \Gamma_1=0.01.
\end{align}
For $M=1$, the spring term is absent and the method reduces to the same
single-walker SciPy-like generalized simulated annealing update. The tuning parameter is the number of sweeps $\theta$. For the $M=1$ one-walker implementation, we use the same range as SciPy \texttt{dual\_annealing},
\begin{align}
    \theta
    \in
    \left\{
        2^6,2^7,\ldots,2^{14}
    \right\}.
\end{align}
For the $M=64$ ring-polymer implementation, we use
\begin{align}
    \theta
    \in
    \left\{
        2^5,2^6,\ldots,2^{11}
    \right\}.
\end{align}
A run is declared successful if, at any point during the run, any bead attains
the global objective value $W=-d$. For every algorithm, dimension, and parameter value, we run independent random instances and random seeds. We estimate $p(\theta)$ and $t(\theta)$ empirically, compute $\mathrm{TTS}_{0.90}(\theta)$, and take the lower envelope over $\theta$.

The resulting time-to-solution is shown in Figure~\ref{fig:classical-tts-rotated-square-well}. For every baseline, the lower-envelope $\mathrm{TTS}_{0.90}(d)$ grows rapidly with the dimension $d$ (left), and scaling up the strongest baseline ($M=64$ PIMC) to $d=60$ yields a clean exponential fit $\mathrm{TTS}_{0.90}\propto e^{0.231d}$ with $R^2=0.990$ (right). Together these confirm that all the general-purpose classical optimizers we tested exhibit superpolynomial scaling on the hidden rotated square well problem.
In contrast, the two structure-aware algorithms scale only polynomially, as shown in Figure~\ref{fig:geometry-aware}.

\section{Numerical Simulations of CV-QAOA}\label{sec:Numerics}

In this section, we validate the theoretical results established in Theorems \ref{thm:quad} and \ref{thm:doublewell} with numerical simulations, and explore the performance of CV-QAOA on other functions not considered in the rigorous analysis. First, we carry out a second-quantized numerical simulation to implement $d$-step CV-QAOA $d$-dimensional quadratic function, focusing on $d=2$. Second, we carry out a first-quantized simulation of 1-step CV-QAOA to minimize the Styblinski-Tang (ST) function, spell out the dependence on $\s$, and confirm the predicted effect of initialization on CV-QAOA. Finally, we demonstrate the superior performance of 1-step CV-QAOA in the case of the deflected corrugated spring function relative to previously considered classical and quantum algorithms.

\subsection{$d$-dimensional quadratics}

We design the numerical simulation in second quantized formulation to simulate the CV-QAOA on the $d$-dimensional quadratics problem as follows.
\begin{algorithm}[h]
\caption{Numerical Simulation in second-quantized formalism}
\label{alg:second_quantized_sim}
\begin{algorithmic}[1]
\REQUIRE Dimension $d$, Fock-space cutoff $N$, cost Hamiltonian $C=h(\hat{x})$
\ENSURE Optimized parameters $(\beta, \gamma, \sigma)$ and state $\ket{\beta,\gamma,\sigma}$

\STATE Construct creation and annihilation operators 
$\{a_k, a_k^\dagger\}_{k=1}^d$ as $N \times N$ matrices.
\STATE Build quadrature operators $\{x_k, p_k\}_{k=1}^d$.
\STATE Form the cost and mixer Hamiltonians, and their corresponding unitaries.
\STATE Define squeeze operators
\[
S_k = \exp\left\{-\tfrac{1}{2}\log(\sqrt{2}\sigma)\,\big(a_k^2 - (a_k^\dagger)^2\big)\right\}.
\]
\STATE Prepare the squeezed vacuum state 
$\ket{\sigma} = \bigotimes_k S_k \ket{0}$.
\STATE Apply the unitaries to obtain the variational state 
$\ket{\beta,\gamma,\sigma}$.
\STATE Evaluate the cost function 
$\expval{C}_{\beta,\gamma,\sigma}$ 
and optimize over $(\beta,\gamma,\sigma)$.
\end{algorithmic}
\end{algorithm}
\begin{figure}[h!]
    \centering
    \includegraphics[width=\linewidth]{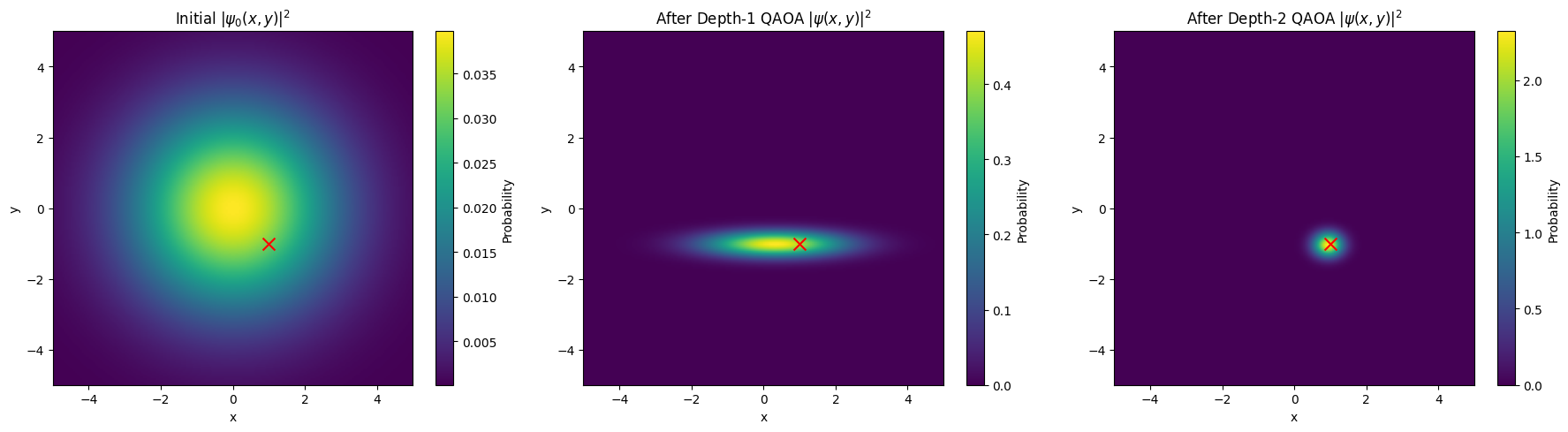}
    \caption{The above plot shows the \textit{final} probability distributions after 1-step and depth-two CV-QAOA respectively, and the initial distribution is displayed on the left-most panel. The red cross denotes the location of the minimum. We consider the function $F(x, y) = \frac{1}{2}(x-1)^2 + \frac{3}{2}(y+1)^2$ with $\sigma =2$. The key point to note is that the matrix $Q = \text{diag}(1, 3)$ does \textit{not} have equal eigenvalues in this case, hence 1-step QAOA is able to optimize only one of the two dimensions ($y$ but not $x$). Once we introduce an additional layer, the CV-QAOA has enough parameters to optimize both the $x$ and $y$ directions.}
    \label{fig:Quad1}
\end{figure}

In Fig.~\ref{fig:Quad1}, we compare the probability distributions obtained from CV-QAOA at depths one and two against the initial distribution. The cost function under consideration is $F(x,y) = \tfrac{1}{2}(x-1)^2 + \tfrac{3}{2}(y+1)^2$ with initial width $\sigma = 2$. The quadratic form is determined by the matrix $Q = \mathrm{diag}(1,3)$, which has unequal eigenvalues and therefore introduces anisotropy between the $x$ and $y$ directions. This asymmetry plays a crucial role in the optimization performance. At depth one, the variational ansatz has only enough degrees of freedom to align with a single direction of curvature, leading to partial optimization that improves the $y$-coordinate but leaves the $x$-coordinate far from optimal. By contrast, introducing an additional layer at depth two provides sufficient variational flexibility to capture the anisotropic curvature in both directions. As a result, the final probability distribution is sharply peaked around the true minimum, indicated by the red cross. This illustrates how increasing circuit depth systematically enhances the expressive power of CV-QAOA, enabling it to adapt to more general quadratic landscapes.

\subsection{$d=1$ Styblinski-Tang (quartic double well)}
\label{sec:STnumerics}

We use the first-quantized numerical simulation for CV-QAOA on general 1D functions as follows.

\begin{algorithm}[h]
\caption{Numerical Simulations in first-quantized formulation}
\label{alg:grid_sim}
\begin{algorithmic}[1]
\REQUIRE Spatial domain (e.g. $[-5,5]$), number of grid points $N=2^{17}$, initial wavefunction $\psi_0(x)$
\ENSURE Optimized parameters $(\beta,\gamma,\sigma)$ and state $\psi(x)=\braket{x|\beta,\gamma,\sigma}$

\STATE Discretize the spatial domain and normalize $\psi_0(x)$.
\STATE Apply the cost unitary in position space: $\psi_1(x) = e^{-i\gamma C(x)} \psi_0(x)$.
\STATE Perform a Fast Fourier Transform (FFT) to obtain momentum-space wavefunction $\tilde{\psi}_1(p)$.
\STATE Apply the mixer unitary in momentum space: $\tilde{\psi}(p) = e^{-i\beta B(p)} \tilde{\psi}_1(p)$.
\STATE Perform an inverse FFT to return to position space: $\psi(x) = \braket{x|\beta,\gamma,\sigma}$.
\STATE Evaluate the cost function: $\expval{C}_{\beta,\gamma,\sigma}$.
\STATE Optimize over parameters $(\beta,\gamma,\sigma)$.
\end{algorithmic}
\end{algorithm}

We have numerically implemented the CV-QAOA algorithm (see Figs.~\ref{fig:ST_first_quant}, \ref{fig:ST_prob}), and find excellent confirmation of the analytical expectation in \eqref{fOpt} that we converge to the true minimum (as opposed to the local minimum) as $\sigma,\alpha\rightarrow0$.

Fig.~\ref{fig:ST_first_quant} shows that, within a first-quantized formalism and using a sufficiently fine spatial discretization, 1-step QAOA can efficiently converge to the true minimum of the Styblinski–Tang (ST) function. In the limit $\sigma \to 0$, the algorithm has enough resolution to represent sharply localized states, which allows the variational ansatz to align with the global minimum even at minimal circuit depth.
\begin{figure}[h!]
    \centering
    \begin{subfigure}[t]{0.48\linewidth}
        \centering
        \includegraphics[width=\linewidth]{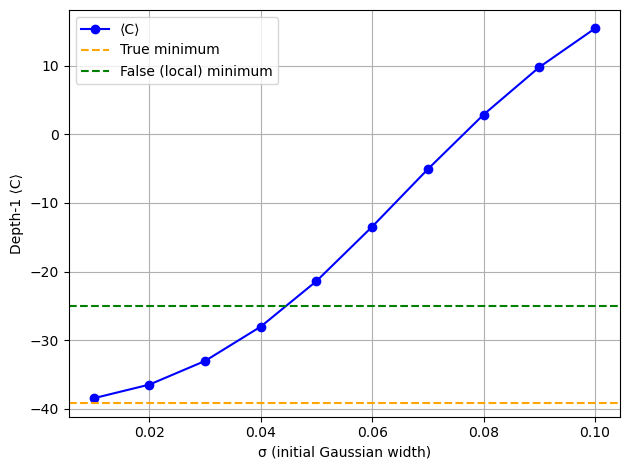}
        \caption{In a first quantized formalism with a sufficiently fine mesh ($N = 2^{17}$) on the domain $[-6, 6]$, one can efficiently implement 1-step QAOA to converge to the true minimum of the ST function as $\sigma\rightarrow 0$. To approach the optimal parameter slice mentioned in \eqref{gOpt}, we chose $\alpha = 0.01$ and set $(\beta, \gamma) = (\alpha\sigma, 1.161/\alpha\sigma)$ as we vary $\sigma\in[0.01, 0.1]$.}
        \label{fig:ST_first_quant}
    \end{subfigure}
    \hfill
    \begin{subfigure}[t]{0.48\linewidth}
        \centering
        \includegraphics[width=\linewidth]{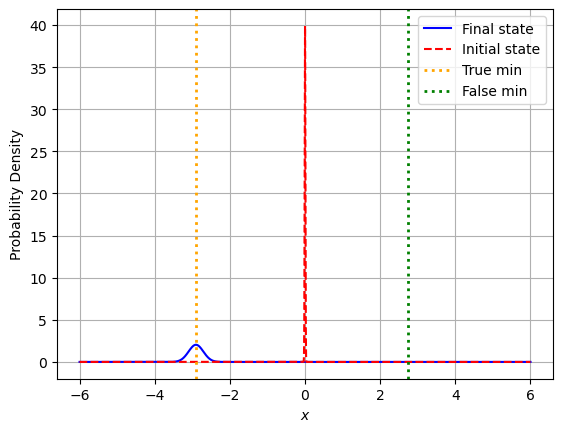}
        \caption{The initial (dashed red) and final (blue) probability distributions after 1-step QAOA for the 1D Styblinski-Tang (ST) function. Since $\sigma$ is small, we start out with an approximate delta function, and 1-step QAOA moves us to a Gaussian localized around the true minimum of the ST function.}
        \label{fig:ST_prob}
    \end{subfigure}
    \caption{Results for 1-step QAOA applied to the 1D Styblinski-Tang (ST) function. (Left) Convergence to the true minimum in the first-quantized formalism as $\sigma \to 0$. (Right) Initial and final probability distributions, showing localization around the true minimum after QAOA evolution.}
    \label{fig:ST_combined}
\end{figure}

Fig.~\ref{fig:ST_prob} presents the corresponding probability distributions for the one-dimensional ST function. The initial state (dashed red) is prepared with a small $\sigma$, resulting in an approximate delta function centered near the starting point. After the action of 1-step QAOA, the distribution (blue) becomes a Gaussian localized around the true minimum of the ST function, illustrating that even a single QAOA layer is sufficient to shift the state toward the correct solution. It also shows explicitly that the final distribution after 1-step CV-QAOA is peaked precisely around the expected location of the true minimum: $x^* = -2.90353$.

\subsection{Beyond the classically dequantizable regime} \label{sec:beyondClassSim}

In this subsection, we provide two systematic strategies to generate optimization problems that cannot be solved by CV-QAOA in the classically simulable regime. Furthermore, these two strategies produce problems that are not solved by previously considered classical and quantum algorithms with the same success rate as that of CV-QAOA. Given the superior performance of CV-QAOA, these problems offer evidence in favor of the quantum advantages of CV-QAOA over existing classical and quantum algorithms.

\subsubsection{Strategy 1: $\nabla C(\vect{0})\not\propto \arg\min C$ -- the deformed deflected corrugated spring}
As illustrated by Theorem \ref{thm:GradPointingToMin}, $1$-step CV-QAOA in the classically-simulable regime can minimize any $d$-dimensional real-analytic function that satisfies 
\begin{align}\label{GradPoint}
    \nabla C(\vect{0})\propto\arg\min_\xv C(\xv)
\end{align}
To find problems that are not solved in the classically-simulable regime, one may analyze problems $C$ that do not satisfy \eqref{GradPoint}. 

We consider one such concrete example: a deformed version of the deflected corrugated spring function. 
\begin{align}
    S(x, y) = \frac{1}{10}\left[(x-x_0)^2+(y-y_0)^2\right] - \cos\left(5\sqrt{(x-x_0')^2+(y-y_0')^2}\right)
\end{align}
When $\xv_0=\xv_0'$, $S(x, y)$ is the usual deflected corrugated spring function. When $\xv_0\neq\xv_0'$, we say it is \textit{deformed}. It is instructive to compute:
\begin{align}
    \nabla S(\xv) = \frac{1}{5}(\xv-\xv_0) +
    5\frac{\sin(5\|\xv-\xv_0'\|)}{\|\xv-\xv_0'\|}(\xv-\xv_0')\Rightarrow \begin{cases} \nabla S(\vect{0}) \propto\arg\min S,&\xv_0=\xv_0'\text{ (undeformed)}\\
     \nabla S(\vect{0}) \not\propto\arg\min S,&\xv_0\ne\xv_0'\text{ (deformed)}
    \end{cases}
\end{align}
On the undeformed problem, 1-step CV-QAOA with a point-like initial state can achieve $\Omega(1)$ success probability since \eqref{GradPoint} is satisfied. On the other hand, QHD performs poorly, even worse than generic classical algorithms like stochastic gradient descent \cite{leng2023QHD}. Furthermore, classical line search from the gradient at the origin can find the global minimum.

For the deformed problem, $S$ does not satisfy equation \eqref{GradPoint}, making it a viable candidate for classical hardness. In the CV-QAOA notation, the QHD schedule \cite{leng2023QHD} is given for any $t\in[0, \overline{T})$
\begin{align}
    \beta_{\text{QHD}}(t) &= \frac{2\d}{\d+t^3} &\g_{\text{QHD}}(t)  = 2\d t^3
\end{align}
with $\d = 0.001$.

\begin{table}[h!]
\centering
\begin{tabular}{|l|r|}
\hline
Algorithm & Success Rate \\
\hline
QAA  & 6.99\% \\
QHD  & 3.14\% \\
CV-QAOA & 39.93\% \\
\hline
\end{tabular}

\caption{A summary of success rates by algorithm on the deflected corrugated spring problem \cite{leng2023QHD} in outputting a point within distance 1 of the global minimum. The last row shows the superior performance of 2-step CV-QAOA. 
}\label{TableSummary}
\end{table}
\begin{figure}[h!]
    \centering
    \includegraphics[width=\linewidth]{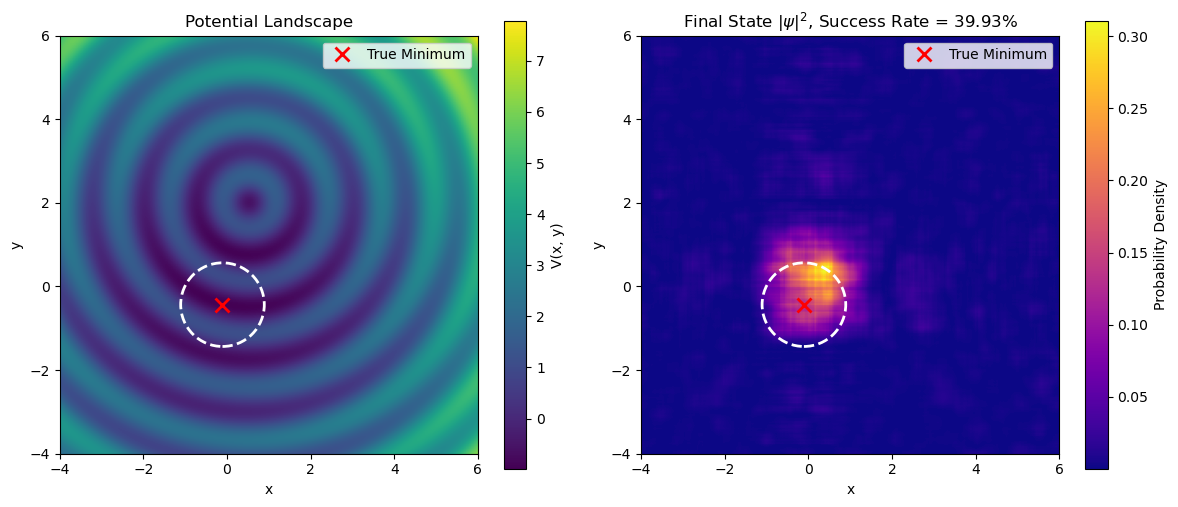}
    \caption{{\textbf{Left}}: the cost landscape of the deformed Deflected Corrugated spring function $S$ with $\pmb{x}_0 = (5,5)$ and $\pmb{x}_0' = (0.5, 2)$, which has many false minima and is highly non-convex. The red star marks global minimum, where the white dash line marks a radius-1 circle around minimum. {\textbf{Right}}: the final probability distribution of CV-QAOA state at its optimal parameters $(\b_1,\g_1, \b_2, \g_2) \approx (0.23,0.72,2.88,0.85)$ with a uniform initial state.}
    \label{fig:DeflCorrSpr}
\end{figure}
As illustrated in Fig.~\ref{fig:DeflCorrSpr}, CV-QAOA with a delocalized initial state achieves a high success probability for the deformed problem, far outperforming other quantum algorithms (see Table~\ref{TableSummary}). We implement a first-quantized simulation of the CV-QAOA applied to this function in the domain $(x,y)\in[-4,6]^2$ and initialize the CV-QAOA with a uniformly-distributed initial wavefunction in the domain (which was also the choice for QHD~\cite{leng2023QHD}). To discretize space, we use a resolution of $2^8$ grid-points per dimension.  Success is defined as $\|\xv-\xv_*\|_2<\ep$ with $\ep = 10\%\text{ of the box-side}$ and $\xv_* = \arg\min S$. To compute the success rate, we simply integrate the final CV-QAOA/QHD probability density over the region with $\|\xv-\xv_*\|_2<\ep$. In the right half of Figure~\ref{fig:DeflCorrSpr}, we plot the resultant probability distribution output by CV-QAOA using the best parameters we have found.

\subsubsection{Strategy 2: Exponentially small cost-gradient -- the Easom function}
As noted in the proof of Theorem \ref{thm:doublewell}, the product $\b\g\propto w'(0)^{-1}$. In particular, when the gradient $w'(0)$ is exponentially small in a large neighborhood of the initial support of $\ket{\s}$, $1$-step CV-QAOA requires the product $\b\g$ to be exponentially large. While simulating CV-QAOA with such exponentially large parameters is not feasible, it turns out that such problems can often be solved by CV-QAOA with a delocalized initial state and have higher success rates than previously considered algorithms.
We focus on the example of the Easom function in $d=1,2$ dimensions, given by
\begin{align}
    E(\vect{x}) &= 1-\prod_{j=1}^d\left(e^{-x_j^2}\cos x_j\right).
\end{align}
Observe that its gradients are exponentially small everywhere, $|\p_j E| = O(|x_j|e^{-x_j^2})$, except near the global minimum of $\xv=0$.

While this problem is not feasibly solved by $1$-step CV-QAOA in the de-quantized regime, low-depth CV-QAOA with a uniform initial state far outperforms the QHD algorithm, which was previously established to be superior to other classical and quantum algorithms on this problem \cite{leng2023QHD}.\footnote{Also refer to \url{https://jiaqileng.github.io/quantum-hamiltonian-descent/nonconvex2d.html}.} To compare the performance of CV-QAOA with QHD, we adopt the following conventions. We discretize space $[-L, L]^d$ into a lattice with $2^8$ grid-points per dimension. Success is defined as $\|\xv-\xv_*\|_2<\ep$ with $\ep = 10\%\text{ of } 2L$ and $\xv_* = \arg\min E$. To compute the success rate, we simply integrate the final CV-QAOA/QHD probability density over the region with $\|\xv-\xv_*\|_2<\ep$. In the CV-QAOA notation, the QHD schedule is given for any $t\in[0, \overline{T})$ \cite{leng2023QHD}
\begin{align}
    \beta_{\text{QHD}}(t) &= \frac{2\d}{\d+t^3} &\g_{\text{QHD}}(t)  = 2\d t^3
\end{align}
with $\d = 0.001$. To ensure a fair comparison between the two algorithms, we define the total runtime of the QHD algorithm in terms of the CV-QAOA parameters as the $\ell^1$-norm 
\begin{align}
    T_p = \sum_{k=1}^p\left[\abs{\b_k}+
    \abs{\g_k}\right].
\end{align}
In other words, for QHD, the parameter $t$ runs between $0$ and $\overline{T}$, where $\overline{T}$ is the solution to the equation:
\begin{align}
    T_p = \int_0^{\overline{T}} dt\left[\b_{\text{QHD}}(t)+\g_{\text{QHD}}(t)\right].
\end{align}
Stated concisely, we ensure that both algorithms have the same integrated time. To obtain near-optimal CV-QAOA parameters, we used a number $G$ of random initial guesses, which were then used as initial guesses for the gradient-free Powell optimizer that works efficiently for this problem. Then we select the set of parameters that maximize the success probability, using which we evaluate the runtime $T_p$ for each $p$.

\begin{figure}[ht]
    \centering
    \includegraphics[scale=0.5]{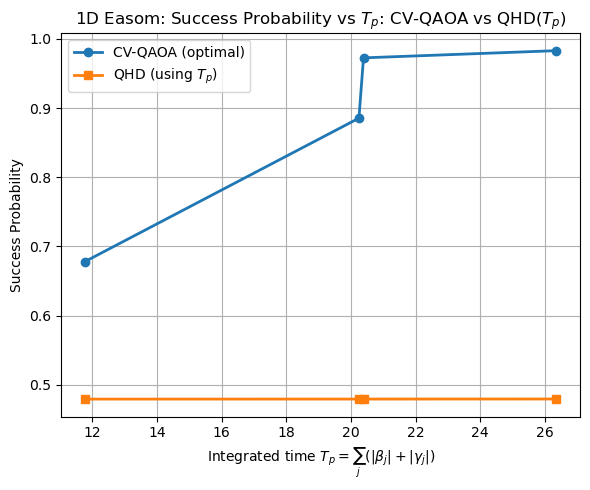}
    \includegraphics[scale=0.5]{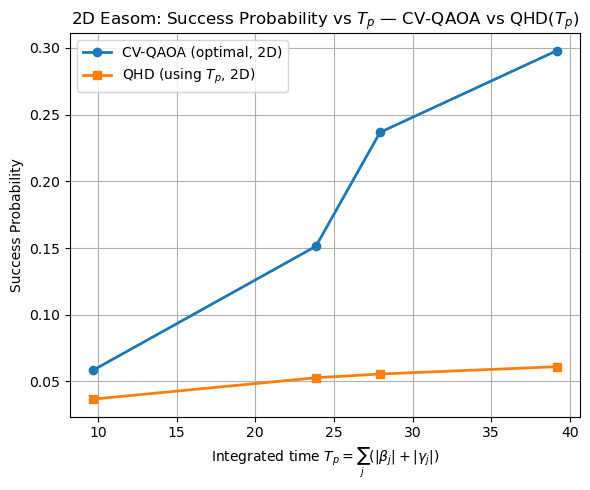}
    \caption{A comparison of the final accuracy at time $T_p$ achieved by QHD and CV-QAOA on the one- and two-dimensional Easom functions.}
    \label{fig:easom}
\end{figure}

Figure~\ref{fig:easom} compares the final accuracy achieved by CV-QAOA and QHD as a function of the matched integrated runtime $T_p$, in both $d=1$ and $d=2$. In both dimensions, low-depth CV-QAOA with a uniform initial state reaches a higher accuracy than QHD at the same integrated time, and this advantage persists as $T_p$ grows. The greater difficulty of the two-dimensional instance is evident in the absolute success probabilities, which peak near $0.3$ for $d=2$ compared to nearly $1$ for $d=1$, yet CV-QAOA retains a clear margin over QHD in both cases. The improvement is realized already at small depth $p$: a few layers suffice for CV-QAOA to achieve superior success probability on the Easom function, whereas QHD saturates at a lower accuracy over the same runtime budget. This behavior is consistent with the mechanism discussed above: since the Easom gradient is exponentially small away from the minimum, the point-like ($\sigma\to0$) regime is ineffective, and only a delocalized initial state allows CV-QAOA to sense the global structure of the landscape.

\section{Discussion and Outlook}\label{sec:Outlook}

Our results establish the Continuous-Variable Quantum Approximate Optimization Algorithm (CV-QAOA) as a general-purpose framework for tackling continuous optimization problems that have traditionally resisted efficient classical treatment.
By rigorously analyzing its performance, we demonstrated that CV-QAOA can match classical methods for convex quadratic problems.
Moreover, it achieves provable superpolynomial-in-$d$ query complexity separations for various families of nonconvex optimization problems.
For the rotated double well problem from \cite{leng2023nonconvex}, we show CV-QAOA solves in $O(1)$ query while information theoretically $\Omega(d/\log d)$ classical queries are required.
For the rotated square well problem we introduced here, we show RsAA solves in $O(\poly\log d)$ time and can be simulated by CV-QAOA with $d^{o(1)}$ quantum queries, while any classical algorithms that learn the hidden rotation provably require $\Omega(d^2/\log d)$ queries.
Our small-scale numerical simulations further show that low-depth CV-QAOA can outperform QHD on the deformed deflected corrugated spring and Easom functions, illustrating its promise on more general nonconvex landscapes.
The comparison in this paper highlights that CV-QAOA can achieve much fewer queries and significantly reduced gate complexity relative to both classical solvers and the Quantum Hamiltonian Descent (QHD), pointing to its potential as a powerful primitive for optimization in high dimensions.

Looking forward, several challenges and opportunities emerge. From a theoretical perspective, extending the analysis of CV-QAOA beyond depth $p=1$ to broader classes of nonconvex functions, including those with multiple interacting minima or highly anisotropic landscapes, remains an open problem. Another goal is to identify families with polynomial quantum query complexity and an exponential classical query lower bound in $d$. Establishing a substantial advantage in total runtime also requires accounting for state preparation, parameter optimization, and query implementation. 
Finally, the advantage of CV-QAOA over stimulated annealing and path integral Monte Carlo on the rotated square well problem motivates a systematic search for broader classes of optimization landscapes in which CV-QAOA can outperform classical methods, potentially revealing more general regimes of quantum advantage in continuous optimization.

Towards practical implementation of CV-QAOA, incorporating adaptive parameter selection, error mitigation, and hybrid quantum-classical routines could further enhance the robustness and practical performance of CV-QAOA. Realizing CV-QAOA on CV-hardware requires high-fidelity Gaussian and non-Gaussian operations. Relevant developments include proposals for multimode photonic gates and programmable trapped-ion bosonic simulation~\cite{hanamura2024implementing,katz2023programmable}, together with experimental demonstrations of nonlinear quadrature measurements using optical feedforward~\cite{sakaguchi2023nonlinear}. 
In Apppendix~\ref{sec:hardware}, we provide some resource estimates for implementing CV-QAOA on both qubit and CV hardware.
Scaling to the gate counts required for high-dimensional problems will require further advances in hardware performance and scalability. Nevertheless, small-scale experimental demonstrations may become feasible as these platforms improve, providing an opportunity to empirically test the predicted advantages.

Furthermore, the exponentially accurate adiabatic theorem employed in our analysis of the rotated square well problem (Appendix~\ref{app:QAA}) is not specific to optimization: the same machinery may be applied to ground-state preparation in quantum many-body systems with bosonic degrees of freedom. This direction connects naturally to continuous-variable and hybrid quantum simulations of quantum field theories \cite{jha2024continuous,ale2025quantum,ale2025simulating} and to adiabatic state preparation for lattice gauge theories \cite{chakraborty2022classically,danna2025adiabatic}. In ongoing work, this develop exponentially accurate adiabatic state preparation for bosonic many-body systems and lattice gauge theories, targeting quantum advantages in regimes where classical methods face obstacles such as Monte Carlo sign problems \cite{troyer2005computational} and the computational cost of tensor-network contraction \cite{schuch2007computational}. CV-QAOA provides a circuit framework for implementing such adiabatic state-preparation protocols, extending its scope beyond the optimization problems considered here.

Overall, our work positions CV-QAOA as a versatile and analytically tractable tool for quantum continuous optimization. By combining provable guarantees with practical implementability, it lays the groundwork for a new class of quantum optimization algorithms that may eventually impact applications ranging from power grid management to drug discovery and machine learning. Future research at the intersection of theory, algorithm design, and hardware development will be critical in translating these conceptual advances into real-world quantum advantage.

\section*{Acknowledgments}
\addcontentsline{toc}{section}{Acknowledgments}
We thank Jiaqi Leng for helpful discussions.
SB is supported by the Mani L. Bhaumik Institute for Theoretical Physics.
LZ is supported by the Google Research Scholar Award and Department of Navy Award No. N000142612399 issued by the Office of Naval Research.

\subsection*{Use of generative AI} Claude Fable 5 and GPT-5.6 Sol assisted in developing the numerical simulation code used to benchmark classical algorithms on the rotated square well problem. Both were also used to derive and check the finite-size and smoothing error estimates, and to develop the geometric and probabilistic arguments underlying the classical query-complexity analysis of the random-ray tomography algorithm. GPT-6 Astra assisted with drafting and revising manuscript text and checking intermediate algebraic calculations. All calculations and mathematical arguments were independently checked by the authors to ensure their correctness.

\addcontentsline{toc}{section}{References}
\bibliographystyle{unsrt}
\bibliography{refs}

\newpage
\begin{appendices}
\addtocontents{toc}{\protect\setcounter{tocdepth}{1}}

\section{CV-QAOA for quadratic functions}\label{app:Quadratics}
In this appendix, we present the detailed analysis underlying Theorems~\ref{thm:quad} and~\ref{thm:1DrealAnalytic}, together with the gradient-driven quadratic variant stated in Theorem~\ref{thm:grad}. Our starting point is the CV-QAOA objective
\begin{align}
    F(\beta,\gamma,\sigma)\coloneqq \expval{C}_{\beta,\gamma,\sigma}
    =
    \bra{\beta,\gamma,\sigma}C\ket{\beta,\gamma,\sigma},
\end{align}
whose minimization determines the performance of the algorithm. 

Using the general conjugation formulas for the basic operators \((\hat x,\hat p)\) under a single CV-QAOA layer, we compute expectation values of polynomial and real-analytic cost functions explicitly. We then analyze one-step CV-QAOA for quadratic functions in detail. This serves two purposes. First, it gives a complete closed-form expression for the quadratic objective after one layer, which makes clear why a single ordinary CV-QAOA step is sufficient only in the degenerate case \(Q\propto \Id\), but not for a generic nondegenerate Hessian. Second, it isolates the two distinct limiting regimes of the initial Gaussian width \(\sigma\), thereby clarifying the role of the initial state already at the quadratic level. Building on this one-step analysis, we prove Theorem~\ref{thm:quad}, showing that \(d\)-step CV-QAOA is sufficient for arbitrary strictly convex quadratics, and we derive explicit error bounds in terms of the variational parameters. We also show that by adding a gradient-driven correction \(\hat D(\etav)\), one can effectively reduce the quadratic problem to an isotropic one, yielding the one-step gradient-driven construction of Theorem~\ref{thm:grad}. Finally, we prove Theorem~\ref{thm:1DrealAnalytic}, which identifies the mechanism of one-step CV-QAOA in the point-like regime for general one-dimensional real-analytic functions: to leading order, the circuit acts as an effective displacement controlled by \(C'(0)\), thereby reducing the optimization to a one-parameter search.

\subsection{The effect of one-step CV-QAOA for $d$-dimensional quadratics}
Consider a general quadratic cost function 
\begin{align}\label{GenQuadCost}
    C(\xv)=\frac{1}{2}\sum_{i, j}Q_{ij} x_i x_j + \sum_kb_k x_k = \frac12 \xv^\top Q \xv + \bv^\top \xv
\end{align}
where we take $Q=Q^\top\in\amsmathbb{R}^{d\times d}$ to be a symmetric matrix without loss of generality. Furthermore, we assume strict convexity, i.e. that all eigenvalues of $Q$ are strictly positive. Since $\det Q >0$, we can rewrite $C(\xv)$ as
\begin{align}\label{CompSq}
    C(\xv) = \frac{1}{2}(\xv - \xv_0)^\top Q(\xv - \xv_0) + c_0
    \quad \text{ with }\quad
    \xv_0 = -Q^{-1}\bv
\end{align}
where $c_0 = -\frac12\xv_0^\top Q \xv_0$ is a constant offset that is unimportant for the optimization problem.
With this rewriting, $C(\xv)\geq c_0$, with the minimum located at $\xv=\xv_0$. It will be convenient to define $\zv\equiv \xv-\xv_0$. Without loss of generality, one may set $c_0=0$.

However, when $Q\not\propto \Id$ is non-degenerate, 1-step CV-QAOA is \textit{not} sufficient to guarantee convergence to the minimum. 

Using Lemma \ref{lem:px} and $[z_k, p_\ell]=[x_k,p_\ell]=i\delta_{k,\ell}$,
\begin{align}
    e^{i\gamma C}e^{i\beta B}z_ke^{-i\beta B}e^{-i\gamma C} &= \sum_m \cA_{km} z_m+ \beta p_k
\end{align}
where $\cA_{km} = \delta_{km} - \beta\gamma Q_{km}$ is a symmetric matrix. This implies
\begin{align}
    g(\beta,\gamma,\sigma) 
    &= \frac{1}{2} \sum_{k,\ell} Q_{k\ell}\bra{\sigma} e^{i\gamma C}e^{i\beta B}z_kz_\ell e^{-i\beta B}e^{-i\gamma C}\ket{\sigma}\nonumber\\
    &= \frac{1}{2}\bra{\sigma}\cA_{mk}Q_{k\ell}\cA_{\ell n}z_m z_n +\beta \cA_{mk}Q_{k\ell}\{p_\ell, z_m\}+\beta^2p_k Q_{k\ell}p_\ell \ket{\sigma}\label{fCalc}
\end{align}
where we used Einstein notation (repeated indices that are not free are summed over) in the last line. 

Next, we evaluate the above expectation term-by-term. Given any real, symmetric, invertible matrix $A\in\amsmathbb{R}^{n\times n}$, one has
\begin{align}
    \int e^{-\frac{1}{2\sigma^2} \xv^\top A\xv}d^n x &= \frac{(2\pi\sigma^2)^{\frac{n}{2}}}{\sqrt{\det A}}&
    \int e^{-\frac{1}{2\sigma^2} \xv^\top A\xv} x_p x_q\;d^n x&= \frac{(2\pi\sigma^2)^\frac{n}{2}}{\sqrt{\det A}}A^{-1}_{p,q}\sigma^2
\end{align}
Using the second of these equations, we obtain
\begin{align}\label{SeqI}
    \sum_{k,\ell}S_{k\ell}\bra{\sigma}z_kz_\ell\ket{\sigma} &=\sum_{k,\ell}S_{k\ell}\bra{\sigma}x_kx_\ell\ket{\sigma} +\xv_0^\top S\xv_0\nonumber= \text{Tr}(S)\sigma^2+\xv_0^\top S\xv_0
\end{align}
where $S_{mn} = \cA_{mk}Q_{k\ell}\cA_{\ell n}$ is symmetric. The linear terms vanish since the initial state is centered at the origin, so $\braket{\sigma|x\xv|\sigma}=0$. Similarly, we have 
\begin{align}
    \sum_{k,\ell}\bra{\sigma}Q_{k\ell}p_kp_\ell\ket{\sigma} &= - {\frac{1}{(2\pi\sigma^2)^\frac{n}{2}}}\sum_{k\ell}Q_{k\ell}\int d^dx\; e^{-\frac{1}{4\sigma^2} \xv^\top \xv}\frac{\partial^2}{\partial x_k\partial x_\ell}e^{-\frac{1}{4\sigma^2} \xv^\top \xv}=\frac{\Tr(Q)}{4\sigma^2} .
\end{align}
The cross-terms must vanish since they are purely imaginary, i.e.
\begin{align}
    \bra{\sigma}\{x_k,p_\ell\}\ket{\sigma} &= -{\frac{i}{(2\pi\sigma^2)^\frac{n}{2}}}\int d^dx\; e^{-\frac{1}{4\sigma^2} \xv^\top \xv}\left\{x_k,\frac{\partial}{\partial x_\ell}\right\}e^{-\frac{1}{4\sigma^2} \xv^\top \xv} = 0.
\end{align}
Then the expected value of the cost is (see Appendix~\ref{app:Quadratics}):
\begin{align}\label{fC}
    \braket{\hat C}_{\b,\g,\s}= \text{Tr}[Q(1-\beta\gamma Q)^2]\frac{\sigma^2}{2}+\xv_0^\top Q(1 - \beta\gamma Q)^2\xv_0+\frac{\beta^2}{2}\frac{\text{Tr}(Q)}{4\sigma^2}
\end{align}

\paragraph{Special case: degenerate Hessian $Q\propto \Id$.} In this case, we let $Q=q\Id$. The optimality gap simplifies to
\begin{align}
    g(\beta,\gamma,\sigma) &= q(1-\beta\gamma q)^2\left(d\sigma^2+\|{\xv_0}\|^2\right) + \frac{qd}{8} \frac{\beta^2}{\sigma^2}
\end{align}
We note the first term can be set to zero by choosing $\gamma=1/(\beta q)$.
Then the optimality gap is $(qd/8)(\beta/\sigma)^2$, where $(\beta,\sigma)$ are free parameters.
Remarkably, there are two distinct ways to make this gap vanish and converge to the global minimum:
\begin{enumerate}
    \item If we start out in a state that is highly delocalized by letting $\sigma\rightarrow\infty$ (which is the strategy taken by QHD), then setting $\gamma =1/(\beta q)$ drives us to the minimum with optimality gap $O(qd(\beta/\sigma)^2)$.
    \item Alternatively, one may consider the regime where we take $\sigma\to 0$. We can still converge to the minimum by setting $\gamma=1/(\beta q)$ and taking $\beta\ll \sigma$.
\end{enumerate}
Both of these scenarios drive the cost-expectation to zero.
\subsection{Proof of Theorem \ref{thm:quad}}
In this section, we provide a complete proof of Theorem \ref{thm:quad}. We start with the heuristic argument underlying the proof, then recount some basic facts about the Schatten $p$-norm and H\"older's inequality, which will be crucial for proving Theorem \ref{thm:quad}, and then introduce some useful notation. Following that, we give clean derivations of each error term in the statement of Theorem \ref{thm:quad}.

\paragraph{Intuition behind Theorem \ref{thm:quad}.}
Since \(Q\) is real, symmetric, and positive definite, we may diagonalize the quadratic form and write
\begin{align}
    C(\xv)=\frac12\sum_{s=1}^d q_s z_s^2,
    \qquad q_s>0,
    \qquad z_s:=x_s-x_{0s},
\end{align}
with \(Q=\mathrm{diag}(q_1,\dots,q_d)\). Applying Lemma \ref{lem:px}, one finds that a single QAOA layer acts on \((\zv,\pmb p)\) as
\begin{align}
    e^{i\gamma C}e^{i\beta B}z_ke^{-i\beta B}e^{-i\gamma C}
    &= \cA_{km} z_m+ \beta p_k, \\
    e^{i\gamma C}e^{i\beta B}p_ke^{-i\beta B}e^{-i\gamma C}
    &= p_k - \gamma Q_{k\ell} z_\ell,
\end{align}
where $\cA_k=\amsmathbb{I}-\beta_k\gamma_k Q$. The important point is that the transformed operators are still linear in \(\zv\) and \(\pmb p\). Iterating this through the circuit therefore gives
\begin{align}
    U_n^\dagger \zv U_n = R^{(n)}\zv + S^{(n)}\pmb p,
\end{align}
for some coefficient matrices \(R^{(n)}\) and \(S^{(n)}\). A straightforward computation gives the recursion
\begin{align}\label{recursionRS}
    R^{(n)} &= R^{(n-1)}\cA_n - \gamma_n S^{(n-1)}Q, &
    S^{(n)} &= \beta_n R^{(n-1)} + S^{(n-1)},
\end{align}
with initial data
\begin{align}
    (R^{(1)},S^{(1)})=(\cA_1,\beta_1\amsmathbb{I}).
\end{align}
Thus the entire effect of the depth-\(d\) circuit is encoded in the matrices \(R^{(d)}\) and \(S^{(d)}\). Evaluating the expectation value in the Gaussian initial state gives
\begin{align}\label{eqExp}
    \braket{\hat{C}}_{\betav,\vect{\gamma},\sigma}
    =
    \bra{\sigma}\zv^\top (R^{(d)})^\dagger Q R^{(d)} \zv\ket{\sigma}
    +
    \frac{1}{4\sigma^2}\Tr\bigl(S^{(d)\dagger}Q S^{(d)}\bigr).
\end{align}
The recursion \eqref{recursionRS} shows how this cancellation can be achieved sequentially. Consider first taking \(\beta_1\to 0\) while holding \(\lambda_1=\beta_1\gamma_1\) fixed. Then
\begin{align}
    R^{(2)} &= \cA_1\cA_2-\gamma_2\beta_1 Q
    \xrightarrow[\beta_1\gamma_1=\lambda_1]{\beta_1\to0}
    \cA_1\cA_2, \\
    S^{(2)} &= \beta_2\cA_1+\beta_1\Id
    \xrightarrow[\beta_1\gamma_1=\lambda_1]{\beta_1\to0}
    \beta_2\cA_1.
\end{align}
Proceeding similarly at the next step,
\begin{align}
    R^{(3)}
    &\xrightarrow[\beta_1\gamma_1=\lambda_1]{\beta_1\to0}
    \cA_1\cA_2\cA_3-\gamma_3\beta_2\cA_1Q
    \xrightarrow[\beta_2\gamma_2=\lambda_2]{\beta_2\to0}
    \cA_1\cA_2\cA_3, \\
    S^{(3)}
    &\xrightarrow[\beta_1\gamma_1=\lambda_1]{\beta_1\to0}
    \beta_3\cA_1\cA_2+\beta_2\cA_1
    \xrightarrow[\beta_2\gamma_2=\lambda_2]{\beta_2\to0}
    \beta_3\cA_1\cA_2.
\end{align}
The pattern is now transparent: by taking the limits sequentially for \(k=1,\dots,d-1\), one removes the unwanted mixing terms and obtains
\begin{align}\label{limitR}
    \lim\limits_{\substack{\beta_{d-1}\rightarrow 0 \\ \beta_{d-1}\gamma_{d-1}=\lambda_{d-1}}}
    \cdots
    \lim\limits_{\substack{\beta_1\rightarrow 0 \\ \beta_1\gamma_1=\lambda_1}}
    R^{(d)}
    =
    \prod_{j=1}^d \cA_j
    =
    \begin{pmatrix}
        \prod_j(1-\lambda_j q_1) & 0 & \cdots & 0 \\
        \cdots & \prod_j(1-\lambda_j q_2) & \cdots & 0 \\
        \cdots & \cdots & \ddots & \cdots \\
        0 & 0 & \cdots & \prod_j(1-\lambda_j q_d)
    \end{pmatrix}.
\end{align}
Because each mode appears independently in this diagonal form, we can choose
\begin{align}
    \lambda_k=\frac{1}{q_k},
\end{align}
which sets \(R^{(d)}=0\). This eliminates the position-space contribution in \eqref{eqExp}. The remaining momentum-space term is then suppressed by taking \(\sigma\to\infty\). This is the basic mechanism behind Theorem~\ref{thm:quad}: each layer is used to cancel one normal mode of the quadratic form, and the broad initial Gaussian removes the residual momentum cost.

\paragraph{Useful facts and notation.} Given a linear transformation $T:\RR^d\rightarrow\RR^d$, the Schatten $p$-norm may be defined in terms of the singular values $\{t_j\}_{j=1}^d$. 
\begin{align}
    \|T\|_p = \Big(\sum_{j=1}^dt_j^p\Big)^{\frac{1}{p}},
\end{align}
The above definition extends formally to $p=\infty$ with the convention that $\|*\|_\infty$ is the operator norm. The case $p=1$ coincides with the so-called trace/nuclear norm, and the case $p=2$ is the Frobenius norm. To derive a rigorous error bound, we use the H\"older inequality (see \cite{simon2005trace, watrous2018theory} for details) for the Schatten $p$-norm. For all $1\le p, q\le \infty$ and linear operators $A, B$ on $\RR^d$, we have
\begin{align}\label{factA}
    \abs{\Tr(AB)}&\le \|A\|_p\|B\|_q, &\frac{1}{p}+\frac{1}{q} = 1.
\end{align}
As with the usual $p$-norm, the Schatten norm satisfies the hierarchy $\|*\|_\infty\leq \dots\leq \|*\|_2\leq \|*\|_1$, and is sub-multiplicative, i.e. for any linear operators $A, B\in \RR^{d\times d}$
\begin{align}\label{SubMult}
    \|AB\|_p\leq \|A\|_p\|B\|_p.
\end{align}

It will be convenient to phrase the argument in terms of the limiting (in the sequential, ordered sense of \eqref{limitR}) values of $R^{(n)}$ and $S^{(n )}$ so we can derive a physically meaningful bound as a function of the CV-QAOA parameters. From the recursion relations \eqref{recursionRS}, the limiting values are:
\begin{align}\label{limSeq}
    \tilde{R}^{(n)} &= \prod_{j=1}^n\cA_j, &\tilde{S}^{(n)} &=\sum_{k=1}^{n}\b_k\tilde{R}^{(k-1)}, &\tilde{R}^{(0)} = \Id.
\end{align}
It is convenient to introduce the deviation from the limiting value
\begin{align}
    \delta^{(n)}(\b_{k<n}, \g_{k\le n}) &= R^{(n)} - \tilde{R}^{(n)} & \ep^{(n)}(\b_{k\le n}, \g_{k<n}) &= S^{(n)} - \tilde{S}^{(n)}.
\end{align}

\paragraph{Derivation of the first error term of Theorem \ref{thm:quad}.} 
The first step is to rephrase \eqref{eqExp} as a trace:
\begin{align}
    \bra{\sigma}\vect{z}^\top(R^{(d)})^\dagger Q R^{(d)}\vect{z}\ket{\sigma} = \Tr(\Sigma(R^{(d)})^\dagger Q R^{(d)}) \text{ where }\Sigma_{nm} =  \bra{\sigma}{z}_n{z}_m\ket{\sigma}.
\end{align}
Since $\Tr(\Sigma R^{(d)\dagger }QR^{(d)}) =\Tr(\Sigma \tilde{R}^{(d)\dagger }Q\tilde{R}^{(d)})+2\Tr(\Sigma Q\tilde{R}^{(d)}\delta^{(d)}) + \Tr(\Sigma Q\delta^{(d)}\delta^{(d)})$, the triangle inequality implies
\begin{align}\label{MainIneq}
    \abs{\Tr(\Sigma R^{(d)\dagger }QR^{(d)})}\le \abs{\Tr(\Sigma \tilde{R}^{(d)\dagger }Q\tilde{R}^{(d)})}+2\abs{\Tr(\Sigma Q\tilde{R}^{(d)}\delta^{(d)})} + \abs{\Tr(\Sigma Q\delta^{(d)}\delta^{(d)})}.
\end{align}
where we have used the fact that $[Q, R^{(n)}] =[Q, \tilde{R}^{(n)}]=[Q, \d^{(n)}]= 0$ and $R^{(n)\dagger} = R^{(n)}$. Applying the special case of H\"older's inequality for the Schatten norm \eqref{factA} with $p=\infty$ and $q=1$ and sub-multiplicativity \eqref{SubMult}, we obtain:

\begin{align}
    \abs{\Tr(\Sigma R^{(d)\dagger }QR^{(d)})}
    &\le \Tr(Q)\|\Sigma\|_\infty\left(\|\tilde{R}^{(d)}\|_1^2+2\|\tilde{R}^{(d)}\|_1\|\d^{(d)}\|_1+\|\d^{(d)}\|_1^2\right)
\end{align}
where we have used the fact that $\|Q\|_1 = \Tr(Q)$ since $Q\succ 0$. Equation \eqref{limitR} already establishes the fact that $\|\tilde{R}^{(d)}\|_1$ may be made arbitrarily small: {  since $\tilde{R}^{(d)}$ is diagonal, its trace norm is}
\begin{align}
    { 
    \|\tilde{R}^{(d)}\|_1
    =\sum_{k=1}^d\abs{1-\lambda_k q_k}\prod_{j\neq k}\abs{1-\lambda_j q_k}
    \le \sqrt{\zeta}\,\max_{\ell=1,\dots,d}\abs{1-\lambda_\ell q_\ell},
    \quad\text{with}\quad
    \sqrt{\zeta}=\sum_{k=1}^d\prod_{j\neq k}\abs{1-\lambda_j q_k}.}
\end{align}
To complete the argument, it suffices to show that the function $\delta^{(n)}(\b_{k<n}, \g_{k\le n}) = R^{(n)} - \tilde{R}^{(n)}$ may be  made arbitrarily small (in matrix norms). To derive such a bound, it is instructive to inspect a few low-order expressions for $R^{(n)}$:
\begin{align}
    \delta^{(1)} = R^{(1)} - \tilde{R}^{(1)} &= 0\no\\
    \delta^{(2)} = R^{(2)} - \tilde{R}^{(2)} &= O(\g_2\b_1; \Id, Q)\no\\
    \delta^{(3)} = R^{(3)} - \tilde{R}^{(3)} &= O\Big(\sum_{i=2}^3\Big[\g_i\Big[\sum_{j=1}^{i-1}\b_j\Big]\Big]; \Id, Q\Big)\no\\
    \delta^{(4)} = R^{(4)} - \tilde{R}^{(4)} &= O\Big(\sum_{i=2}^4\Big[\g_i\Big[\sum_{j=1}^{i-1}\b_j\Big]\Big]; \Id, Q\Big) +O\Big(\g_4\sum_{k=3}^4\Big[\b_k\Big[\sum_{i=2}^{k-1}\g_i\Big[\sum_{j=1}^{i-1}\b_j\Big]\Big]\Big]; \Id, Q, Q^2\Big)\no
\end{align}
where the notation $O(p(\vect{\b}, \g); M_1, \dots, M_n)$ means that the coefficients of the terms in the polynomial $p$ is one of the matrices $M_1, \dots, M_n\in \complex^{d\times d}$. Note, furthermore, that the polynomial in our case always vanishes when all arguments are zero. Reasoning inductively, it follows for any $n\ge 2$ that $\delta^{(n)}$ tends to zero \textit{at least} as fast as:
\begin{align}\label{deltaBound2}
    \d^{(n)} = R^{(n)}-\tilde{R}^{(n)} = O\Bigg(\sum\limits_{\substack{k\leq n \\ k\text{ even}} }\sum_{1\leq i_1<\dots<i_k}\g_{i_k}\b_{i_{k-1}}\dots\g_{i_2}\b_{i_1}; \Id, Q, \dots, Q^{n}\Bigg)
\end{align}
The above expression indeed vanishes in the sequential limit of \eqref{limitR} because each monomial in the sum has its lowest index element as a $\b_k$ for some $k$. Since we take $\b_k<\min_{m>k}\{\b_m, \g_m\}$, as in \eqref{limitR}, each monomial is guaranteed to be small in this limit. {  To convert \eqref{deltaBound2} into an explicit bound, recall the pairwise-product sum $P=\sum_{1\le i<j\le d}|\beta_i|\,|\gamma_j|$ and the standing hypothesis $\hat q\,P\le\tfrac12$ of Theorem~\ref{thm:quad}, where $\hat q=\max\{1,(\Tr Q)^d\}$. In each order-$2m$ monomial of \eqref{deltaBound2}, the indices pair up as $(\beta_{i_1},\gamma_{i_2}),\dots,(\beta_{i_{2m-1}},\gamma_{i_{2m}})$ with the $\beta$-index below the $\gamma$-index in every pair, so the sum of absolute values of all order-$2m$ monomials is at most $P^m$. Since every matrix coefficient satisfies $\|Q^j\|_1\le(\Tr Q)^j\le\hat q$ for $0\le j\le d$, the triangle inequality and a geometric series (using $P\le\tfrac12$) give}
\begin{align}\label{deltaBound}
    { 
    \|\d^{(n)}\|_1 \le \hat q\sum_{m\ge 1}P^m \le 2\hat q\,P.}
\end{align}
{  Since $\ket{\sigma}$ is centered at the origin, $\Sigma=\sigma^2\Id+\xv_0\xv_0^\top$ exactly, so $\|\Sigma\|_\infty=\sigma^2+\|\xv_0\|^2$.} Putting everything together:
\begin{align}
    { 
    \abs{\Tr(\Sigma R^{(d)\dagger }QR^{(d)})} \le \Tr(Q)\left(\s^2+\|\xv_0\|^2\right)\left(\|\tilde{R}^{(d)}\|_1+\|\d^{(d)}\|_1\right)^2.}
\end{align}
Applying \eqref{deltaBound} with $n=d$, we obtain
\begin{align}
    { 
    \abs{\Tr(\Sigma R^{(d)\dagger }QR^{(d)})} \le \Tr(Q)\left(\s^2+\|\xv_0\|^2\right)\Big(\sqrt{\zeta}\max_{\ell = 1, \dots, d}|1- \lambda_\ell q_\ell|+2\hat{q}\,P\Big)^2,}
\end{align}
{  which is the first error term of Theorem~\ref{thm:quad}, with the factor of $2$ absorbed into $K$.} Hence, the error in approximating the first term of \eqref{eqExp} indeed tends to zero in the anticipated sequential limit of \eqref{limitR}.

\paragraph{Derivation of the second error term of Theorem \ref{thm:quad}.}
Applying the H\"older inequality for the Schatten norm \eqref{factA} with $p=\infty$ and $q=1$ to the second term of \eqref{eqExp}:
\begin{align}\label{second}
    \abs{\Tr(S^{(d)\dagger}  Q S^{(d)})}=\abs{\Tr(Q S^{(d)}S^{(d)\dagger})}\leq \|{Q}\|_\infty \|S^{(d)}S^{(d)^\dagger}\|_1 \leq \Tr(Q)\|{S^{(d)}}\|_1^2 
\end{align}
where we have used $\|Q\|_\infty \le \|Q\|_1=\Tr(Q)$ and sub-multiplicativity $\|S^{(d)}S^{(d)\dagger}\|_1\le\|S^{(d)}\|_1^2$. Hence, the problem boils down to deriving a bound on $S^{(n)}$. As before, it will be instructive to examine a few low-order terms 
\begin{align}
    \ep^{(1)} &= S^{(1)} - \tilde{S}^{(1)} = 0\no\\
    \ep^{(2)} &= S^{(2)} - \tilde{S}^{(2)}=0\no\\
    \ep^{(3)} &= S^{(3)} - \tilde{S}^{(3)} = O(\b_3\g_2\b_1;Q)\no\\
    \ep^{(4)} &= S^{(4)} - \tilde{S}^{(4)} = O\Big(\sum_{k=3}^4\Big[\b_k\Big[\sum_{i=2}^{k-1} \Big[\g_i\sum_{j=1}^{i-1}\b_j\Big]\Big]\Big];Q\Big)
\end{align}
The right-hand side of the above equations always contains one extra power of the $\b_k$'s than the $\g_k$'s, which implies that $\ep^{(n)}$ has a higher vanishing order than $\tilde{S}^{(d)}$. {  Explicitly, grouping monomials as in the derivation of \eqref{deltaBound} (each order-$(2m{+}1)$ monomial is one unpaired $|\beta_i|\le\hat\beta$ times $m$ pairs, where $\hat\beta=\max_{1\le j\le d}|\beta_j|$), the hypothesis $\hat q\,P\le\tfrac12$ gives}
\begin{align}
    { 
    \|\ep^{(n)}\|_1 \le \hat{q}\,d\hat\beta\sum_{m\ge1}P^m \le 2\hat{q}\,d\hat\beta\,P \le d\hat\beta.}
\end{align}
The recursion for $S^{(n)}$ implies
\begin{align}
    S^{(n)} = \sum_{k=1}^{n}\b_k\tilde{R}^{(k-1)} + \ep^{(n)}
\end{align}
Taking the norm of both sides for $n=d$ and applying the triangle inequality:
\begin{align}
    \|S^{(d)}\|_1
    \leq \sum_{j=1}^{d}\abs{\b_j}\|\tilde{R}^{(j-1)}\|_1 + \|\ep^{(d)}\|_1
    \leq \a\sum_{j=1}^{d}\abs{\b_j} + \|\ep^{(d)}\|_1,
\end{align}
where $\a = \max_{0\le j\le d-1}\|\tilde{R}^{(j)}\|_1$ and the second inequality uses $\|\tilde{R}^{(j-1)}\|_1 \le \a$ by definition. By sub-multiplicativity \eqref{SubMult}, $\a$ is bounded above by $\max\{\prod_{k=1}^{j-1}\|\cA_k\|_1 \mid j=1,\dots,d\}$. Hence:
\begin{align}
    { 
    \|S^{(d)}\|_1 \le \a d\hat{\b} + d\hat\b \le 2\a d\hat{\b},}
\end{align}
{  where the last step uses $\a\ge\|\tilde{R}^{(0)}\|_1=\|\Id\|_1=d\ge 1$.} Plugging this into \eqref{second}, we arrive at a bound on the second term of \eqref{eqExp}:
\begin{align}\label{eq:quad-second-error}
    \frac{1}{4\s^2}\Tr\left(S^{(d)\dagger} Q\, S^{(d)}\right)
    { 
    \;\le\; \frac{\Tr(Q)}{4\s^2}\|S^{(d)}\|_1^2
    \;\le\; \frac{\Tr(Q)\,\a^2 d^2\hat{\b}^2}{\s^2}.}
\end{align}
This concludes the proof of Theorem~\ref{thm:quad}{ , with $K=4\Tr(Q)$ covering both error terms}.

\subsection{A gradient-driven variant for $d$-dimensional quadratics}\label{sec:grad-driven}
Inspired by the fact that gradient information can be used to accelerate classical and quantum optimization methods~\cite{leng2025gradient}, we further introduce a gradient-driven CV-QAOA ansatz with an additional driver $\hat{D}(\etav)$ specified by $d^2$ variational parameters $\etav=(\eta_{i,j})_{i,j=1}^d$:
\begin{equation} \label{eq:grad-QAOA}
    \ket{\paramv,\etav,\sigma} = \prod_{j=1}^p \Big(e^{-i\beta_j \hat{B}}e^{-i\gamma_j [\hat{C}+\hat{D}(\etav)]} \Big)\ket{\sigma},
    \quad \text{where} \quad
    \hat{D}(\etav) = \sum_{i,j=1}^d \eta_{i,j} \frac{\partial  C(\hat\xv)}{\partial x_i} \frac{\partial  C(\hat\xv)}{\partial x_j}.
\end{equation}
We show that including a gradient driver reduces the circuit depth required for convergence of ordinary CV-QAOA. In fact, we reduce number of layers depth from $2d$ to just $2$ for any quadratic regardless of whether $Q$ has degenerate eigenvalues or not. Define
\begin{align}
    U(\beta, \gamma, \etav) = e^{-i\beta B}e^{-i\gamma C} e^{-i\gamma D(\etav)}.
\end{align}

\begin{thm}[1-step gradient-driven CV-QAOA minimizes any $d$-dimensional quadratic.]
\label{thm:grad}
Given any $d$-dimensional strictly convex quadratic, i.e. $C(\xv) =\frac12 \xv^\top Q \xv + \bv^\top \xv$ with $Q\succ 0$, there exists a choice of parameters $(\beta^*, \gamma^*, \etav^*,\sigma^*)$ such that 1-step gradient-driven CV-QAOA \eqref{eq:grad-QAOA} minimizes the cost function up to arbitrarily small optimality gap, i.e.,
$$\left|\bra{\beta^*, \gamma^*, \etav^*,\sigma^*}\hat{C}\ket{\beta^*, \gamma^*, \etav^*,\sigma^*} -\min_{\xv} C(\xv)\right|= \frac{\Tr(Q)}{8} \frac{(\beta^*)^2}{(\sigma^*)^2}.$$ 
\end{thm}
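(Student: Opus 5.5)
We follow the one-step analysis above, with the cost generator replaced by $C+D(\etav)$. For a quadratic cost, $\nabla C(\hat\xv)=Q\hat\zv$ with $\hat\zv=\hat\xv-\xv_0$. Hence $D(\etav)=\hat\zv^\top Q\,\eta\, Q\hat\zv$ (only the symmetric part of $\eta$ matters), so
\begin{align}
C+D(\etav)=\tfrac12\hat\zv^\top M\hat\zv,\qquad M:=Q+2Q\eta_s Q,
\end{align}
up to the irrelevant constant $c_0$, where $\eta_s=(\eta+\eta^\top)/2$. Both $C$ and $D$ are functions of $\hat\xv$ only, so they commute, and $e^{-i\gamma C}e^{-i\gamma D}=e^{-i\gamma(C+D)}$. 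Lemma~\ref{lem:px}, in its multidimensional form with $h=\frac12\zv^\top M\zv$, then gives
\begin{align}
U^\dagger\hat\zv\,U=(\Id-\beta\gamma M)\hat\zv+\beta\hat\pv .
\end{align}
The key choice is $\eta_s$ such that $M=\mu\Id$ for a scalar $\mu>0$, namely $\eta_s=\tfrac12\bigl(\mu Q^{-2}-Q^{-1}\bigr)$. This is possible because $Q\succ0$ is invertible and $\eta$ ranges over all $d\times d$ matrices. The gradient driver thus isotropizes the effective cost generator, even though the measured observable is still $C$ with Hessian $Q$.

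Next we evaluate $\braket{\hat C}$ exactly as in \eqref{fCalc}--\eqref{fC}, with $\cA=\Id-\beta\gamma M=(1-\beta\gamma\mu)\Id$. Since $\cA$ is scalar, the position contribution is
\begin{align}
(1-\beta\gamma\mu)^2\left[\tfrac{\sigma^2}{2}\Tr Q+\tfrac12\xv_0^\top Q\xv_0\right].
\end{align}
The cross term $\langle\{x_k,p_\ell\}\rangle$ vanishes as before. The momentum term is $\frac{\beta^2}{2}\cdot\frac{\Tr Q}{4\sigma^2}$, using the Gaussian identity $\braket{\sigma|p_kp_\ell|\sigma}=\delta_{k\ell}/(4\sigma^2)$ already derived. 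Choosing $\gamma^*=1/(\beta^*\mu)$ kills the position part for every $\sigma^*$. This leaves the exact gap $\frac{\Tr(Q)}{8}\frac{(\beta^*)^2}{(\sigma^*)^2}$, which matches the statement and can be made arbitrarily small by taking $\beta^*/\sigma^*\to0$, for instance with $\mu=1$.

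The one step that needs care is the commutation and BCH argument. Lemma~\ref{lem:px} is stated for $C=h(\hat x)$, so we check that $C+D$ is again a function of position alone; it is a quadratic in $\hat\xv$. The conjugation by $e^{-i\gamma(C+D)}$ then shifts $\hat\pv\mapsto\hat\pv-\gamma\nabla(C+D)=\hat\pv-\gamma M\hat\zv$. With that in hand, everything else is the same Gaussian bookkeeping as in the degenerate case $Q\propto\Id$ treated above. The only difference is that the measured Hessian $Q$ appears in the expectation while the propagation matrix $\cA$ is scalar, so the traces become $\Tr Q$. The main obstacle is therefore conceptual rather than technical: we must see that $\eta$ should be tuned so that $Q+2Q\eta_sQ\propto\Id$, and not so that $D$ cancels $C$.
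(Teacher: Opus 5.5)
Your proposal is correct and follows essentially the same route as the paper's proof: you merge the commuting $C$ and $D(\etav)$ into $\tfrac12\hat\zv^\top(Q+2Q\eta Q)\hat\zv$, tune $\eta$ so this generator is isotropic, and reuse the one-step Gaussian bookkeeping with scalar $\cA$ and $\beta\gamma\mu=1$, which leaves exactly $\frac{\Tr(Q)}{8}\frac{(\beta^*)^2}{(\sigma^*)^2}$. Your formula $\eta_s=\tfrac12(\mu Q^{-2}-Q^{-1})$ is in fact the correct one --- the paper's displayed $\etav^*=\tfrac12[Q^{-1}-\xi(Q^{-1})^2]$ has the sign reversed --- and your factor $\tfrac12$ on $\xv_0^\top Q\xv_0$ is likewise right, though neither point affects the final gap.
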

\begin{proof}
Recall that $C(\xv)= \frac{1}{2}\vect{z}^\top Q\vect{z} + c_0$ with $\vect{z} = \xv-\xv_0$, $\xv_0 = -Q^{-1}\bv$ and $c = -\frac12\xv_0^\top Q \xv_0$, then we have (using Einstein summation notation)
\begin{align}
    \partial_j C &= Q_{jk} z_k, &
    D(\etav) &= \eta_{ij} (Q_{i\ell} z_\ell ) (Q_{jk} z_k) = Q_{\ell i} \eta_{ij} Q_{jk} z_k z_{\ell}.
\end{align}
Since $[C, D] = 0$, we may add the arguments of the exponential to find:
\begin{align}
    C+D = \frac{1}{2}\vect{z}^\top (Q + 2Q\vect{\eta} Q)\vect{z} 
\end{align}
We now choose $\vect{\eta}^*$ so that $Q +2Q\vect{\eta} Q = \xi \Id$ for any fixed non-zero $c\in\amsmathbb{R}$, i.e. $\vect{\eta}^* = \half\left[Q^{-1}-\xi(Q^{-1})^2\right]$. With this, we effectively have
\begin{align}\label{tildeC}
    U(\beta, \gamma,\pmb{\eta}^*) =e^{-i\beta B}e^{-i\gamma \tilde{C}}
    \quad \text{ where } \quad
    \tilde{C} = \frac{\xi}{2}\sum_k z_k^2  
\end{align}
This amounts to setting $\cA = (1-\beta\gamma c)\amsmathbb{I}_d$ in \eqref{fCalc}.
\begin{align}
    f_C(\beta,\gamma,\pmb{\eta}^*,\sigma) = (1-\beta\gamma \xi)^2\left[\frac{\sigma^2}{2}\text{Tr}\;Q+\xv_0^\top Q\xv_0\right]+\frac{\beta^2}{2}\frac{\text{Tr}(Q)}{4\sigma^2}
\end{align}
Hence, choosing $\beta\gamma = 1/\xi$ and taking $\sigma\rightarrow\infty$ minimizes the cost. Alternatively, we could also take $\beta\rightarrow 0$ while holding $\beta\gamma = 1/\xi$ and $\sigma$ fixed. Hence, 1-step gradient-driven CV-QAOA is enough to minimize the cost even with \textit{non-degenerate} eigenvalues for $Q$.

\end{proof}

We  validate Theorem~\ref{thm:grad} by numerically simulating the gradient-driven CV-QAOA on a 2D convex quadratic function. In such a simulation, we simply modify the argument of the cost unitary so that it now uses $\tilde{C}$ rather than $C$, as in equation~\eqref{tildeC}. Consistent with theoretical expectations, we indeed find convergence to the true minimum with only $p=1$ CV-QAOA.
\begin{figure}[h!]
    \centering
    \includegraphics[width=\linewidth]{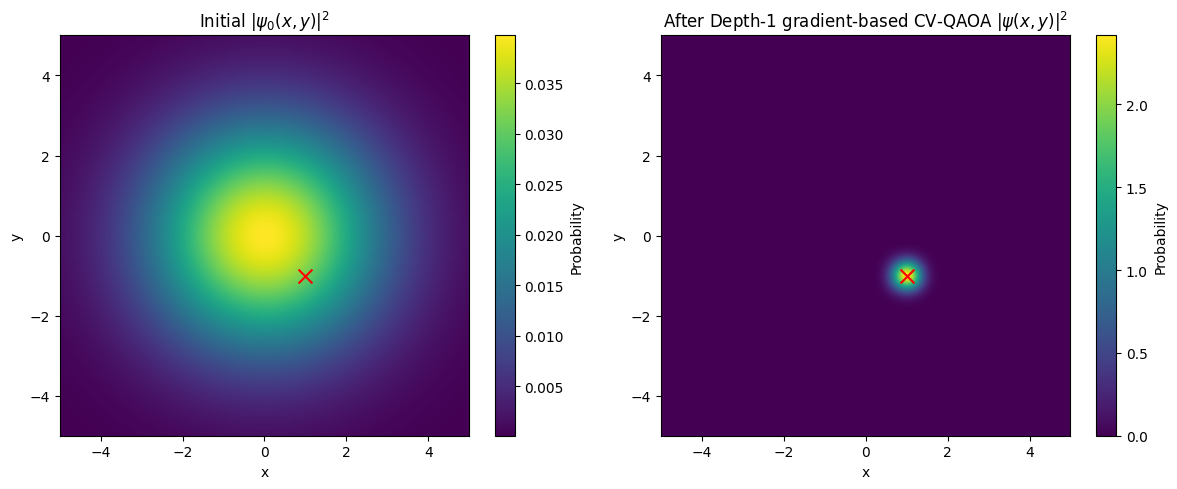}
    \caption{The above figure shows that the improved gradient-driven CV-QAOA is now able to converge to the minimum of the function $F(x, y) = \frac{1}{2}(x-1)^2+\frac{3}{2}(y+1)^2$ considered in Fig.~\ref{fig:Quad1} in 1-step. As before, red cross denotes the location of the true minimum, where the final probability distribution is sharply peaked and the inital distribution has $\sigma=2$. We take $\xi=2$.}
    \label{fig:Quad2}
\end{figure}

\section{CV-QAOA on real-analytic functions, and dequantizable regime}\label{app:PfGradMin}

In this section, we prove the two main results concerning 1-step CV-QAOA on real-analytic cost functions. First, we establish Theorems~\ref{thm:GradPointingToMin} and~\ref{thm:1DrealAnalytic}, showing that whenever the gradient of $C$ at the origin is proportional to the location of the global minimizer, a point-like Gaussian initial state and a single QAOA step suffice to approximate the global minimum arbitrarily well. Second, we prove Proposition~\ref{prop:dequant}, showing that this same point-like regime is classically simulable: the output state is well-approximated by a Gaussian, and the expected cost is reproduced by a quadratic truncation of $C$.

\subsection{Proof of Theorem~\ref{thm:GradPointingToMin} (and \ref{thm:1DrealAnalytic}): 1-step QAOA for real-analytic functions}
In this subsection, we provide a proof of Theorem \ref{thm:GradPointingToMin}, which explains when CV-QAOA works for arbitrary $d$-dimensional real analytic functions. Theorem~\ref{thm:1DrealAnalytic} follows as a direct consequence of Theorem~\ref{thm:GradPointingToMin} by restricting to the case with $d=1$.

\begin{proof}
We first prove the $d$-dimensional result. Set $a=-\lambda_*\nabla C(\vect 0)=\xv_*$, $r=\sigma+|\beta|/\sigma$, and write $\hat x_j=\sigma Y_j$ and $\hat p_j=\sigma^{-1}P_j$ in the unit-width Gaussian state $\phi$. Lemma~\ref{lem:px} gives $F(\beta,\lambda_*/\beta,\sigma)=\langle\phi,C(a+Q)\phi\rangle$, where
\begin{equation}
{Q_j=\sigma Y_j-\lambda_*\bigl(\partial_j C(\sigma Y)-\partial_j C(\vect 0)\bigr)+(\beta/\sigma)P_j.}
\end{equation}
{The $Q_j$ commute because $a_j+Q_j$ are unitary conjugates of the commuting position operators. By the fundamental theorem of calculus,}
\begin{equation}
{\partial_j C(\sigma y)-\partial_j C(\vect 0)
=\sigma\sum_k y_k\int_0^1\partial_{jk}C(t\sigma y)\,dt.}
\end{equation}
For $0<\sigma\le1$, the polynomial-growth assumption bounds every derivative of the coefficient on the right by $\sigma$ times a polynomial in $|y|$, uniformly in $\sigma$. Consequently, each application of a $Q_j$ to a Gaussian times a polynomially bounded smooth function contributes at most a factor $r$. Repeating this for any fixed product gives
\begin{equation}
\|Q_{j_1}\cdots Q_{j_n}\phi\|\le K_{n,C,d,\lambda_*}r^n,
\qquad n=1,2,\ldots .
\end{equation}
Since $a$ is a global minimizer, $\nabla C(a)=0$. Taylor's theorem near $a$, together with the global polynomial bound on $C$, gives some integer $m\ge1$ and a constant $K$ such that $0\le C(a+q)-C(a)\le K(|q|^2+|q|^{2m})$ for all $q\in\RR^d$. The joint spectral theorem for the commuting $Q_j$ and the preceding moment bounds therefore imply, for $r\le1$,
\begin{equation}
0\le F(\beta,\lambda_*/\beta,\sigma)-C(a)
\le K\langle |Q|^2+|Q|^{2m}\rangle_\phi
=O_{C,d,\lambda_*}(r^2)
=O_{C,d,\lambda_*}\!\left(\sigma^2+\beta^2/\sigma^2\right).
\end{equation}
For Theorem~\ref{thm:1DrealAnalytic}, use the same argument with $d=1$ and $a=-\lambda C'(0)$, which need not minimize $C$. The scalar Taylor bound now reads $|C(a+q)-C(a)-C'(a)q|\le K(q^2+|q|^{2m})$. The real Gaussian has $\langle P\rangle=\langle Y\rangle=0$, and Taylor expansion of $C'(\sigma Y)$ at zero, with Gaussian tails controlled by polynomial growth, gives $\langle Q\rangle=O_{C,\lambda}(\sigma^2)$. The remaining terms are $O_{C,\lambda}(r^2)$ by the same moment estimate. This proves both statements.
\end{proof}

In summary, we have shown that when the gradient of the cost at the origin is proportional to the location of the global minimum, then 1-step CV-QAOA with a point-like initial state can minimize the objective function arbitrarily well.

\subsection{Dequantization of 1-step CV-QAOA with a point-like initial state}\label{app:PfPropDequant}
Here we prove \cref{prop:dequant}, which shows that the output of 1-step CV-QAOA initialized with a point-like Gaussian state (small $\sigma$) can be efficiently simulated classically. Specifically, we show that both the expected cost value and the output wavefunction are well-approximated by those obtained under a quadratic truncation $\tilde{C}$ of the cost function. In the limit $\s\rightarrow 0$, the final state converges to a Gaussian state, which admits an efficient classical description.
Therefore, whenever the conditions of Theorems~\ref{thm:1DrealAnalytic} and \ref{thm:GradPointingToMin} apply, and the 1-step QAOA finds the global minima, there is also a way to efficiently simulate the QAOA output distribution on a classical computer.
Nevertheless, this is a regime where a quantum-classical query complexity separation can be obtained (see \cref{prop:query}).
\begin{proof}
Let $|\param\rangle\!\rangle$ denote the state obtained by replacing $C$ with $\tilde{C}$ in the CV-QAOA ansatz. We first consider the expected cost
\begin{align}
    \tilde{F}(\b,\g,\s) = \langle\!\langle\param| C(\hat{\xv}) |\param\rangle\!\rangle.
\end{align}
Applying Lemma~\ref{lem:px} (with $\lambda = \b\g$) to the truncated evolution gives
\begin{align}
    \tilde{F}(\b,\g,\s)=\bra{\s}C\!\left(\hat{\xv} - \lambda \nabla \tilde{C}(\hat{\xv})+\b\hat{\pv}\right)\ket{\s}.
\end{align}
Following the same steps as in the proof of Theorem~\ref{thm:GradPointingToMin}, one finds
\begin{align}
     \tilde{F}(\b,\g,\s) = C\!\left(-\lambda \nabla \tilde{C}(\vect{0})\right)+ O\!\left(\frac{d\b^2}{\s^2}+d\s^2\right).
\end{align}
Since $\nabla\tilde{C}(\vect{0}) = \nabla {C}(\vect{0})$, we have
\begin{align}
    |F(\b,\g,\s)-\tilde{F}(\b,\g,\s) | = O\!\left(\frac{d\b^2}{\s^2}+d\s^2\right),
\end{align}
which proves the first claim. Turning to the wavefunction approximation,
\begin{align}
    \bigl\|\,|\param\rangle - |\param\rangle\!\rangle\,\bigr\|
    = \left\|e^{-i\b B}e^{-i\gamma C}(1-e^{i\gamma (C-\tilde{C})})\ket{\s}\right\|
    = \left\|(1-e^{i\gamma (C-\tilde{C})})\ket{\s}\right\|,
\end{align}
where we have used the invariance of the $\ell_2$-norm under unitary operations. Applying the triangle inequality gives
\begin{align}
    \left\|(1-e^{i\gamma (C-\tilde{C})})\ket{\s}\right\| \leq \sum_{n=1}^\infty\frac{|\gamma|^n}{n!}\|(C-\tilde{C})^n\ket{\s}\|.
\end{align}
By Hermiticity of $C-\tilde{C}$, for any $n\geq 1$:
\begin{align}
    \|(C-\tilde{C})^n\ket{\s}\|=\sqrt{\bra{\s}(C-\tilde{C})^{2n}\ket{\s}}.
\end{align}
Since
\begin{align}
    C(\xv) - \tilde{C}(\xv) = \sum_{\ell_1+ \dots+\ell_d> 2}c(\ell_1, \dots, \ell_d)x_1^{\ell_1}\dots x_d^{\ell_d},
\end{align}
it follows that $\|(C-\tilde{C})^n\ket{\s}\| = O(d^{3n}\s^{3n})$,
where we pick up a factor of $\s$ for each $x_j$ in the integrand and a multiplicity of at most $d$ per occurrence, giving a scaling of $(d\s)^{6n}$ whose square-root yields the above. Hence, there exists $\mu>0$ such that
\begin{align}
    \|(1-e^{i\g (C-\tilde{C})})\ket{\s}\|\leq\mu  \sum_{n=1}^\infty\frac{(|\g|d^3\s^3)^{n}}{n!} = \mu\left(e^{|\g|d^3\s^3}-1\right) = O\left(|\g| (d\s)^3\right).
\end{align}
In the limit $\s\rightarrow 0$, the final state at the optimal parameters may therefore be approximated to arbitrarily small error by a Gaussian state, which is classically simulable. This concludes the proof.
\end{proof}

\section{Rotated square well: error bounds, runtime, query complexity}
\label{app:QAA}
In this section, we develop the technical estimates underlying our analysis of the quantum adiabatic algorithm for the rotated square-well problem with a harmonic mixer. First, we bound the change in the two lowest energies caused by Gaussian smoothing and show that a suitable smoothing width preserves the spectral gap. Second, using the exponentially accurate adiabatic theorem of \cite{HagedornJoye2002}, together with uniform spectral estimates and an endpoint-flat analytic schedule, we derive the runtime bound \(T=O(\log^8 d)\). Finally, using the factorization into one-dimensional evolutions and higher-order product formulas, we obtain the query bound \(Q(d)=d^{o(1)}\) for quantum simulation via CV-QAOA.
\subsection{Smoothing the square well}
\label{app:rsw-smoothing}

In this subsection, we smooth the square well and show that the resulting Hamiltonian retains the spectral properties needed for the adiabatic algorithm. We bound the change in its two lowest energies, which also bounds the change in its spectral gap. The argument has two steps: smoothing changes the potential by a small amount in integral, and states with bounded kinetic energy cannot concentrate enough probability near the well boundaries to turn this into a large energy change. The resulting estimate allows a smoothing width of order $\nu^2$, where $\nu$ is the endpoint cutoff used in the runtime analysis.

We round the well boundaries by averaging over Gaussian shifts. Writing $g_\eta(y)=e^{-y^2/\eta^2}/(\sqrt\pi\,\eta)$, define
\begin{equation}
w(z)=-\mathbf1_{\{|z|<\delta\}},
\qquad
w_\eta(z)=(g_\eta*w)(z)
=-\frac{1}{\sqrt\pi\,\eta}
\int_{-\delta}^{\delta}e^{-(z-y)^2/\eta^2}\,dy.
\end{equation}
Here $\delta>0$ is the half-width of the well and $\eta>0$ is the smoothing width; we set $w_0=w$. The full cost and Hamiltonian are
\begin{align}
W^\eta(\xv)&\equiv W_{R,\sv}^\eta(\xv)
=\sum_{j=1}^d w_\eta((R\xv+\sv)_j),\nonumber\\
H^\eta(s)&=(1-s)A+sW^\eta,
\qquad
A=-\frac12\Delta+\frac{|\xv|^2}{8\sigma^4}.
\label{eq:rsw-harmonic-path}
\end{align}
The rotation $R\in O(d)$ and shift $\sv$, with $\|\sv\|_\infty\le B$, are unknown. The initial state is the known Gaussian ground state $\ket{\sigma}$ of $A$, with $\braket{\xv|\sigma}=(2\pi\sigma^2)^{-d/4}e^{-|\xv|^2/(4\sigma^2)}$. Success means reaching $V_a=\{\xv:\|R\xv+\sv\|_\infty\le a\}$, where $a>\delta$. Throughout, $d\ge1$ and $0<\varepsilon<1/3$, while $\sigma,\delta,a,B$ are fixed independently of $d$ and $\varepsilon$. Unless stated otherwise, constants depend only on these fixed parameters. We write $W=W^0$ and $H=H^0$ for the sharp well.

The schedule $s_M(u)$ increases from $0$ to $s_f=1-\nu^2$, with $0<\nu<1$. In physical time, write $s(t)=s_M(t/T)$, and keep $\eta$ fixed during the evolution. Since $W^\eta$ is bounded and $1-s\ge\nu^2>0$, the Hamiltonians share the domain
\[
D(A)=H^2(\RR^d)\cap\{f:|\xv|^2f\in L^2(\RR^d)\}
\]
and generate unitary evolutions. Denote the state evolved from $\ket{\sigma}$ by $\Psi^\eta(t)$ and its final value by $\Psi^\eta=\Psi^\eta(T)$.

As in Section~\ref{sec:RSW}, the coordinates $\zv=R\xv+\sv$ separate the problem into $d$ independent one-dimensional evolutions. For a scalar shift $b$, let
\begin{equation}
h_b^\eta(s)=(1-s)A_b+sw_\eta,
\qquad
A_b=-\frac12\partial_z^2+\frac{(z-b)^2}{8\sigma^4}.
\label{eq:rsw-coordinate-reduction}
\end{equation}
The $j$th coordinate has $b=s_j$, so we write $h_j^\eta=h_{s_j}^\eta$ and $A_j=A_{s_j}$. Explicitly, the coordinate change
\[
(\mathcal{R}f)(\zv)=f(R^\top(\zv-\sv))
\]
gives
\[
\mathcal{R}H^\eta(s)\mathcal{R}^{-1}
=\sum_{j=1}^d h_j^\eta(s).
\]
Each factor starts in the ground state of $A_b$,
\[
\phi_b(z)=(2\pi\sigma^2)^{-1/4}e^{-(z-b)^2/(4\sigma^2)}.
\]
If $\psi_b^\eta(t)$ is its evolved state, then
\[
\mathcal{R}\Psi^\eta(t)
=\bigotimes_{j=1}^d\psi_j^\eta(t),
\]
where $\psi_j^\eta(t)=\psi_{s_j}^\eta(t)$ and $\phi_j=\phi_{s_j}$. We suppress time arguments for final states, writing $\psi_b^\eta=\psi_b^\eta(T)$ and $\psi_j^\eta=\psi_j^\eta(T)$; the sharp final states are $\psi_j=\psi_j^0$. All one-dimensional estimates below are uniform in $|b|\le B$. The coordinate change is used only in the analysis.

We first bound the total change in the potential. Shifting an interval by $y$ changes its indicator on a set of length $2\min\{|y|,2\delta\}$. Thus, for $v_\eta=w_\eta-w$, averaging over the Gaussian shifts gives
\begin{align}
\|v_\eta\|_1
&\le\int_\RR g_\eta(y)\|w(\,\cdot-y)-w\|_1\,dy\le2\int_\RR g_\eta(y)|y|\,dy
=\frac{2\eta}{\sqrt\pi}.
\label{eq:L1-error-clean}
\end{align}
For the later product-formula analysis, we also note that $\|w_\eta^{(m)}\|_\infty\le C_m\eta^{-m}$ for $m\ge1$. This follows by differentiating the Gaussian in the convolution: $\|w*g_\eta^{(m)}\|_\infty\le\|w\|_\infty\|g_\eta^{(m)}\|_1=C_m\eta^{-m}$, with $C_m$ independent of $\eta$.

To compare energies, it is convenient to remove the overall factor $1-s$ and make the potential nonnegative. Define $D_\eta=1+w_\eta$, $q=s/(1-s)$, and $K_b^\eta(q)=A_b+qD_\eta$, so that $h_b^\eta(s)=(1-s)(K_b^\eta(q)-qI)$. Here $0\le D_\eta\le1$ and $D_0=\mathbf1_{\{|z|\ge\delta\}}$. The scalar shift $-qI$ does not affect the gap. Let $E_j^\eta(q,b)$ denote the eigenvalues of $K_b^\eta(q)$, numbered from zero.

\begin{lemma}[Effect of smoothing on the lowest energies]
\label{lem:rsw-smoothing-spectrum}
There are constants $\kappa,C_0>0$ such that
\begin{equation}
|E_j^\eta(q,b)-E_j^0(q,b)|\le C_0q\eta,
\qquad j=0,1,
\label{eq:rsw-low-energy-smoothing}
\end{equation}
whenever $q\ge0$, $\eta\ge0$, and $q\eta\le\kappa$, uniformly in $|b|\le B$. The first two energies are uniformly bounded in this range.
\end{lemma}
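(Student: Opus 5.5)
The plan is to compare $K_b^\eta(q)=A_b+qD_\eta$ with $K_b^0(q)=A_b+qD_0$ through the min--max principle. The two operators differ by the bounded multiplication operator $qv_\eta$, where $v_\eta=D_\eta-D_0=w_\eta-w$. This gives a crude bound $|E_j^\eta-E_j^0|\le q\|v_\eta\|_\infty\le q$, which is not sufficient. The improvement from $q$ to $q\eta$ has to come from the $L^1$ estimate \eqref{eq:L1-error-clean}, $\|v_\eta\|_1\le 2\eta/\sqrt\pi$. The tool for this is the one-dimensional Sobolev (Agmon-type) inequality $\|f\|_\infty^2\le\|f\|_2\|f'\|_2$. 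For any normalized $f$ in the form domain it gives
\[
|\langle f,v_\eta f\rangle|\le\|v_\eta\|_1\|f\|_\infty^2\le\frac{2\eta}{\sqrt\pi}\,\|f'\|_2\le\frac{2\eta}{\sqrt\pi}\bigl(1+\langle f,A_bf\rangle\bigr).
\]
Here we use $\|f'\|_2^2\le 2\langle f,A_bf\rangle$ together with $\sqrt{2t}\le 1+t$. In short, the perturbation $qv_\eta$ is form-bounded relative to $A_b$ with relative bound $O(q\eta)$.

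The steps run as follows. First, for $j=0,1$, take the two-dimensional span of the two lowest eigenfunctions of $K_b^0(q)$ as the trial space in the min--max characterization of $E_j^\eta$. For normalized $f$ in this span, $\langle f,A_bf\rangle\le\langle f,K_b^0(q)f\rangle\le E_1^0(q,b)$, because $D_0\ge0$. This yields $E_j^\eta\le E_j^0+C q\eta\,(1+E_1^0)$.

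Second, for the reverse inequality, use the span of the two lowest eigenfunctions of $K_b^\eta(q)$ as the trial space. This gives $E_j^0\le E_j^\eta+Cq\eta\,(1+E_1^\eta)$. The right-hand side involves the unknown $E_1^\eta$, and it can be controlled in two ways. One option is to run the relative form bound directly: $K_b^\eta\ge(1-Cq\eta)A_b-Cq\eta$, so with $q\eta\le\kappa$ small, kinetic expectations in low-lying states are bounded by a constant times $(1+E_1^\eta)$. The other option is to use the first step, $E_1^\eta\le E_1^0+Cq\eta(1+E_1^0)$, which is bounded once $E_1^0$ is.

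Third, show that the two lowest energies are uniformly bounded. Since $0\le D_\eta\le1$, we have $E_j^0(q,b)\le E_j(A_b)+q$, which is not uniform in $q$. A better upper bound comes from choosing test functions supported inside the well $|z|<\delta$, where $D_0=0$. Take two Dirichlet-type functions on $(-\delta,\delta)$. Their $A_b$-energies are bounded independently of $q$ and uniformly for $|b|\le B$, since the harmonic term is bounded on the interval. Min--max then gives $E_1^0(q,b)\le C_B$ for all $q\ge0$. Combining this with the first two steps gives $|E_j^\eta-E_j^0|\le C_0q\eta$ and also the uniform boundedness of $E_{0,1}^\eta$.

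The step I expect to need the most care is the reverse direction. Its error term depends on the kinetic energy of the perturbed eigenstates, so one must check that the smallness condition $q\eta\le\kappa$ really closes the bootstrap uniformly in $q$ and in $b$. The test-function estimate placing $E_1^0$ below a $q$-independent constant is the key input that makes all the constants uniform.
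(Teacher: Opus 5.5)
Your proposal is correct and follows essentially the paper's route. The paper also bounds $E_j^0\le E_*$ uniformly in $q,b$ via Dirichlet trial states inside $(-\delta,\delta)$, derives $|\langle f,v_\eta f\rangle|\le\frac{2\eta}{\sqrt\pi}\|f'\|_2$ from $\|f\|_\infty^2\le\|f\|_2\|f'\|_2$ and the $L^1$ estimate, and applies min--max in both directions. The reverse direction is closed exactly by your second option, so no real bootstrap is needed: step one gives $E_1^\eta\le E_*+O(\kappa)$, and $D_\eta\ge0$ gives $\|f'\|_2^2\le 2E_1^\eta$ on the smoothed low-lying span.

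One small tidy-up: to bound $E_j$ you should use the span of the first $j+1$ eigenfunctions, since the two-dimensional span alone only controls $E_1$.
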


\begin{proof}
First, bound the two lowest energies of the sharp well using trial states supported inside $(-\delta,\delta)$, where $D_0=0$. Take the first two eigenfunctions of the infinite square well on this interval and extend them by zero. Every normalized state in their span has kinetic energy at most $\pi^2/(2\delta^2)$. On the same interval, the oscillator potential is at most $(\delta+B)^2/(8\sigma^4)$. The variational principle therefore gives
\begin{equation}
E_j^0(q,b)\le E_*,
\qquad j=0,1,
\qquad
E_*:=\frac{\pi^2}{2\delta^2}+\frac{(\delta+B)^2}{8\sigma^4},
\label{eq:rsw-energy-ceiling}
\end{equation}
independently of $q$ and $b$.

Next, bound the change in the energy of any normalized finite-energy state $f$. The one-dimensional inequality $\|f\|_\infty^2\le\|f\|_2\|f'\|_2=\|f'\|_2$ follows by integrating the derivative of $|f|^2$ from both ends of the real line and applying Cauchy--Schwarz. Together with \eqref{eq:L1-error-clean}, it gives
\begin{equation}
\left|\braket{f|K_b^\eta(q)-K_b^0(q)|f}\right|
=q\left|\int_\RR v_\eta(z)|f(z)|^2\,dz\right|
\le\frac{2q\eta}{\sqrt\pi}\|f'\|_2.
\end{equation}
Both $K_b^0(q)$ and $K_b^\eta(q)$ have nonnegative potentials, so an energy bound also bounds the kinetic energy: $\|f'\|_2^2\le2\braket{f|K_b^\eta(q)|f}$, and likewise for the sharp well.

We now apply the variational principle in both directions. On the span of the first $j+1$ sharp eigenfunctions, every normalized $f$ satisfies $\|f'\|_2\le\sqrt{2E_*}$. Using this span as trial states for the smoothed Hamiltonian gives
\[
E_j^\eta(q,b)\le E_j^0(q,b)
+\frac{2\sqrt{2E_*}}{\sqrt\pi}q\eta,
\qquad j=0,1.
\]
Choose $\kappa$ small enough that the added term is at most $1$. Then $E_j^\eta(q,b)\le\bar E:=E_*+1$, so every normalized state in the span of the first $j+1$ smoothed eigenfunctions satisfies $\|f'\|_2\le\sqrt{2\bar E}$. Using this span as trial states for the sharp Hamiltonian gives the reverse bound,
\[
E_j^0(q,b)\le E_j^\eta(q,b)
+\frac{2\sqrt{2\bar E}}{\sqrt\pi}q\eta.
\]
The two inequalities prove the claim with $C_0=2\sqrt{2\bar E}/\sqrt\pi$.
\end{proof}

The role of this estimate is to preserve the gap after smoothing. Once the sharp gap is bounded below by $\gamma_*>0$, as shown in Lemma~\ref{lem:gap-all-s}, the lemma above gives $\operatorname{gap}K_b^\eta(q)\ge\gamma_*-2C_0q\eta$. Along the schedule, $q\le\nu^{-2}-1$, so choosing $\eta\le c_\eta\nu^2$ with $c_\eta$ sufficiently small keeps this gap at least $\gamma_*/2$. Multiplying by $1-s\ge\nu^2$ then gives $\operatorname{gap}h_b^\eta(s)\ge\gamma_*\nu^2/2$. The uniform energy bound $\bar E$ will also control the probability outside the target interval in the next subsection.

\subsection{Quantum adiabatic runtime with an analytic schedule}
\label{app:rsw-runtime}

We show that both the sharp and sufficiently smoothed evolutions succeed
in time $O(\log^8(d/\varepsilon))$. The argument bounds the gap throughout
the evolution, the ground-state probability outside the target interval
at its endpoint, and the error incurred by following that ground state.
We use the polynomial schedule
\begin{equation}
s_M(u)
=s_f\frac{\int_0^u v^M(1-v)^M\,dv}
          {\int_0^1 v^M(1-v)^M\,dv},
\qquad
s_f=1-\nu^2,
\qquad
s(t)=s_M(t/T),
\label{eq:beta-schedule}
\end{equation}
where $M\ge1$ is chosen below. This schedule is increasing on $[0,1]$,
has degree $2M+1$, and satisfies
\[
s_M(0)=0,\qquad s_M(1)=s_f,\qquad
s_M^{(j)}(0)=s_M^{(j)}(1)=0
\quad(1\le j\le M).
\]

\begin{thm}[Infinite-volume runtime]
\label{thm:rsw-harmonic-runtime}
There is a constant $c_\eta>0$ such that the evolution
\eqref{eq:rsw-harmonic-path}, initialized in $\ket{\sigma}$,
admits a schedule \eqref{eq:beta-schedule} for which
\begin{equation}
\int_{V_a}|\Psi^\eta(\xv)|^2\,d\xv
\ge1-\varepsilon/2,
\qquad
T=O\left(\log^8\frac{d}{\varepsilon}\right).
\label{eq:rsw-infinite-volume-success}
\end{equation}
The guarantee holds uniformly for $0\le\eta\le c_\eta\nu^2$,
$R\in O(d)$, and $\|\sv\|_\infty\le B$, with $\eta$ held fixed during
the evolution. The schedule and runtime depend only on $d$,
$\varepsilon$, and the fixed model parameters.
\end{thm}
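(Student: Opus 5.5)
The plan is to reduce to a single one-dimensional factor, using the coordinate change $\mathcal{R}$ and the product structure $\mathcal{R}\Psi^\eta=\bigotimes_j\psi_j^\eta$. Since $V_a$ becomes the cube $[-a,a]^d$ in $\zv$-coordinates, the success probability is $\prod_j\int_{-a}^a|\psi_{s_j}^\eta|^2$. It therefore suffices to show that, uniformly in $|b|\le B$ and $0\le\eta\le c_\eta\nu^2$,
\[
1-\int_{-a}^a|\psi_b^\eta|^2\le \frac{\varepsilon}{2d}.
\]
Bernoulli's inequality then gives the claim. This failure probability is at most $2(\varepsilon_{\rm loc}+\varepsilon_{\rm ad})$. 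Here $\varepsilon_{\rm loc}$ is the mass of the final ground state of $h_b^\eta(s_f)$ outside $[-a,a]$, and $\varepsilon_{\rm ad}$ is the distance (up to phase) between $\psi_b^\eta(T)$ and that ground state. I will make each term $O(\varepsilon/d)$.

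\textbf{Localization.} At $s=s_f$ we have $q=\nu^{-2}-1$, so $K_b^\eta(q)=A_b+qD_\eta$ is a well of height $q\sim\nu^{-2}$ outside $(-\delta,\delta)$. By Lemma~\ref{lem:rsw-smoothing-spectrum}, its ground energy is at most $\bar E$. For $|z|\ge\delta+\eta'$ with $\eta'=O(\eta\log(1/\nu))$, the smoothed barrier satisfies $D_\eta\ge1/2$. An Agmon/comparison estimate for $-\tfrac12\partial_z^2+q/2-\bar E$ then gives decay $e^{-c\sqrt q(|z|-\delta-\eta')}$. Hence $\varepsilon_{\rm loc}\le K_{\rm loc}\nu e^{-c(a-\delta)/\nu}$, since $\eta'\ll a-\delta$ when $\eta\le c_\eta\nu^2$. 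I will choose $\nu=\Theta((a-\delta)/\log(d/\varepsilon))$ to make this $\le\varepsilon/(4d)$.

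\textbf{Adiabatic error.} I will use the gap bound $\gamma_*\nu^2/2$ along the path, which follows from Lemma~\ref{lem:gap-all-s} together with Lemma~\ref{lem:rsw-smoothing-spectrum}. Assumption~\ref{HJ-assumptions} is verified by noting that $u\mapsto h_b^\eta(s_M(u))$ is a polynomial in $u$ times bounded multiplication plus $A_b$, which is an analytic family of type A on the common domain $D(A_b)$ in a complex strip of width $\sim\nu^2$. Because the derivatives of $s_M$ vanish to order $M\ge N+1$ at the endpoints (Lemma~\ref{lem:endpoint-flat-schedule}), the superadiabatic states coincide with the initial Gaussian $\phi_b$ and with the final ground state. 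Theorem~\ref{thm:HJ-original} then gives $\varepsilon_{\rm ad}\le K_{\rm ad}e^{-\Gamma T}$.

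\textbf{Main obstacle.} The hard step is tracking how $K_{\rm ad}$ and $\Gamma$ depend on $\nu$ and $M$ (Lemma~\ref{lem:gamma-nu-M}). The analyticity width shrinks like the gap $\nu^2$, and $\|\partial_u H\|\sim M\cdot\|W\|$ grows with $M$, so I expect $\Gamma\gtrsim\nu^{k}/\mathrm{poly}(M)$ with $\log K_{\rm ad}=O(\log(1/\nu)+M)$. The plan is to fix $M=O(1)$, or at most $O(\log(d/\varepsilon))$, and then choose $T=\Gamma^{-1}\log(4dK_{\rm ad}/\varepsilon)$. With $\nu\sim1/\log(d/\varepsilon)$, the exponents should combine to $T=O(\log^8(d/\varepsilon))$; a bookkeeping of the form $\Gamma^{-1}=O(\nu^{-6}\,\mathrm{polylog})$ together with $\log(dK/\varepsilon)=O(\log^2)$ would produce the eighth power.

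Finally, every choice above ($\nu$, $M$, $T$, $c_\eta$) depends only on $d$, $\varepsilon$, $\sigma$, $\delta$, $a$ and $B$, and all one-dimensional bounds are uniform in $|b|\le B$. This gives the stated uniformity in $R$ and $\sv$.
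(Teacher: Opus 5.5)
Your reduction to one-dimensional factors, the localization estimate and the gap input all match the paper. Your $\eta'$ is unnecessary: $D_\eta\ge1/2$ already holds for $|z|\ge\delta$, because at most half the Gaussian mass lies inside the well.

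The gap is in the adiabatic step. You leave it as an expectation, and your tentative plan for it fails.

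\textbf{The flatness order $M$ is not free.} In Theorem~\ref{thm:HJ-original} the superadiabatic order is $N=\lfloor gT\rfloor$. Lemma~\ref{lem:endpoint-flat-schedule} needs $M\ge N+1$ for $\Xi_N(0)$ to equal the Gaussian and for $\Xi_N(1)$ to be proportional to the final ground state. So $M$ must grow with $T$, i.e.\ $M=\Theta(\log(d/\varepsilon))$. With $M=O(1)$, the corrections $T^{-n}\psi_n^\perp$ with $n>M$ survive at the endpoints. You then get only a power-law mismatch of order $T^{-M-1}$, which forces $T$ to be polynomial in $d/\varepsilon$.

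\textbf{Growing $M$ has a cost you must track.} The bound $|s_M'|\lesssim\sqrt{M+1}$ holds only within distance $O(1/\sqrt{M+1})$ of $[0,1]$. The analyticity strip in the $u$ variable therefore has width $\rho=c_a\nu^2/\sqrt{M+1}$, not $\nu^2$.

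\textbf{The family is not a bounded perturbation of $A_b$.} In $h_b^\eta(s_M(u))=A_b+s_M(u)(w_\eta-A_b)$, the $u$-dependent coefficient multiplies the unbounded $A_b$. Isolation of the continued eigenvalue and the bound $\|S\|\le C/(1-s_0)$ on $|s-s_0|<c_*(1-s_0)$ therefore need a relative-boundedness argument. The paper does this in Lemma~\ref{lem:isolation-strip}, via $\|G_0^{-1/2}(D_\eta-A_b)G_0^{-1/2}\|\le(1+4\sigma^2)/(1-s_0)$.

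\textbf{How the constants and the exponent come out.} Given these bounds, Cauchy's estimate gives $\|\varphi'\|\le C/\rho$. The explicit Hagedorn--Joye constant chain then gives $\Gamma(g)\ge g=1/(e\mathcal K_{\nu,M})$ and $C(g)\le\mathcal K_{\nu,M}$, where $\mathcal K_{\nu,M}=C_{\mathrm{ad}}(M+1)\nu^{-6}$. Here one factor $\nu^{-2}$ comes from $\|S\|$ and $\rho^{-2}$ from the strip; this is Lemma~\ref{lem:gamma-nu-M}.

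Set $\ell=\log(Cd/\varepsilon)$, so $\nu^{-1}=O(\ell)$. Choose $N=\lceil 2\log(16C_{\mathrm{ad}}d/(\varepsilon\nu^6))\rceil$, $M=N+1$ and $T=e\mathcal K_{\nu,M}N$, so that $gT=N$ exactly. This gives $T=O(\ell^6\cdot\ell\cdot\ell)=O(\ell^8)$. The eighth power is $6+1+1$: $\nu^{-6}$, then the factor $M+1$, then $N$.

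Your accounting, ``$\nu^{-6}\,\mathrm{polylog}$ times $\log^2$'', does not produce this. Your own estimate $\log K_{\rm ad}=O(\log(1/\nu)+M)$ gives $\log(dK_{\rm ad}/\varepsilon)=O(\ell)$, not $O(\ell^2)$. An unspecified $\mathrm{poly}(M)$ or polylog factor in $\Gamma^{-1}$ would also change the exponent.

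So the step that actually fixes the runtime is missing from your proposal. That step is the quantitative dependence of $\Gamma$ and $C(g)$ on $(\nu,M)$, together with the forced coupling $M=\lfloor gT\rfloor+1$.
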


\subsubsection{Gap and localization}

\begin{lemma}[Uniform gap]
\label{lem:gap-all-s}
There is a constant $\gamma_*>0$ such that
\begin{equation}
\operatorname{gap}K_b^0(q)\ge\gamma_*
\qquad(q\ge0,\ |b|\le B).
\label{eq:rsw-rescaled-gap}
\end{equation}
For sufficiently small $c_\eta>0$, this implies
\begin{equation}
\operatorname{gap}h_b^\eta(s)
\ge(1-s)\gamma_*/2
\ge\nu^2\gamma_*/2
\label{eq:gap-all-s}
\end{equation}
throughout $0\le s\le1-\nu^2$, whenever
$0\le\eta\le c_\eta\nu^2$.
\end{lemma}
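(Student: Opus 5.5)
The plan is to prove the uniform lower bound \eqref{eq:rsw-rescaled-gap} by a compactness argument in the parameters $(q,b)$, combined with a separate treatment of the regime $q\to\infty$. The second statement, \eqref{eq:gap-all-s}, then follows from Lemma~\ref{lem:rsw-smoothing-spectrum}.

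For each fixed $(q,b)$, the operator $K_b^0(q)=A_b+q\mathbf 1_{\{|z|\ge\delta\}}$ is a one-dimensional Schr\"odinger operator with a confining potential. Its spectrum is therefore purely discrete, and all eigenvalues are simple, since a second-order ODE has at most one $L^2$ solution up to scaling. Hence $\operatorname{gap}K_b^0(q)>0$ pointwise.

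Next, the eigenvalues $E_0^0,E_1^0$ depend continuously on $(q,b)\in[0,\infty)\times[-B,B]$. The family is analytic of type A in $q$, since $qD_0$ is a bounded perturbation, and it is norm-resolvent continuous in $b$. Continuity of $b$ follows because $A_b-A_{b'}$ is relatively form-bounded with a small constant. So on any compact set $[0,q_0]\times[-B,B]$ the gap attains a positive minimum.

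The main step, and the expected obstacle, is uniformity as $q\to\infty$. Here I would show that $K_b^0(q)$ converges in norm-resolvent sense to the Dirichlet operator $A_b^{D}$ on $(-\delta,\delta)$. This is the standard monotone convergence of forms for $q\mapsto q\mathbf 1_{\{|z|\ge\delta\}}$, and the limit form is $A_b$ restricted to $H^1_0(-\delta,\delta)$. Because the Dirichlet operator has compact resolvent, the monotone convergence theorem for forms (Kato/Simon) upgrades strong to norm resolvent convergence. Then $E_j^0(q,b)\to E_j^D(b)$ for $j=0,1$, and the Dirichlet gap $E_1^D(b)-E_0^D(b)$ is positive and continuous in $b$.

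To keep the $q\to\infty$ limit uniform in $b$, I would rather use a quantitative two-sided bound for $q\ge q_0$. The upper bound $E_j^0\le E_j^D$ follows by trial states, as in \eqref{eq:rsw-energy-ceiling}. For the lower bound, the low eigenfunctions have tail mass $O(E_*/q)$ outside the well and an exponential decay rate $\sqrt{2q}$ there. An IMS-type localization with cutoff width $q^{-1/4}$ then gives $E_j^0\ge E_j^D-Cq^{-1/4}$, uniformly for $|b|\le B$. For $j=1$ this uses the min-max principle on two-dimensional subspaces. Choosing $q_0$ large then gives gap $\ge\frac12\min_b(E_1^D-E_0^D)$ for $q\ge q_0$. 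Combining this with the compact region yields $\gamma_*$.

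Finally, for \eqref{eq:gap-all-s}, Lemma~\ref{lem:rsw-smoothing-spectrum} gives $\operatorname{gap}K_b^\eta(q)\ge\gamma_*-2C_0q\eta$ whenever $q\eta\le\kappa$. Along the schedule, $q=s/(1-s)\le\nu^{-2}$, so it suffices to take $\eta\le c_\eta\nu^2$ with $c_\eta\le\min\{\kappa,\gamma_*/(4C_0)\}$. This gives $\operatorname{gap}K_b^\eta\ge\gamma_*/2$. Since $h_b^\eta(s)=(1-s)(K_b^\eta(q)-qI)$, the gap of $h_b^\eta(s)$ is at least $(1-s)\gamma_*/2\ge\nu^2\gamma_*/2$.
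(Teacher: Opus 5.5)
Your proposal is correct and follows essentially the same route as the paper: simplicity and continuity give a positive gap on $[0,q_0]\times[-B,B]$, a two-sided comparison $0\le E_j^D(b)-E_j^0(q,b)\le Cq^{-1/4}$ handles large $q$ uniformly in $b$, and Lemma~\ref{lem:rsw-smoothing-spectrum} with $c_\eta\le\min\{\kappa,\gamma_*/(4C_0)\}$ gives the smoothed statement. The only difference is how the Dirichlet trial states are built. The paper subtracts from $f|_{(-\delta,\delta)}$ the straight line through the small boundary values $f(\pm\delta)=O(q^{-1/4})$, which needs only the $H^1$ bound; if you use an IMS cutoff instead, place its transition layer just outside the well, where the mass is $O(1/q)$, and compare with the Dirichlet problem on a slightly enlarged interval, since a layer inside the well keeps the localization error small only if you also use pointwise derivative bounds on the eigenfunctions.
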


\begin{proof}
The spectrum of $K_b^0(q)$ is discrete because its potential grows
quadratically at infinity. Its eigenvalues are nondegenerate: for two
square-integrable solutions $f,g$ at the same energy,
$(fg'-f'g)'=0$, and decay at infinity forces this constant to vanish,
so the solutions are proportional. Let $E_j^D(b)$ be the eigenvalues
of $A_b$ on $I=(-\delta,\delta)$ with wavefunctions vanishing at both
endpoints. Extending these wavefunctions by zero gives admissible trial
states for the sharp well, so
\[
E_j^0(q,b)\le E_j^D(b)\le E_*,
\qquad j=0,1.
\]
To compare the energies as $q$ grows, take normalized $f$ in the span
of the first $j+1$ eigenfunctions of $K_b^0(q)$. Its energy bounds both
the probability outside the well and the kinetic energy:
\[
\|f\|_{L^2(I^c)}^2\le E_*/q,
\qquad
\|f'\|_2^2\le2E_*.
\]
The boundary values consequently satisfy
\[
|f(-\delta)|^2+|f(\delta)|^2
\le2\|f\|_{L^2(I^c)}\|f'\|_{L^2(I^c)}
\le Cq^{-1/2}.
\]
Subtract the straight line joining these two values from $f|_I$.
The resulting function $g$ vanishes at both endpoints and obeys
$\|g-f|_I\|_{H^1(I)}\le Cq^{-1/4}$.
Since the oscillator potential is bounded on $I$, this correction
changes the energy by at most $Cq^{-1/4}$ and gives
$\|g\|_2\ge1-Cq^{-1/4}$. For sufficiently large $q$, it also preserves
the number of linearly independent trial states. The variational
principle therefore gives
\[
0\le E_j^D(b)-E_j^0(q,b)\le Cq^{-1/4},
\qquad j=0,1,
\]
uniformly in $|b|\le B$.

The first two energies are continuous in $q$ and $b$. For the shift
dependence, use
\[
A_b-A_{b'}
=\frac{(b'-b)z}{4\sigma^4}+\frac{b^2-b'^2}{8\sigma^4};
\]
the energy bound controls the expectation of $|z|$ on the relevant
trial states, so the variational comparison is continuous in $b$.
The gap $E_1^D(b)-E_0^D(b)$ is positive and continuous on $[-B,B]$,
and hence has a positive minimum. The uniform convergence above
therefore bounds the sharp gap away from zero at sufficiently large
$q$. On the remaining bounded range of $q$ and $b$, continuity and
nondegeneracy give another positive minimum. This proves
\eqref{eq:rsw-rescaled-gap}. Finally, choose
\[
0<c_\eta\le
\min\left\{\kappa,\frac{\gamma_*}{4C_0},1,\delta\right\}.
\]
For $q\le\nu^{-2}-1$ and $\eta\le c_\eta\nu^2$,
Lemma~\ref{lem:rsw-smoothing-spectrum} gives
\[
\operatorname{gap}K_b^\eta(q)
\ge\gamma_*-2C_0q\eta\ge\gamma_*/2.
\]
Multiplication by $1-s$ gives \eqref{eq:gap-all-s}.
\end{proof}

Fix this choice of $c_\eta$ for the rest of the analysis. In particular,
$E_j^\eta(q,b)\le\bar E$ for $j=0,1$ throughout the schedule.

\begin{lemma}[Endpoint localization]
\label{lem:deep-well}
There are constants $\nu_0\in(0,1/2]$ and $C_{\mathrm{loc}}>0$
such that the normalized ground state $\varphi_{b,f}^\eta$
of $h_b^\eta(s_f)$ satisfies
\begin{equation}
\int_{|z|>a}|\varphi_{b,f}^\eta(z)|^2\,dz
\le C_{\mathrm{loc}}\nu
e^{-\sqrt2(a-\delta)/\nu}
\label{eq:tail-bound-HJ}
\end{equation}
for $0<\nu\le\nu_0$, $0\le\eta\le c_\eta\nu^2$, and $|b|\le B$.
\end{lemma}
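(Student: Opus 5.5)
The plan is a one-dimensional ODE comparison (Agmon-type) argument in the rescaled form $K_b^\eta(q)=A_b+qD_\eta$. Here $q=s_f/(1-s_f)=\nu^{-2}-1$, and $h_b^\eta(s_f)=(1-s_f)(K_b^\eta(q)-qI)$ has the same ground state $f:=\varphi^\eta_{b,f}$. The eigenvalue equation is
\[
f''=2\bigl(qD_\eta(z)+(z-b)^2/(8\sigma^4)-E_0^\eta(q,b)\bigr)f .
\]
By Lemma~\ref{lem:rsw-smoothing-spectrum}, and the choice of $c_\eta$ fixed after Lemma~\ref{lem:gap-all-s}, we have $E_0^\eta\le\bar E$ uniformly in $|b|\le B$ and $\eta\le c_\eta\nu^2$. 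The ground state is nondegenerate and can be taken positive. From $\|f'\|_2^2\le 2\bar E$ and the inequality $\|f\|_\infty^2\le\|f'\|_2$ used in that lemma's proof, we get the uniform sup bound $\|f\|_\infty^2\le\sqrt{2\bar E}$.

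Next I would control the smoothed barrier just outside the well. Set $\rho=(a-\delta)/4$ and $z_0=\delta+\rho$. For $|z|\ge z_0$,
\[
1-D_\eta(z)=-w_\eta(z)\le \operatorname{erfc}(\rho/\eta)/2\le e^{-\rho^2/\eta^2}.
\]
Since $\eta\le c_\eta\nu^2$, the product $q e^{-\rho^2/\eta^2}\le \nu^{-2}e^{-\rho^2/(c_\eta^2\nu^4)}$ is bounded by a constant for $\nu\le\nu_0$. Hence on $|z|\ge z_0$ the effective potential satisfies
\[
2(qD_\eta-E_0^\eta)\ge 2q-C\ge \kappa^2,\qquad \kappa:=\sqrt{2/\nu^2-C'}\ \ge\ \tfrac{\sqrt2}{\nu}(1-C''\nu^2).
\]
The oscillator term is nonnegative and only helps.

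The comparison step comes next. On $[z_0,\infty)$ I would compare $f$ with $g(z)=f(z_0)e^{-\kappa(z-z_0)}$. The difference $u=f-g$ satisfies $u''\ge\kappa^2 u$ wherever $u>0$, because $f''\ge\kappa^2 f$ and $g''=\kappa^2 g$. It also has $u(z_0)=0$ and $u\to0$ at infinity, since $f\in L^2$ solves an ODE with growing potential. By the maximum principle, $u\le0$. This gives $f(z)\le\|f\|_\infty e^{-\kappa(z-z_0)}$, and the same holds on $(-\infty,-z_0]$. Integrating,
\[
\int_{|z|>a}f^2\le \frac{\|f\|_\infty^2}{\kappa}e^{-2\kappa(a-z_0)}\le C\nu\, e^{-2\kappa\cdot\frac34(a-\delta)}.
\]
With $2\kappa\cdot\tfrac34\ge \tfrac{3\sqrt2}{2\nu}(1-C''\nu^2)\ge\tfrac{\sqrt2}{\nu}$ for $\nu\le\nu_0$ small, this yields \eqref{eq:tail-bound-HJ}. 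The factor $\nu$ comes from $1/\kappa$.

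The main obstacle I expect is the uniformity bookkeeping. I need $E_0^\eta\le\bar E$ along the whole range $q\le\nu^{-2}$ with $q\eta\le c_\eta<\kappa_{\rm lemma}$, which is exactly the regime of Lemma~\ref{lem:rsw-smoothing-spectrum}. I also need the erfc tail of the smoothing to be negligible against $q$. The choice $\eta=O(\nu^2)$ makes the second point easy. The slack between the natural rate $2\sqrt2/\nu$ and the claimed $\sqrt2/\nu$ absorbs both the $\rho$-margin and the $O(1)$ corrections to $\kappa$.
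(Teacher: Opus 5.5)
Your proof is correct, but it takes a different route from the paper's. The paper runs the comparison on all of $|z|\ge\delta$ and uses only the crude barrier bound $D_\eta\ge 1/2$ there, since at most half of the Gaussian mass lies inside the well. This gives $\varphi''\ge(q-2\bar E)\varphi\ge\varphi/(2\nu^2)$ once $\nu\le\nu_0=\min\{1/2,[2(1+2\bar E)]^{-1/2}\}$. The resulting rate $1/(\sqrt2\nu)$ for $\varphi$ produces exactly the exponent $\sqrt2(a-\delta)/\nu$, with no margin and no erfc estimate. The smoothing width then enters only through the energy ceiling $\bar E$.

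You instead step out a distance $\rho=(a-\delta)/4$, where the Gaussian tail and $\eta\le c_\eta\nu^2$ make the barrier essentially full height. That buys the nearly optimal rate $\sqrt{2q}\approx\sqrt2/\nu$ for $f$. Part of this factor-of-two gain is spent on the lost margin, and you end with exponent $\tfrac{3\sqrt2}{2}(a-\delta)/\nu$, which is strictly stronger than the claim. A smaller margin would push it arbitrarily close to $2\sqrt2(a-\delta)/\nu$.

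The price is that your constants degenerate as $a\downarrow\delta$. Your bound $2q(1-D_\eta)\le C$ holds with $C\sim c_\eta/\rho$, so your $\nu_0$ and $C_{\mathrm{loc}}$ depend on $a-\delta$. This is harmless for the lemma as stated, since $a$ is fixed. The paper's constants, by contrast, are uniform in the distance to the well, and it reuses exactly that uniformity in Proposition~\ref{prop:rsw-averaging}: there it needs $\Pr(|Z|>\delta+t)\le C_{\mathrm{loc}}\nu e^{-\sqrt2 t/\nu}$ for every $t\ge0$.
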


\begin{proof}
Outside the well, $D_\eta(z)\ge1/2$: when $|z|\ge\delta$,
at most half the Gaussian mass lies inside $(-\delta,\delta)$.
This also holds for $\eta=0$. Let $\varphi$ be the positive normalized
ground state of $K_b^\eta(q)$ at $q=\nu^{-2}-1$, with energy
$E\le\bar E$. Its kinetic energy gives
\[
\|\varphi\|_\infty^2
\le\|\varphi\|_2\|\varphi'\|_2
\le\sqrt{2\bar E}.
\]
For $|z|\ge\delta$, the eigenvalue equation implies
\[
\varphi''(z)
=2\left(\frac{(z-b)^2}{8\sigma^4}+qD_\eta(z)-E\right)\varphi(z)
\ge(q-2\bar E)\varphi(z).
\]
Choose
\[
\nu_0=\min\left\{\frac12,\,[2(1+2\bar E)]^{-1/2}\right\}.
\]
Then $q-2\bar E\ge(2\nu^2)^{-1}$. Comparison with the decaying
exponential gives
\[
\varphi(z)\le
\varphi(\delta)e^{-(z-\delta)/(\sqrt2\nu)}
\qquad(z\ge\delta),
\]
and the corresponding bound on the left. Indeed, a positive interior
maximum of the difference would contradict the differential inequality,
while the difference vanishes at the boundary and tends to zero at
infinity. Integrating the squared tails yields
\[
\int_{|z|>a}|\varphi(z)|^2\,dz
\le2\sqrt{\bar E}\,\nu
e^{-\sqrt2(a-\delta)/\nu}.
\]
Thus one may take $C_{\mathrm{loc}}=2\sqrt{\bar E}$.
\end{proof}

\subsubsection{The adiabatic theorem and endpoint conditions}

Writing $u=t/T$, the one-dimensional evolution satisfies
\begin{equation}
iT^{-1}\partial_u\psi_b^\eta(Tu)
=h_b^\eta(s_M(u))\psi_b^\eta(Tu),
\qquad
\psi_b^\eta(0)=\phi_b.
\label{eq:rsw-scaled-evolution}
\end{equation}
We use the exponential adiabatic theorem of Hagedorn and
Joye~\cite{HagedornJoye2002}. Its superadiabatic approximation is the
instantaneous eigenvector with corrections in powers of $1/T$:
\begin{equation}
\Xi_N(u)
=e^{-iT\int_0^uE(v)\,dv}
\left(
\varphi(u)+\sum_{n=1}^NT^{-n}\psi_n(u)
+T^{-N-1}\psi_{N+1}^\perp(u)
\right).
\label{eq:rsw-HJ-expansion}
\end{equation}
Here $\varphi(u)$ is normalized for real $u$, and its phase is chosen
so that $\braket{\varphi|\varphi'}=0$, the parallel-transport
gauge. The recursively defined corrections have the form
$\psi_n=f_n\varphi+\psi_n^\perp$, with
$\psi_n^\perp\perp\varphi$ for real $u$.
An analytic family of type A has a fixed operator domain on which
$\widehat H(u)f$ is analytic in $u$ for every vector $f$.

\begin{thm}[Exponentially accurate adiabatic theorem
{\cite{HagedornJoye2002}}]
\label{thm:HJ-original}
Let $\widehat H(u)$ be self-adjoint on a common dense domain for
$u\in[0,1]$ and admit an analytic extension of type A to a complex
neighborhood of this interval. Suppose $E(u)$ is a nondegenerate
eigenvalue separated from the remaining spectrum by a positive
uniform gap. There are constants $G>0$, $C(g)>0$, and $\Gamma(g)>0$
such that, for $0<g<G$, $T\ge1$, and $N=\lfloor gT\rfloor$, the solution
\[
iT^{-1}\partial_u\psi(u)=\widehat H(u)\psi(u),
\qquad
\psi(0)=\Xi_N(0),
\]
satisfies
\begin{equation}
\sup_{u\in[0,1]}
\|\psi(u)-\Xi_N(u)\|
\le C(g)e^{-\Gamma(g)T}.
\label{eq:HJ-original-bound}
\end{equation}
\end{thm}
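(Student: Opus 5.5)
The plan is to reproduce the Hagedorn--Joye mechanism. First build the superadiabatic vector $\Xi_N$ so that it solves the Schr\"odinger equation up to a residual proportional to $T^{-N-1}$. Then bound the coefficients of that residual by Cauchy estimates on the complex neighborhood of $[0,1]$. Finally, choose $N=\lfloor gT\rfloor$ so that the factorial growth of the coefficients is beaten by $T^{-N}$.

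\textbf{Step 1: the recursion.} Write $P(u)$ for the spectral projector onto $E(u)$ and $\mathcal R(u)=(\widehat H(u)-E(u))^{-1}(1-P(u))$ for the reduced resolvent. Because of Assumption~\ref{HJ-assumptions}, both extend analytically to a complex strip $S_\rho=\{u:\operatorname{dist}(u,[0,1])<\rho\}$. There the gap stays at least $d_0/2$, and $\|\mathcal R\|\le 2/d_0$ follows from the Riesz integral over a fixed contour.

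Insert \eqref{eq:rsw-HJ-expansion} into $iT^{-1}\partial_u\Xi=\widehat H\Xi$ and match powers of $T^{-1}$. This gives two relations:
\[
(\widehat H-E)\psi_{n+1}^\perp=i\,\partial_u\psi_n,\qquad
\psi_{n+1}^\perp=i\,\mathcal R\,\partial_u\psi_n,
\]
and the solvability condition $P\,\partial_u\psi_n=\ldots$ fixes $f_{n+1}$ through $f_{n+1}'=-\braket{\varphi|\partial_u\psi_{n+1}^\perp}$. Truncating after order $N$ leaves the exact residual
\[
\left(iT^{-1}\partial_u-\widehat H\right)\Xi_N
=e^{-iT\int E}\,T^{-N-1}\Bigl(i\,\partial_u\psi_N+\ldots\Bigr),
\]
which has size at most $T^{-N-2}\|\partial_u\psi_{N+1}^\perp\|$ plus a term of order $T^{-N-1}$.

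\textbf{Step 2: the key estimate.} I will prove by induction on nested strips $S_{\rho(1-k/n)}$ that
\[
\sup_{S_{\rho/2}}\bigl(\|\psi_n\|+\|\partial_u\psi_n\|\bigr)\le C_0\,C_1^n\,n!.
\]
Each step of the recursion costs one derivative. Apply the Cauchy estimate $\|\partial_u F\|_{S_{r-\epsilon}}\le \epsilon^{-1}\|F\|_{S_r}$ with shrinking step $\epsilon=\rho/(2n)$, and bound $\mathcal R$ and $P$ uniformly. Distributing the loss over $n$ equal sub-strips gives $(2n/\rho)^n\le (2e/\rho)^n n!$. This bookkeeping is the main obstacle. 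The delicate points are that the $f_n$ are defined by integrals, which do not lose derivatives, and that the type-A domain issues do not spoil analyticity of $\partial_u\psi_n$ as vector-valued maps. I would handle the latter by working with $(\widehat H(u)+i)^{-1}$-bounded quantities throughout, so that every operator applied is bounded and analytic.

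\textbf{Step 3: optimal truncation and Duhamel.} With $N=\lfloor gT\rfloor$, Stirling gives
\[
T^{-N-1}C_1^{N+1}(N+1)!\lesssim (C_1 g/e)^{gT},
\]
which is at most $e^{-\Gamma T}$ once $g<G:=e^{-1}/C_1$, say. Let $U$ be the unitary propagator. Duhamel's formula yields
\[
\psi(u)-\Xi_N(u)=-iT\int_0^uU(u,v)\,r_N(v)\,dv,
\]
where $r_N$ denotes the residual from Step 1. Its norm is at most $T\sup\|r_N\|\le C(g)e^{-\Gamma(g)T}$, after absorbing the polynomial prefactor in $T$ into a slightly smaller $\Gamma$. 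This gives \eqref{eq:HJ-original-bound}.
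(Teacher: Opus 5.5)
Your proposal is correct and follows the same route as the Hagedorn--Joye proof that the paper cites rather than reproves: superadiabatic recursion, Cauchy estimates giving factorial growth, truncation at $N=\lfloor gT\rfloor$, and Duhamel. This is exactly the constant chain tracked in Lemma~\ref{lem:gamma-nu-M}, which uses pointwise weights $k^k r(u)^{-k}$ with Cauchy circles of radius $r(u)/(k+1)$ in place of your equivalent nested strips of width $\rho/(2n)$, and integrates along the segment $\tau u$ so that the path defining $f_n$ stays in the strip (you need this too). One detail you should make explicit: off the real axis, the pairing in $f_n$ and in the gauge condition $\langle\varphi|\varphi'\rangle=0$ must be continued holomorphically, either via $\langle\varphi(\bar u)|\cdot\rangle$ or via the bilinear pairing $\langle\overline{\varphi}|\cdot\rangle$ as in Corollary~\ref{cor:rsw-scheduled-spectral-data}; with the ordinary inner product $f_n$ is not analytic, and the next Cauchy step would not apply.
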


This is the finite-interval version discussed in Remark 3 following
Theorem 1.1 of \cite{HagedornJoye2002}. The initial vector in this
theorem need not equal the Gaussian we can prepare. Vanishing schedule
derivatives at the endpoints remove the corrections responsible for
this mismatch.

\begin{lemma}[Endpoint flatness]
\label{lem:endpoint-flat-schedule}
Let $\widehat H(u)=h_b^\eta(s_M(u))$ satisfy the hypotheses of
Theorem~\ref{thm:HJ-original}, with $\eta$ fixed. If
\[
s_M^{(j)}(0)=s_M^{(j)}(1)=0,
\qquad 1\le j\le N+1,
\]
then
\begin{equation}
\Xi_N(0)=\varphi(0),
\qquad
\Xi_N(1)
=e^{-iT\int_0^1E(v)\,dv}\alpha_N\varphi(1)
\label{eq:rsw-endpoint-matching}
\end{equation}
for a scalar $\alpha_N$. If the exact evolution starts from
$\varphi(0)$ and $\|\psi(1)-\Xi_N(1)\|\le r<1$, then
\[
\inf_{\vartheta\in\RR}
\|\psi(1)-e^{i\vartheta}\varphi(1)\|
\le2r.
\]
\end{lemma}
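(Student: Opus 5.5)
The plan is to use the recursive structure of the Hagedorn--Joye superadiabatic corrections. All the corrections $\psi_n$ and $\psi_{N+1}^\perp$ in \eqref{eq:rsw-HJ-expansion} are built from $u$-derivatives of the spectral data of $\widehat H(u)=h_b^\eta(s_M(u))$. By the chain rule, these derivatives of $u\mapsto\widehat H(u)$ at the endpoints reduce to derivatives of $s_M$ there.

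\textbf{Step 1 (derivatives vanish at the endpoints).} Since $\widehat H(u)=(1-s_M(u))A_b+s_M(u)w_\eta$, we have $\widehat H^{(k)}(u)=s_M^{(k)}(u)(w_\eta-A_b)$ for $k\ge1$. This vanishes at $u\in\{0,1\}$ for $1\le k\le N+1$. The instantaneous eigenvector $\varphi(u)$ in the parallel-transport gauge, its energy $E(u)$, and the reduced resolvent are smooth functions of $s$ composed with $s_M(u)$. By Fa\`a di Bruno, each $u$-derivative of order $k\le N+1$ of these quantities is a sum of terms, and every term contains at least one factor $s_M^{(j)}$ with $1\le j\le k$. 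Hence all such derivatives vanish at both endpoints.

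\textbf{Step 2 (corrections vanish at the endpoints).} Next I recall the Hagedorn--Joye recursion. The perpendicular part is $\psi_{n+1}^\perp=i\,\mathcal R(u)\bigl(\partial_u\psi_n\bigr)^\perp$, where $\mathcal R$ is the reduced resolvent of $\widehat H(u)-E(u)$. The scalar part solves $f_{n+1}'=-\braket{\varphi|\partial_u\psi^\perp_{n+1}}$, with the normalization $f_n(0)=0$.

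I then show by induction on $n$ that, at $u_0\in\{0,1\}$, $\psi_n^\perp$ and its first $N+1-n$ derivatives vanish. The base case uses $\varphi'(u_0)=\cdots=\varphi^{(N+1)}(u_0)=0$. For the inductive step, each further correction consumes one derivative, so one additional order of flatness is needed. This is exactly why $N+1$ vanishing derivatives are required for the corrections up to $\psi_{N+1}^\perp$.

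Consequently, $\psi_n^\perp(u_0)=0$ for $1\le n\le N+1$. At $u=0$ we also have $f_n(0)=0$, which gives $\Xi_N(0)=\varphi(0)$. At $u=1$ the $f_n(1)$ need not vanish, which gives $\Xi_N(1)=e^{-iT\int_0^1E}\bigl(1+\sum_n T^{-n}f_n(1)\bigr)\varphi(1)$, so $\alpha_N=1+\sum_n T^{-n}f_n(1)$.

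\textbf{Main obstacle.} I expect the bookkeeping of derivative orders in this induction to be the main obstacle. I would try first to handle it by proving the stronger statement above, that $\partial_u^m\psi_n^\perp(u_0)=0$ for $m\le N+1-n$, using Leibniz's rule on $\mathcal R\cdot(\partial_u\psi_n)^\perp$.

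\textbf{Step 3 (phase estimate).} For the final claim, write $\Xi_N(1)=c\varphi(1)$ with $c=e^{-iT\int E}\alpha_N$. Then $\|\psi(1)\|=1$ and $\|\psi(1)-c\varphi(1)\|\le r$ give $\bigl||c|-1\bigr|\le r$. Choosing $e^{i\vartheta}=c/|c|$ yields
\[
\|\psi(1)-e^{i\vartheta}\varphi(1)\|\le r+\bigl||c|-1\bigr|\le 2r.
\]
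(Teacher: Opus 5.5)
Your proposal is correct and follows the paper's proof. The shared steps are: vanishing endpoint derivatives of the spectral data, the induction $(\psi_n^\perp)^{(k)}(u_*)=0$ for $n+k\le N+1$ through the Hagedorn--Joye recurrence, $\alpha_N=1+\sum_n T^{-n}f_n(1)$, and the triangle-inequality phase argument.

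The only variation is in Step~1. You obtain $\varphi^{(m)}(u_*)=E^{(m)}(u_*)=0$ from Fa\`a di Bruno applied to spectral data depending on $s$ alone. This is valid because the parallel-transport gauge is reparametrization-invariant, and here that gauge is just the real positive ground state. The paper instead differentiates $(\widehat H-E)\varphi=0$ inductively and uses nondegeneracy together with $\braket{\varphi|\varphi'}=0$.
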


\begin{proof}
Since $\widehat H(u)=A_b+s_M(u)(w_\eta-A_b)$, its derivatives through
order $N+1$ vanish at either endpoint $u_*\in\{0,1\}$.
Differentiating $(\widehat H-E)\varphi=0$ and inducting on the derivative
order gives
\[
E^{(m)}(u_*)=0,
\qquad
\varphi^{(m)}(u_*)=0,
\qquad 1\le m\le N+1.
\]
Indeed, once the lower derivatives vanish, the order-$m$ equation is
\[
(\widehat H(u_*)-E(u_*))\varphi^{(m)}(u_*)
=E^{(m)}(u_*)\varphi(u_*).
\]
Its inner product with $\varphi(u_*)$ gives $E^{(m)}(u_*)=0$.
Nondegeneracy then makes $\varphi^{(m)}(u_*)$ proportional to
$\varphi(u_*)$, while differentiating
$\braket{\varphi|\varphi'}=0$ another $m-1$ times makes that
coefficient zero. Let $P$ project onto $\varphi$, put $Q=I-P$,
and let $S$ be the reduced resolvent: the inverse of $\widehat H-E$ on the
subspace orthogonal to $\varphi$, extended by zero on $\varphi$.
The Hagedorn--Joye recurrence is
\begin{align}
\psi_1^\perp&=iS\varphi',
\nonumber\\
f_n(u)&=-\int_0^u
\braket{\varphi(v)|(\psi_n^\perp)'(v)}\,dv,
\nonumber\\
\psi_{n+1}^\perp
&=iS\bigl(f_n\varphi'+Q(\psi_n^\perp)'\bigr),
\qquad n\ge1.
\label{eq:rsw-HJ-recurrence}
\end{align}
Induction in this recurrence gives
\[
(\psi_n^\perp)^{(k)}(u_*)=0
\qquad(n\ge1,\ k\ge0,\ n+k\le N+1).
\]
For $n=1$, this follows from the vanishing derivatives of $\varphi$.
At the next order, every differentiated term contains either a
positive-order derivative of $\varphi$ or a derivative of
$\psi_n^\perp$ already covered by the induction hypothesis.
Thus all the orthogonal corrections in
\eqref{eq:rsw-HJ-expansion}, including $\psi_{N+1}^\perp$, vanish
at both endpoints. Since $f_n(0)=0$, all corrections vanish initially.
At the final endpoint the coefficients multiplying the ground state
may remain, giving
\[
\alpha_N=1+\sum_{n=1}^NT^{-n}f_n(1).
\]
Normalization of the exact state implies
$\bigl||\alpha_N|-1\bigr|\le r$. Absorbing the phase of the nonzero
scalar $\alpha_N$ into $\varphi(1)$ and applying the triangle
inequality proves the bound $2r$.
\end{proof}

The schedule \eqref{eq:beta-schedule} satisfies this lemma whenever
$M\ge N+1$.

\subsubsection{Bounds for the adiabatic error}

The exponential estimate depends on high derivatives of the
instantaneous ground state. We bound these derivatives by continuing
the Hamiltonian a short distance into the complex parameter plane.
The following estimates are uniform in the allowed smoothing widths;
they use the size of $D_\eta$, not its spatial derivatives.

\begin{lemma}[Complex spectral neighborhood]
\label{lem:isolation-strip}
Let $P_b^\eta(s)$ and $S_b^\eta(s)$ be the ground-state projector and
reduced resolvent of $h_b^\eta(s)$. For every real
$s_0\in[0,s_f]$, these operators and the ground-state eigenvalue extend
analytically to
\[
|s-s_0|<c_*(1-s_0),
\]
where $c_*>0$ is fixed. On this neighborhood,
\begin{align}
\|P_b^\eta(s)\|+\|I-P_b^\eta(s)\|&\le C,
&\|S_b^\eta(s)\|&\le C/(1-s_0),&
\|P_b^\eta(s)-P_b^\eta(s_0)\|
&\le C|s-s_0|/(1-s_0).
\label{eq:rsw-complex-spectral-bounds}
\end{align}
The continued eigenvalue remains nondegenerate, and all constants
are uniform for $|b|\le B$ and $0\le\eta\le c_\eta\nu^2$.
\end{lemma}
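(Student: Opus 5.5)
The natural route is to use the factorization $h_b^\eta(s)=(1-s)\bigl(K_b^\eta(q)-qI\bigr)$ with $q=s/(1-s)$, and to prove a uniform statement for $K_b^\eta(q)$ in the variable $q$. Then we translate back. Near a real point $s_0$, the map $s\mapsto q(s)$ is analytic away from $s=1$. A disc $|s-s_0|<c_*(1-s_0)$ with $c_*<1/2$ keeps $|1-s|\ge(1-s_0)/2$. On this disc $|q-q_0|\le 2|s-s_0|/(1-s_0)^2\cdot$(const), so the perturbation of $K$ is $(q-q_0)D_\eta$. Since $\|D_\eta\|_\infty\le1$, it is a bounded perturbation of norm at most $2c_*/(1-s_0)$ times a constant. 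Equivalently, and more cleanly, I would write
\[
h_b^\eta(s)=h_b^\eta(s_0)+(s-s_0)\bigl(w_\eta-A_b\bigr),
\]
but $A_b$ is unbounded. Working in the $K$-picture avoids relative-boundedness issues: $K_b^\eta(q)=K_b^\eta(q_0)+(q-q_0)D_\eta$, with $D_\eta$ bounded. Hence the family is analytic of type A in $q$, and therefore in $s$.

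The key steps, in order, are as follows. First, from Lemma~\ref{lem:gap-all-s} and Lemma~\ref{lem:rsw-smoothing-spectrum}, at real $q_0\le\nu^{-2}-1$ the operator $K_0:=K_b^\eta(q_0)$ has ground energy $E_0\le\bar E$ and gap at least $\gamma_*/2$. Take the circle $\Gamma$ of radius $\gamma_*/4$ around $E_0$. By self-adjointness, $\|(K_0-z)^{-1}\|\le 4/\gamma_*$ on $\Gamma$. Second, the second resolvent identity (a Neumann series) shows that $\|(K_b^\eta(q)-z)^{-1}\|\le 8/\gamma_*$ on $\Gamma$ whenever $|q-q_0|\le\gamma_*/8$. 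Consequently the Riesz projector
\[
P(q)=-\frac{1}{2\pi i}\oint_\Gamma\bigl(K_b^\eta(q)-z\bigr)^{-1}dz
\]
is analytic, with norm at most $C$, and satisfies $\|P(q)-P(q_0)\|\le C|q-q_0|$. Its rank stays $1$ by continuity of rank for nearby projections. This gives nondegeneracy and an analytic eigenvalue $\operatorname{Tr}(K P)$. The reduced resolvent $S_K=(K-E)^{-1}(I-P)$ is bounded by $C$, by writing it as a contour integral over a circle of radius $\gamma_*/4$ enclosing $E$ only, or via $(I-P)$ times the resolvent at $E$ bounded through the same Neumann argument. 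Third, translate to $s$. Since $h-E_h=(1-s)(K-E_K)$, the projector is unchanged, and $S_h(s)=S_K(q(s))/(1-s)$, so $\|S_h\|\le C/|1-s|\le 2C/(1-s_0)$. The condition $|q-q_0|\le\gamma_*/8$ must hold on the disc $|s-s_0|<c_*(1-s_0)$. Since
\[
q-q_0=\frac{s-s_0}{(1-s)(1-s_0)},
\]
this requires $|s-s_0|\lesssim(1-s_0)^2$, \emph{not} $(1-s_0)$.

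This mismatch in the last step is the main obstacle: a disc of radius proportional to $(1-s_0)$ produces $|q-q_0|$ of order $c_*/(1-s_0)$, which is large near the endpoint. To handle it, I would exploit the large-$q$ structure more carefully. For large $q$, the relevant perturbation $(q-q_0)D_\eta$ should be measured relative to $q_0$, since the low states have only $O(1/q_0)$ mass where $D_\eta$ is large. Concretely, I would use the scaling $K=A_b+qD_\eta$ and the estimate
\[
\|D_\eta^{1/2}P(q_0)\|^2\lesssim\bar E/q_0,
\]
valid since $\langle D_\eta\rangle\le E/q$. I would then run the Neumann series in a weighted form, bounding
\[
\bigl\|D_\eta^{1/2}(K_0-z)^{-1}D_\eta^{1/2}\bigr\|\lesssim 1/q_0
\]
on $\Gamma$. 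This is plausible: the $D$-weighted resolvent sees excitations whose energy is at least $q_0$ times the fraction of their mass outside the well, which is Birman--Schwinger style. If that relative bound holds, the perturbation is controlled when $|q-q_0|\lesssim q_0\sim 1/(1-s_0)$, which matches a disc of radius $c_*(1-s_0)$ in $s$. All constants depend only on $\gamma_*$ and $\bar E$, and hence are uniform in $|b|\le B$ and $\eta\le c_\eta\nu^2$. If the weighted bound proves delicate, the fallback is to accept the smaller radius $c(1-s_0)^2$ and track how it enters the later $\Gamma$ bound in Lemma~\ref{lem:gamma-nu-M}.
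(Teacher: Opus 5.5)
\textbf{Verdict.} You diagnose the obstruction correctly, but the proposal does not prove the lemma as stated: the estimate that carries the whole argument is left unproved.

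\textbf{The gap.} With $D_\eta$ treated as a norm-one bounded perturbation, your Neumann series only covers $|q-q_0|\lesssim\gamma_*$, i.e.\ $|s-s_0|\lesssim(1-s_0)^2$. The same loss affects the third estimate: $\|P(q)-P(q_0)\|\le C|q-q_0|$ translates into $C|s-s_0|/(1-s_0)^2$, not $C|s-s_0|/(1-s_0)$.

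The weighted bound that would repair both is only called ``plausible.'' The justification you give, $\|D_\eta^{1/2}P(q_0)\|^2\lesssim\bar E/q_0$, concerns the ground state, whereas the resolvent on $\Gamma$ involves the whole spectrum. The scale you name is also off near the start of the schedule. The bound $\lesssim 1/q_0$ blows up as $s_0\to0$, so ``$|q-q_0|\lesssim q_0$'' gives nothing there; the relevant scale is $1+q_0=1/(1-s_0)$.

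The fallback radius $c(1-s_0)^2$ is a strictly weaker lemma. It would shrink the strip width in Corollary~\ref{cor:rsw-scheduled-spectral-data} to $\rho\sim\nu^4/\sqrt{M+1}$, and the runtime would degrade from $\log^8$ to roughly $\log^{12}$.

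\textbf{The missing step.} It is a one-line operator inequality, and it is exactly the paper's key step. Write $K_0=K_b^\eta(q_0)$. Since $A_b\ge0$ and $A_b\ge1/(4\sigma^2)$, you have $q_0D_\eta\le K_0$ and $D_\eta\le I\le4\sigma^2K_0$. Hence $\|K_0^{-1/2}D_\eta K_0^{-1/2}\|\le\min\{q_0^{-1},4\sigma^2\}\le C(1-s_0)$.

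On $\Gamma$, the spectral theorem (using $E_0\le\bar E$ and distance at least $\gamma_*/4$ to the spectrum) gives $\sup_e e/|e-z|\le C$ and $\sup_e\sqrt e/|e-z|\le C$. Therefore $\|D_\eta^{1/2}(K_0-z)^{-1}D_\eta^{1/2}\|\le C(1-s_0)$ and $\|(K_0-z)^{-1}D_\eta^{1/2}\|\le C(1-s_0)^{1/2}$. The Woodbury form of your weighted series,
\[
(K_0+tD_\eta-z)^{-1}=(K_0-z)^{-1}-t(K_0-z)^{-1}D_\eta^{1/2}\bigl[I+tD_\eta^{1/2}(K_0-z)^{-1}D_\eta^{1/2}\bigr]^{-1}D_\eta^{1/2}(K_0-z)^{-1},
\]
then bounds the resolvent uniformly for $|t|=|q-q_0|\le c/(1-s_0)$.

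This range contains the image of $|s-s_0|<c_*(1-s_0)$, because there $|q-q_0|\le2c_*/(1-s_0)$. It also gives $\|P(q)-P(q_0)\|\le C|q-q_0|(1-s_0)=C|s-s_0|/|1-s|\le2C|s-s_0|/(1-s_0)$.

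For the reduced-resolvent bound you additionally need the continued eigenvalue to stay a fixed distance inside $\Gamma$. The paper secures this by controlling the resolvent on an annulus, which the same estimates allow.

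\textbf{Comparison with the paper.} The paper works directly in $s$. It sets $G_b^\eta(s)=h_b^\eta(s)+sI=(1-s)A_b+sD_\eta=(1-s)K_b^\eta(q)$ and $G_0=G_b^\eta(s_0)$. It then handles the unbounded perturbation $(s-s_0)(D_\eta-A_b)$ through $\|G_0^{-1/2}(D_\eta-A_b)G_0^{-1/2}\|\le(1+4\sigma^2)/(1-s_0)$. That bound rests on the same domination, $A_b\le G_0/(1-s_0)$ and $0\le D_\eta\le I$. Once this inequality is inserted, your $q$-picture with a bounded perturbation and the paper's symmetrized factorization are equivalent.
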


\begin{proof}
Adding $sI$ changes neither the eigenvectors nor the gap, and gives
the positive operator
\[
G_b^\eta(s)=h_b^\eta(s)+sI=(1-s)A_b+sD_\eta
\]
for real $s$. Write $\alpha=1-s_0$, $G_0=G_b^\eta(s_0)$,
and let $F_0$ be its ground energy. The preceding estimates give
\[
G_0\ge\alpha/(4\sigma^2),
\qquad
F_0\le\alpha\bar E,
\qquad
\operatorname{gap}G_0\ge\alpha\gamma_*/2.
\]
Fix $0<r<\gamma_*/8$. On the annulus
$r\alpha/2\le|\zeta-F_0|\le r\alpha$,
\[
\|(I-\zeta G_0^{-1})^{-1}\|
=\sup_{e\in\operatorname{spec}(G_0)}\frac{e}{|e-\zeta|}
\le C.
\]
Although $A_b$ is unbounded, measuring its size relative to $G_0$
gives the bounded operator
\[
B_0=G_0^{-1/2}(D_\eta-A_b)G_0^{-1/2},
\qquad
\|B_0\|\le\frac{1+4\sigma^2}{\alpha}.
\]
Here the expression is defined through its bounded quadratic form;
the estimate follows from $A_b\le G_0/\alpha$ and
$0\le D_\eta\le I$. The corresponding inverse is
\[
(G_b^\eta(s)-\zeta)^{-1}
=G_0^{-1/2}
\bigl[I-\zeta G_0^{-1}+(s-s_0)B_0\bigr]^{-1}
G_0^{-1/2}.
\]
For sufficiently small $c_*$, $|s-s_0|<c_*\alpha$ implies
\[
\|(s-s_0)(I-\zeta G_0^{-1})^{-1}B_0\|\le\frac12.
\]
The inverse in brackets is therefore bounded by
$2\|(I-\zeta G_0^{-1})^{-1}\|$, giving
$\|(G_b^\eta(s)-\zeta)^{-1}\|\le C/\alpha$.
Taking $c_*<1/2$ also ensures $\operatorname{Re}(1-s)>0$,
so these operators retain the common domain $D(A_b)$.

The inner and outer circles continue to enclose exactly one
eigenvalue, denoted $F(s)$: their spectral projections depend
continuously on $s$ and have rank one at $s_0$.
In particular, $|F(s)-F_0|<r\alpha/2$.
Writing $\mathcal C=\{\zeta:|\zeta-F_0|=r\alpha\}$ with positive
orientation, the formulas
\[
P_b^\eta(s)
=-\frac{1}{2\pi i}\oint_{\mathcal C}
(G_b^\eta(s)-\zeta)^{-1}\,d\zeta,
\qquad
S_b^\eta(s)
=\frac{1}{2\pi i}\oint_{\mathcal C}
\frac{(G_b^\eta(s)-\zeta)^{-1}}{\zeta-F(s)}\,d\zeta
\]
give the projector and reduced-resolvent bounds.
The inverse factorization also gives
\[
\|(G_b^\eta(s)-\zeta)^{-1}-(G_0-\zeta)^{-1}\|
\le C|s-s_0|/\alpha^2.
\]
Integrating this difference proves the last bound in
\eqref{eq:rsw-complex-spectral-bounds}. Decrease $c_*$ once more
so that the projector difference is less than $1/4$.
\end{proof}

For the schedule parameter, define
\begin{equation}
\rho=c_a\frac{\nu^2}{\sqrt{M+1}},
\qquad
\Omega_r=\{u\in\mathbb C:
\operatorname{dist}(u,[0,1])<r\},
\label{eq:strip-width}
\end{equation}
where $c_a>0$ is a sufficiently small fixed constant.

\begin{cor}[Analytic bounds along the schedule]
\label{cor:rsw-scheduled-spectral-data}
The family $h_b^\eta(s_M(u))$ is analytic of type A on
$\Omega_{4\rho}$, with domain $D(A_b)$ and a nondegenerate eigenvalue
continued from its real ground energy. On $\Omega_\rho$, its
eigenvector $\varphi$, projectors $P,Q=I-P$, and reduced resolvent
$S$ satisfy
\begin{equation}
\|S\|\le C\nu^{-2},
\qquad
\|P\|+\|Q\|+\|\varphi\|\le C,
\qquad
\|\varphi'\|\le C/\rho.
\label{eq:C1-bound}
\end{equation}
For real $u$, $\varphi(u)$ is the positive normalized ground state
and satisfies $\braket{\varphi|\varphi'}=0$.
The bounds are uniform in $b$ and the allowed $\eta$.
\end{cor}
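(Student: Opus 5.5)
We work on the strip $\Omega_{4\rho}$ with $\rho=c_a\nu^2/\sqrt{M+1}$, and the plan is to transfer everything from Lemma~\ref{lem:isolation-strip}. That lemma gives analytic spectral data on disks $|s-s_0|<c_*(1-s_0)$ in the schedule variable $s$. What we must check is that the polynomial $s_M$ maps $\Omega_{4\rho}$ into the union of these disks, with $s_0=s_M(\operatorname{Re}u)$ or a nearby real point.

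\textbf{Step 1: the schedule stays in the spectral neighborhood.} For $u\in\Omega_{4\rho}$, choose a real $u_0\in[0,1]$ with $|u-u_0|<4\rho$. Then bound $|s_M(u)-s_M(u_0)|$ using the explicit form $s_M'(v)=s_f v^M(1-v)^M/B(M+1,M+1)$. On the real line $s_M'\le C\sqrt{M}$, since the normalized beta density has maximum $\sim\sqrt{M}$. For complex $v$ at distance $\le 4\rho$ from $[0,1]$, $|v(1-v)|\le \tfrac14+O(\rho)$, so $|v(1-v)|^M\le 4^{-M}e^{O(M\rho)}$, and $M\rho\le c_a\nu^2\sqrt{M}$ must be kept bounded. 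This is where I expect the main obstacle. Near the endpoints, where $1-s_0$ is as small as $\nu^2$, the naive estimate $|s-s_0|\lesssim\sqrt{M}\rho=c_a\nu^2$ already suffices, because $1-s_0\ge\nu^2$ everywhere. So $|s_M(u)-s_M(u_0)|\le C c_a\nu^2\le c_*\nu^2\le c_*(1-s_0)$ for small $c_a$. The delicate point is controlling $e^{O(M\rho)}$ uniformly. I would first try bounding $|v^M(1-v)^M|$ via $|v(1-v)|\le |v_0(1-v_0)|+3|v-v_0|$ and comparing with the Stirling asymptotics of $B(M+1,M+1)$. If $M\rho$ is not bounded, I would fall back on the Cauchy estimate for the polynomial on a disk of radius $\rho$ together with the factor $\sqrt{M+1}$ built into $\rho$.

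\textbf{Step 2: type A analyticity.} The family is $h_b^\eta(s_M(u))=(1-s_M(u))A_b+s_M(u)w_\eta$. Here $w_\eta$ is bounded and $s_M$ is a polynomial. Step 1 gives $\operatorname{Re}(1-s_M(u))>0$, which Lemma~\ref{lem:isolation-strip} ensures through $c_*<1/2$. So the domain is $D(A_b)$ and $u\mapsto h f$ is entire for $f\in D(A_b)$, giving type A. The continued eigenvalue, projector $P$, and reduced resolvent $S$ are compositions of the lemma's analytic functions with $s_M$. Their bounds follow directly: $\|S\|\le C/(1-s_0)\le C\nu^{-2}$ and $\|P\|+\|Q\|\le C$. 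Overlapping disks agree by uniqueness of analytic continuation, and nondegeneracy is inherited.

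\textbf{Step 3: eigenvector and derivative.} For real $u$, take $\varphi(u)$ to be the positive normalized ground state. By reality of the operator and positivity this has constant phase, so $\braket{\varphi|\varphi'}=\tfrac12\partial_u\|\varphi\|^2=0$ and the parallel-transport gauge holds automatically. To continue, use Kato's transport: solve $\varphi'=[P',P]\varphi$ from $u_0$. Alternatively, set $\varphi(u)=P(u)\varphi(u_0)/\sqrt{\langle\varphi(u_0),P(u)\varphi(u_0)\rangle}$ locally, using the analytic (non-conjugated) bilinear normalization, which matches the real definition. The projector-difference bound $<1/4$ from Lemma~\ref{lem:isolation-strip} keeps the denominator away from zero, giving $\|\varphi\|\le C$ on $\Omega_{2\rho}$. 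Cauchy's estimate on disks of radius $\rho$ inside $\Omega_{2\rho}$ then yields $\|\varphi'\|\le C/\rho$ on $\Omega_\rho$. All constants come from Lemma~\ref{lem:isolation-strip} and Lemma~\ref{lem:gap-all-s}, so they are uniform in $|b|\le B$ and $0\le\eta\le c_\eta\nu^2$.
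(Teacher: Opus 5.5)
Your Steps~2 and~3 follow the paper's route essentially verbatim. You pull the disk bounds of Lemma~\ref{lem:isolation-strip} back through $s_M$, normalize $P(u)\varphi_0$ with the bilinear pairing, and use a Cauchy estimate for $\varphi'$.

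\textbf{The gap is in Step~1.} This step is the real content of the corollary, and you leave it unresolved. Your perturbative bound $|v(1-v)|\le|v_0(1-v_0)|+3|v-v_0|\le\tfrac14+O(\rho)$ only gives $|v(1-v)|^M\le 4^{-M}e^{O(M\rho)}$, with $M\rho\asymp c_a\nu^2\sqrt{M}$. That exponent is not bounded uniformly in $M$. Uniformity is needed: the constants $C$ in the corollary are fixed model constants independent of $M$ and $\nu$, and Lemma~\ref{lem:gamma-nu-M} uses $C_1=O(\nu^{-2})$ and $C_4=O(\rho^{-1})$ with absolute implied constants. Your ``naive estimate'' $|s-s_0|\lesssim\sqrt{M}\rho$ presupposes exactly this complex bound on $s_M'$, so it is circular. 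The fallback is not an argument as stated. Cauchy estimates control derivatives by complex suprema, so they cannot supply the complex supremum you need. Generic Bernstein-type inequalities for a degree-$(2M+1)$ polynomial again give growth like $e^{O(M\cdot\mathrm{dist})}$, and worse near the endpoints.

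\textbf{The missing observation.} The point $v=\tfrac12$ is a critical point of $v(1-v)$, so the excess over $\tfrac14$ is quadratic, not linear, in the imaginary part. Write $v=x+iy$ with $x\in[0,1]$ and $t=x(1-x)$. Then $|v(1-v)|^2=t^2+y^2(1-2t)+y^4\le(\tfrac14+y^2)^2$ whenever $y^2\le\tfrac18$. Near the endpoints $|v(1-v)|$ is small anyway.

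\textbf{How this closes Step~1.} Since $|y|\le4\rho=4c_a\nu^2/\sqrt{M+1}$, you get $|v(1-v)|^M\le4^{-M}e^{4My^2}\le e^{64c_a^2}4^{-M}$, uniformly in $M$ and $\nu$. This is precisely why $\rho$ carries the factor $(M+1)^{-1/2}$. Next, integrating over $|v-\tfrac12|\le 1/(2\sqrt{M+1})$ gives $I_M\ge c\,4^{-M}/\sqrt{M+1}$. Together these yield $|s_M'|\le C\sqrt{M+1}$ on $\Omega_{4\rho}$. Hence $|s_M(u)-s_M(u_0)|\le 4C\rho\sqrt{M+1}=4Cc_a\nu^2\le c_*\nu^2\le c_*(1-s_M(u_0))$ for small $c_a$, which is the paper's argument.

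Your linear bound would in fact suffice in the parameter regime of Theorem~\ref{thm:rsw-harmonic-runtime}, where $\nu^2\sqrt{M}=O(\ell^{-3/2})$. However, you did not verify this, and it is not a hypothesis of the corollary.
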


\begin{proof}
Let $I_M=\int_0^1[v(1-v)]^M\,dv$. Integrating over an interval
of width $c/\sqrt{M+1}$ around $1/2$ gives
$I_M\ge c4^{-M}/\sqrt{M+1}$. Hence
\[
s_M'(u)=s_f\frac{[u(1-u)]^M}{I_M}
\]
satisfies $|s_M'(u)|\le C\sqrt{M+1}$ within a complex distance
$c/\sqrt{M+1}$ of $[0,1]$, where
$|u(1-u)|\le1/4+C/(M+1)$.
For $u\in\Omega_{4\rho}$ and a nearest real point $u_0\in[0,1]$,
choosing $c_a$ sufficiently small therefore gives
\[
|s_M(u)-s_M(u_0)|
\le c_*\nu^2
\le c_*[1-s_M(u_0)].
\]
Lemma~\ref{lem:isolation-strip} supplies the analytic eigenvalue
and the bounds on $P,Q,S$. It also ensures that $1-s_M(u)$
does not vanish, so the operator domain remains $D(A_b)$.

To normalize the eigenvector without losing analyticity, use
$\braket{\overline f|g}=\int_\RR f(z)g(z)\,dz$, where
$\overline f$ denotes pointwise complex conjugation. The conjugation
cancels that in the bra, so this pairing is analytic in both vectors.
The Hamiltonian and its projectors are symmetric for this pairing.
At a real center, let $\varphi_0$ be the positive normalized ground
state and set locally
\[
v(u)=P(u)\varphi_0,
\qquad
\varphi(u)=\frac{v(u)}{\braket{\overline{v(u)}|v(u)}^{1/2}}.
\]
Since $\braket{\overline v|v}=\braket{\varphi_0|P(u)|\varphi_0}$ is within $1/4$ of $1$,
the square root positive at the real center gives a bounded analytic
vector. These choices agree on real overlaps, and therefore give
one analytic vector throughout the neighborhood, with
\[
\braket{\overline\varphi|\varphi}=1,
\qquad
\braket{\overline\varphi|\varphi'}=0,
\qquad
P=\ket{\varphi}\bra{\overline\varphi}.
\]
Cauchy's derivative estimate gives $\|\varphi'\|\le C/\rho$
on $\Omega_\rho$. Finally, if $F$ is the continued ground energy
of $G_b^\eta$, then
\[
(1-s_M)A_b\varphi=(F-s_MD_\eta)\varphi,
\qquad
(1-s_M)A_bS=Q+(F-s_MD_\eta)S.
\]
Thus $A_b\varphi$ and $A_bS$ also depend analytically on $u$.
This justifies differentiating the eigenvector and the adiabatic
corrections within the domain of the unbounded Hamiltonian.
\end{proof}

The preceding bounds determine both the exponential rate and its
prefactor. Write
\begin{equation}
\mathcal K_{\nu,M}
=C_{\mathrm{ad}}(M+1)\nu^{-6},
\qquad
g=\frac{1}{e\mathcal K_{\nu,M}},
\label{eq:rsw-HJ-scale}
\end{equation}
where $C_{\mathrm{ad}}\ge1$ is a sufficiently large constant depending
only on the fixed model parameters.

\begin{lemma}[Quantitative adiabatic estimate]
\label{lem:gamma-nu-M}
The constants in Theorem~\ref{thm:HJ-original} can be chosen so that
\begin{equation}
g<G,
\qquad
\Gamma(g)\ge\frac{1}{e\mathcal K_{\nu,M}},
\qquad
C(g)\le\mathcal K_{\nu,M}.
\label{eq:gamma-nu-M}
\end{equation}
These bounds are uniform in $b$ and the allowed smoothing widths.
\end{lemma}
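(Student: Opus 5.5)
The plan is to re-run the Hagedorn--Joye argument behind Theorem~\ref{thm:HJ-original} while keeping every constant explicit. The only inputs will be the bounds of Corollary~\ref{cor:rsw-scheduled-spectral-data} on the strip $\Omega_\rho$: $\|S\|\le C\nu^{-2}$, $\|P\|+\|Q\|+\|\varphi\|\le C$, and $\|\varphi'\|\le C/\rho$, where $\rho=c_a\nu^2/\sqrt{M+1}$.

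\textbf{Step 1: bounds on the corrections.} I will bound the coefficients of the superadiabatic expansion \eqref{eq:rsw-HJ-expansion} using nested strips. Set $\rho_n=\rho(1-n/(N+2))$. The recurrence \eqref{eq:rsw-HJ-recurrence} produces $\psi_{n+1}^\perp$ from one derivative of $\psi_n^\perp$ (or of $\varphi$), one factor of $S$, and, through $f_n$, an integral that costs no strip width. Cauchy's estimate on the step from $\Omega_{\rho_n}$ to $\Omega_{\rho_{n+1}}$ costs a factor $(N+2)/\rho$. I will prove by induction the uniform bound
\[
\sup_{\Omega_{\rho_n}}\bigl(\|\psi_n^\perp\|+|f_n|\bigr)\le \mathcal K^{\,n}\,n!.
\]
Here $\mathcal K=C\|S\|\rho^{-1}\cdot(\text{const})$, which is $\lesssim \nu^{-2}\cdot\sqrt{M+1}\,\nu^{-2}$. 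The factor $n!$ comes from $((N+2)/\rho)^n\le e^n n!/\rho^n$ for $n\le N+1$. The quantity $\mathcal K_{\nu,M}=C_{\mathrm{ad}}(M+1)\nu^{-6}$ dominates this product, with slack for the $|f_n|$ terms and for the extra power of $\nu^{-2}$ and of $\sqrt{M+1}$ coming from $\|\varphi'\|$. Choosing $C_{\mathrm{ad}}$ large absorbs all the fixed constants.

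\textbf{Step 2: the residual.} Substituting $\Xi_N$ into the Schr\"odinger equation leaves a residual of the form $T^{-N-1}\,e^{-iT\int E}\bigl(f_{N+1}\varphi'+Q(\psi_{N+1}^\perp)'\bigr)$, together with the $\partial_u$ of the last term. Its norm is bounded by $C T^{-N-1}\mathcal K^{N+2}(N+1)!$. Duhamel's formula with unitary propagation then gives
\[
\sup_u\|\psi(u)-\Xi_N(u)\|\le C\,\mathcal K^{2}(N+1)\,(\mathcal K(N+1)/T)^{N+1}.
\]
Now set $N=\lfloor gT\rfloor$ with $g=1/(e\mathcal K)$. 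Then $\mathcal K(N+1)/T\le e^{-1}(1+1/N)$, so the bracket is at most $e^{-(N+1)}\cdot O(1)\le C e^{-gT}$. This yields $\Gamma(g)\ge g=1/(e\mathcal K_{\nu,M})$. The polynomial prefactor $\mathcal K^2(N+1)$ is the part I must handle with care. I plan to absorb it into $e^{-gT}$ by slightly worsening the rate, or by redefining $\mathcal K$ with a larger $C_{\mathrm{ad}}$ so that one Cauchy step is used only $N+1$ times. This should give $C(g)\le\mathcal K_{\nu,M}$. The condition $g<G$ holds automatically once $C_{\mathrm{ad}}$ is large, since $G$ is fixed by the strip width and the gap constants.

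\textbf{Main obstacle.} The main obstacle is Step 1 for the unbounded Hamiltonian. Differentiating $\psi_n^\perp$ requires $A_b S$ and $A_b\varphi$ to be analytic in the type-A sense, which the last display of Corollary~\ref{cor:rsw-scheduled-spectral-data} supplies. I must also check that the symmetric bilinear pairing used to continue $\varphi$ keeps the orthogonality $\psi_n^\perp\perp\varphi$ analytic, so that the induction never loses a factor of $\nu$ beyond those counted in $\mathcal K_{\nu,M}$. Uniformity in $b$ and $\eta$ follows because every input bound used above was already uniform in these parameters.
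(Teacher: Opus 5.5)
There is a genuine gap in Step 2, and it is the prefactor you yourself flag. Your final estimate $C\mathcal K^2(N+1)(\mathcal K(N+1)/T)^{N+1}$ is a bracket times a prefactor. At $g=1/(e\mathcal K)$ the bracket is already only of order $e^{-(N+1)}$. The prefactor $\mathcal K^2(N+1)\approx \mathcal K T/e$ grows with $T$, whereas the lemma needs a $T$-independent $C(g)\le\mathcal K_{\nu,M}$ at rate $\Gamma(g)\ge g$.

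Neither of your remedies closes this. Spending part of the exponent on the factor $N+1$ violates $\Gamma(g)\ge g$, because the bracket leaves no room. Making $\mathcal K_{\nu,M}$ a fixed multiple $c'$ of your per-step constant only buys an extra factor $(O(1)/c')^{N+1}$. That absorbs $N+1$, but not the surplus factor $\mathcal K\sim(M+1)\nu^{-6}$, which is unbounded. Already at $N=1$ your bound forces $c'$ to grow like a power of $\mathcal K_{\nu,M}$, so $C_{\mathrm{ad}}$ could no longer depend only on the fixed model parameters. There is no slack in $\nu$ or $M$ either: since the $f_n\varphi'$ term carries $\|\varphi'\|\sim\rho^{-1}$, the per-step constant is of order $\nu^{-2}\rho^{-2}\sim(M+1)\nu^{-6}$ exactly.

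The surplus comes from misidentifying the residual. With \eqref{eq:rsw-HJ-expansion} truncated as written, there is no order-$T^{-N-1}$ term $f_{N+1}\varphi'+Q(\psi_{N+1}^\perp)'$. The definition of $f_n$ gives $P\psi_n'=0$, hence $(\widehat H-E)\psi_{n+1}^\perp=iQ\psi_n'=i\psi_n'$ for every $0\le n\le N$. All lower orders then cancel, leaving exactly $(iT^{-1}\partial_u-\widehat H)\Xi_N=e^{-iT\int_0^uE(v)\,dv}\,iT^{-N-2}(\psi_{N+1}^\perp)'$. Duhamel, which costs one factor $T$, bounds the error by $T^{-N-1}\sup_{[0,1]}\|(\psi_{N+1}^\perp)'\|$: $N+1$ Cauchy steps balanced by $T^{-N-1}$.

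To bound this, use $\|\psi_1^\perp\|\le C_1C_4$ and a factor $C_5(N+2)/\rho$ per step on your nested strips, with $C_5=C_1(2C_3C_4+C_2)$. The error is then at most $(C_1C_4/C_5)\,(C_5(N+2)/(\rho T))^{N+1}$. The prefactor $C_1C_4/C_5\le 1/(2C_3)$ is $O(1)$. For $N\ge1$ the floor in $N=\lfloor gT\rfloor$ costs only a bounded factor. This gives $C(g)\le\mathcal K_{\nu,M}$ comfortably once $C_5/\rho\le\mathcal K_{\nu,M}$.

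The paper does not re-derive any of this. It inserts the bounds of Corollary~\ref{cor:rsw-scheduled-spectral-data} into Hagedorn--Joye's explicit chain \eqref{eq:HJ-constant-chain}, made valid on the finite strip via $r(u)$ and the weights $\beta_k=k^k$. It then chooses $C_{\mathrm{ad}}$ so that $C_7g\le 1/e$. This yields $g<G$ and $\Gamma(g)=g|\log(C_7g)|\ge g$ at once, together with $C(g)=C_6/(C_7g)$ and $C_6/C_7\le 1/(2e)$.

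Separately, your Step 1 inequality $((N+2)/\rho)^n\le e^nn!/\rho^n$ is false for $n\ll N$ (take $n=1$). Equally spaced nested strips give growth $(N+2)^n$, which is harmless only at top order, whereas the weights $\beta_k=k^k$ on $r(u)$ give genuine $n^n$ growth at every order.
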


\begin{proof}
Use constants $C_1,C_2,C_3,C_4$ to bound respectively
$\|S\|$, both $\|P\|$ and $\|Q\|$, $\|\varphi\|$, and
$\|\varphi'\|$ on $\Omega_\rho$, with $C_3\ge1$.
The constants in \cite[Sections~4--5]{HagedornJoye2002} are
\begin{align}
C_5&=C_1(2C_3C_4+C_2),
&
C_6&=\frac{2C_1C_4}{\rho},
&
C_7&=\frac{2eC_5}{\rho},
\label{eq:HJ-constant-chain}\\
G&=C_7^{-1},
&
\Gamma(g)&=g|\log(C_7g)|,
&
C(g)&=C_6e^{|\log(C_7g)|}.
\nonumber
\end{align}
The factor $2$ in $C_5$ bounds the integration-path length when
$\rho\le1$. To apply their derivative estimates on the finite
neighborhood $\Omega_\rho$, put
\[
r(u)=\rho-\operatorname{dist}(u,[0,1]),
\qquad
\beta_0=1,\qquad \beta_k=k^k\quad(k\ge1).
\]
Cauchy's estimate on a circle of radius $r(u)/(k+1)$ gives
\[
\|f(u)\|\le A\beta_k r(u)^{-k}
\quad\Longrightarrow\quad
\|f'(u)\|\le A\beta_{k+1}r(u)^{-k-1},
\]
with a limiting argument for $k=0$. Along the integration segment
from $0$ to $u$,
\[
\operatorname{dist}(\tau u,[0,1])
\le\tau\operatorname{dist}(u,[0,1]),
\qquad 0\le\tau\le1,
\]
so $r(\tau u)\ge r(u)$, and the segment has length at most $2$.
Their recursive estimates therefore apply on $\Omega_\rho$
with \eqref{eq:HJ-constant-chain}; the scalar products are
analytically continued using the normalization above.

By \eqref{eq:C1-bound} and \eqref{eq:strip-width},
\[
C_1=O(\nu^{-2}),
\qquad
C_2,C_3=O(1),
\qquad
C_4=O(\rho^{-1}),
\]
and consequently
\[
C_5=O(\nu^{-2}\rho^{-1}),
\qquad
C_6,C_7=O(\nu^{-2}\rho^{-2})
=O((M+1)\nu^{-6}).
\]
Choose $C_{\mathrm{ad}}$ so that $C_7\le\mathcal K_{\nu,M}$.
Then $C_7g\le1/e$, giving $g<G$ and $\Gamma(g)\ge g$.
Moreover,
\[
\frac{C_6}{C_7}
=\frac{C_4}{e(2C_3C_4+C_2)}
\le\frac{1}{2e},
\qquad
C(g)=\frac{C_6}{C_7g}
\le\frac{1}{2eg}
\le\mathcal K_{\nu,M}.
\]
\end{proof}

Combining this estimate with endpoint matching gives
\begin{equation}
\inf_{\vartheta\in\RR}
\|\psi_b^\eta-e^{i\vartheta}\varphi_{b,f}^\eta\|_2
\le2\mathcal K_{\nu,M}
e^{-T/(e\mathcal K_{\nu,M})}
\label{eq:1d-state-bound-HJ}
\end{equation}
when $N=\lfloor T/(e\mathcal K_{\nu,M})\rfloor\ge1$,
$M\ge N+1$, and
$\mathcal K_{\nu,M}e^{-T/(e\mathcal K_{\nu,M})}<1$.

\begin{proof}[Proof of Theorem~\ref{thm:rsw-harmonic-runtime}]
Choose the endpoint cutoff by setting
\begin{equation}
\ell=\log\frac{16(1+C_{\mathrm{loc}})d}{\varepsilon},
\qquad
\nu=\min\left\{\nu_0,\frac{\sqrt2(a-\delta)}{\ell}\right\}.
\label{eq:nu-choice}
\end{equation}
Lemma~\ref{lem:deep-well} then bounds the probability outside
$[-a,a]$ in each final ground state by
\[
C_{\mathrm{loc}}\nu e^{-\sqrt2(a-\delta)/\nu}
\le C_{\mathrm{loc}}\nu e^{-\ell}
\le\frac{\varepsilon}{16d}.
\]
For the adiabatic error, choose
\begin{align}
N&=\left\lceil
2\log\frac{16C_{\mathrm{ad}}d}{\varepsilon\nu^6}
\right\rceil,
&
M&=N+1,
\label{eq:rsw-orders}\\
T&=e\mathcal K_{\nu,M}N.
\label{eq:T-and-M}
\end{align}
These choices give $gT=N$ exactly and provide the required
endpoint flatness. The superadiabatic error obeys
\[
\mathcal R_N
:=\sup_{u\in[0,1]}
\|\psi_b^\eta(Tu)-\Xi_N(u)\|_2
\le\mathcal K_{\nu,M}e^{-N}
\le\frac{\varepsilon}{16d}.
\]
For the last inequality, write
$x=\log(16C_{\mathrm{ad}}d/(\varepsilon\nu^6))\ge2$.
Then $N=\lceil2x\rceil$, $M+1\le2x+3$, and
$(2x+3)e^{-x}\le1$, giving
$\mathcal K_{\nu,M}e^{-N}\le
C_{\mathrm{ad}}\nu^{-6}e^{-x}=\varepsilon/(16d)$.
Lemma~\ref{lem:endpoint-flat-schedule} therefore bounds the final
one-dimensional state error, up to phase, by $2\mathcal R_N$.

For normalized states $f,g$ and an orthogonal projector $\Pi$,
\begin{equation}
|\braket{f|\Pi|f}-\braket{g|\Pi|g}|
\le2\|f-g\|_2.
\label{eq:rsw-probability-transfer}
\end{equation}
Thus each evolved coordinate has failure probability at most
$\varepsilon/(16d)+4\mathcal R_N\le5\varepsilon/(16d)$.
The product structure gives
\begin{equation}
\int_{V_a}|\Psi^\eta(\xv)|^2\,d\xv
=\prod_{j=1}^d
\int_{-a}^a|\psi_j^\eta(z)|^2\,dz.
\label{eq:factorized-success}
\end{equation}
The probability that any coordinate fails is at most the sum of
its failure probabilities, hence at most $5\varepsilon/16<
\varepsilon/2$. Finally, $\nu^{-1}=O(\ell)$ and $N,M=O(\ell)$,
so $\mathcal K_{\nu,M}=O(\ell^7)$ and
$T=e\mathcal K_{\nu,M}N=O(\ell^8)$.
All choices are uniform in $\eta$, $R$, and the allowed shifts.
\end{proof}

\subsection{RsAA/QHD query complexity via product formulas}
\label{app:rsw-product-formulas}

In this subsection, we derive a phase-query bound of $Q(d)=d^{o(1)}$ for simulating the smoothed adiabatic evolution with fixed target error. We use higher-order product formulas that alternate the known harmonic mixer $A$ with the cost $W^\eta$ from \eqref{eq:rsw-harmonic-path}. Each $e^{-i\gamma W^\eta}$, for real $\gamma$, counts as one phase query; mixer evolution requires no cost queries. We work on $L^2(\RR^d)$ with the fixed model parameters of Theorem~\ref{thm:rsw-harmonic-runtime}.

Fix $\varepsilon\in(0,1/3)$. Choose $\nu,N,M,T$ as in \eqref{eq:nu-choice}, \eqref{eq:rsw-orders},
and \eqref{eq:T-and-M}, and set $\eta=c_\eta\nu^2$, held fixed during
the evolution. Theorem~\ref{thm:rsw-harmonic-runtime} gives success
probability at least $1-\varepsilon/2$ for the exact smoothed evolution.
Its parameter choices imply
\begin{equation}
\eta^{-1}=O(\ell^2),
\qquad M=O(\ell),
\qquad T=O(\ell^8),
\qquad \ell=\log\frac{Cd}{\varepsilon},
\label{eq:pf-smoothed-parameters}
\end{equation}
where $C=16(1+C_{\mathrm{loc}})$ as in \eqref{eq:nu-choice}.
Write $s(t)=s_M(t/T)$. In the hidden coordinates of
\eqref{eq:rsw-coordinate-reduction}, let $U_b^\eta(t_2,t_1)$ be the
propagator generated by $(1-s(t))A_b+s(t)w_\eta$, and define
\begin{equation}
\psi_b^\eta(t)=U_b^\eta(t,0)\phi_b,
\qquad
\phi_b(z)=\frac{1}{(2\pi\sigma^2)^{1/4}}
e^{-(z-b)^2/(4\sigma^2)}.
\end{equation}

Fix an even order $k\ge2$, meaning agreement with the exact Taylor
expansion through order $k$. The midpoint Strang step is
\begin{equation}
S_{2,b}(t+h,t)
=e^{-\frac{ih}{2}(1-s(t+h/2))A_b}
 e^{-ihs(t+h/2)w_\eta}
 e^{-\frac{ih}{2}(1-s(t+h/2))A_b}.
\label{eq:pf-strang-step}
\end{equation}
For even $k\ge4$, define the higher-order steps recursively by
\begin{gather}
p_k=(4-4^{1/(k-1)})^{-1},
\qquad (a_1,\ldots,a_5)=(p_k,p_k,1-4p_k,p_k,p_k),
\nonumber\\
t_0=t,
\qquad t_j=t+h\sum_{i=1}^j a_i,
\qquad 1\le j\le5,
\nonumber\\
S_{k,b}(t+h,t)
=S_{k-2,b}(t_5,t_4)\cdots S_{k-2,b}(t_1,t_0).
\label{eq:pf-suzuki-recursion}
\end{gather}
The rightmost factor acts first. Symmetry under reversing the endpoints and
$4p_k^{k-1}+(1-4p_k)^{k-1}=0$ give agreement through order $k$
in this time-dependent Suzuki construction~\cite[Sections~III and V]{WiebeBerryHoyerSanders2010}.
Since $1/3<p_k<1/2$, the endpoint fractions
$0,p_k,2p_k,1-2p_k,1-p_k,1$ lie in $[0,1]$. The middle substep
runs backwards, but all evaluation times remain within the original
interval. The query model allows the required inverse phases. Each recursion multiplies the
number of cost exponentials by five, so a step contains at most
$m_k=5^{k/2-1}$ of them. These steps use the same coefficients in all
coordinates: the coordinate-change unitary $\mathcal{R}$ satisfies
\begin{align}
\mathcal{R}e^{-i\beta A}\mathcal{R}^{-1}
&=\bigotimes_{j=1}^d e^{-i\beta A_j},
\nonumber\\
\mathcal{R}e^{-i\gamma W^\eta}\mathcal{R}^{-1}
&=\bigotimes_{j=1}^d e^{-i\gamma w_\eta}.
\label{eq:pf-global-oracle-identity}
\end{align}
Thus $r$ steps require at most $m_kr$ global phase queries. The hidden
coordinates are used only to analyze this sequence.

Since the mixer is unbounded, we control repeated applications of position and momentum using the weighted norms
\begin{equation}
X_{\eta,b}=\eta(z-b),
\qquad P_\eta=-i\eta\partial_z,
\qquad
\|f\|_{n,\eta,b}=\sum_{D\in\mathcal D_n}\|Df\|_2,
\label{eq:pf-weighted-norm}
\end{equation}
where $\mathcal D_n$ consists of all ordered products of at most $n$
factors $X_{\eta,b},P_\eta$, including the identity.
We write $\mathcal S(\RR)$ for the smooth functions whose derivatives decay faster than every inverse power.

\begin{lemma}[Control of the evolving state]
\label{lem:pf-state-control}
For $0<\eta\le1$, the state $\psi_b^\eta(t)$ evolved from $\phi_b$ remains in $\mathcal S(\RR)$. For every integer $n\ge0$,
\begin{equation}
\sup_{b\in\RR,\;0\le t\le T}
\|\psi_b^\eta(t)\|_{n,\eta,b}
\le C_n(1+T)^n,
\label{eq:pf-moment-bound}
\end{equation}
where $C_n$ is independent of $b,\eta,M,T$.
\end{lemma}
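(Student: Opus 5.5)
We will prove the bound by a commutator (Heisenberg-picture) estimate. The plan is to differentiate $\|D\psi_b^\eta(t)\|_2^2$ for each word $D\in\mathcal D_n$ and close a Gronwall-type inequality that is triangular in $n$. Write $H_b(t)=(1-s(t))A_b+s(t)w_\eta$.

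\textbf{Regularity.} First we settle membership in $\mathcal S(\RR)$. The Gaussian $\phi_b$ is Schwartz. Since $w_\eta$ is smooth with $\|w_\eta^{(m)}\|_\infty\le C_m\eta^{-m}$, and $A_b$ is a shifted harmonic oscillator, the propagator preserves each weighted Sobolev space $\{f:\|f\|_{n,1,b}<\infty\}$. These spaces intersect to $\mathcal S$. This follows from the a priori bounds below applied to a core (e.g.\ Hermite functions centered at $b$), together with a standard approximation argument. So the main work is the moment bound.

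\textbf{Commutators with $X=X_{\eta,b}$ and $P=P_\eta$.} We use the following identities:
\[
[A_b,X]=-i\eta^2\eta^{-1}P=-i\eta P,\qquad [A_b,P]=\frac{i\eta}{4\sigma^4}\,\eta^{-1}X\cdot\eta=\frac{i}{4\sigma^4}\eta X\cdot\eta^{-1}.
\]
Concretely, $[A_b,P]=\tfrac{i\eta}{4\sigma^4}(z-b)=\tfrac{i}{4\sigma^4}X$ and $[A_b,X]=-iP$, so $A_b$ acts \emph{linearly and closed} on $\mathrm{span}\{X,P\}$ with $O(1)$ coefficients, uniformly in $\eta$ and $b$. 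For the potential we have $[w_\eta,X]=0$ and $[w_\eta,P]=i\eta w_\eta'(z)$, with $\|\eta w_\eta'\|_\infty\le C_1$. Commuting this further with $P$ gives $\eta^2w_\eta''$, which is again $O(1)$; in general $\eta^m\|w_\eta^{(m)}\|_\infty\le C_m$. This is precisely why the weights carry a factor $\eta$: every commutator of $H_b$ with a word of length $\le n$ is a sum of words of length $\le n$ (the mixer part) plus bounded multiplication operators times words of length $\le n-1$ (the potential part). All coefficients are bounded independently of $\eta,b,M,T$, using $0\le s\le1$.

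\textbf{Gronwall step.} We set $N_n(t)=\sum_{D\in\mathcal D_n}\|D\psi(t)\|_2$ and compute $\tfrac{d}{dt}\|D\psi\|^2=2\,\mathrm{Im}\langle D\psi,[H_b,D]\psi\rangle$.

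The mixer contribution is the hard point. It maps a length-$n$ word to length-$n$ words, which would naively produce exponential growth $e^{cT}$. The remedy is that $(1-s)A_b$ generates a rotation in the $(X,P)$ phase plane. We will instead use the quadratic combination $\mathcal N=X^2/(4\sigma^4)+P^2$, which commutes with $A_b$ up to the rescaling $\eta$. More precisely, $A_b$ commutes with $\eta^2A_b\simeq \tfrac12P^2+\tfrac{1}{8\sigma^4}X^2$. So we measure the state in the equivalent norm $\sum_{j\le n}\|(1+\eta^2A_b)^{j/2}\psi\|$ plus symmetrized words. These are equivalent to $\|\cdot\|_{n,\eta,b}$ with constants depending only on $n,\sigma$. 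In that norm the mixer term gives zero contribution to the derivative. Only commutators with $s\,w_\eta$ remain, and these lower the word length by one with bounded coefficients.

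We then obtain $\tfrac{d}{dt}\tilde N_n\le C\,\tilde N_{n-1}$, with $\tilde N_0=1$ by unitarity. Induction gives $\tilde N_n(t)\le \tilde N_n(0)+C\int_0^t\tilde N_{n-1}\le C_n(1+t)^n$, and $\tilde N_n(0)=O(1)$ because $\|\phi_b\|_{n,\eta,b}\le\|\phi_b\|_{n,1,b}$ is $b$-independent for $\eta\le1$. The main obstacle is making the "mixer contributes nothing" step rigorous for words of mixed order rather than powers of $\eta^2A_b$. Our approach is to use $\|(1+\eta^2A_b)^{n/2}f\|$ as the norm, which is exactly conserved by the mixer. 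We then prove its equivalence with $\|\cdot\|_{n,\eta,b}$ by expanding $A_b$ in $X,P$ and applying the standard oscillator (Hermite) norm equivalence in the rescaled variables.
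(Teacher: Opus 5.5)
Your argument is correct in substance, but it takes a different route from the paper.

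\textbf{The two routes.} The paper does not change the norm. It passes to the interaction picture $\chi_b(t)=e^{i\alpha(t)A_b}\psi_b^\eta(t)$, with $\alpha(t)=\int_0^t(1-s(v))\,dv$, so the mixer disappears from the generator. It then uses the explicit phase-space rotation of $(X_{\eta,b},P_\eta)$ under $e^{i\alpha A_b}$ to get $\|e^{-i\alpha A_b}f\|_{n,\eta,b}\le C_n\|f\|_{n,\eta,b}$ uniformly in $\alpha$. With that in hand, it runs the triangular inequality $F_n(t)\le F_n(0)+C_n\int_0^tF_{n-1}$ directly in the word norm, with $[D,B_b(t)]$ losing one factor exactly as in your potential estimate. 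The same interaction-picture integral equation, with $B_b(t)$ bounded on every weighted space, supplies the Schwartz regularity.

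You instead stay in the Schr\"odinger picture and measure the state in $\|(1+\eta^2A_b)^{n/2}\cdot\|_2$, which the mixer preserves exactly whatever the coefficient $1-s(t)$. Both proofs rest on the same two facts: the mixer is a phase-space rotation, and $\eta^m\|w_\eta^{(m)}\|_\infty\le C_m$. Yours buys exact conservation and an explicit initial value, since $\phi_b$ is the ground state of $A_b$ and $\|(1+\eta^2A_b)^{n/2}\phi_b\|_2=(1+\eta^2/(4\sigma^2))^{n/2}$. The paper's avoids any norm equivalence and treats all $n$ at once.

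\textbf{Details to make explicit.} First, the norm equivalence is uniform in $\eta$ and $b$ because $X_{\eta,b}$ and $P_\eta$ are exactly $\eta$ times the $\eta$-independent oscillator variables. This gives $\|Df\|_2\le C\eta^{|D|}\|(1+A_b)^{|D|/2}f\|_2\le C\|(1+\eta^2A_b)^{|D|/2}f\|_2$ for $\eta\le1$. Only this direction is needed, given the explicit initial value.

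Second, and more importantly, for odd $n$ the operator $(1+\eta^2A_b)^{n/2}$ is not a polynomial in $X_{\eta,b},P_\eta$. So your claim that its commutator with $w_\eta$ lowers the word length by one is not automatic. The easy repair, writing $K=\eta^2A_b$, is:
\begin{itemize}
\item differentiate only at even orders, where $\frac{d}{dt}\|(1+K)^m\psi\|_2\le C\|\psi\|_{2m-1,\eta,b}$;
\item handle the odd order by $\|(1+K)^{m-1/2}\psi\|_2^2\le\|(1+K)^{m-1}\psi\|_2\,\|(1+K)^m\psi\|_2$.
\end{itemize}
This yields $\frac{d}{dt}\|(1+K)^m\psi\|_2^{1/2}\le C\|(1+K)^{m-1}\psi\|_2^{1/2}$. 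That closes the induction with the sharp power $(1+T)^{2m}$, and the same interpolation gives $(1+T)^{2m-1}$ at odd orders.

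Third, differentiating $\|(1+K)^{n/2}\psi(t)\|_2^2$ presupposes that the propagator preserves the corresponding domain. Your appeal to a core and an approximation argument is vague here; the paper's integral-equation argument is the cleaner way to supply it.
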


\begin{proof}
Let $\mathcal X_{n,\eta,b}$ be the completion of $\mathcal S(\RR)$ in the norm \eqref{eq:pf-weighted-norm}. The Gaussian smoothing gives
$\eta^j\|w_\eta^{(j)}\|_\infty\le\sqrt{2^j j!}$ for $j\ge0$.
Together with
\begin{equation}
A_b=\frac{P_\eta^2}{2\eta^2}+\frac{X_{\eta,b}^2}{8\sigma^4\eta^2},
\qquad [X_{\eta,b},w_\eta]=0,
\qquad [P_\eta,w_\eta]=-i\eta w_\eta',
\end{equation}
the product rule therefore gives
\begin{equation}
\|w_\eta f\|_{n,\eta,b}\le C_n\|f\|_{n,\eta,b},
\qquad
\|A_b f\|_{n,\eta,b}\le C_n\eta^{-2}\|f\|_{n+2,\eta,b}.
\label{eq:pf-domain-bounds}
\end{equation}
Harmonic evolution mixes position and momentum through bounded linear
combinations:
\begin{align}
e^{i\alpha A_b}X_{\eta,b}e^{-i\alpha A_b}
&=X_{\eta,b}\cos(\alpha/(2\sigma^2))
  +2\sigma^2P_\eta\sin(\alpha/(2\sigma^2)),
\nonumber\\
e^{i\alpha A_b}P_\eta e^{-i\alpha A_b}
&=P_\eta\cos(\alpha/(2\sigma^2))
  -\frac{X_{\eta,b}}{2\sigma^2}\sin(\alpha/(2\sigma^2)).
\end{align}
It follows that $\|e^{-i\alpha A_b}f\|_{n,\eta,b}
\le C_n\|f\|_{n,\eta,b}$ uniformly in real $\alpha$.
The product rule similarly bounds $e^{-i\gamma w_\eta}$ in these norms
for bounded $|\gamma|$. Both evolutions preserve $\mathcal S(\RR)$,
with bounds uniform in $b\in\RR$ and $0<\eta\le1$.

For the exact evolution, factor out the harmonic evolution by writing
\begin{equation}
\chi_b(t)=e^{i\alpha(t)A_b}\psi_b^\eta(t),
\qquad \alpha(t)=\int_0^t(1-s(v))\,dv,
\qquad i\partial_t\chi_b(t)=B_b(t)\chi_b(t),
\end{equation}
where $B_b(t)=s(t)e^{i\alpha(t)A_b}w_\eta e^{-i\alpha(t)A_b}$.
On every $\mathcal X_{n,\eta,b}$, $B_b(t)$ is uniformly bounded and
$B_b(t)f$ is continuous in the weighted norm for each $f$ in that space.
The integral equation
\[
\chi_b(t)=\phi_b-i\int_0^tB_b(v)\chi_b(v)\,dv
\]
has a unique continuous solution in each space. Uniqueness in $L^2$
identifies these solutions, so the evolved state remains in
$\mathcal S(\RR)$. For the quantitative bound, commute an ordered
product past $w_\eta$.
Every nonzero term removes at least one momentum factor and introduces
only bounded scaled derivatives of $w_\eta$. Harmonic conjugation preserves
the number of factors, so
\begin{equation}
\|[D,B_b(t)]f\|_2\le C_n\|f\|_{n-1,\eta,b},
\qquad D\in\mathcal D_n,\quad n\ge1.
\label{eq:pf-commutator-bound}
\end{equation}
The equation
$i\partial_t(D\chi_b)=B_b(t)D\chi_b+[D,B_b(t)]\chi_b$
and unitarity of the evolution generated by $B_b(t)$ then imply
\begin{equation}
F_n(t)\le F_n(0)+C_n\int_0^tF_{n-1}(v)\,dv,
\qquad F_n(t)=\|\chi_b(t)\|_{n,\eta,b}.
\end{equation}
Here $F_0(t)=1$, and the Gaussian initial norms are uniformly bounded for $0<\eta\le1$. Induction gives polynomial growth in $t$; transforming back to $\psi_b^\eta$ proves \eqref{eq:pf-moment-bound}.
\end{proof}

\begin{lemma}[Error of one product-formula step]
\label{lem:pf-one-step-error}
For the schedule and parameter choices above, put $\Lambda=1+\eta^{-2}$. For every even $k\ge2$, there are constants $C_k,c_k>0$ such that
\begin{align}
\bigl\|(S_{k,b}(t+h,t)-U_b^\eta(t+h,t))\psi_b^\eta(t)\bigr\|_2
&\le\mathcal E_k h^{k+1},
&
\mathcal E_k&=C_k\Lambda^{k+1}(1+T)^{2k+2},
\label{eq:pf-local-error}
\end{align}
whenever $0\le t\le t+h\le T$ and $h\Lambda\le c_k$. The constants are independent of $b,d,\eta,M,T$.
\end{lemma}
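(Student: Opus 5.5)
The plan is to compare $S_{k,b}$ and $U_b^\eta$ through a Taylor expansion in $h$ with integral remainder, applied to the state $\psi_b^\eta(t)$ rather than in operator norm, since $A_b$ is unbounded. All unbounded factors are absorbed into the weighted norms \eqref{eq:pf-weighted-norm}. Lemma~\ref{lem:pf-state-control} then converts those norms into powers of $(1+T)$.

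\textbf{Step 1: Taylor expansions with controlled remainders.} Write $H(\tau)=(1-s(\tau))A_b+s(\tau)w_\eta$. Both $U_b^\eta(t+h,t)$ and $S_{k,b}(t+h,t)$ are expanded to order $k$ in $h$, with remainder of order $k+1$.

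For the exact propagator, iterate the Duhamel/Dyson expansion $k$ times. The remainder is a $(k+1)$-fold time-ordered integral of products of $H(\tau_j)$ and $U_b^\eta$ factors. Its Taylor coefficients involve derivatives $s^{(j)}$. These are bounded by $C\sqrt{M+1}\cdot(\text{poly in } M)/T^j$ from the proof of Corollary~\ref{cor:rsw-scheduled-spectral-data}, because $s(\tau)=s_M(\tau/T)$. This is uniformly bounded since $M=O(\ell)\lesssim T$; we absorb such factors into $(1+T)^{O(k)}$.

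For the product formula, each elementary factor $e^{-i c h(1-s)A_b}$ or $e^{-ich s w_\eta}$ is expanded to order $k$. Its remainder is written as an integral of $(h A_b)^{k+1}$ or $(hw_\eta)^{k+1}$, sandwiched between the unitaries themselves. The Suzuki order condition, via Wiebe--Berry--H\o yer--Sanders \cite{WiebeBerryHoyerSanders2010} as stated before the lemma, guarantees that all terms through order $h^k$ cancel between the two expansions. Only remainders remain, each a word of total order $k+1$ in $h$.

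\textbf{Step 2: bounding each remainder word on $\psi_b^\eta(t)$.} Each word is a product of at most $O_k(1)$ factors: bounded-time propagators (harmonic, phase, or exact), together with $k+1$ copies of $A_b$ or $w_\eta$, possibly differentiated in time through $s$. The factors are applied to $\psi_b^\eta(t)$.

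Using \eqref{eq:pf-domain-bounds} repeatedly, each $A_b$ costs a factor $\eta^{-2}\le\Lambda$ and raises the norm index by $2$, while $w_\eta$ costs $O(1)$. The propagators are bounded on $\mathcal X_{n,\eta,b}$:
\begin{itemize}
\item for the harmonic and phase exponentials, by the conjugation formulas in the proof of Lemma~\ref{lem:pf-state-control};
\item for $U_b^\eta$ over a short interval, by the same Gronwall argument, giving a constant uniform in $h\le 1$.
\end{itemize}
Hence each word is bounded by $C_k h^{k+1}\Lambda^{k+1}\|\psi_b^\eta(t)\|_{2k+2,\eta,b}$. By \eqref{eq:pf-moment-bound} this is at most $C_k h^{k+1}\Lambda^{k+1}(1+T)^{2k+2}$, which is the claimed $\mathcal E_k h^{k+1}$.

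The condition $h\Lambda\le c_k$ is used to sum the remainder contributions and to keep the sub-step norms uniformly bounded. This matters because the backward middle substep produces coefficients $|1-4p_k|$ of size $O(1)$.

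\textbf{Main obstacle.} The delicate point is the bookkeeping of the time dependence. Cancellation through order $k$ holds only as a formal identity among nested commutators of $A_b$, $w_\eta$ and their $s$-derivatives. We must verify that every term surviving to order $k+1$ is still of the form analyzed in Step 2. In particular, commutators such as $[A_b,w_\eta]$ must not be estimated naively. They involve $w_\eta'$ and a momentum factor, costing $\eta^{-1}\cdot\eta^{-1}$, which is still within $\Lambda$ per $A_b$.

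The first thing to try is to avoid explicit commutator expansions entirely. Instead, compare the two Taylor polynomials as polynomials in $h$ acting on Schwartz vectors. Then invoke the order condition only to conclude that their coefficients agree, and bound the remainders by Taylor's theorem with the norm estimates above. If necessary, we will also track the weighted-norm index growth, roughly $2$ per order, to confirm that it stays at $2k+2$.
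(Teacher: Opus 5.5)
Your overall plan, and specifically the fallback in your final paragraph, is the paper's proof. It applies Taylor's formula with integral remainder to $\Delta_{t,b}(v)=(S_{k,b}(t+v,t)-U_b^\eta(t+v,t))\psi_b^\eta(t)$, uses the Suzuki identities on $\mathcal S(\RR)$ to get $\Delta_{t,b}^{(j)}(0)=0$ for $j\le k$, and bounds $\Delta_{t,b}^{(k+1)}$ via $\|\partial_h^jS_{k,b}(t+h,t)f\|_2\le C_k\Lambda^j\|f\|_{2j,\eta,b}$, the differentiated Schr\"odinger equation, and \eqref{eq:pf-moment-bound}.

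Use that route rather than multiplying out factor-by-factor order-$k$ expansions as in Step~1. Done literally, that produces words with more than $k+1$ unbounded factors. Their larger weighted-norm indices cost extra powers of $1+T$ in \eqref{eq:pf-moment-bound}, which the condition $h\Lambda\le c_k$ cannot absorb. Also drop the ``absorb into $(1+T)^{O(k)}$'' remark. The schedule derivatives are genuinely $O_k(1)$ (the paper shows $\|\partial_t^js\|_\infty\le((4M+2)/T)^j\le1$), and any such absorption would spoil the exponent $2k+2$.
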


\begin{proof}
First, bound the schedule derivatives. With $n_M=2M+1$,
the polynomial in \eqref{eq:beta-schedule} can be written as
\begin{equation}
\frac{s_M(u)}{s_f}
=\sum_{j=M+1}^{n_M}\binom{n_M}{j}u^j(1-u)^{n_M-j}.
\label{eq:pf-schedule-polynomial}
\end{equation}
Indeed, differentiating the sum gives
$n_M!\,u^M(1-u)^M/(M!)^2$, the derivative of the normalized integral,
and both expressions vanish at $u=0$. Its coefficients in the full
sum from $j=0$ to $n_M$ are zero or one. Each derivative replaces
them by adjacent differences and introduces a factor no larger than
$n_M$. After $j$ derivatives their magnitudes are at most $2^j$,
while the remaining basis polynomials are nonnegative and sum to one
on $[0,1]$. Derivatives above degree $n_M$ vanish. Thus
\begin{equation}
\|\partial_t^j s\|_\infty
\le\left(\frac{4M+2}{T}\right)^j\le1,
\qquad j\ge1.
\label{eq:pf-schedule-derivatives}
\end{equation}
The last inequality follows from
$T=eC_{\mathrm{ad}}(M+1)\nu^{-6}N$,
$C_{\mathrm{ad}},N\ge1$, and $\nu\le1/2$, which give
$(4M+2)/T\le4\nu^6/(eC_{\mathrm{ad}}N)\le1/(16e)$.

Next, apply Taylor's theorem to the error on the evolved state.
Every phase in $S_{k,b}(t+h,t)$ has an
angle of the form $\lambda h\,c(t+\xi h)$, with $c=s$ or $1-s$
and fixed coefficients $\lambda,\xi$ determined by the recursion.
Equation~\eqref{eq:pf-schedule-derivatives} bounds the angle and
its first $k+1$ derivatives by constants depending only on $k$.
Differentiating a phase $j$ times introduces at most $j$ factors of
$A_b$ or $w_\eta$. Equations~\eqref{eq:pf-domain-bounds} and the
phase bounds consequently give, for $1\le j\le k+1$,
\begin{align}
\|\partial_h^j S_{k,b}(t+h,t)f\|_2
&\le C_k\Lambda^j\|f\|_{2j,\eta,b},
\nonumber\\
\|\partial_h^j\psi_b^\eta(t+h)\|_2
&\le C_k\Lambda^j\|\psi_b^\eta(t+h)\|_{2j,\eta,b}.
\label{eq:pf-step-derivatives}
\end{align}
The second inequality follows by differentiating the Schr\"odinger
equation: each Hamiltonian factor requires at most two additional
position or momentum factors and one factor of $\Lambda$.
The regularity established in Lemma~\ref{lem:pf-state-control}
justifies these derivatives in $L^2$. The local error vector
\[
\Delta_{t,b}(v)
=\bigl(S_{k,b}(t+v,t)-U_b^\eta(t+v,t)\bigr)\psi_b^\eta(t)
\]
is $k+1$ times continuously differentiable on $[0,h]$, and
\eqref{eq:pf-moment-bound} gives
\[
\sup_{0\le v\le h}\|\Delta_{t,b}^{(k+1)}(v)\|_2
\le C_k\Lambda^{k+1}(1+T)^{2k+2}.
\]
The Taylor coefficients are finite expressions in the Hamiltonian
and its derivatives acting on $\mathcal S(\RR)$, so the Suzuki
identities apply and give $\Delta_{t,b}^{(j)}(0)=0$ for
$0\le j\le k$. The integral remainder
\[
\Delta_{t,b}(h)
=\frac1{k!}\int_0^h(h-v)^k\Delta_{t,b}^{(k+1)}(v)\,dv
\]
proves \eqref{eq:pf-local-error}, after absorbing the remainder constant into $C_k$.
\end{proof}

\begin{thm}[Subpolynomial query complexity]
\label{thm:rsw-subpolynomial-queries}
For every fixed $\varepsilon\in(0,1/3)$, a product formula initialized in
the known Gaussian ground state $\ket{\sigma}$ reaches $V_a$
with probability at least $1-\varepsilon$ using
\begin{equation}
Q(d)=d^{o(1)}
\end{equation}
phase queries to $W^\eta$, for a suitable smoothing width and
schedule. Its coefficients depend only on $d$, $\varepsilon$, and the
fixed model parameters, uniformly in the unknown rotation and allowed
shifts.
\end{thm}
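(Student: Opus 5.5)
The plan is to combine three ingredients. Theorem~\ref{thm:rsw-harmonic-runtime} gives success probability $1-\varepsilon/2$ for the exact smoothed evolution. Lemma~\ref{lem:pf-one-step-error} gives the local product-formula error on the true evolved state. The factorization \eqref{eq:pf-global-oracle-identity} reduces the $d$-dimensional error to a sum of one-dimensional errors.

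\textbf{Step 1: global error from local error.} Divide $[0,T]$ into $r$ equal steps of size $h=T/r$ and apply $S_{k,b}$ on each. Telescope $\prod S-U=\sum_j \prod_{i>j}S\,(S_j-U_j)\,U(t_j,0)$. Since the $S$ factors are unitary, the one-dimensional final-state error is at most $r\,\mathcal E_k h^{k+1}=\mathcal E_k T^{k+1}/r^k$. This holds provided $h\Lambda\le c_k$, because the local bound is only needed on the exact trajectory $\psi_b^\eta(t_j)$.

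\textbf{Step 2: pass to $d$ dimensions.} In hidden coordinates the global product formula is a tensor product of identical-coefficient one-dimensional formulas. The exact state is also a product. For products of normalized vectors, $\|\otimes f_j-\otimes g_j\|\le\sum_j\|f_j-g_j\|$, so the $d$-dimensional error is at most $d\,\mathcal E_kT^{k+1}/r^k$. Using \eqref{eq:rsw-probability-transfer} with the projector onto $V_a$, it suffices to make this at most $\varepsilon/4$. The success probability is then at least $1-\varepsilon$.

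\textbf{Step 3: choose $r$ and $k$.} Take
\[
r\ge\max\Bigl\{\bigl(4d\,\mathcal E_kT^{k+1}/\varepsilon\bigr)^{1/k},\;T\Lambda/c_k\Bigr\}.
\]
By \eqref{eq:pf-smoothed-parameters}, $\Lambda=O(\ell^4)$, $T=O(\ell^8)$, and $\mathcal E_k=C_k\Lambda^{k+1}(1+T)^{2k+2}$. Hence
\[
r\le C_k'\,(d/\varepsilon)^{1/k}\,\ell^{O(1)},
\]
where the $\mathrm{polylog}$ exponent is bounded independently of $k$ (roughly $4+24$ plus $O(1/k)$). The number of queries is $Q=m_kr\le 5^{k/2}C_k'(d/\varepsilon)^{1/k}\mathrm{polylog}(d)$. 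For fixed $k$ this is $d^{1/k+o(1)}$. To get $d^{o(1)}$, let $k=k(d)\to\infty$ slowly, e.g.\ $k\sim\sqrt{\log d}$. Then the factors $5^{k/2}$ and $d^{1/k}$ are both $e^{O(\sqrt{\log d})}=d^{o(1)}$, provided the constant $C_k'$ grows at most like $e^{O(k\log k)}$.

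\textbf{Main obstacle.} The hard part is this last requirement: tracking the $k$-dependence of $C_k$ and $c_k$ in Lemmas~\ref{lem:pf-state-control} and~\ref{lem:pf-one-step-error}. Those lemmas only assert existence of constants.

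The first approach I would try is to redo the derivative counting explicitly. The Suzuki recursion has $5^{k/2}$ exponentials with coefficients bounded by $1$. Differentiating $j\le k+1$ times distributes derivatives over these factors, giving multinomial counts at most $(5^{k/2})^{k+1}$. The weighted norms contribute factors $\sqrt{2^jj!}$ from the smoothing bounds, and the moment constants $C_n$ grow like $n!C^n$ from the induction in Lemma~\ref{lem:pf-state-control}. Altogether $C_k\le e^{O(k^2)}$. After the $k$-th root this becomes $e^{O(k)}$, and $c_k^{-1}\le e^{O(k)}$ similarly. With $k\sim\sqrt{\log d}$, $e^{O(k)}=d^{o(1)}$ still holds, which closes the argument.

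If a particular constant grows faster, say $e^{O(k^3)}$, I would instead choose $k\sim(\log d)^{1/3}$ or smaller. Any growth of the form $e^{k^{O(1)}}$ still permits some $k(d)\to\infty$ for which every factor is $d^{o(1)}$.
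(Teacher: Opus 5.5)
Your Steps 1 and 2 match the paper's argument: telescoping with unitarity under $r\ge\Lambda T/c_k$, the tensor-product bound $d\,\mathcal E_kT^{k+1}/r^k$, the transfer \eqref{eq:rsw-probability-transfer}, and the count $m_kr$. Together these give $Q_k(d)\le A_kd^{1/k}\ell^{28+28/k}$ for each fixed even $k$. You diverge only in Step~3, and the obstacle you flag there is unnecessary.

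Your closing remark holds for \emph{arbitrary} growth of the constants, not just $e^{k^{O(1)}}$. Since the claim is only $d^{o(1)}$, the paper absorbs $m_k$, $c_k$, $C_k$ and $\varepsilon$ into one finite constant $A_k\ge3$ and takes $k(d)=\max\{k\ \text{even}:2^{k^2}A_k^{2k}\le d\}$. Every fixed even $k$ eventually qualifies, so $k(d)\to\infty$. The threshold itself gives $A_{k(d)}\le d^{1/(2k(d))}$, hence $Q\le d^{3/(2k(d))}\ell^{42}$ and $\log Q/\log d\to0$. The only extra input is that the $A_k$ admit computable upper bounds, so that $k(d)$ is a definite function of $d$ and the schedule stays independent of the unknown rotation and shift.

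Your route instead fixes $k\sim\sqrt{\log d}$. It therefore needs quantitative control such as $C_k^{1/k},c_k^{-1}=e^{O(k)}$, which Lemmas~\ref{lem:pf-state-control} and~\ref{lem:pf-one-step-error} do not supply. Your derivative-counting sketch makes this plausible but does not prove it, so as written Step~3 is conditional.

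Once those bounds are established, your approach buys an explicit rate $Q=e^{O(\sqrt{\log d})}$ in place of an unspecified $d^{o(1)}$. The paper's diagonal choice of $k(d)$ instead gives a complete proof directly from the lemmas as stated, with no $k$-dependence to track.
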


\begin{proof}
Lemma~\ref{lem:pf-one-step-error} bounds the error of one step by $\mathcal E_k h^{k+1}$ whenever $h\Lambda\le c_k$. Let $\mathcal S_{k,r,b}$ compose $r$ steps of length $h=T/r$. Inserting one exact step at a time and using unitarity to prevent error amplification gives
\begin{equation}
\sup_b\|\mathcal S_{k,r,b}\phi_b-\psi_b^\eta\|_2
\le\delta_r,
\qquad
\delta_r=\mathcal E_k\frac{T^{k+1}}{r^k},
\qquad r\ge\Lambda T/c_k.
\label{eq:pf-factor-error}
\end{equation}
For the coordinate errors, write $u_j=\psi_j^\eta$ and
$v_j=\mathcal S_{k,r,s_j}\phi_j$. Expanding the tensor-product
difference gives $d$ terms, each with one factor $v_j-u_j$
and otherwise normalized factors. Thus
\begin{align}
\left\|\bigotimes_{j=1}^d v_j
-\bigotimes_{j=1}^d u_j\right\|_2
&\le\sum_{j=1}^d\|v_j-u_j\|_2
\le d\delta_r.
\label{eq:pf-product-state-error}
\end{align}
Choosing $r\ge(4d\,\mathcal E_kT^{k+1}/\varepsilon)^{1/k}$
makes the full-state error at most $\varepsilon/4$.
Equation~\eqref{eq:rsw-probability-transfer} bounds the resulting
change in success probability by $\varepsilon/2$. Since the exact
smoothed evolution succeeds with probability at least
$1-\varepsilon/2$, the product formula succeeds with probability
at least $1-\varepsilon$. Including the step-size condition, its
query count is bounded by
\begin{equation}
Q_k(d)\le m_k\left\lceil
\max\left\{\frac{\Lambda T}{c_k},
\left(\frac{4d\,\mathcal E_kT^{k+1}}{\varepsilon}\right)^{1/k}
\right\}\right\rceil.
\label{eq:pf-query-count}
\end{equation}
Equations~\eqref{eq:pf-smoothed-parameters} and~\eqref{eq:pf-local-error} give
$\Lambda=O(\ell^4)$ and $\mathcal E_k=O_k(\ell^{20k+20})$. The first term in the maximum
is $O_k(\ell^{12})$, and the second is
$O_k(d^{1/k}\ell^{28+28/k})$. Consequently,
\begin{equation}
Q_k(d)\le A_kd^{1/k}\ell^{28+28/k},
\label{eq:pf-fixed-order-bound}
\end{equation}
where $A_k\ge3$ includes the stage count, step-size restriction,
fixed error tolerance, and constants in the smoothing and runtime
estimates, but is independent of $d$.

Increasing $k$ reduces the power of $d$ but increases $A_k$.
We can bound $A_k$ by computable integers: fix integer bounds for
the runtime constants, fixed model parameters, their required
reciprocals, and $\varepsilon^{-1}$. The Gaussian derivative bounds
and bounds $C_\sigma^j\sqrt{j!}$ for ordered position and momentum
products on $\phi_b$ then bound the remaining constants through
finite product-rule, commutator, and Suzuki-recursion calculations.
Enlarge $A_k$ to cover all $d\ge1$, and increase $k$ only when $d$
can absorb its cost.
For sufficiently large $d$, set
\begin{equation}
k(d)=\max\left\{k\ge2:\ k\text{ is even},\quad
2^{k^2}A_k^{2k}\le d\right\},
\label{eq:pf-order-choice}
\end{equation}
using order two when the set is empty. The set is finite, and every
fixed even order eventually qualifies, so $k(d)\to\infty$. The
threshold also gives $A_{k(d)}\le d^{1/(2k(d))}$. Since
$28+28/k\le42$ for $k\ge2$, \eqref{eq:pf-fixed-order-bound} yields
\begin{equation}
Q_{k(d)}(d)\le d^{3/(2k(d))}\ell^{42},
\qquad
\frac{\log Q_{k(d)}(d)}{\log d}
\le\frac{3}{2k(d)}+42\frac{\log\ell}{\log d}
\longrightarrow0.
\end{equation}
This proves $Q(d)=d^{o(1)}$.
\end{proof}

The preceding theorem guarantees localization in $V_a$. Averaging
independently measured output points strengthens this to exact
minimization of the sharp objective while preserving the subpolynomial
query bound.

\begin{prop}[Exact minimization by averaging]
\label{prop:rsw-averaging}
Under the assumptions and phase-oracle model of
Theorem~\ref{thm:rsw-subpolynomial-queries}, for every fixed
$\varepsilon\in(0,1/3)$, suitably chosen parameters and
$m=O(\log(d/\varepsilon))$ independent runs of the preparation
procedure yield measured points $\xv^{(1)},\ldots,\xv^{(m)}$
whose arithmetic mean satisfies
\[
\Pr\!\left[W(\overline{\xv})=-d\right]\ge1-\varepsilon,
\qquad
\overline{\xv}=\frac1m\sum_{r=1}^m\xv^{(r)}.
\]
The total number of phase queries to $W^\eta$ is $d^{o(1)}$.
All parameters depend only on $d$, $\varepsilon$, and the fixed
model parameters, uniformly in the unknown rotation and allowed shifts.
\end{prop}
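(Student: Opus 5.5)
The plan is to run the product-formula preparation of Theorem~\ref{thm:rsw-subpolynomial-queries} $m$ times independently, with its target failure probability set to a small constant $\varepsilon_0$, and to measure the position each time. I then work in the hidden coordinates $\zv=R\xv+\sv$. Because $\mathcal R$ maps the prepared state to a product $\bigotimes_j v_j$ (up to the product-formula error, which we already absorbed), each measured point $\zv^{(r)}$ has independent coordinates. The map $\xv\mapsto R\xv+\sv$ is affine, so $\overline{\zv}=R\overline{\xv}+\sv$. Hence $W(\overline\xv)=-d$ iff $|\overline z_j|<\delta$ for every $j$. The problem therefore reduces to $d$ one-dimensional concentration statements and a union bound.

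For a fixed coordinate $j$, the key input is that the final one-dimensional ground state $\varphi^\eta_{s_j,f}$ is localized \emph{inside} the well, not merely inside $[-a,a]$. From the proof of Lemma~\ref{lem:deep-well}, the mass outside $(-\delta,\delta)$ is $O(\nu)$: the tail decays on scale $\nu$ and $\|\varphi\|_\infty=O(1)$. This also gives exponential tails, $\int_{|z|>\delta+t}|\varphi|^2\le C\nu e^{-\sqrt2 t/\nu}$. I would use the symmetry-free estimate $|\EE z|\le \delta'<\delta$, with $\delta'=\delta(1-c)+O(\nu)$ for the inside part. However, the inside part of the ground state is not symmetric (the oscillator is centered at $s_j$), so its mean is only guaranteed to lie in $(-\delta,\delta)$, not at $0$.

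The main obstacle is therefore to show that the mean stays a fixed distance from the endpoints. I will first try to bound $\EE_{\varphi}z$ uniformly by $\delta-c_1$ with $c_1>0$ independent of $\nu$ and $b\in[-B,B]$. As $\nu\to0$ (i.e.\ $q\to\infty$), the Dirichlet comparison in Lemma~\ref{lem:gap-all-s} shows $\varphi\to$ the Dirichlet ground state of $A_b$ on $(-\delta,\delta)$ in $H^1$. That state is positive, vanishes at $\pm\delta$, and by continuity in $b$ on the compact $[-B,B]$ has mean in $[-\delta+2c_1,\delta-2c_1]$. The evolved state $\psi_j$ differs from $e^{i\vartheta}\varphi$ by $O(\varepsilon/d)$ in $L^2$. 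Its position distribution has sub-exponential tails with scale $O(1)$, as follows from Lemma~\ref{lem:pf-state-control}-type moment bounds, so its mean and variance are within $c_1/2$ and $O(1)$ of those of $\varphi$. If needed, I will shrink $\nu$ and $\varepsilon$ by constant factors.

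Given the per-coordinate variables $z_j^{(r)}$ with mean $\mu_j\in[-\delta+c_1,\delta-c_1]$ and sub-exponential tails with $O(1)$ Orlicz norm, Bernstein's inequality gives $\Pr[|\overline z_j-\mu_j|\ge c_1]\le 2e^{-cm}$. Taking $m=C\log(d/\varepsilon)$ makes this at most $\varepsilon/(2d)$. The union bound over $j$ then gives $\Pr[W(\overline\xv)=-d]\ge1-\varepsilon$, after accounting for the total-variation error of the prepared product state relative to the exact product, which is at most $\varepsilon/2$ by \eqref{eq:pf-product-state-error} and \eqref{eq:rsw-probability-transfer}. The total query count is $m\cdot Q(d)=O(\log(d/\varepsilon))\,d^{o(1)}=d^{o(1)}$. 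All parameter choices depend only on $d,\varepsilon$ and the fixed constants, since the bounds above are uniform in $R$ and $|s_j|\le B$.
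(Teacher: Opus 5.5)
\textbf{Gap 1: the error budget across the $m$ runs.} You run each preparation at a constant target error $\varepsilon_0$. At the end you charge only the single-run total-variation error coming from \eqref{eq:pf-product-state-error} and \eqref{eq:rsw-probability-transfer}.

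But $\overline{\xv}$ depends on $m$ independent preparations. What must be compared is the joint law of $(\xv^{(1)},\ldots,\xv^{(m)})$ with its ideal counterpart. You can only bound that distance by $m$ times the per-run distance, i.e.\ by $m\varepsilon_0=\Theta(\varepsilon_0\log(d/\varepsilon))$, which is not below $\varepsilon$.

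A per-coordinate argument does not rescue this, because the arithmetic mean is not robust. A single run that puts one hidden coordinate far outside $[-a,a]$ can push $\overline z_j$ out of $(-\delta,\delta)$. The probability of such a run somewhere among the $m$ can only be bounded by $m\varepsilon_0$, and you have no control on how far out the coordinate lands.

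The fix, as in the paper, is to run every preparation with accuracy $\varepsilon_{\mathrm{run}}=\varepsilon/(2m)$, so that the joint total-variation error is $m\varepsilon_{\mathrm{run}}=\varepsilon/2$. Consequently the query count is not literally $m\cdot Q(d)$ with $Q(d)$ from Theorem~\ref{thm:rsw-subpolynomial-queries}, since that theorem's constant $A_k$ absorbs a \emph{fixed} tolerance. You must keep the $\varepsilon$-dependence in \eqref{eq:pf-query-count}, which gives $O_k\bigl(m(d/\varepsilon_{\mathrm{run}})^{1/k}\log^{28+28/k}(2d/\varepsilon_{\mathrm{run}})\bigr)$, and then rerun the increasing-order argument. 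The result is still $d^{o(1)}$.

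\textbf{Gap 2: Bernstein needs a law whose tails you control.} You apply it to the evolved coordinate laws, asserting $O(1)$ sub-exponential tails from Lemma~\ref{lem:pf-state-control}. That lemma does not give this:
\begin{itemize}
\item it bounds $\eta^n\|(z-b)^n\psi_b^\eta(t)\|_2$ by $C_n(1+T)^n$, i.e.\ position moments at the polylogarithmic scale $(1+T)/\eta$;
\item it gives no control of $C_n$ in $n$;
\item it holds only for the exact evolution, not for the product-formula output $v_j$.
\end{itemize}
Moreover, $L^2$-closeness to $\varphi$ cannot by itself produce tail decay.

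The clean route, which is the paper's, is to apply Bernstein and the union bound to $m$ i.i.d.\ samples from the ideal endpoint product ground state $|\Phi^\eta|^2$. Its coordinate laws satisfy $\Pr(|Z|>\delta+t)\le C_{\mathrm{loc}}\nu e^{-\sqrt2 t/\nu}$ by the comparison argument of Lemma~\ref{lem:deep-well}. All adiabatic and product-formula errors then go into the total-variation transfer above.

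Your Dirichlet-limit argument for keeping $\EE Z$ away from $\pm\delta$ can work. However, it needs eigenfunction convergence uniformly in $|b|\le B$ and $\eta\le c_\eta\nu^2$, whereas Lemma~\ref{lem:gap-all-s} only compares energies. The paper gets this more directly from the uniform density bound $\|p\|_\infty\le\sqrt{2\bar E}$ supplied by the kinetic energy. That bound puts at most $8\kappa\sqrt{2\bar E}$ of mass within $4\kappa$ of $\pm\delta$, and together with the tail bound it gives $|\EE Z|\le\delta-\kappa_*$ explicitly.
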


\begin{proof}
Let $Z$ have density $p(z)=|\varphi_{b,f}^\eta(z)|^2$.
For $|b|\le B$, $0<\nu\le\nu_0$, and
$0\le\eta\le c_\eta\nu^2$, the uniform energy bound in
Lemma~\ref{lem:rsw-smoothing-spectrum} gives
\[
\|p\|_\infty\le L:=\sqrt{2\bar E}.
\]
Moreover, outside $[-\delta,\delta]$,
$D_\eta\ge1/2$, so the eigenvalue equation gives
$(\varphi_{b,f}^\eta)''\ge
\varphi_{b,f}^\eta/(2\nu^2)$.
The exponential comparison in the proof of
Lemma~\ref{lem:deep-well} therefore yields, for every $t\ge0$,
\[
\Pr(|Z|>\delta+t)
\le C_{\mathrm{loc}}\nu e^{-\sqrt2t/\nu}.
\]
These estimates apply directly to the shifted, smoothed endpoint
ground state, with constants independent of $b$, $\eta$, and $\nu$.

Set
\[
\kappa_*=\min\{\delta/8,(32L)^{-1}\},
\qquad
\nu_*=\min\left\{
\nu_0,\frac{1}{4C_{\mathrm{loc}}},
\sqrt{\frac{\sqrt2\kappa_*}{C_{\mathrm{loc}}}}
\right\}.
\]
For $\nu\le\nu_*$, the density and tail bounds imply
\[
\Pr(|Z|\le\delta-4\kappa_*)
\ge1-8\kappa_*L-C_{\mathrm{loc}}\nu\ge\frac12.
\]
Consequently,
\[
|\amsmathbb EZ|
\le\amsmathbb E|Z|
\le\delta-2\kappa_*+
\int_0^\infty\Pr(|Z|>\delta+t)\,dt
\le\delta-\kappa_*.
\]

Define the normalized endpoint ground state in the original
coordinates by
\[
\Phi^\eta(\xv)
=\prod_{j=1}^d
\varphi_{s_j,f}^\eta\bigl((R\xv+\sv)_j\bigr).
\]
The tail bound also gives a uniform exponential moment.
Thus Bernstein's inequality, followed by a union bound over
coordinates, gives a constant $c>0$, depending only on the fixed
model parameters, such that for $m$ independent samples from
$|\Phi^\eta|^2$,
\[
\Pr\!\left(\|R\overline{\xv}+\sv\|_\infty\ge\delta\right)
\le2d e^{-cm}.
\]
Here
$R\overline{\xv}+\sv
=m^{-1}\sum_{r=1}^m(R\xv^{(r)}+\sv)$,
so averaging requires no knowledge of $R$ or $\sv$.
Choose
\[
m=\left\lceil c^{-1}\log\frac{4d}{\varepsilon}\right\rceil,
\qquad
\varepsilon_{\mathrm{run}}=\frac{\varepsilon}{2m}.
\]
The ideal failure probability is then at most $\varepsilon/2$.

For the actual preparations, use the preceding parameter choices
with error parameter $\varepsilon_{\mathrm{run}}$, replacing
$\nu_0$ by $\nu_*$ in \eqref{eq:nu-choice}, and set
$\eta=c_\eta\nu^2$.
Reducing this fixed cutoff changes only constants in the earlier
bounds.
Let $\Psi_{\mathrm{ad}}^\eta$ and $\Psi_{\mathrm{PF}}^\eta$ denote,
respectively, the final states of the exact smoothed evolution
and its product-formula simulation.
Equation~\eqref{eq:1d-state-bound-HJ} and the parameter choices
in Theorem~\ref{thm:rsw-harmonic-runtime} give per-coordinate
state error at most $\varepsilon_{\mathrm{run}}/(8d)$,
up to phase.
Choosing these phases and telescoping the tensor products,
then using \eqref{eq:pf-product-state-error}, gives
\[
\inf_{\theta\in\RR}
\|\Psi_{\mathrm{ad}}^\eta-e^{i\theta}\Phi^\eta\|_2
\le\frac{\varepsilon_{\mathrm{run}}}{8},
\qquad
\|\Psi_{\mathrm{PF}}^\eta-\Psi_{\mathrm{ad}}^\eta\|_2
\le\frac{\varepsilon_{\mathrm{run}}}{4}.
\]
Both comparisons are in $L^2(\RR^d)$ at the same smoothing
width; the coordinate change is unitary.
The triangle inequality therefore bounds the state error by
$3\varepsilon_{\mathrm{run}}/8$, up to phase.
Position measurement gives total-variation distance at most
$\varepsilon_{\mathrm{run}}$ from the ideal output law.
Across $m$ independent runs, the joint total-variation distance
is at most $m\varepsilon_{\mathrm{run}}=\varepsilon/2$.
Adding the ideal failure probability proves the claimed
sharp-well success guarantee.

Finally, retaining the error dependence in
\eqref{eq:pf-query-count}, the total query count at each fixed
even order $k\ge2$ satisfies
\[
\begin{aligned}
Q_{\mathrm{tot},k}
&=O_k\!\left(
m\left(\frac{d}{\varepsilon_{\mathrm{run}}}\right)^{1/k}
\left[\log\frac{2d}{\varepsilon_{\mathrm{run}}}\right]^{28+28/k}
\right)=O_{k,\varepsilon}\!\left(
d^{1/k}[\log(2d)]^{29+29/k}
\right).
\end{aligned}
\]
Applying the increasing-order argument from the proof of
Theorem~\ref{thm:rsw-subpolynomial-queries}, with the computable
constants appropriate to these bounds, yields
$Q_{\mathrm{tot}}(d)=d^{o(1)}$ for each fixed $\varepsilon$.
\end{proof}

\section{Random-ray tomography for the rotated square well}
\label{app:random-ray-tomography}

In this appendix we give the details behind the rotation-learning analysis used in Section~\ref{subsubsec:learn-rotation}. We consider the aggregate-count oracle
\begin{equation}
W_{R,s}(x)
=
{-}\sum_{j=1}^d
\mathbf 1\{|a_j\cdot x+s_j|<\delta\},
\qquad
x\in\RR^d,
\label{eq:app-rsw-oracle}
\end{equation}
where \(a_1^\top,\ldots,a_d^\top\) are the rows of \(R\in O(d)\), and we assume \(\|s\|_\infty\le B\). The target region is
\begin{equation}
V_d
=
\bigcap_{j=1}^d
\{x:|a_j\cdot x+s_j|<\delta\}.
\end{equation}
Equivalently, \(V_d\) is the rotated cube bounded by the \(2d\) hyperplanes
\begin{equation}
H_{j,\tau}
=
\{x:a_j\cdot x+s_j=\tau\delta\},
\qquad
j=1,\ldots,d,
\qquad
\tau\in\{+1,-1\}.
\label{eq:app-boundary-hyperplanes}
\end{equation}
If the hyperplanes \(H_{j,+}\) and \(H_{j,-}\) are known, then their midpoint is the center hyperplane
\begin{equation}
a_j\cdot x+s_j=0.
\end{equation}
Thus reconstructing all \(2d\) boundary hyperplanes allows one to solve \(Rx+s=0\) and recover the center
\begin{equation}
x_c=-R^\top s\in V_d.
\end{equation}

\subsection{Fixing the radius of the random line scan}

We now justify the geometry behind the random-ray tomography algorithm. Fix an affine line
\[
\ell(t)=r+tv,
\qquad
\|v\|=1,
\qquad
\|r\|\le \rho.
\]
Along this line, the oracle becomes the one-dimensional step function
\begin{equation}
W_{R,s}(r+tv)
=
{-}\sum_{j=1}^d
\mathbf 1\{|a_j\cdot r+t(a_j\cdot v)+s_j|<\delta\}.
\label{eq:app-line-step-function}
\end{equation}
The line crosses \(H_{j,\tau}\) when
\[
a_j\cdot(r+tv)+s_j=\tau\delta.
\]
If \(a_j\cdot v\neq 0\), which holds with probability one for random \(v\), the crossing time is
\begin{equation}
t_{j,\tau}
=
\frac{\tau\delta-s_j-a_j\cdot r}{a_j\cdot v}.
\label{eq:app-crossing-time}
\end{equation}
At this time, the \(j\)-th indicator in \eqref{eq:app-line-step-function} changes value. Thus jumps of the one-dimensional oracle trace reveal intersections with the boundary hyperplanes, up to the resolution of the line scan. We now fix the half-length \(T\) of the scanned interval \(t\in[-T,T]\). In \eqref{eq:app-crossing-time} the numerator is
\(O(\delta+B+\rho)\) while \(|a_j\cdot v|=\Theta(d^{-1/2})\) for a random direction, so
crossings occur at \(|t_{j,\tau}|=\Theta\bigl((\delta+B+\rho)\sqrt d\bigr)\).

\begin{lemma}[Natural scan radius]
\label{lem:app-natural-scan-radius}
Let \(v\sim\mathrm{Unif}(\Sph^{d-1})\), let \(a_j\) be any fixed unit vector, and set
\(
A:=\delta+B+\rho.
\)
If \(
T=C_T A\sqrt d
\)
with \(C_T\ge 2\), then for every \(j\) and \(\tau\in\{\pm1\}\),
\[
\PP\bigl(|t_{j,\tau}|\le T\bigr)
\ge
\frac{3}{16}.
\]
\end{lemma}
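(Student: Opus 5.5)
The plan is to bound the numerator of \eqref{eq:app-crossing-time} deterministically and the denominator probabilistically, so that the claim reduces to a one-dimensional anti-concentration estimate for a single coordinate of a uniform random unit vector.

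For the numerator, note that $|a_j\cdot r|\le\|a_j\|\,\|r\|\le\rho$ and $|s_j|\le B$. Hence $|\tau\delta-s_j-a_j\cdot r|\le A$, and
\[
|t_{j,\tau}|\le \frac{A}{|a_j\cdot v|}.
\]
It therefore suffices to show that
\[
\PP\bigl(|a_j\cdot v|\ge 1/(C_T\sqrt d)\bigr)\ge 3/16.
\]
Since $C_T\ge2$, this in turn follows from
\[
\PP\bigl(|a_j\cdot v|\ge 1/(2\sqrt d)\bigr)\ge 3/16.
\]
By rotation invariance of $\mathrm{Unif}(\Sph^{d-1})$, we may take $a_j=e_1$. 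The quantity to control is then $X=v_1^2$, which satisfies $\EE X=1/d$ and $\EE X^2=3/(d(d+2))$.

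Next, apply the Paley--Zygmund inequality with threshold $\theta=1/4$:
\[
\PP(X\ge \tfrac14\EE X)\ \ge\ (1-\tfrac14)^2\frac{(\EE X)^2}{\EE X^2}
=\frac{9}{16}\cdot\frac{d+2}{3d}\ \ge\ \frac{3}{16}.
\]
The event $X\ge 1/(4d)$ is exactly $|v_1|\ge 1/(2\sqrt d)$, which gives the claim for every $j$ and $\tau$.

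The only step needing care is the fourth moment $\EE v_1^4=3/(d(d+2))$. I would derive it from the Gaussian representation $v=g/\|g\|$, using the independence of $\|g\|$ and $g/\|g\|$: this gives $\EE g_1^4=\EE\|g\|^4\,\EE v_1^4$, that is, $3=d(d+2)\,\EE v_1^4$. A small edge case remains at $d=1$, where $v=\pm e_1$ and the bound holds trivially. The event $a_j\cdot v=0$ has probability zero and can be ignored.
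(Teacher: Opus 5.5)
Your proposal is correct and follows the paper's proof essentially step for step: the deterministic bound $|t_{j,\tau}|\le A/|a_j\cdot v|$, the reduction via $C_T\ge 2$ to $\PP\bigl(|a_j\cdot v|\ge 1/(2\sqrt d)\bigr)$, and Paley--Zygmund with $\theta=1/4$ applied to $X=(a_j\cdot v)^2$ using $\EE X=1/d$ and $\EE X^2=3/(d(d+2))$. Your Gaussian-representation derivation of the fourth moment supplies a step the paper only asserts, and the separate $d=1$ case is unnecessary, since $3/(d(d+2))=1$ is already correct there.
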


\begin{proof}
The numerator in \eqref{eq:app-crossing-time} is uniformly bounded:
\[
|\tau\delta-s_j-a_j\cdot r|
\le
\delta+B+\rho
=
A.
\]
Therefore a sufficient condition for \(|t_{j,\tau}|\le T\) is
\begin{equation}
|a_j\cdot v|\ge \frac{A}{T}.
\label{eq:app-hit-sufficient}
\end{equation}
Since \(v\) is uniform on the unit sphere and \(a_j\) is fixed, \(a_j\cdot v\) has the same distribution as the first coordinate of a uniformly random point on \(\Sph^{d-1}\). Let
\[
X=(a_j\cdot v)^2.
\]
Then
\[
\EE X=\frac1d,
\qquad
\EE X^2=\frac{3}{d(d+2)}.
\]
By the Paley--Zygmund inequality, with \(\theta=1/4\),
\begin{align}
\PP\!\left(X\ge \frac{1}{4d}\right)
&=
\PP\!\left(X\ge \frac14 \EE X\right)  \ge
\left(1-\frac14\right)^2
\frac{(\EE X)^2}{\EE X^2} =
\frac{9}{16}\frac{d+2}{3d}
\ge
\frac{3}{16}.
\end{align}
Equivalently,
\[
\PP\!\left(|a_j\cdot v|\ge \frac{1}{2\sqrt d}\right)
\ge
\frac{3}{16}.
\]
If \(T=C_T A\sqrt d\) with \(C_T\ge 2\), then
\[
\frac{A}{T}
=
\frac{1}{C_T\sqrt d}
\le
\frac{1}{2\sqrt d}.
\]
Combining this with \eqref{eq:app-hit-sufficient} proves the claim.
\end{proof}

\subsection{The random-ray tomography algorithm}

We now state the algorithm in the notation used in the proof. Choose constants \(C_M,C_T,c_\Delta,\rho>0\), with \(C_T\ge2\), and define
\begin{equation}
M=\lceil C_M d\rceil,
\qquad
T=C_T(\delta+B+\rho)\sqrt d,
\qquad
\Delta t=\frac{c_\Delta}{\sqrt d}.
\label{eq:app-ray-parameters}
\end{equation}
For each \(m=1,\ldots,M\), draw \(v_m\sim\mathrm{Unif}(\Sph^{d-1})\) and draw \(r_m\) from a distribution supported in \(\|r_m\|\le\rho\). Define the line
\[
\ell_m(t)=r_m+t v_m.
\]
Query the oracle on the grid
\[
t_q=-T+q\Delta t,
\qquad
q=0,1,\ldots,\left\lceil \frac{2T}{\Delta t}\right\rceil.
\]
Whenever adjacent grid values differ, use the midpoint
\[
\widehat p_{m,q}
=
\ell_m\!\left(\frac{t_q+t_{q+1}}2\right)
\]
as an approximate boundary point. We now prove that \(\Theta(d)\) random lines suffice.
\begin{lemma}[\(\Theta(d)\) lines hit every boundary hyperplane]
\label{lem:app-theta-d-lines}
Let \(C_T\ge2\) and \(M=\lceil C_M d\rceil\). Let \(\epsilon\) denote the probability that the
crossing of a given sheet shares its grid cell with another sheet, and assume
\(\epsilon\le 3/32\), so that each line records a point of \(H_{j,\tau}\) with probability at
least \(p_0=3/16-\epsilon\ge 3/32\). If \(C_M\ge 4/p_0\), then with probability at least
\(1-2d\,e^{-d/2}\) every boundary hyperplane \(H_{j,\tau}\) contributes a recorded boundary
point on at least \(d\), and at most \(M\), of the sampled lines.
\end{lemma}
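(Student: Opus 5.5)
Fix a boundary hyperplane \(H_{j,\tau}\). The plan is to reduce the claim to a Chernoff bound for a sum of independent Bernoulli variables. For each line \(m=1,\ldots,M\), let \(I_m^{(j,\tau)}\) be the indicator that line \(\ell_m\) records a boundary point attributable to \(H_{j,\tau}\). This means the crossing time \(t_{j,\tau}\) lies in \([-T,T]\) and its grid cell contains no other sheet crossing, so the adjacent oracle values differ by exactly one.

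The steps are as follows.

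First, the lines \((v_m,r_m)\) are drawn independently. So for fixed \((j,\tau)\) the indicators \(I_1^{(j,\tau)},\ldots,I_M^{(j,\tau)}\) are i.i.d. Bernoulli.

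Second, I bound their success probability from below. By Lemma~\ref{lem:app-natural-scan-radius}, the crossing lies in the scan window with probability at least \(3/16\). By hypothesis, the crossing shares its cell with another sheet with probability at most \(\epsilon\). A union bound then gives \(\PP(I_m^{(j,\tau)}=1)\ge 3/16-\epsilon=p_0\ge 3/32\).

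Third, the count \(N_{j,\tau}=\sum_m I_m^{(j,\tau)}\) has mean \(\mu\ge p_0M\ge p_0C_Md\ge 4d\), using \(C_M\ge4/p_0\). I apply the multiplicative Chernoff lower-tail bound \(\PP(N\le(1-\theta)\mu)\le e^{-\theta^2\mu/2}\) with \(\theta=3/4\), so that \((1-\theta)\mu\ge d\). This gives
\[
\PP(N_{j,\tau}<d)\le e^{-(9/32)\mu}\le e^{-(9/8)d}\le e^{-d/2}.
\]
If \(\mu\) is only known to be at least \(4d\), I use monotonicity: dominate the count from below by a Binomial\((M,p_0)\), whose mean is exactly \(p_0M\).

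The upper bound \(N_{j,\tau}\le M\) is deterministic. Each line crosses the hyperplane \(H_{j,\tau}\) at most once, since \(a_j\cdot v\neq0\) almost surely, so each line contributes at most one recorded point for that sheet.

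Finally, a union bound over the \(2d\) pairs \((j,\tau)\) gives failure probability at most \(2d\,e^{-d/2}\), which is the claim.

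The main subtlety is the second step: making precise what it means for a recorded midpoint to be "attributable" to \(H_{j,\tau}\). We need the event "crossing in the window and isolated in its grid cell" to guarantee that the adjacent oracle values differ by exactly one. Only then is the midpoint appended to \(\mathcal P\) and associated with the correct sheet.

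I would handle this by noting the following. If the crossing of \(H_{j,\tau}\) is the only crossing inside the cell \([t_q,t_{q+1}]\), then exactly one indicator in \eqref{eq:app-line-step-function} flips across the cell, so the values differ by exactly one. The isolation probability is exactly what \(\epsilon\) controls by assumption, so no further geometric estimate is needed here.

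One should also note that independence across lines is unaffected by this event, because the event for line \(m\) depends only on \((v_m,r_m)\).
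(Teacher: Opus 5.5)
Your proof is correct and follows the paper's argument: per-line success probability at least \(3/16-\epsilon\) from Lemma~\ref{lem:app-natural-scan-radius} plus the isolation hypothesis, a binomial lower-tail Chernoff bound using mean at least \(4d\), the trivial bound \(N_{j,\tau}\le M\), and a union bound over the \(2d\) sheets. The only difference is cosmetic. You take the Chernoff deviation parameter \(3/4\), which gives \(e^{-9d/8}\). The paper uses \(1/2\), comparing \(d\) with half the mean, and obtains exactly \(e^{-d/2}\).
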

\begin{proof}
Fix \(j\in\{1,\ldots,d\}\) and \(\tau\in\{\pm1\}\). Let
\[
N_{j,\tau}
=
\#\{m:\ \ell_m \text{ crosses }H_{j,\tau}\text{ inside }[-T,T]\text{, and no other sheet is
crossed in the same grid cell}\}.
\]
The second requirement is what makes the crossing visible: \(W_{R,s}\) reports only the number
of slabs, so if a cell contains two crossings the recorded jump is \(0\) or \(\pm2\) and no
point is attributed to \(H_{j,\tau}\). By Lemma~\ref{lem:app-natural-scan-radius} a line
crosses \(H_{j,\tau}\) inside the scanned interval with probability at least \(3/16\), and by
hypothesis the crossing shares its cell with another sheet with probability at most
\(\epsilon\le 3/32\), so each line contributes to \(N_{j,\tau}\) with probability at least
\(p_0=3/16-\epsilon\ge 3/32\). Since a line meets a hyperplane at most once, also
\(N_{j,\tau}\le M\).
Therefore \(N_{j,\tau}\) is stochastically lower bounded by a binomial random variable with
\(M\) trials and success probability \(p_0\). In particular,
\[
\EE N_{j,\tau}\ge p_0 M\ge 4d,
\]
using \(C_M\ge 4/p_0\). Then \(d\le \frac12 \EE N_{j,\tau}\), so the Chernoff bound gives
\begin{align}
\PP(N_{j,\tau}<d)
&\le
\PP\!\left(N_{j,\tau}<\frac12\EE N_{j,\tau}\right)  \le
\exp\!\left(-\frac{\EE N_{j,\tau}}8\right)
\le
e^{-d/2}.
\end{align}
Taking a union bound over all \(2d\) boundary hyperplanes,
\[
\PP\!\left(\exists(j,\tau):N_{j,\tau}<d\right)
\le
2d\,e^{-d/2}.
\]
This proves the claim. The hypothesis \(\epsilon\le3/32\) holds once \(c_\Delta\) is small:
the \(2d\) crossing times of a line are spread over a window of width \(2T=\Theta(\sqrt d)\),
so the expected number of other crossings in a cell of width \(\Delta t=c_\Delta/\sqrt d\) is
\(O(c_\Delta)\), uniformly in \(d\).
\end{proof}

From Lemma~\ref{lem:app-theta-d-lines}, the point cloud contains enough generic points to reconstruct the \(2d\) boundary hyperplanes. Pairing approximately parallel hyperplanes separated by approximately \(2\delta\) then recovers the midpoint hyperplanes \(a_j\cdot x+s_j=0\). Solving the resulting linear system gives an estimate \(\widehat x\) of the center \(x_c=-R^\top s\). For sufficiently small reconstruction error, or after a final verification query, this yields a point with \({W_{R,s}(\widehat x)=-d}\).
\subsection{Coverage threshold for the reconstruction stage}
\label{subsec:app-random-ray-lower}
The scan of \eqref{eq:app-ray-parameters} outputs a set \(\widehat{\mathcal P}=\{\widehat p_{m,q}\}\) of recorded boundary points, each an approximate location where a scanned line crossed one of the \(2d\) hyperplanes \(H_{j,\tau}\). Since only cells whose two endpoint values differ by exactly one are recorded, every point of \(\widehat{\mathcal P}\) is attributable to a single sheet, and \(|\widehat{\mathcal P}|=\sum_{j,\tau}N_{j,\tau}\) in the notation of Lemma~\ref{lem:app-theta-d-lines}. The reconstruction stage turns this point set into estimates of the hyperplanes. We specify it in full and prove the one lemma the query count relies on: how many recorded points must be probed before all \(2d\) hyperplanes are seen.
Fix constants \(C_{\mathrm{pt}},C_{\mathrm{dir}}>0\) and a probe radius \(\eta>0\), and set
\begin{equation}
N_{\mathrm{pt}}=\lceil C_{\mathrm{pt}}\,d\log d\rceil,
\qquad
N_{\mathrm{dir}}=\lceil C_{\mathrm{dir}}\,d\rceil ,
\label{eq:app-recon-parameters}
\end{equation}
the number of recorded points we probe and the number of probe directions used at each. Draw \(N_{\mathrm{pt}}\) points i.i.d.\ uniformly, with replacement, from \(\widehat{\mathcal P}\). At each drawn point \(p\), estimate the local hyperplane normal by querying the oracle at the \(2N_{\mathrm{dir}}\) symmetric probes
\begin{equation}
p\pm\eta z_i,
\qquad
z_i\sim\mathrm{Unif}(\Sph^{d-1}),
\qquad
i=1,\ldots,N_{\mathrm{dir}},
\label{eq:app-probe-scheme}
\end{equation}
and forming
\[
\widehat n(p)\;\propto\;\sum_{i=1}^{N_{\mathrm{dir}}}\bigl(W_{R,s}(p+\eta z_i)-W_{R,s}(p-\eta z_i)\bigr)\,z_i .
\]
Clustering the estimated normals into the \(d\) slab directions, averaging within each cluster and re-projecting the resulting \(d\) directions onto \(\mathbf O(d)\), separating the two parallel sheets of each slab, and solving the linear system \(Rx+s=0\) yields the center estimate \(\widehat x\); a final query evaluates \(W_{R,s}(\widehat x)\).
The reconstruction recovers all \(d\) slab directions only if every slab contributes at least one probed point. Write \(\pi_{j,\tau}\) for the probability that a single drawn point lies on \(H_{j,\tau}\). The next lemma gives the number of draws needed for this coverage.
\begin{lemma}[Coverage threshold]
\label{lem:app-coupon-coverage}
Suppose the drawn points are i.i.d., and that conditionally on the scan each lies on hyperplane \(H_{j,\tau}\) with probability \(\pi_{j,\tau}\in[c_1/(2d),\,c_2/(2d)]\) for fixed constants \(0<c_1\le c_2\). Let \(\mathrm{Cov}\) be the event that every one of the \(2d\) hyperplanes is hit by at least one drawn point.
\begin{enumerate}
\item If \(N_{\mathrm{pt}}\ge \tfrac{4}{c_1}\,d\log d\), then \(\PP(\mathrm{Cov})\ge 1-2/d\).
\item If \(N_{\mathrm{pt}}\le \tfrac{1-\varkappa}{c_2}\,d\log d\) for a fixed \(\varkappa\in(0,1)\), then \(\PP(\mathrm{Cov})\to0\) as \(d\to\infty\).
\end{enumerate}
\end{lemma}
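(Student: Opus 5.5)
We will treat this as a coupon-collector estimate with nonuniform but comparable cell probabilities, working conditionally on the scan, so that the $N_{\mathrm{pt}}$ draws are i.i.d. with the fixed probabilities $\pi_{j,\tau}$. For each hyperplane $(j,\tau)$, let $E_{j,\tau}$ be the event that none of the $N_{\mathrm{pt}}$ draws lands on $H_{j,\tau}$. Then $\PP(E_{j,\tau})=(1-\pi_{j,\tau})^{N_{\mathrm{pt}}}$, and $\mathrm{Cov}$ is the complement of $\bigcup_{j,\tau}E_{j,\tau}$.

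For part 1 we use a union bound. From $1-\pi\le e^{-\pi}$ and $\pi_{j,\tau}\ge c_1/(2d)$ we get $\PP(E_{j,\tau})\le \exp(-c_1N_{\mathrm{pt}}/(2d))\le \exp(-2\log d)=d^{-2}$ under the hypothesis $N_{\mathrm{pt}}\ge (4/c_1)d\log d$. Summing over the $2d$ hyperplanes gives $\PP(\mathrm{Cov}^c)\le 2d\cdot d^{-2}=2/d$. This part is routine.

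For part 2 we use the second-moment method on the number of missed hyperplanes, $X=\sum_{j,\tau}\mathbf 1_{E_{j,\tau}}$, and show $\PP(X=0)\to0$.

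\emph{First moment.} Since $\pi_{j,\tau}\le c_2/(2d)\to0$, we have $1-\pi\ge e^{-\pi-\pi^2}$ for small $\pi$. Hence
\[
\EE X\ge 2d\exp\!\bigl(-(c_2/(2d))(1+o(1))N_{\mathrm{pt}}\bigr).
\]
With $N_{\mathrm{pt}}\le ((1-\varkappa)/c_2)d\log d$, this gives $\EE X\ge 2d\cdot d^{-(1-\varkappa)/2-o(1)}\to\infty$.

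\emph{Second moment.} For distinct $(j,\tau)\neq(j',\tau')$,
\[
\PP(E_{j,\tau}\cap E_{j',\tau'})=(1-\pi_{j,\tau}-\pi_{j',\tau'})^{N_{\mathrm{pt}}}\le(1-\pi_{j,\tau})^{N_{\mathrm{pt}}}(1-\pi_{j',\tau'})^{N_{\mathrm{pt}}},
\]
because $(1-a-b)\le(1-a)(1-b)$. So the indicators are negatively correlated, and $\mathrm{Var}X\le\EE X$. Chebyshev then gives $\PP(X=0)\le \mathrm{Var}X/(\EE X)^2\le 1/\EE X\to0$, which is the claim.

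The main obstacle is the constant bookkeeping in part 2. With $2d$ coupons each of probability about $c_2/(2d)$, the sharp threshold is $(2d/c_2)\log(2d)$. The stated threshold $((1-\varkappa)/c_2)d\log d$ is a factor of two below this, so $\EE X$ grows like $d^{(1+\varkappa)/2}$ and the argument closes with room to spare. The point to check carefully is that the $o(1)$ correction coming from $\pi^2 N_{\mathrm{pt}}=O(\log d/d)$ is negligible. We also need to note that the conditioning on the scan is harmless, since the lemma's hypothesis holds conditionally and the bounds are uniform in the conditioning.
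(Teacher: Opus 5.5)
Your proposal is correct and follows the paper's proof: a union (first-moment) bound for part 1, and for part 2 the second-moment method on the number of missed hyperplanes, using negative correlation and Chebyshev. The only difference is cosmetic: your bound $1-\pi\ge e^{-\pi-\pi^2}$ gives $\EE X\gtrsim d^{(1+\varkappa)/2}$, whereas the paper's cruder $\log(1-\pi)\ge-2\pi$ gives $\EE X\ge 2d^{\varkappa}$, and both diverge; note also that your identity $\PP(E_{j,\tau}\cap E_{j',\tau'})=(1-\pi_{j,\tau}-\pi_{j',\tau'})^{N_{\mathrm{pt}}}$ relies on each drawn point lying on at most one sheet, which the paper states explicitly and you should too.
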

\begin{proof}
Write \(I_{j,\tau}=\mathbf 1\{H_{j,\tau}\text{ is never hit}\}\) and \(Z=\sum_{j,\tau}I_{j,\tau}\),
so \(\mathrm{Cov}=\{Z=0\}\) and \(\EE Z=\sum_{j,\tau}(1-\pi_{j,\tau})^{N_{\mathrm{pt}}}\).
For (1), using \(1-\pi\le e^{-\pi}\) and \(\pi_{j,\tau}\ge c_1/(2d)\),
\[
\EE Z\le 2d\,e^{-c_1 N_{\mathrm{pt}}/(2d)}\le 2d\,e^{-2\log d}=2/d,
\]
so \(\PP(\mathrm{Cov})=\PP(Z=0)\ge 1-\EE Z\ge 1-2/d\).
For (2), using \(\log(1-\pi)\ge-2\pi\) on \([0,\tfrac12]\) and \(\pi_{j,\tau}\le c_2/(2d)\),
\[
\EE Z\ge 2d\,e^{-c_2 N_{\mathrm{pt}}/d}\ge 2d\,e^{-(1-\varkappa)\log d}=2\,d^{\varkappa}\to\infty .
\]
The indicators are negatively correlated. A drawn point lies on at most one sheet, so for
\((j,\tau)\ne(j',\tau')\) a single draw misses both with probability
\(1-\pi_{j,\tau}-\pi_{j',\tau'}\), whence
\[
\EE\bigl[I_{j,\tau}I_{j',\tau'}\bigr]=(1-\pi_{j,\tau}-\pi_{j',\tau'})^{N_{\mathrm{pt}}},
\]
and \(1-\pi_{j,\tau}-\pi_{j',\tau'}\le(1-\pi_{j,\tau})(1-\pi_{j',\tau'})\) gives
\[
\mathrm{Cov}\bigl(I_{j,\tau},I_{j',\tau'}\bigr)
=(1-\pi_{j,\tau}-\pi_{j',\tau'})^{N_{\mathrm{pt}}}
-(1-\pi_{j,\tau})^{N_{\mathrm{pt}}}(1-\pi_{j',\tau'})^{N_{\mathrm{pt}}}\le0 .
\]
Hence \(\mathrm{Var}\,Z\le\sum_{j,\tau}\mathrm{Var}\,I_{j,\tau}\le\EE Z\), and by Chebyshev's
inequality \(\PP(Z=0)\le\mathrm{Var}\,Z/(\EE Z)^2\le 1/\EE Z\to0\).
\end{proof}
The hypothesis holds on the event of Lemma~\ref{lem:app-theta-d-lines}: there \(N_{j,\tau}\in[d,M]\) for every sheet, so \(|\widehat{\mathcal P}|\in[2d^2,2dM]\) and
\[
\pi_{j,\tau}=\frac{N_{j,\tau}}{|\widehat{\mathcal P}|}\in\Bigl[\frac{c_1}{2d},\frac{c_2}{2d}\Bigr],
\qquad
c_1=\frac1{C_M},\quad c_2=C_M .
\]
With \(C_{\mathrm{pt}}\ge 4C_M\), Lemma~\ref{lem:app-coupon-coverage}(1) gives coverage with probability \(1-o(1)\). Hence \(N_{\mathrm{pt}}=\Theta(d\log d)\) probed points suffice, and this is the reconstruction budget behind the query count \(Q_{\mathrm{ray}}=\Theta(d^2\log d)\) derived in the main text.
\begin{rem}[Accuracy of the recovered directions]
\label{rem:app-normal-accuracy}
Coverage is necessary but not sufficient: a single probed point gives a normal estimate whose angular error is
\(\theta_1\approx\sqrt{\pi d/(2N_{\mathrm{dir}})}=\sqrt{\pi/(2C_{\mathrm{dir}})}\),
a constant, since the along-normal signal \(N_{\mathrm{dir}}\EE|a_j\cdot z|=\Theta(N_{\mathrm{dir}}d^{-1/2})\) and the transverse noise \(\Theta(N_{\mathrm{dir}}^{1/2})\) carry the same power of \(d\). The same budget, however, delivers \(m=\Theta(\log d)\) probed points per sheet, and averaging within a cluster reduces the error to \(\theta=\theta_1/\sqrt m\). Since the errors are isotropic in the orthogonal complement of \(a_j\), the residual at the recovered center satisfies
\[
\max_j\bigl|a_j\cdot\widehat x+s_j\bigr|\;\approx\;\theta\,B\sqrt{2\log d}\;=\;B\sqrt{\frac{2\pi}{C_{\mathrm{dir}}C_{\mathrm{pt}}}},
\]
independent of \(d\). The reconstruction therefore succeeds, with the stated budget and no additional queries, provided the constants satisfy \(C_{\mathrm{dir}}C_{\mathrm{pt}}\gtrsim 2\pi B^2/\delta^2\), which is a condition on the fixed problem parameters \(B\) and \(\delta\) alone.
\end{rem}

\section{Random walk algorithms for the rotated square well problem}
\label{sec:RandomWalk}

In this appendix, we analyze a natural family of classical random-walk algorithms for the rotated square well problem and prove the cubic lower bound stated in the main text. The guiding idea is simple: since the landscape is flat on large regions and only oracle access is available, a classical algorithm cannot reliably exploit gradient information or local curvature.
A natural strategy is therefore to use random proposals, together with an acceptance rule that never moves to a worse level. The resulting dynamics can be understood in two stages. First, one asks how likely it is that a single proposal remains inside the current level set \(V_k\). This determines which step sizes are even viable. Second, once the admissible scale of the step size is identified, one analyzes the time required to move from level \(k\) to level \(k+1\), and then sums these costs over all levels. The main result of the appendix is that this leads to an expected hitting time of order $\Theta(d^3/\log d)$.

We begin by explicitly describing the random walk algorithm considered in this section.

\begin{algorithm}[h]
\caption{Random walk with fixed step schedule $\{r_k\}_{k=0}^{d-1}$}
\label{alg:rw}
\begin{algorithmic}[1]
\REQUIRE Dimension $d\in\amsmathbb{N}$; slab half-width $\delta>0$; initial point $x_0\in\RR^d$;
oracle access to $W(\cdot)$; \\step schedule $\{r_k\}_{k=0}^{d-1}$ with $r_k>0$
\ENSURE Terminal point $x_\tau$ with ${W(x_\tau)=-d}$ and stopping time $\tau$

\STATE Initialize $t\gets 0$, $x_t\gets x_0$, $K_t\gets {-W(x_t)}$.
\WHILE{$K_t < d$}
    \STATE Sample a direction $U \sim \mathrm{Unif}(\Sph^{d-1})$.
    \STATE Propose $y \gets x_t + r_{K_t}\,U$.
    \STATE Query the oracle and set $K_y \gets {-W(y)}$.
    \IF{$K_y \ge K_t$}
        \STATE Accept: $x_{t+1}\gets y$, $K_{t+1}\gets K_y$.
    \ELSE
        \STATE Reject: $x_{t+1}\gets x_t$, $K_{t+1}\gets K_t$.
    \ENDIF
    \STATE Update $t\gets t+1$.
\ENDWHILE

\STATE Set $\tau \gets t$ and $x_\tau \gets x_t$.
\RETURN $(\tau, x_\tau)$.
\end{algorithmic}
\end{algorithm}

The analysis proceeds in three steps. We first derive complementary upper and lower bounds on the probability that a single proposal from a point at level \(k\) does not decrease the level. These bounds identify the largest admissible step-size schedule that avoids exponentially small acceptance near the top levels. Finally, with this step-size scale in hand, we reduce the passage from level \(k\) to level \(k+1\) to an effective one-dimensional first-passage problem and sum over \(k\) to obtain the $\Theta(d^3/\log d)$ estimate of the query complexity in Section~\ref{app:rw-complexity}.

\subsection{Upper bound on $\PP(X\in V_k)$}
We first bound the probability that a single proposal is accepted, that is, that it does not decrease the level. The mechanism is a loss constraint. At level \(k\) the walker already sits inside \(k\) slabs and only \(d-k\) slabs remain to be gained, so a proposal that knocks more than \(d-k\) of the occupied coordinates out of their slabs cannot be accepted, however many new slabs it happens to enter. Since a step of radius \(r\) knocks each occupied coordinate out with probability \(r|U_i|/2\delta\), this is a binomial tail estimate, and it is most restrictive near the top of the filtration, where \(d-k\) is small. The proposition below makes this precise and shows that the constraint is already significant at \(r\asymp\delta\): once \(k=d-O(\sqrt d)\), a proposal of radius \(\delta\) is accepted with probability \(e^{-\Theta(\sqrt d)}\).

\begin{prop}[Terminal-band bottleneck constrains small steps]\label{prop:terminal_band}
Fix $d\ge 2$, $\delta>0$, and $k\in\{1,\dots,d-1\}$.  Let $U\sim{\rm Unif}(\Sph^{d-1})$ and consider one proposal
\[
X=x+rU,\qquad r\in(0,2\delta].
\]
Assume ${-W(x)=k}$ and relabel coordinates so that $|x_i|<\delta$ for $i\le k$ and $|x_j|\ge \delta$ for $j>k$.
Assume the \emph{typical-inside model}: conditional on ${-W(x)=k}$, the coordinates $(x_1,\dots,x_k)$ are i.i.d.\ ${\rm Unif}([-\delta,\delta])$
and independent of $U$.  Then there exist absolute constants $c,C>0$ such that if
\begin{equation}\label{eq:terminal_band_condition_clean}
d-k \le C\,\frac{r}{\delta}\,k\,\E|U_1|,
\end{equation}
one has
\begin{equation}\label{eq:terminal_band_bound_clean}
\PP(X\in V_k)
\le
\exp\!\Big(-c\,\frac{r}{\delta}\,k\,\E|U_1|\Big)
+
\exp(-c k).
\end{equation}
Consequently, since $\E|U_1|=\Theta(d^{-1/2})$, if $r\ge\delta$ and $k=d-m$ with $m\le C\sqrt d$, then
$\PP(X\in V_k)\le \exp(-\Theta(\sqrt d))$.
\end{prop}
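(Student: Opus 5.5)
The plan is to reduce acceptance to a loss-counting event and then bound that event by a binomial tail, conditionally on the direction $U$. Write $L=\#\{i\le k:\ |x_i+rU_i|\ge\delta\}$ for the number of occupied coordinates knocked out by the proposal. Since at most $d-k$ new slabs can be gained, $X\in V_k$ forces $L\le d-k$. Here we use that, in the relabeled hidden coordinates, the step is $rU$ with $U$ still uniform on $\Sph^{d-1}$, by rotation invariance. Therefore
\[
\PP(X\in V_k)\le \PP(L\le d-k).
\]
Conditional on $U$, the typical-inside model makes the indicators $\mathbf 1\{|x_i+rU_i|\ge\delta\}$, $i\le k$, independent Bernoulli variables. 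Their success probabilities are $p_i=\min\{r|U_i|,2\delta\}/(2\delta)$, which equals $r|U_i|/(2\delta)$ because $r\le 2\delta$ and $|U_i|\le1$.

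Let $\mu(U)=\sum_{i\le k}p_i=\frac{r}{2\delta}\sum_{i\le k}|U_i|$. I split on the event
\[
G=\Bigl\{\textstyle\sum_{i\le k}|U_i|\ge \tfrac12 k\,\E|U_1|\Bigr\}.
\]
On $G$ we have $\mu(U)\ge \frac{r}{4\delta}k\E|U_1|$. Choosing $C\le 1/8$ in \eqref{eq:terminal_band_condition_clean} gives $d-k\le \mu(U)/2$. The multiplicative Chernoff lower tail for sums of independent Bernoullis then gives
\[
\PP(L\le \mu/2\mid U)\le e^{-\mu/8}\le \exp\bigl(-c\,\tfrac r\delta k\E|U_1|\bigr).
\]
This is the first term of \eqref{eq:terminal_band_bound_clean}.

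The main obstacle is showing $\PP(G^c)\le e^{-ck}$, i.e.\ a lower-deviation bound for $\sum_{i\le k}|U_i|$ under the dependent spherical law. I would represent $U=g/\|g\|$ with $g\sim N(0,I_d)$. Then $G^c$ forces either $\sum_{i\le k}|g_i|\le \frac34 k\E|g_1|$ or $\|g\|\ge \frac32\sqrt d\cdot(\text{const})$. Before applying this, the constants must be matched using $\E|U_1|=\Theta(d^{-1/2})$ and $\E\|g\|\approx\sqrt d$; the fractions $1/2$, $3/4$, $3/2$ may need adjusting.

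For the first alternative, the $|g_i|$ are i.i.d.\ and sub-Gaussian, so Hoeffding gives probability $e^{-ck}$. For the second, Gaussian norm concentration gives $e^{-cd}\le e^{-ck}$. Both thresholds are constant-factor deviations, so after adjusting the fractions both tails are exponential in $k$. Alternatively, $\sum_{i\le k}|U_i|$ is $\sqrt k$-Lipschitz on the sphere, and Lévy concentration gives $\exp(-c\,d\,(k\E|U_1|)^2/k)=\exp(-ck)$ directly. I would use this cleaner route.

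For the consequence, take $r\ge\delta$ and $k=d-m$ with $m\le C\sqrt d$. Then $\frac r\delta k\E|U_1|\ge k\,\Theta(d^{-1/2})=\Theta(\sqrt d)\ge m/C$ once $d$ is large, after adjusting the constant in $m\le C\sqrt d$. So the hypothesis holds, and the bound is $e^{-\Theta(\sqrt d)}+e^{-\Theta(d)}=e^{-\Theta(\sqrt d)}$.
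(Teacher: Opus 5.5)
Your proposal is correct and follows the paper's proof essentially step for step. Both use the loss-count inclusion $\{X\in V_k\}\subseteq\{L\le d-k\}$, the conditional Poisson--binomial lower-tail Chernoff bound, and the split on $\sum_{i\le k}|U_i|\ge\tfrac12 k\,\E|U_1|$, controlled by L\'evy concentration of this $\sqrt k$-Lipschitz function on the sphere; your explicit exponent $c\,d\,(k\E|U_1|)^2/k=\Theta(k)$ is exactly what the paper's \eqref{eq:levy} compresses. You are also right that the constant in $m\le C\sqrt d$ must be reduced for the consequence: with $r=\delta$ one has $k\,\E|U_1|\approx\sqrt{2/\pi}\,\sqrt d<\sqrt d$, so the paper's reuse of the same $C$ there is slightly loose.
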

\begin{proof}
Write $m\coloneqq d-k$.  Define the number of lost inside coordinates
\[
L\;\coloneqq\;\sum_{i=1}^k \mathbf{1}\{|X_i|\ge \delta\}.
\]
If $L=\ell$, then among the first $k$ coordinates, at most $k-\ell$ remain inside $(-\delta,\delta)$.
Even in the \emph{best case} where all remaining $m$ coordinates ($j>k$) land inside the slab after the move,
the total number of inside coordinates of $X$ is at most
\[
(k-\ell)+m=d-\ell.
\]
If $\ell>m$, then $d-\ell<k$, so ${-W(X)<k}$ and therefore $X\notin V_k$.
Hence
\begin{equation}\label{eq:Vk_implies_L}
\{X\in V_k\}\subseteq \{L\le m\},
\qquad\text{so}\qquad
\PP(X\in V_k)\le \PP(L\le m).
\end{equation}
Fix $u\in[-1,1]$ and let $z\sim{\rm Unif}([-\delta,\delta])$.  For $r\in(0,2\delta]$ the set
\(
\{z:\ |z+ru|<\delta\}
\)
is exactly the overlap of two intervals of length $2\delta$:
\[
[-\delta,\delta]\cap[-\delta-ru,\ \delta-ru].
\]
The overlap length is $2\delta-|ru|$, so
\begin{equation}\label{eq:interval_overlap}
\PP(|z+ru|<\delta)=\frac{2\delta-|ru|}{2\delta}=1-\frac{r|u|}{2\delta},
\qquad
\PP(|z+ru|\ge\delta)=\frac{r|u|}{2\delta}.
\end{equation}
Under the typical-inside model, conditional on $U$ the random variables $x_1,\dots,x_k$ are independent,
and each event $\{|x_i+rU_i|\ge\delta\}$ depends only on $x_i$.
Therefore, given $U$, the indicators $\mathbf{1}\{|X_i|\ge\delta\}$ for $i\le k$ are independent Bernoullis with
\begin{equation}\label{eq:pi}
\PP(|X_i|\ge\delta\mid U)=\frac{r}{2\delta}|U_i|\qquad (i\le k).
\end{equation}
Thus $L\mid U$ is a Poisson--binomial sum with conditional mean
\begin{equation}\label{eq:muU}
\mu(U)\;\coloneqq\;\E[L\mid U]
=\frac{r}{2\delta}\sum_{i=1}^k |U_i|.
\end{equation}
Whenever $\mu(U)\ge m$, we may apply the standard lower-tail Chernoff bound\footnote{Recall the lower-tail Chernoff bound $\PP(X\le (1-\d)\mu)\le e^{-\frac{\d^2\mu}{2}}$ where $\mu = \EE[X]$ and $\d\in(0,1)$ is arbitrary. We use $X = L|U$ with $\d = 1-\frac{m}{\mu(U)}$ on the event $\mu(U)\ge m$.}
to get
\begin{equation}\label{eq:chernoff_m}
\PP(L\le m\mid U)
\le
\exp\!\left(-\frac{(\mu(U)-m)^2}{2\mu(U)}\right).
\end{equation}
When $\mu(U)<m$ this inequality is not useful, so we will ensure $\mu(U)$ is typically larger than $m$
by conditioning on a high-probability event for $\sum_{i\le k}|U_i|$.
Define
\[
S_k(U)\;\coloneqq\;\sum_{i=1}^k |U_i|.
\]
This function is $\sqrt{k}$-Lipschitz on $\Sph^{d-1}$:
\[
|S_k(u)-S_k(v)|
\le \sum_{i=1}^k |u_i-v_i|
\le \sqrt{k}\,\|u-v\|_2.
\]
Hence, by L\'evy's concentration lemma on the sphere, there is an absolute constant $c>0$ such that
for every $\varepsilon\in(0,1)$,
\begin{equation}\label{eq:levy}
\PP\!\left(S_k(U)\le (1-\varepsilon)\,\E[S_k(U)]\right)\le \exp(-c\,\varepsilon^2 k).
\end{equation}
By exchangeability of coordinates under ${\rm Unif}(\Sph^{d-1})$ we have $\E[S_k(U)]=k\,\E|U_1|$,
so \eqref{eq:levy} becomes
\begin{equation}\label{eq:levy2}
\PP\!\left(\sum_{i=1}^k |U_i|\le (1-\varepsilon)\,k\,\E|U_1|\right)\le \exp(-c\,\varepsilon^2 k).
\end{equation}
On the complementary event, \eqref{eq:muU} yields the deterministic lower bound
\begin{equation}\label{eq:mu_lower}
\mu(U)\ge \frac{r}{2\delta}(1-\varepsilon)\,k\,\E|U_1|.
\end{equation}
Combine \eqref{eq:Vk_implies_L} with the decomposition
\[
\PP(L\le m)=
\E\!\left[\PP(L\le m\mid U)\,\mathbf{1}_{\{S_k(U)\ge (1-\varepsilon)k\E|U_1|\}}\right]
+\PP\!\left(S_k(U)<(1-\varepsilon)k\E|U_1|\right).
\]
The second term is $\le \exp(-c\,\varepsilon^2 k)$ by \eqref{eq:levy2}.
On the first event, \eqref{eq:mu_lower} ensures $\mu(U)$ is at least a constant multiple of
$\frac{r}{\delta}k\E|U_1|$; if \eqref{eq:terminal_band_condition_clean} holds with $C$ chosen small enough
(depending only on the fixed choice of $\varepsilon$, say $\varepsilon=\tfrac12$), then this lower bound
also ensures $\mu(U)\ge m$, so \eqref{eq:chernoff_m} applies.  Moreover, under the same condition the gap
$\mu(U)-m$ is a constant fraction of $\mu(U)$, so the exponent in \eqref{eq:chernoff_m} is
$-\Omega(\mu(U))=-\Omega\!\big(\frac{r}{\delta}k\E|U_1|\big)$.
This yields \eqref{eq:terminal_band_bound_clean} after absorbing constants, and the stated scaling
consequences follow from $\E|U_1|=\Theta(d^{-1/2})$.
\end{proof}

Proposition~\ref{prop:terminal_band} caps how large the step may be near the top of the filtration: for the proposal to be accepted with non-vanishing probability it forces \(\tfrac{r}{\delta}k\,\EE|U_1|=O(1)\), hence \(r=O(\delta\sqrt d/k)\). It says nothing, however, about whether a step on that scale is actually useful. The next proposition supplies the complementary lower bound, showing that at \(r_k=\Theta(\delta\sqrt d/k)\) the walk does remain inside \(V_k\) with probability bounded below by an absolute constant, so that this scale is not merely admissible but attained.

\subsection{Lower bound on $\PP(X\in V_k)$}
\begin{prop}[$\PP(X\in V_k)$ fixes $r_k$]\label{lem:Vk_accept}
Fix integers $d\ge 3$ and $k\in\{1,\dots,d\}$, and a slab half-width $\delta>0$.
Let $U\sim\mathrm{Unif}(\Sph^{d-1})$ and consider a single proposal
\begin{align}
X = x + r\,U,
\qquad r\in(0,\delta].
\end{align}
Assume $x\in V_k$ and choose an instantaneous orthonormal basis such that
\begin{align}
|x_i|<\delta \quad \text{for all } i\le k.
\end{align}
Assume moreover the following \emph{typical-inside model}: conditional on $x\in V_k$, the coordinates
$(x_1,\dots,x_k)$ are i.i.d.\ $\mathrm{Unif}([-\delta,\delta])$ and independent of $U$.
Define the retention event
\begin{equation}\label{eq:Stay-def}
    \mathsf{Stay}_k\;\coloneqq\;\bigcap_{i=1}^k\{|X_i|<\delta\},
\end{equation}
on which none of the $k$ already-occupied slabs is vacated. Then $\mathsf{Stay}_k\subseteq\{X\in V_k\}$, and the inclusion is strict: $X$ may leave one of the first $k$ slabs and simultaneously enter one of the last $d-k$. Consequently
\begin{equation}\label{eq:Vk_accept_bound}
\PP(X\in V_k)
\;\ge\;
\PP(\mathsf{Stay}_k)
\;\ge\;
\exp\!\left(
-\frac{r}{2\delta}\,k\,\EE|U_1|
-\frac{r^2}{4\delta^2}\,\frac{k}{d}
\right),
\end{equation}
where
\begin{equation}\label{eq:EU1_exact_gamma}
\EE|U_1|
=
\frac{\Gamma(d/2)}{\sqrt{\pi}\,\Gamma((d+1)/2)}
=
\sqrt{\frac{2}{\pi d}}\left(1+O\!\left(\frac{1}{d}\right)\right).
\end{equation}
In particular, for $k\ge\sqrt d/2$ the schedule $r=\delta\sqrt d/(2k+1)$ satisfies $r\le\delta$, and bounding $\EE|U_1|\le(\EE U_1^2)^{1/2}=d^{-1/2}$ gives
\begin{equation}\label{eq:r_delta_over_k_consequence}
\PP(X\in V_k)
\;\ge\;
\exp\!\left(-\frac{k}{2(2k+1)}-\frac{k}{4(2k+1)^2}\right)
\;\ge\;
e^{-1/4-1/36}
\;>\;
\frac34 ,
\end{equation}
so the retention probability is bounded below by an absolute constant on the scale $r=\Theta(\delta\sqrt d/k)$.
\end{prop}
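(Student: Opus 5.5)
Under the typical-inside model we condition on $U$ and use independence of the $x_i$, which turns the retention event into a product of independent interval-overlap events. The overlap formula \eqref{eq:interval_overlap} from the proof of Proposition~\ref{prop:terminal_band} applies verbatim, since $r|U_i|\le r\le\delta<2\delta$. Conditional on $U$, it gives
\[
\PP(\mathsf{Stay}_k\mid U)=\prod_{i=1}^k\Bigl(1-\frac{r|U_i|}{2\delta}\Bigr).
\]
The inclusion $\mathsf{Stay}_k\subseteq\{X\in V_k\}$ is immediate: on $\mathsf{Stay}_k$ the proposal retains all $k$ occupied slabs, so $-W(X)\ge k$. Strictness only needs a single example in which one occupied coordinate exits while an unoccupied one enters, and this requires no computation.

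Next we lower bound the product pointwise. Each factor has $u_i:=r|U_i|/(2\delta)\le 1/2$, and on $[0,1/2]$ we have $\log(1-u)\ge -u-u^2$. Summing gives
\[
\log\PP(\mathsf{Stay}_k\mid U)\ge -\frac{r}{2\delta}\sum_{i\le k}|U_i|-\frac{r^2}{4\delta^2}\sum_{i\le k}U_i^2 .
\]
Taking expectations over $U$, Jensen's inequality ($\EE e^{Y}\ge e^{\EE Y}$) yields \eqref{eq:Vk_accept_bound}. Here exchangeability gives $\EE\sum_{i\le k}|U_i|=k\EE|U_1|$ and $\EE\sum_{i\le k}U_i^2=k/d$. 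This step is the one to watch: the quadratic correction in the logarithm must come out with exactly the constant $r^2/(4\delta^2)$. That constant is what the elementary inequality $\log(1-u)\ge -u-u^2$ supplies, and it is valid precisely because $u\le 1/2$ follows from $r\le\delta$.

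For \eqref{eq:EU1_exact_gamma} we use that $U_1^2\sim\mathrm{Beta}(1/2,(d-1)/2)$. Hence $\EE|U_1|=B(1,(d-1)/2)/B(1/2,(d-1)/2)$, which simplifies to $\Gamma(d/2)/(\sqrt\pi\,\Gamma((d+1)/2))$. The asymptotic form then follows from the standard expansion $\Gamma(x+1/2)/\Gamma(x)=\sqrt{x}\,(1+O(1/x))$.

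Finally we verify the schedule. The condition $k\ge\sqrt d/2$ gives $2k+1\ge\sqrt d$, so $r=\delta\sqrt d/(2k+1)\le\delta$ and the bound applies. Substituting $r/\delta=\sqrt d/(2k+1)$ and using $\EE|U_1|\le d^{-1/2}$ (Cauchy--Schwarz), the exponent is at least
\[
-\frac{k}{2(2k+1)}-\frac{k}{4(2k+1)^2}.
\]
Since $k/(2k+1)<1/2$, the first term is above $-1/4$. The function $k/(2k+1)^2$ is decreasing for $k\ge1$, so its maximum $1/9$ occurs at $k=1$ and the second term is at least $-1/36$. Therefore the bound is at least $e^{-1/4-1/36}\approx0.757>3/4$.
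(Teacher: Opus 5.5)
Your proposal is correct and follows the paper's proof essentially step for step. Like the paper, you condition on $U$ to factor $\PP(\mathsf{Stay}_k\mid U)$ into interval-overlap probabilities, apply the pointwise bound $\log(1-z)\ge -z-z^2$ on $[0,1/2]$ (valid because $r\le\delta$), then use Jensen's inequality with exchangeability, and finally substitute the schedule $r=\delta\sqrt d/(2k+1)$. The only difference is cosmetic: you obtain $\EE|U_1|$ from $U_1^2\sim\mathrm{Beta}(1/2,(d-1)/2)$ and a Gamma-ratio expansion, whereas the paper integrates the marginal density $c_d(1-u^2)^{(d-3)/2}$ via $v=u^2$ and invokes Stirling's formula, which is the same computation.
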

\begin{proof}
If $\mathsf{Stay}_k$ occurs, then at least the first $k$ coordinates of $X$ lie in $(-\delta,\delta)$,
so ${-W(X)\ge k}$, hence $X\in V_k$. Therefore $\PP(X\in V_k)\ge \PP(\mathsf{Stay}_k)$. Fix $u\in[-1,1]$ and let $z\sim \mathrm{Unif}([-\delta,\delta])$. For any $r>0$,
translation invariance of Lebesgue measure gives
\begin{align}
\PP(|z+ru|<\delta)
=
\frac{\mathrm{Leb}\big([-\delta,\delta]\cap[-\delta-ru,\delta-ru]\big)}{2\delta}
=
\max \left\{0,\,1-\frac{r|u|}{2\delta}\right\},
\end{align}
the two intervals being disjoint once $r|u|\ge 2\delta$.
Under the typical-inside model, conditional on $U$ the random variables $x_1,\dots,x_k$ are i.i.d.\
$\mathrm{Unif}([-\delta,\delta])$ and independent, so the events $\{|x_i+rU_i|<\delta\}$ are independent
over $i\le k$ and each has probability $\max\{0,1-\frac{r}{2\delta}|U_i|\}$. Consequently, for every $r>0$,
\begin{equation}\label{eq:Stay_conditional}
\PP(\mathsf{Stay}_k\mid U)
=
\prod_{i=1}^k \max \left\{0,\,1-\frac{r}{2\delta}|U_i|\right\},
\qquad
\PP(\mathsf{Stay}_k)
=
\EE\left[\prod_{i=1}^k {\max \left\{0,\,1-\frac{r}{2\delta}|U_i|\right\}}\right].
\end{equation}
Since $r\le \delta$ by hypothesis, one has $0\le \frac{r}{2\delta}|U_i|\le \frac12$ for all $i$, so the truncation in \eqref{eq:Stay_conditional} is never active and the products there are the plain ones.
The elementary inequality
\begin{align}
\log(1-z)\ge -z-z^2,\qquad z\in[0,1/2],
\end{align}
applied to $z=\frac{r}{2\delta}|U_i|$, yields
\begin{align}
\sum_{i=1}^k\log\!\left(1-\frac{r}{2\delta}|U_i|\right)
\ge
-\frac{r}{2\delta}\sum_{i=1}^k |U_i|
-\frac{r^2}{4\delta^2}\sum_{i=1}^k U_i^2.
\end{align}
Exponentiating both sides gives the pointwise bound
\begin{align}
\prod_{i=1}^k\left(1-\frac{r}{2\delta}|U_i|\right)
\ge
\exp\!\left(
-\frac{r}{2\delta}\sum_{i=1}^k |U_i|
-\frac{r^2}{4\delta^2}\sum_{i=1}^k U_i^2
\right).
\end{align}
Taking expectations and using Jensen's inequality in the form
$\EE[e^{-Y}]\ge e^{-\EE[Y]}$ (since $y\mapsto e^{-y}$ is convex) gives
\begin{align}
\PP(\mathsf{Stay}_k)
\ge
\exp\!\left(
-\frac{r}{2\delta}\,\EE\!\left[\sum_{i=1}^k |U_i|\right]
-\frac{r^2}{4\delta^2}\,\EE\!\left[\sum_{i=1}^k U_i^2\right]
\right).
\end{align}
By exchangeability of coordinates of $U$,
\begin{align}
\EE\!\left[\sum_{i=1}^k |U_i|\right]=k\,\EE|U_1|,
\qquad
\EE\!\left[\sum_{i=1}^k U_i^2\right]=k\,\EE[U_1^2]=\frac{k}{d},
\end{align}
where $\EE[U_1^2]=1/d$ holds exactly for $U\sim\mathrm{Unif}(\Sph^{d-1})$. Substituting these identities
proves \eqref{eq:Vk_accept_bound}. It remains to verify \eqref{eq:EU1_exact_gamma}. The marginal density of $U_1$ is
\begin{align}
f_d(u)=c_d(1-u^2)^{(d-3)/2}\,\mathbf{1}_{\{|u|\le 1\}},
\qquad
c_d=\frac{\Gamma(d/2)}{\sqrt{\pi}\,\Gamma((d-1)/2)}.
\end{align}
Therefore,
\begin{align}
\EE|U_1|
=
2\int_0^1 u f_d(u)\,du
=
2c_d\int_0^1 u(1-u^2)^{(d-3)/2}\,du.
\end{align}
With the substitution $v=u^2$ (so $dv=2u\,du$),
\begin{align}
\EE|U_1|
=
c_d\int_0^1 (1-v)^{(d-3)/2}\,dv
=
c_d\cdot \frac{2}{d-1}.
\end{align}
Using $\Gamma((d+1)/2)=\frac{d-1}{2}\Gamma((d-1)/2)$ yields
\begin{align}
\EE|U_1|
=
\frac{\Gamma(d/2)}{\sqrt{\pi}\,\Gamma((d+1)/2)},
\end{align}
which is the first equality in \eqref{eq:EU1_exact_gamma}. The asymptotic expansion follows from
Stirling's formula applied to the ratio of Gamma functions. Finally, \eqref{eq:r_delta_over_k_consequence} is obtained by substituting $r=\delta\sqrt d/(2k+1)$ into \eqref{eq:Vk_accept_bound}, together with $\EE|U_1|\le(\EE U_1^2)^{1/2}=d^{-1/2}$, which gives the two exponents $\frac{k}{2(2k+1)}\le\frac14$ and $\frac{k}{4(2k+1)^2}\le\frac1{36}$, the latter maximal at $k=1$.
\end{proof}

The previous bounds identify the admissible order of magnitude of the step size. The only scale that simultaneously avoids exponentially small acceptance and keeps the walk mobile near the terminal levels is \(r_k\asymp \delta\sqrt d/k\). We now make this statement explicit.

\subsection{Fixing the step-size schedule}

Both directions below concern the retention event \(\mathsf{Stay}_k\) of \eqref{eq:Stay-def} rather than the acceptance event \(\{X\in V_k\}\). For sufficiency the distinction is harmless: \(\mathsf{Stay}_k\subseteq\{X\in V_k\}\), so a lower bound on retention is a lower bound on acceptance. For necessity it is retention that is the operative quantity, because the walk advances from level \(k\) to level \(k+1\) by keeping the \(k\) slabs it already occupies and entering one more; a step size at which retention is exponentially small cannot drive the filtration upward through that mechanism. We do not rule out the exchange moves in which a proposal sheds \(j\ge 1\) occupied slabs while simultaneously entering \(j+1\) new ones, which also raise the level; bounding those would require controlling the probability of \(j+1\) simultaneous entries, and we do not attempt it here.

\begin{cor}[Fixing the step-size schedule \(r_k=\Theta(\delta\sqrt d/k)\)]
\label{cor:rk_bottleneck}
Fix \(d\ge 2\) and \(\delta>0\), and work under the typical-inside model. For every \(k\ge C\sqrt d\) with \(C\) a large enough absolute constant, the step size must satisfy \(r_k = O\!\left(\delta\,\frac{\sqrt d}{k}\right)\) in order for the \emph{retention} probability \(\PP(\mathsf{Stay}_k)\) to stay bounded below by an absolute constant. On the other hand, the choice \(r_k=\delta\,\frac{\sqrt d}{2k+1}\) satisfies \(r_k=\Theta\!\left(\delta\,\frac{\sqrt d}{k}\right)\) and keeps \(\PP(X\in V_k)\ge\PP(\mathsf{Stay}_k)\) bounded below by an absolute constant. Hence the natural scale of the admissible step-size schedule is
\[
r_k=\Theta\!\left(\delta\,\frac{\sqrt d}{k}\right).
\]
\end{cor}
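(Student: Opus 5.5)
The corollary has two halves, necessity and sufficiency, and I would prove them separately. Both work directly with the retention probability $\PP(\mathsf{Stay}_k)$ from \eqref{eq:Stay_conditional}, under the typical-inside model.

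\emph{Sufficiency.} This half follows directly from Proposition~\ref{lem:Vk_accept}. For $k\ge C\sqrt d$ with $C\ge 1/2$ we have $k\ge\sqrt d/2$, so the choice $r_k=\delta\sqrt d/(2k+1)$ satisfies $r_k\le\delta$. Then \eqref{eq:r_delta_over_k_consequence} gives $\PP(X\in V_k)\ge\PP(\mathsf{Stay}_k)>3/4$. The two-sided scaling is elementary: $r_k/(\delta\sqrt d/k)=k/(2k+1)\in[1/3,1/2]$ for $k\ge1$, so $r_k=\Theta(\delta\sqrt d/k)$.

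\emph{Necessity.} Suppose $\PP(\mathsf{Stay}_k)\ge p_0$ for an absolute constant $p_0>0$. I want to show $r\le K\delta\sqrt d/k$ with $K$ depending only on $p_0$. I would reuse the Lévy concentration step from the proof of Proposition~\ref{prop:terminal_band}. With $\varepsilon=1/2$, the event $G=\{S_k(U)\ge \tfrac12 k\EE|U_1|\}$ fails with probability at most $e^{-ck}$. Since $k\ge C\sqrt d\to\infty$ with $C$ large, this is below $p_0/2$ for all relevant $k$, after enlarging $C$ to handle small $d$. On $G$ I would bound the product using $\max\{0,1-z\}\le e^{-z}$, which holds for every $z\ge0$, so the truncation needs no special treatment. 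This gives
\[
\PP(\mathsf{Stay}_k\mid U)\le \exp\Bigl(-\tfrac{r}{2\delta}S_k(U)\Bigr)\le \exp\Bigl(-\tfrac{r}{4\delta}k\EE|U_1|\Bigr).
\]
Combining the two pieces yields $p_0\le \exp(-\tfrac{r}{4\delta}k\EE|U_1|)+p_0/2$, hence $\tfrac{r}{\delta}k\EE|U_1|\le 4\log(2/p_0)$. Using \eqref{eq:EU1_exact_gamma}, $\EE|U_1|\ge c'/\sqrt d$, which gives $r\le K\delta\sqrt d/k$.

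The main point to check is that this argument does not require $r\le2\delta$; Proposition~\ref{prop:terminal_band} assumed $r\le 2\delta$, but the bound above holds for arbitrary $r>0$. The hypothesis $k\ge C\sqrt d$ is used only so that the concentration failure $e^{-ck}$ is negligible, and so that the resulting scale $\delta\sqrt d/k$ is at most of order $\delta$, making it consistent with the sufficiency regime. Combining the two halves gives the stated natural scale $r_k=\Theta(\delta\sqrt d/k)$.
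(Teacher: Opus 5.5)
Your proposal is correct and follows the paper's proof essentially step for step. Sufficiency comes from substituting $r_k=\delta\sqrt d/(2k+1)$ into Proposition~\ref{lem:Vk_accept}, and necessity from the L\'evy concentration event with $\varepsilon=\tfrac12$ combined with an exponential bound on the conditional retention product. The one difference is a small simplification: using $\max\{0,1-z\}\le e^{-z}$ for all $z\ge0$ gives a bound valid for every $r>0$. This removes the need for the paper's truncation $\bar r=\min\{r_k,2\delta\}$ and its extra step ruling out $r_k>2\delta$.
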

\begin{proof}
\emph{Necessity.} The computation leading to \eqref{eq:Stay_conditional} gives, for every $r>0$,
\[
  \PP(\mathsf{Stay}_k\mid U)=\prod_{i=1}^k\max\Bigl\{0,\,1-\tfrac{r}{2\delta}|U_i|\Bigr\},
\]
which is non-increasing in $r$; writing $\bar r\coloneqq\min\{r_k,2\delta\}$ and using $1-z\le e^{-z}$,
\[
  \PP(\mathsf{Stay}_k)\;\le\;\EE\exp\Bigl(-\tfrac{\bar r}{2\delta}\sum_{i=1}^k|U_i|\Bigr)
  \;\le\;\exp\Bigl(-\tfrac{\bar r}{4\delta}\,k\,\EE|U_1|\Bigr)+e^{-ck/4},
\]
the last step by splitting on the event of \eqref{eq:levy2} with $\varepsilon=\tfrac12$ and bounding the
conditional expectation by $1$ on its complement. Suppose $\PP(\mathsf{Stay}_k)\ge c_0$ with $k\ge C\sqrt d$
and $C$ large enough that $e^{-cC\sqrt d/4}\le c_0/2$. Then $\frac{\bar r}{\delta}k\,\EE|U_1|=O(1)$, and
since $\EE|U_1|=\Theta(d^{-1/2})$ this reads $\bar r\,k=O(\delta\sqrt d)$. If $r_k>2\delta$ this would force
$k=O(\sqrt d)$, contradicting $k\ge C\sqrt d$ for $C$ large; hence $\bar r=r_k$ and $r_k=O(\delta\sqrt d/k)$.

\emph{Sufficiency.} For $k\ge\sqrt d/2$ the choice $r_k=\delta\sqrt d/(2k+1)$ satisfies $r_k\le\delta$, so
Proposition~\ref{lem:Vk_accept} applies directly. Substituting it into \eqref{eq:Vk_accept_bound} and using
$\EE|U_1|\le(\EE U_1^2)^{1/2}=d^{-1/2}$,
\[
  \PP(X\in V_k)\;\ge\;\exp\Bigl(-\tfrac{k\sqrt d}{2(2k+1)}\EE|U_1|-\tfrac{k}{4(2k+1)^2}\Bigr)
  \;\ge\;e^{-1/4-1/36}\;>\;0.75 .
\]
This settles both directions on the range $k=\Omega(\sqrt d)$, which is the only range the walk visits: as
noted in Appendix~\ref{app:rw-complexity}, a starting point drawn from $Q_L$ already lies in
$K_0=\Theta(\delta d/L)$ slabs, so $k\ge K_0\gg\sqrt d$ throughout the run and in particular $r_k\le\delta$
at every level, which is the hypothesis of Proposition~\ref{lem:Vk_accept}.
\end{proof}

With the natural step-size schedule fixed, the remaining question is how long the walk takes to move from level \(k\) to level \(k+1\). The key simplification is that, conditioned on the already-captured coordinates, the next successful capture is controlled by motion in a single new transverse direction.

\subsection{Query complexity of the adaptive random walk}
\label{app:rw-complexity}

We use the step size estimates of the previous section to derive a
prediction for the query complexity of the adaptive random walk algorithm.
Consider the version of Algorithm~\ref{alg:rw} that rejects proposals outside $Q_L=[-L,L]^d$, with $Q_L\cap V_d\ne\varnothing$. 
We assume $R$ is drawn at random from the Haar measure of $\mathbf{O}(d)$.
For $k= k_0(d)=C\sqrt{d}$, Corollary~\ref{cor:rk_bottleneck} motivates the adaptive step size choice of 
\begin{equation}
r_k=\delta\sqrt d/(2k+1),
\end{equation}
which retains the occupied slabs with constant probability under the typical-inside model.
In what follows, we will analyze the walk starts at a level $k = k_0(d)$, and that levels are gained one at a time; this still yield a lower bound on the total hitting time starting at level $k=0$.
The derivation below makes additional assumptions about the
accepted motion and the distribution of the remaining coordinates.

\paragraph{Diffusion and confinement scales.}
A proposal has
variance $r_k^2/d$ in every slab direction. We approximate the accepted
motion by driftless diffusion with variance per query
\begin{equation}
v_k\asymp\frac{r_k^2}{d}=\frac{\delta^2}{(2k+1)^2}.
\label{eq:rw-heuristic-variance}
\end{equation}
This assumes that rejection changes the diffusion rate by only a
constant factor. 

For a row $a_j$ of $R$, holding all other slab coordinates fixed gives
a box chord of length at most $2L/\|a_j\|_\infty$, with equality at
$x=0$. A fixed random row $R$ of has
$\|a_j\|_\infty\asymp\sqrt{\log d/d}$ with probability tending to one.
We therefore use the effective interval length
\begin{equation}
H_d\asymp L\sqrt{\frac{d}{\log d}}.
\label{eq:rw-heuristic-width}
\end{equation}
This choice assumes that the remaining coordinates explore intervals
comparable to the maximal chord, rather than much shorter sections.

\paragraph{Wait time at level $k$.}
At level $k$, let $m=d-k$ be the number of uncaptured coordinates.
Let $\Delta_k$ be the wait time at level $k$, which is the random variable describing the number of steps before hitting level $k+1$ when is initialized at a typical point in $V_k$.
Approximate their distances from the nearest target boundary by
independent, approximately uniform points of $[0,H_d]$, with this
approximation renewed whenever the level increases. For one reflected
diffusion on this interval, the chance of absorption at zero by time
$t\lesssim H_d^2/v_k$ satisfies
\begin{equation}
F_k(t)\asymp\frac{\sqrt{v_k t}}{H_d}.
\label{eq:rw-heuristic-hit}
\end{equation}
Indeed, a fraction of this order starts within one diffusion length
$\sqrt{v_k t}$ of the target, and such points have a constant chance
of hitting it. 
Independence then gives
\begin{equation}
\PP(\Delta_k>t)\approx[1-F_k(t)]^m
\approx\exp\!\left(-c m\frac{\sqrt{v_k t}}{H_d}\right),
\qquad
\EE\Delta_k\asymp\frac{H_d^2}{v_k m^2}.
\label{eq:rw-heuristic-race}
\end{equation}
The reflecting boundary controls the large-time tail. Thus the factor
$m^{-2}$ comes from the first arrival among coordinates whose starting
distribution has nonzero density at the target. It follows directly
from the survival probability, without assigning an additional speedup
to the nearest coordinate.

\paragraph{Full query complexity estimate.}
Each step proposal uses one query, including rejected proposals.
Summing the modeled waiting times gives
\begin{equation}
\EE Q\asymp\frac{H_d^2}{\delta^2}
\sum_{k=k_0}^{d-1}\frac{(2k+1)^2}{(d-k)^2}
\asymp\frac{L^2}{\delta^2}\frac{d^3}{\log d}.
\label{eq:rw-heuristic-query}
\end{equation}
The last summand is of order $d^2$, while the full sum is at most
$4d^2\sum_{m\ge1}m^{-2}$. This explains why a bounded number of final
levels carries a constant fraction of the predicted query count.

We emphasize that Eq.~\eqref{eq:rw-heuristic-query} is a scaling prediction derived under a number of (reasonable) assumptions.
The strongest assumption is the renewed entrance distribution: conditioning on having avoided a slab can instead deplete the density near its boundary. 
For example, independent reflected Brownian paths on fixed intervals, run with the common adaptive clock $v_k$ and retained after each capture, acquire an extra $\log d$
in their completion time when starting at $k_0=\alpha d$ for fixed
$0<\alpha<1$.
Nevertheless, this predicted scaling is consistent with numerical simulations up to $d=100$ in Figure~\ref{fig:geometry-aware}.
In any case, the random ray tomography algorithm outperforms this adaptive random walk algorithm with a provable $\Theta(d^2\log d)$ query complexity.

\section{Resource Complexity for Hardware Implementation} 
\label{sec:hardware}

In this section, we estimate the resource complexity for CV-QAOA on continuous-variable quantum computers and qubit-based quantum computer.

\subsection{Realization on continuous-variable quantum computer}

Recent experimental advances in continuous-variable (CV) quantum computing have significantly advanced the field toward scalable and fault-tolerant architectures. Notably, researchers have demonstrated high-quality generation of Gottesman–Kitaev–Preskill (GKP) states, a key ingredient for error correction in CV systems, using both bulk optics and integrated photonic platforms. Large-scale entangled cluster states have been realized in time, space, and frequency domains, enabling measurement-based quantum computing with continuous variables. Integrated silicon-nitride photonic chips and high-efficiency detectors have enabled on-chip implementations of CV logic with improved scalability and stability. These developments, combined with progress in fault-tolerant encoding and gate teleportation protocols, mark a transition from proof-of-concept experiments to increasingly viable hardware platforms for universal CV quantum computation.

We now describe the gate complexity $G$ and the space complexity $S$ (number of bosonic q-mode) for implementing the CV-QAOA based on the continuous variable quantum computer architecture.
Note that we can decompose any continuous-variable unitary circuit into a sequence of Gaussian gates and elementary non-Gaussian gates such as cubic phase gates~\cite{Kalajdzievski_2021}. For the QAOA applied to polynomial functions of bounded degree with $m$ monomial terms, the gate complexity of $e^{-i\gamma C}$ is $O(poly(m))$ if the approximate decomposition via commutators approach is used~\cite{Kalajdzievski_2021}.
The gate complexity of $e^{-i\beta B}$ is $O(d)$.
Hence, for a depth-$p$ QAOA with $2p$ parameters, the gate complexity is $G=O(p(poly(m)+d))$. The space complexity is at least $d$, but could be more if we use ancilla q-modes to mediate implementation of non-Gaussian gates.
When ancilla can be reset and reused, the space complexity is $S=d + O(1)$.

We provide more detailed analysis on gate and space complexities below.

\paragraph{Gate Complexity. }
A generic continuous-variable unitary circuit can be decomposed into a sequence of Gaussian gates and elementary non-Gaussian gates~\cite{Kalajdzievski_2021}. We start with the analysis on one layer of the ansatz that involves one kinetic evolution \( e^{-i\beta_j \hat{B}} \) and one potential evolution \( e^{-i\gamma_j \hat{C}} \). The total gate complexity for the Depth-$p$ CV-QAOA will be $p$ times of the cost of the single layer CV-QAOA.

For the kinetic term \( e^{-i\beta_j \hat{B}} \), it is the quadratic phase gate and can be implemented with \( \mathcal{O}(d) \) Gaussian gates. For the potential term \( e^{-i\gamma_j \hat{C}} \), we consider the following two cases:

(1) \quad \( C(\hat{\xv}) \) is a polynomial of degree $l \leq 2$, then it is implementable with Gaussian gates. This is the case for the CV-QAOA on quadratic function optimization. To be specific, consider the potential function
\begin{align}
    C(\xv) = \frac{1}{2} \sum_{i,j} Q_{ij} x_i x_j + \sum_k b_k x_k + c,
\end{align}
where \( Q = Q^\top \in \amsmathbb{R}^{d \times d} \). The corresponding unitary $U = e^{-i \gamma_j C(\hat{\xv})}$
is a Gaussian unitary because it involves at most quadratic and linear functions of the position operators \( \hat{x}_i \). As such, it can be exactly implemented using only Gaussian gates on a continuous-variable quantum computer. To implement \( e^{-i \gamma_j C(\hat{\xv})} \), we proceed in three steps:

\begin{enumerate}
    \item Diagonalize \( Q = O^\top \Lambda O \) via a linear interferometer:
        \( \mathcal{O}(d^2) \) beamsplitters and phase shifters.
    
    \item Apply single-mode quadratic unitaries \( e^{-i \lambda_k \hat{x}_k^2/2} \) in the rotated basis:
        \( \mathcal{O}(d) \) single-mode operations.
    
    \item Implement displacements \( e^{-i b_k \hat{x}_k} \):
         \( \mathcal{O}(d) \) displacement gates.
\end{enumerate}

Combining the above, the total gate complexity to implement \( e^{-i \gamma_j C(\hat{\xv})} \) is $\mathcal{O}(d^2)$
using only Gaussian gates.
    
(2) If \( C(\hat{\xv}) \) is a polynomial of degree \( l > 2 \), then \( e^{-i\gamma_j \hat{C}} \) is a non-Gaussian gate, which requires synthesis from Gaussian operation and cubic phase gate. Assume that $C(\hat{\xv}) = \sum a_k \hat{\xv}^k$ ($\hat{\xv}^k$ is a polynomial with all variables' power sum up to $k$), we consider Trotterization on \( e^{-i\gamma_j \hat{C}} \approx \prod e^{-i \gamma_j a_k \hat{\xv}^k} \). Depending on the polynomials structures, there exist both approximate and exact decomposition approaches to realize $e^{-i \gamma_j a_k \hat{\xv}^k}$ ~\cite{Kalajdzievski_2021}. The complexity of synthesizing $e^{-i \gamma_j a_k \hat{\xv}^k}$ for $k > 2$ from non-Gaussian primitives (e.g., cubic phase plus Gaussians) is $O(poly(k))$ if the approximate decomposition via commutators is used~\cite{Kalajdzievski_2021}.

\paragraph{Qumode Complexity.}
The minimal number of logical q-modes required is \( n \), corresponding to the number of variables in the optimization problem. Additional qumodes may be needed as ancillas for implementing non-Gaussian gates. For example, each cubic phase gate requires one non-Gaussian ancilla mode. In the worst-case (no ancilla reuse), total qumode complexity is \( n + a \), where $a$ is the number of time that the cubic phase gate appears in the synthesized circuit for the non-Gaussian gate. With ancilla reuse (e.g., time-sequenced circuits), the number of ancilla modes can be reduced to a small constant (e.g., 1), at the cost of increased circuit depth and latency.

\subsection{Realization on qubit-based quantum computer}

Recent experimental progress in qubit-based quantum computing has led to notable improvements in scalability, control fidelity, and fault tolerance across multiple platforms on the scale of hundreds of qubits. Researchers have demonstrated more efficient qubit interactions, enabling faster gate operations and more precise control. Advances in error mitigation and quantum error correction have significantly reduced gate error rates, moving closer to the thresholds required for fault-tolerant computation. Breakthroughs in logical qubit encoding and magic state distillation have marked important steps toward universal quantum computing. New qubit modalities, including those based on exotic matter and novel materials, have also expanded the landscape of viable architectures. Together, these developments highlight a steady convergence toward scalable and reliable quantum processors capable of addressing real-world computational challenges.

In this subsection, we consider realizations of CV-QAOA on a near-term qubit-based quantum computer. Suppose the continuous domain $R^d$ is truncated to a box and discretized into $m$ grid points per dimension. We can then map initial state and the continuous-variable unitary using the finite-difference approach on a qubit-based quantum computer.

\paragraph{Space Complexity.} The total number of discrete points is $N = m^d$.
The Hilbert space becomes \( \amsmathbb{C}^{N} \cong (\amsmathbb{C}^2)^{n_q} \) with $
n_q = \lceil \log_2 N \rceil = \lceil d \log_2 m \rceil \quad \text{qubits}$.
This gives the qubit complexity $n_q = \mathcal{O}(d \log m)$. Additional ancilla qubits may be required in realizing the unitary gates in CV-QAOA according to the following discussion. 

\paragraph{Gate Complexity.} 
We start with the kinetic term driver. In the discretized system (finite grid with periodic boundary conditions), \( \hat{p}_i^2 \) is diagonal in the Fourier basis. Thus, the kinetic evolution can be expressed as:
\[
e^{-i\beta_j \hat{B}} = \mathcal{F}^\dagger \cdot e^{-i\beta_j \hat{B}_{\text{diag}}} \cdot \mathcal{F},
\]
where \( \mathcal{F} \) is the multi-dimensional Quantum Fourier Transform (QFT), \( \hat{B}_{\text{diag}} \) is diagonal in Fourier space (contains the eigenvalues of the discrete Laplacian). The diagonal entries are proportional to \( \sum_i k_i^2 \), with \( k_i \in \{0, \ldots, m-1\} \) denoting discretized momentum labels. The above transformation can be realized with QFT and phase kickback. The complexity is $\mathcal{O}(d \cdot \log^2 m)$ per layer. If there are \( p \) QAOA layers, the total cost scales as \( \mathcal{O}(p \cdot d \cdot \log^2 m) \).

Next we analyze the gate complexity of implementing \( e^{-iC(x)} \) on a qubit-based quantum computer. We consider the problem of implementing the unitary \( e^{-iC(x)} \) on a qubit-based quantum computer, where \( C(x) \) is a classically efficiently computable function such as a low-degree polynomial. Let \( x \in \amsmathbb{R}^d \) be a \( d \)-dimensional variable discretized on an \( m \)-point grid per dimension. The input register, which encodes the discretized \( x \), requires \( n = d \log_2 m \) qubits. We assume that \( C(x) \) is a total degree-\( k \) polynomial in \( d \) variables under $r$-bit precision, expressed as
\[
C(x) = \sum_{\alpha} c_\alpha x^\alpha, \quad |\alpha| \leq k,
\]
where \( x^\alpha = \prod_{i=1}^d x_i^{\alpha_i} \) and \( \alpha \in \amsmathbb{N}^d \) is a multi-index. The number of such monomial terms scales as \( T=\binom{d + k}{k} = \mathcal{O}(d^k) \). Using reversible quantum arithmetic circuits and phase kickback, the gate complexity can be bounded by $\mathcal{O}(\mathrm{poly}(d^k, \log m, p, k))$.

\end{appendices}

\end{document}